\documentclass[11pt,letterpaper]{article}

\usepackage[utf8]{inputenc}
\usepackage{latexsym}
\usepackage{amsmath}
\usepackage{amssymb}
\usepackage{amsthm}
\usepackage{hyperref}
\usepackage{cite}
\usepackage{graphicx}
\usepackage{color}
\usepackage{enumitem}
\usepackage{comment}
\usepackage{bm}
\usepackage[left=1in,right=1in,top=1in,bottom=1in]{geometry}
\usepackage{xspace}
\setlist[description]{font=\normalfont\itshape\textbullet\space}
\usepackage[all]{xy}
\usepackage{doi, url}
\usepackage{arydshln}
\usepackage{xcolor} 
\usepackage{framed}
\usepackage{physics}
\usepackage{tikz}
\usepackage{tikz-cd}
\usepackage{mathdots}
\usepackage{yhmath}
\usepackage{cancel}
\usepackage{siunitx}
\usepackage{array}
\usepackage{multirow}
\usepackage{gensymb}
\usepackage{tabularx}
\usepackage{extarrows}
\usepackage{booktabs}
\usetikzlibrary{fadings}
\usetikzlibrary{patterns}
\usetikzlibrary{shadows.blur}
\usetikzlibrary{shapes}

\renewcommand{\paragraph}[1]{\vspace{6pt} \noindent \textbf{#1}\xspace}

\theoremstyle{plain}
\newtheorem{theorem}{Theorem}[section]

\newtheorem{corollary}[theorem]{Corollary}
\newtheorem{lemma}[theorem]{Lemma}

\newtheorem{fact}[theorem]{Fact}

\theoremstyle{definition}
\newtheorem{problem}[theorem]{Problem}

\newtheorem{definition}[theorem]{Definition}
\newtheorem{example}[theorem]{Example}

\newcommand{\GL}{\mathrm{GL}}
\newcommand{\F}{\mathbb{F}}
\newcommand{\Z}{\mathbb{Z}}

\newcommand{\poly}{\mathrm{poly}}

\newcommand{\M}{\mathrm{M}}

\renewcommand{\S}{\mathrm{S}}

\newcommand{\aut}{\mathrm{Aut}}
\newcommand{\Hom}{\mathrm{Hom}}

\newcommand{\linspan}{\mathrm{span}}

\newcommand{\algprobm}[1]{\text{\sf #1}\xspace}
\newcommand{\GI}{\algprobm{GI}}

\newcommand{\GIlong}{\algprobm{Graph Isomorphism}}
\newcommand{\GpI}{\algprobm{GpI}}
\newcommand{\GpIlong}{\algprobm{Group Isomorphism}}

\newcommand{\TI}{\algprobm{TI}}
\newcommand{\TIlong}{\algprobm{Tensor Isomorphism}}

\newcommand{\cc}[1]{\mathsf{#1}}

\newcommand{\ynote}[1]{}
\newcommand{\gnote}[1]{}
\newcommand{\jnote}[1]{}
\newcommand{\xnote}[1]{}

\DeclareMathOperator{\Rad}{Rad}

\DeclareMathOperator{\id}{id}
\DeclareMathOperator{\diag}{diag}

\renewcommand{\le}{\leqslant}

\newcommand{\too}%
{\xrightarrow{\text{\raisebox{-3pt}{$\sim$}}\,}}

\mathchardef\mhyphen="2D

\usepackage[T1]{fontenc}
\def\DJ{{\hbox{D\kern-.8em\raise.15ex\hbox{--}\kern.35em}}}

\newcommand{\ignore}[1]{}

\title{Beyond odd characteristic: Faster isomorphism testing of 2-groups of Frattini class 2}

\author{
Joshua A. Grochow\thanks{Departments of Computer Science and Mathematics, University of Colorado Boulder, Boulder, USA. Email: {\tt jgrochow@colorado.edu}. Supported by NSF CAREER
grant CCF-2047756.}
\and
G\'abor Ivanyos\thanks{HUN-REN Institute for computer Science and Control, Budapest, Hungary.
E-mail: {\tt Gabor.Ivanyos@sztaki.hun-ren.hu}.
Supported by the project 2026-4.1.1-MISSZIÓK-2026-00011 AI Applications in Healthcare and Industry Mission-Driven National Laboratory.}
\and
Youming Qiao\thanks{Centre for Quantum Software and Information, University of Technology Sydney, Sydney, Australia. Email:
{\tt Youming.Qiao@uts.edu.au}. Partly supported by Australian Research Council LP220100332.}
\and
Xiaorui Sun\thanks{Computer Science Department, University of Illinois Chicago, Chicago, USA. Email:
{\tt xiaorui@uic.edu}. Supported by the National Science Foundation (NSF)
under Grant No. 2240024.
}
}
\date{}

\begin{document}
\maketitle
	\pagenumbering{gobble}  
\begin{abstract}
The finite group isomorphism problem asks whether two finite groups of order $N$ are isomorphic. The first algorithm, attributed to Tarjan (see Miller, \emph{STOC} '78), runs in time $N^{\log N + O(1)}$. Despite intensive study, the current best known algorithm has a running time of $N^{(1 / 4 + o(1))\log N}$ (Rosenbaum, '13).

$p$-groups of class $2$ have been recognized as the major bottleneck for faster group isomorphism. Recent progress has led to $N^{o(\log N)}$-time algorithms for $p$-groups of class $2$ where $p$ is odd (Sun, \emph{STOC} '23; Ivanyos--Mendoza--Qiao--Sun--Zhang, \emph{FOCS} '24; Grochow--Qiao--Stange--Sun, \emph{STOC} '25). However, the case of $p=2$, which represents the majority of $p$-groups of class 2 assuming a well-known conjecture in group enumeration, remained elusive, with essentially no progress until now.

In this paper, we present an algorithm for testing the isomorphism of two 2-groups of Frattini class 2 of order $N$ in time $N^{O((\log N)^{1/2})}$. To our knowledge, this is the first $N^{o(\log N)}$-time isomorphism algorithm for a class of $2$-groups that constitutes logarithmically almost all $2$-groups, in the sense that  $\lim_{N \to \infty} \frac{\log(\text{\# 2-groups of Frattini class 2 and order } \leq N)}{\log(\text{\# 2-groups of order} \leq N)} = 1$.

As our main tool, we present the first non-trivial algorithms for the quadratic form space/tuple isometry problems over $\mathbb{F}_2$. These algorithms rely on combinations of combinatorial and algebraic ideas, including finite matrix group algorithms developed by Luks (\emph{FOCS} '92). As far as we know, this is the first time that matrix group algorithms are used to make progress on the worst-case complexity of $p$-group isomorphism.

\end{abstract}
\newpage

\tableofcontents

\newpage

\pagenumbering{arabic}
\setcounter{page}{1}
\pagenumbering{arabic}
\section{Introduction}

Isomorphism problems---testing whether two mathematical objects are the same up to some naturally defined relabeling operations---have been a part of computational complexity since its inception \cite{Cook1971,Levin1972} (see \cite{AllenderDas} for more on the history). The \GIlong problem (\GI) has played a central role, for how natural it is, how basic the objects are (just graphs!), and how universal it is: the isomorphism problem for any class of explicitly given first-order structures reduces to GI \cite{ZKT}. 

The more-algebraic finite \GpIlong problem (\GpI) has been open for nearly as long, spanning across the communities of both computational complexity theory and computational group theory. In computer algebra systems, realistic input models for groups tend to be succinct---such as giving generating matrices, permutations, or a presentation by generators and relations. In the presentation model, \GpI becomes uncomputable \cite{Adian1955, Rabin1958}. In the former two (permutation and matrices) models, although computable, it is \GI-hard \cite{luks-dimacs,TI2}. Despite decades of research from both communities, the worst-case complexity of \GpI---whether given by generating matrices, permutations, or explicitly listing out its Cayley (multiplication) table---is not known to be better than $|G|^{O(\log |G|)}$. In the more realistic, succinct input models, this is exponential in the input size. But in the much more verbose Cayley table model, this runtime makes \GpI appear easier than \GI. 

Indeed, \GpI in the Cayley table model reduces to \GI \cite{ZKT}, and there is no $\cc{AC}^0$ reduction in the opposite direction \cite{CTW}. However, with the best-known time complexity being so close to one another---$n^{\Theta(\log n)}$ for \GpI\cite{FN,Mil78,WilsonProfile,Ros13,LeGallRosenbaum} 
\jnote{Maybe not BCGQ? By WilsonConferenceThing I actually meant p. 2 of this paper \url{http://dx.doi.org/10.4086/toc.2019.v015a019}, where James explicitly says that even all modern practical isomorphism algorithms are known to take $n^{\log n}$ time.} 
\ynote{Changed \cite{BCGQ11} to \cite{WilsonProfile}.}
versus $n^{\Theta(\log^2 n)}$ for \GI \cite{babai2016graph}\footnote{The exact exponent of the exponent was worked out by Helfgott \cite{helfgott}.}---\GpI stands as a significant bottleneck to further progress on \GI, in addition to being of interest in its own right.

In this paper, we focus on improving the worst-case runtime of \GpI for a class of groups that are some of the hardest cases: 2-groups of Frattini class 2. The 2-groups of Frattini class 2 are groups whose order is a power of $2$, and that have a normal subgroup $N$ contained in the center such that both $N$ and $G / N$ are elementary abelian. We briefly explain why this family of groups is significant for group isomorphism testing before stating our results. 


From the group-theoretic side, 2-groups (whose order is a power of 2) account for most isomorphism classes of finite groups.
By \cite{besche2002millennium}, there are 11,759,892 many groups of order $\leq 1023$, while there are 49,487,367,289 groups of order $1024$. Also, the smallest pair of non-isomorphic non-abelian groups of the same order consists of $D_8$, the dihedral group of order $8$, and $Q_8$, the quaternion group. 
Such an abundance of $2$-groups suggests that the family of $2$-groups is a culprit for the hardness of \GpI. 
Among $2$-groups, $2$-groups of Frattini class $2$ form a basis for many isomorphism classes; indeed, $D_8$ and $Q_8$ are both in this class. As shown by Higman \cite{HigmanEnum} and Sims \cite{sims1965enumerating}, 
$$\lim_{N \to \infty} \frac{\log(\text{\# 2-groups of Frattini class 2 and order } \leq N)}{\log(\text{\# 2-groups of order} \leq N)} = 1.$$
That is, the number of $2$-groups of Frattini class $2$ matches the number of general $2$-groups on the logarithmic scale. This is a strong indication that to make progress on the isomorphism testing of $2$-groups, the family of $2$-groups of Frattini class $2$ is the starting point. 

Despite the importance of 2-groups, 
there has not been much progress in their isomorphism testing.
In the odd $p$ case,
$|G|^{o(|G|)}$ time algorithms have been proposed for $p$-groups of class $2$ and exponent $p$ \cite{Sun23}, $p$-groups of Frattini class $2$ \cite{IMQSZ}, and $p$-groups of class 2 \cite{GQSS25}. None of these results apply to $p=2$. Although \cite{Sun23,IMQSZ,GQSS25} pave the way for our work here, significant bottlenecks must be overcome to achieve such a result for $2$-groups, as we will explain later. 

We now introduce our results. Because we focus on worst-case runtime, and our results will improve from $|G|^{\Theta(\log |G|)}$ to $|G|^{\Theta(\log|G|)^{1-\varepsilon}}$ for some fixed $\varepsilon > 0$, we state our results in the Cayley table model. We note, however, that this is not necessary---all our algorithms also work when the group is given as a matrix group or permutation group, but since the runtime depends on $|G|$ anyway, it is a formal convenience to state the results in the Cayley table model.

Our main theorem for \GpI is as follows:

\begin{theorem}\label{thm:group_iso}
There is an algorithm to determine the isomorphism of 2-groups of Frattini class 2 in time $|G|^{O((\log |G|)^{1/2})}$. 
\end{theorem}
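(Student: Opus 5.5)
We reduce the group problem to an isometry problem for tuples of quadratic forms over $\F_2$, then solve that problem with a mix of individualization and Luks-style matrix group algorithms. Let $G$ be a $2$-group of Frattini class $2$ with $|G|=2^{n+m}$, where $V=G/\Phi(G)\cong\F_2^n$ and $W=\Phi(G)\cong\F_2^m$. Since $\Phi(G)$ is central and $G/\Phi(G)$ is elementary abelian, squaring $x\Phi(G)\mapsto x^2$ is a well-defined map $q\colon V\to W$. It is a $W$-valued quadratic map whose polar form $q(x+y)-q(x)-q(y)$ is the commutator bilinear map. The classical Baer/Taunt-type correspondence gives the following: two such groups are isomorphic if and only if the quadratic maps are pseudo-isometric, meaning there are $A\in\GL(V)$ and $B\in\GL(W)$ with $q'(Ax)=Bq(x)$. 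Equivalently, the $\F_2$-linear spans of the $m$ coordinate quadratic forms are isometric as quadratic form spaces. This reduction is computable in $\poly(|G|)$ time from the Cayley table. It therefore suffices to test isometry of $m$-dimensional spaces of quadratic forms on $\F_2^n$ in time $2^{O((n+m)^{3/2})}$, since $\log|G|=n+m$.

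The algorithm for quadratic form spaces follows the individualization-refinement template of \cite{Sun23,IMQSZ,GQSS25}, adapted to characteristic $2$. Set a threshold $s\approx\sqrt{n+m}$ and split into two cases.

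\textbf{Case of small $m$.} If $m\le s$, enumerate all $2^{O(m^2)}$ choices of $B$. This reduces the problem to isometry of a fixed tuple of $m$ quadratic forms under $\GL(V)$. Tuple isometry is handled algebraically: first compute the isometry coset of the associated tuple of alternating polar forms, as a coset of the group of units of a $*$-algebra. In characteristic $2$, the quadratic refinement then cuts this down. The quadratic forms modulo the polar data give linear functionals twisted by Frobenius on the radical, which behaves like an elementary abelian section, plus a nonsingular part. Use Luks' matrix group algorithms (FOCS '92) to find the subgroup or coset preserving the quadratic data. The relevant groups are solvable-by-classical, so these algorithms run in time $\poly(2^n)$ up to the enumeration overhead.

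\textbf{Case of large $m$.} If $m>s$, individualize about $s$ vectors of $V$ together with their images, at a cost of $2^{O(ns)}$. Then refine: the individualized vectors induce linear maps $V\to W$, $y\mapsto\beta(x_i,y)$, which generically cut $V$ into pieces of small rank. A counting or averaging argument, as in Sun's work, shows that after $O(s)$ individualizations the residual problem either has ``low-rank'' structure, and so falls back to the small case, or is rigid enough for the $*$-algebra/Luks machinery to finish.

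The main obstacle is characteristic $2$ itself, in two ways. First, the bilinear map alone does not determine the group. Second, quadratic forms are not recoverable from their polar forms, because the radical of the polar form may carry a nonzero (Frobenius-semilinear) quadratic part. I would attack this by reducing to the alternating case via the $*$-algebra method first. The remaining quotient group acting on the quadratic data consists of orthogonal or affine groups over $\F_2$, for which Luks' algorithms give polynomial-in-$2^n$ coset intersection. The hardest step to verify is that the refinement still yields the $\sqrt{\cdot}$ tradeoff once the squaring data is included, giving a total cost of $2^{O(n s + m^2 + (n+m)^2/s)}=|G|^{O((\log|G|)^{1/2})}$.
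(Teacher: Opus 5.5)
Your reduction is correct and reaches the same problem as the paper's Section~3. The space $S_G^{\perp}\le Q(n,2)$ that the paper obtains from Higman's relatively free group by dualization is exactly $\{\lambda\circ q:\lambda\in W^*\}$, the span of the coordinate forms of your squaring map.

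The gap is the isometry algorithm. Your small-$m$ case rests on ``computing the isometry coset of the tuple of alternating polar forms'' with a $*$-algebra, and over $\F_2$ that step is not available. The set $\{X: X^T\mathbf{A}X=\mathbf{B}\}$ is a coset of the unitary group $\{X\in\mathcal{A}^{\times}: X^{*}X=I\}$ of the adjoint algebra, not of its unit group. Producing even one element means solving a norm-type equation $X^{*}X=W$ in $\mathcal{A}$. The method of \cite{IQ19} solves it modulo $\Rad(\mathcal{A})$ and then lifts, and the lift divides by $2$ and can fail in characteristic $2$. Before this paper only the trivial algorithm was known for tuple isometry over $\F_2$; it is exactly the obstacle the paper is built to get around, so you cannot cite it.

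Nor can you fall back on the full unit group $G=\mathcal{A}^{\times}$ plus Luks. Luks-type methods are polynomial for the unipotent normal subgroup $R=I+\Rad(\mathcal{A})$. However, $G/R\cong\prod_i\GL(d_i,\F_{2^{k_i}})$ with $d_i$ the multiplicities of the minimum left-right blocks. When some block occurs $\Theta(n)$ times, a transporter computation in $G$ is no easier than the problem you started with, so ``solvable-by-classical'' does not give $\poly(2^n)$.

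The paper avoids the norm equation altogether with a multiplicity dichotomy. It rests on a new congruent block-diagonalization theorem over $\F_2$ (Theorem~\ref{thm:congruence_block_diagonalization}): each congruent block is one minimum left-right block or a pair $\mathbf{B},\mathbf{B}^T$.
\begin{itemize}
\item \emph{Infrequent blocks.} If every block occurs at most $D$ times, then $|G/R|\le 2^{nD}$ (Lemma~\ref{lem:semisimple-size}). One enumerates coset representatives of $R$ and runs the Luks--Miyazaki unipotent vector transporter directly on the quadratic form tuples.
\item \emph{Frequent blocks.} If every block is frequent, hence has at most $r$ rows, one enumerates the $2^{O(n(r+\rho))}$ standard block-diagonalized tuples by local operations. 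This uses that only $\rho\le\sqrt{n}$ forms can carry square terms.
\item \emph{Mixed instances.} These are split by enumerating $2^{O(n^{1.5})}$ cross-term corrections (Lemma~\ref{lem:congruence_canceling}).
\end{itemize}

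Your large-$m$ case is likewise only a sketch. The individualization of \cite{Sun23,IMQSZ} does carry over to $\F_2$ as a tensor reduction (Theorem~\ref{thm:tensor_iso_to_restricted_tensor_iso}). The residual restricted space instances must still be turned into quadratic tuple instances. The paper does this by removing degenerate vectors and by a gadget construction that folds the remaining freedom in $M,N$ into a single change of variables and secures the $\rho$-non-zero-square-term condition. Those tuple instances then need the very subroutine that is missing, so your bound $2^{O(ns+m^2+(n+m)^2/s)}$ is not supported by the argument.
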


The Frattini class $2$ condition allows for a translation from groups to polynomials. 
Recall that a quadratic form is a homogeneous degree-$2$ polynomial. If $f(x_1,\dotsc,x_n)$ is a quadratic form and $S$ is an invertible linear transformation of the variables $\vec{x}$, we use $f^S$ to denote the quadratic form $f^S(\vec{x}) := f(S \vec{x})$. 
 Our key algorithm 
is in our second main theorem.

\begin{theorem}\label{thm:quadratic_space_isometry}
Given two sequences of quadratic forms $f_1, \dots, f_m$ and $g_1, \dots, g_m$ in $n$ variables over $\F_2$,
    there is an algorithm to determine whether there is an $n\times n$ invertible matrix $S$ over $\F_2$ such that the linear space spanned by $ f_1^S, \dots, f_m^S$ equals the linear space spanned by $g_1, \dots, g_m$ with running time $2^{O((m + n)^{1.5})}$. 
\end{theorem}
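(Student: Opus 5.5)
The plan is a balanced-tradeoff algorithm: brute force over part of the problem, combined with a polynomial-time matrix-group computation for the rest. The threshold parameter is $s := \lceil (m+n)^{1/2} \rceil$.

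First, normalize and reduce to an isometry of tuples. Each quadratic form $f$ over $\F_2$ determines its polar alternating bilinear form $B_f(x,y)=f(x+y)-f(x)-f(y)$. Thus the span $V=\linspan\{f_i\}$ determines an alternating matrix space $\mathcal{B}_V$, together with the kernel of the polarization map $V\to\mathcal{B}_V$. That kernel consists of "diagonal" forms $\sum a_j x_j^2=(\sum a_j x_j)^2$, i.e.\ squares of linear forms. After Gaussian elimination we may assume $f_1,\dots,f_m$ and $g_1,\dots,g_m$ are bases. We then guess the change of basis $T\in \GL_m(\F_2)$ on the span side. A naive guess costs $2^{m^2}$, which is too much when $m\gg (m+n)^{1/2}$, so we only guess partially.

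The core step is individualization followed by refinement, as in the odd-characteristic works \cite{Sun23,IMQSZ}.
\begin{enumerate}
\item Guess the images of $s$ vectors in $\F_2^n$ and of $s$ forms in the span. This costs $2^{O(s(n+m))}=2^{O((m+n)^{1.5})}$ choices.
\item After individualization, argue that either the remaining problem decomposes or the residual ambiguity lies in a structured group.
\end{enumerate}
Concretely, once $s$ random-like forms in the span are fixed, consider the subspace $U\le \F_2^n$ cut out by simultaneously radical or low-rank conditions for those forms. The key dichotomy is as follows. If every form in the span has rank at least some threshold on $U$, then the isometry group of a fixed tuple of $s$ forms restricted to $U$ is small enough to enumerate, or is contained in a group we can control. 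Otherwise we find low-rank forms, which we use to individualize further at cost $2^{O(sn)}$.

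Once the span basis is pinned down, the problem becomes quadratic \emph{tuple} isometry. Here $(f_1,\dots,f_m)$ must map to $(g_1,\dots,g_m)$ exactly, and the solutions form a coset of the isometry group $\mathrm{Isom}(f_1,\dots,f_m)\le \GL_n(\F_2)$. I would handle this step with Luks's matrix-group algorithms \cite{luks-dimacs}. First compute the isometry group of the polar alternating tuple; this is the unit group of an $\F_2$-algebra with involution, whose generators and structure (radical, semisimple part) are polynomial-time computable. Then cut down to the subgroup preserving the quadratic refinements. That subgroup is the stabilizer of an affine-linear condition modulo the "diagonal" ambiguity, i.e.\ it is defined by a homomorphism to an elementary abelian 2-group. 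Luks's framework handles such homomorphisms in time polynomial in $2^{n}$ times the size of the solvable radical layers. Since the groups act on $\F_2^n$ of size $2^n$, a cost of $\poly(2^n)$ fits the budget.

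The main obstacle is the characteristic-2 phenomenon itself. The polarization map loses information, since $x_j^2$ polarizes to zero, and the adjoint-algebra method of odd characteristic does not directly give the isometry group of quadratic forms: the orthogonal group is not cut out linearly. I would first try to get around this by computing the alternating-isometry group $H$ as the unit group of the adjoint algebra. The quadratic-preserving subgroup is then the kernel of a crossed homomorphism from $H$ to $\Hom(\F_2^n,\F_2)^m$, and that kernel can be computed with Luks's coset-intersection methods for groups whose nonabelian composition factors are classical. Making this run within $2^{O((m+n)^{1.5})}$ total time, balancing the $2^{O(s(m+n))}$ individualization cost against the matrix-group cost, is where I expect most of the technical work to lie.
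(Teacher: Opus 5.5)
Your plan has genuine gaps at both of the steps that carry the weight of the theorem.

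\textbf{From spaces to tuples.} Individualizing $s\approx(m+n)^{1/2}$ vectors and forms does not pin down the span basis. The rank dichotomy you sketch is asserted rather than argued. High rank alone does not make an isometry group small: a single nondegenerate $q$ already has isometry group $\mathrm{O}(q)$ of order $2^{\Theta(n^2)}$. And ``contained in a group we can control'' just restates the problem. What individualization--refinement actually delivers is Theorem~\ref{thm:tensor_iso_to_restricted_tensor_iso}, which the paper imports from \cite{Sun23,IMQSZ} because it still works over $\F_2$. It gives $2^{\tilde O((m+n)n^{1/2})}$ \emph{restricted} instances. In each, at most $n^{1/2}$ forms and $n^{1/2}$ variables are frozen. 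But a full $\GL(\alpha_0,\F_2)$ acting on the forms, with $\alpha_0$ possibly $\Theta(m)$, is still free, as are the blocks $N'$, $M''$, $N''$.

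The missing idea is how to absorb this residual space-level freedom into a change of variables without inflating the dimension. In a restricted instance, the polar parts of the first $\alpha_0$ forms are supported on rows and columns of the frozen variables, so one tensor direction is effectively fixed. The paper first removes degenerate vectors (Theorem~\ref{thm:restricted_space_iso_algo}). Its Quadratic Form Tuple Construction then re-flattens the tensor so that the form index becomes a block of $O(m)$ new variables. Duplicated copies and the gadgets of types 4, 8 and 9 force every isometry of the new tuples to consist of two copies of the form change and two copies of the variable change. The square-term data of these forms is recorded as ordinary cross terms (type 3). So genuine square terms survive only in $\rho=m-\alpha_0-\alpha_1\le n^{1/2}$ forms (Lemma~\ref{lem:nonzero_suare_term_condition}), which the later counting relies on.

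\textbf{The tuple step.} This does not work as described. The isometry group of the polar alternating tuple is not the unit group of the adjoint algebra; it is the subgroup of unitary elements. No efficient way to compute it over $\F_2$ is known: the paper stresses that even isometry testing of alternating tuples over $\F_2$ previously had only the trivial algorithm.

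Cutting down to the stabilizer of the quadratic refinement is then a stabilizer/transporter problem for the action on $m$-tuples of forms, not for the natural action on $\F_2^n$. Equivalently, it is a color stabilizer on the $2^n$ points of $\F_2^n$. No known method does this in $\poly(2^n)$ time once the group has classical composition factors of rank $\Theta(n)$. That happens, for example, when the tuple is a direct sum of many copies of one small block. The solvable radical can also have order $2^{\Theta(n^2)}$, so paying for the size of its layers is unaffordable.

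The paper allocates the cost the other way round:
\begin{itemize}
\item It never computes the isometry group of the polar tuple.
\item It handles the unipotent normal subgroup $R=I+\Rad(\mathcal{A})$ of the unit group of $\mathcal{A}$ in polynomial time regardless of $|R|$, via the Luks--Miyazaki transporter for $2$-groups (Fact~\ref{fact:unip-transp}).
\item It enumerates only the at most $2^{nD}$ cosets of $R$, where $D$ bounds the multiplicity of each minimum left-right block (Lemma~\ref{lem:semisimple-size}).
\end{itemize}
This is affordable only when all blocks are infrequent ($D\le n^{1/2}$). That is why the paper also needs three further ingredients. The first is the congruent block diagonalization theory over $\F_2$ (Theorem~\ref{thm:congruence_block_diagonalization}). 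The second is a combinatorial enumeration of the $2^{O(n(n^{1/2}+\rho))}$ standard block-diagonalized tuples when all blocks are frequent, hence of size at most $n^{1/2}$. The third is a gluing step between the two parts. Nothing in your plan plays the role of this frequent/infrequent dichotomy, so the claimed $2^{O((m+n)^{1.5})}$ bound for the tuple step is not established.
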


We note, for comparison, that the runtime of the brute-force algorithm here is $2^{O(m+n)^2}$, which would correspond to a group isomorphism algorithm for the class of groups above in time $|G|^{O(\log |G|)}$, not better than the essentially trivial bound.

To solve the above problem, we reduce from a problem on \emph{spaces} of quadratic forms to the corresponding (but simpler) problem on \emph{tuples} of quadratic forms (where we are no longer allowed to take linear combinations of the forms themselves, only of the variables). 
Solving the latter problem in the stated time bound is then the key subroutine, which is essentially\footnote{Technically, the key subroutine is Thm.~\ref{thm:matrix_tuple_constructed_main}, which then also gets used in the proof of Thm.~\ref{thm:quadratictuple_isometry_intro}. We state Thm.~\ref{thm:quadratictuple_isometry_intro} here instead because it is a simpler statement that, at a high level, has the same flavor. Namely, we reduce from a problem on \emph{spaces} of quadratic forms to \emph{tuples} of quadratic forms, and then solve the latter problem in the stated time bound.} encapsulated in:


\begin{theorem}\label{thm:quadratictuple_isometry_intro}
Given two sequences of quadratic forms $f_1, \dots, f_m$ and $g_1, \dots, g_m$ in $n$ variables over $\F_2$,
    there is an algorithm with running time 
    $2^{O((m + n)^{1.5})}$
    to determine whether there is an invertible matrix $S$ such that $f_i^S = g_i$ for all $1 \leq i \leq m$.
\end{theorem}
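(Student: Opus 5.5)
Over $\F_2$ a quadratic form $f$ is determined by its polar alternating bilinear form $B_f(x,y)=f(x+y)+f(x)+f(y)$ together with its diagonal values $f(e_j)$ on a basis. I will therefore split Theorem~\ref{thm:quadratictuple_isometry_intro} into two problems. First, an alternating matrix tuple isometry problem: find $S$ with $S^T B_{f_i} S = B_{g_i}$ for all $i$. Second, a lifting step: among those $S$, find one that also matches the quadratic refinements. The key observation is that for a fixed tuple of polar forms, the set of quadratic tuples with that polar tuple is an affine space over $\F_2$. Indeed, $f_i$ and $f_i'$ with the same polar form differ by a form $\ell_i(x)^2$, and $\ell_i^2$ is a linear functional over $\F_2$ because $x\mapsto x^2$ is additive. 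So once one isometry $S_0$ of the polar tuples is known, the full solution set is the coset $S_0\cdot \aut(B_{g_1},\dots,B_{g_m})$. The quadratic condition then becomes: does some element of the isometry group $\mathrm{Isom}(\vec B_g)$ send the vector of linear differences $(\ell_1,\dots,\ell_m)\in (\F_2^n)^{*m}$ to $0$? This is an orbit problem for the linear action of the matrix group $\mathrm{Isom}(\vec B_g)\le \GL_n(\F_2)$ on an $\F_2$-space of size $2^{mn}$.

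\textbf{Step 1: polar tuples and their isometry group.} Over $\F_2$ the isometry group of an alternating tuple decomposes as follows. Let $A$ be the $*$-algebra of adjoints; its unit group is computable in polynomial time (Brooksbank--Wilson style, characteristic-free), and $\mathrm{Isom}$ is the group of $*$-unitary elements of $A$. The coset of isometries is found by reducing to a module isomorphism together with a $*$-unitary normalization, which is also polynomial time in characteristic $2$ for alternating forms. I would then get generators for $\mathrm{Isom}(\vec B_g)$ as a matrix group: a unipotent radical part $1+J(A)\cap\mathrm{Isom}$ and a semisimple quotient that is a product of classical groups over extension fields.

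\textbf{Step 2: the orbit problem, and why it is the obstacle.} Deciding whether $0$ is in the orbit of $\vec\ell$ under a matrix group $H\le\GL_n(\F_2)$ with $|H|$ up to $2^{O(n^2)}$ is where the brute-force bound fails, and I expect it to be the main difficulty. I would use Luks' matrix group algorithms (FOCS~'92) for solvable and bounded-composition-factor matrix groups, such as pointwise stabilizers and orbit membership along a series of invariant subspaces. I would also use a dimension trade-off with threshold $\sqrt{m+n}$ to handle the two regimes:

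\emph{Large classical-group blocks.} When the semisimple part has large classical blocks, the action on linear functionals is structured: unitary or orthogonal/symplectic groups act with few orbits on vectors, essentially classified by form values. Orbits can then be decided by invariants computed in polynomial time.

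\emph{Small classical-group blocks.} When the blocks are small, of dimension $\le\sqrt{m+n}$, I would enumerate the semisimple part blockwise. Alternatively, I would individualize $O(\sqrt{m+n})$ vectors, at a cost of $2^{O((m+n)^{1.5})}$ guesses, to cut $H$ down to a solvable or unipotent group, where Luks' methods give polynomial-time orbit and stabilizer computations.

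\textbf{Step 3: the unipotent radical.} The radical part is a $2$-group. Its orbits can be handled by working down a chain $J^k$ of the radical filtration. At each layer the action is affine-linear, so orbit membership reduces to solving linear systems, with stabilizer updates performed through Luks' $p$-group machinery.

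Combining the three steps, the total running time is $2^{O((m+n)^{1.5})}$. The delicate point throughout is making the $\sqrt{m+n}$ individualization/enumeration trade-off rigorous while keeping the coset structure intact.
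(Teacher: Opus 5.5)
Your reduction of the lifting step to an orbit question is sound, with one small correction: the induced action on the differences is affine, $\ell\mapsto \ell\circ T+\lambda(T)$ where $g_i^T=g_i+\lambda_i(T)^2$, rather than linear, which is harmless. Your small-rank branch (enumerate the semisimple part, then run a unipotent vector transporter on the radical) is essentially the paper's infrequent-block algorithm. The trouble starts in Step~1, which assumes an isometry $S_0$ of the polar tuples can be found in polynomial time over $\F_2$. Deciding whether $S_0$ exists is exactly the alternating matrix tuple isometry problem in characteristic~$2$, which the paper singles out as an obstacle. The $*$-algebra method of \cite{IQ19} lifts solutions through the radical by a step that divides by $2$, and for $p=2$ solutions modulo the radical need not lift at all. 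The unit group of the adjoint algebra is characteristic-free, but the passage to $*$-unitaries and to $S_0$ is not. The paper never computes polar isometries. It computes only a left-right equivalence $(S_0,S_0')$, which is easy in any characteristic, and then searches the whole unit group of $\mathcal A$ (Lemma~\ref{lem:in-adjoint}). The $2$-group $I+\Rad(\mathcal A)$ then acts linearly on the quadratic form tuples themselves (Fact~\ref{fact:unip-transp}), and this is how the characteristic-$2$ lifting problem is absorbed.

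The more serious gap is the large-rank case of Step~2. There some minimum left-right block occurs more than $\sqrt n$ times, so the semisimple part contains a classical group of rank $d\gg\sqrt n$ that cannot be enumerated. ``Few orbits classified by form values'' holds for one classical group on its natural module (e.g.\ the Arf invariant). Here, however, that group acts on $m$-tuples of refinements glued through the radical layers to the other block types, and you give no invariant for that action. Your individualization device would not rescue it either: the pointwise stabilizer of $k=O(\sqrt n)$ vectors in, say, $\mathrm{Sp}(2d,2)$ still contains $\mathrm{Sp}(2d-2k,2)$, which is nowhere near solvable.

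The paper handles this regime combinatorially. A block occurring more than $\sqrt n$ times has at most $\sqrt n$ rows and columns. So after the new congruent block diagonalization (Theorem~\ref{thm:congruence_block_diagonalization}), only $2^{O(n(\sqrt n+\rho))}$ block-diagonalized tuples lie in the orbit, and local moves reach all of them (Theorem~\ref{thm:small_block}). This bound needs $\rho$, the number of forms that may carry square terms, to be at most $\sqrt n$. That is why the paper proves this theorem via the tensor-isomorphism reduction and the Quadratic Form Tuple Construction rather than attacking tuples directly.

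Finally, you do not address instances mixing frequent and infrequent blocks. Congruent block decompositions are not unique over $\F_2$ (Appendix~\ref{sec:appendix}). Separating the two parts therefore needs Lemmas~\ref{lem:congruence_canceling} and~\ref{lem:block_diagonal_invertible} together with a $2^{O(n^{1.5})}$ enumeration of the couplings.
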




In Section~\ref{sec:intro:Frattini} we discuss both the significance of Frattini class 2 and the difficulties that specifically make the $p=2$ case one of the hardest. In Section~\ref{sec:intro:techniques} we then discuss our methods.

\subsection{The significance of 2-groups of Frattini class 2} \label{sec:intro:Frattini}
\paragraph{From general \GpI to $2$-groups.} Nilpotent groups have long been regarded as the most difficult cases of \GpI; a finite group $G$ is nilpotent if and only if it is the direct product of groups of prime power order. As direct products of groups can be efficiently decomposed into their direct factors \cite{WilsonProductsI, WilsonProductsII, KN}, this reduces from nilpotent groups to groups of prime power order (called $p$-groups if the order is a power of $p$). We note that log-asymptotically, most groups are $p$-groups, in the sense that $$\lim_{n \to \infty} \frac{\log(\text{\# $p$-groups of order } \leq n)}{\log(\text{\# all groups of order} \leq n)} = 1$$
as proved by Pyber \cite{pyber1993enumerating}. 
It is a well-known, long open conjecture that this is true without the logs; see, e.g., \cite{mann1999some}. From this, because $p=2$ is the smallest prime, the $2$-groups of order $2^k$ occur at smaller sizes than the $p$-groups of order $p^k$ for any other $p$, giving a sense in which ``most groups are $2$-groups.''

\paragraph{Difficulties specific to $p=2$.} There are several obstacles that are particular to the $p=2$ case, some of which are perhaps why it has resisted progress despite significant recent progress in the case of odd $p$ \cite{Sun23, IMQSZ, GQSS25} (which we discuss more below). 

For groups of Frattini class 2, classical results from more than sixty years ago \cite{HigmanEnum} 
can be made explicit to show that the isomorphism problem reduces to multilinear-algebraic isomorphism/equivalence problems on spaces of alternating matrices. The first obstacle in the case of $p=2$ is that, in the case of odd $p$, the alternating matrix spaces have dimension bounded by $\log|G|$, but in the case of $p=2$ the alternating matrix space may have dimension as large as $O(\log |G|)^2$. The difficulty here is inherently about the more-complicated relationship between symmetric bilinear forms, anti-symmetric (or alternating) bilinear forms, symmetric matrices, and quadratic forms in characteristic 2, compared to odd or $0$ characteristic. As a simple example, when $p$ is odd (or $0$), symmetric bilinear forms $f(\vec{x},\vec{y})$ can always be represented by a symmetric matrix $A$ such that $f(\vec{x}, \vec{y}) = \vec{x}^T A \vec{y}$. 
However, such a representation involves dividing by $2$, and is not possible in characteristic 2. Rather, in characteristic 2, we find that symmetric matrices more closely correspond to \emph{linear functions on} the space of quadratic forms, and vice versa (they are dual vector spaces). Thus, instead of working with matrix spaces as in the case of odd $p$ \cite{Sun23, IMQSZ, GQSS25}, for $p=2$ we seem essentially forced to deal with spaces of quadratic forms. 

A second obstacle has to do with the difficulty of a problem at the technical core of prior algorithms: the isometry problem on tuples. 
The recent results on odd $p$ \cite{Sun23, IMQSZ, GQSS25} all reduce the group isomorphism problem for specific $p$-groups to the alternating matrix tuple isometry problem, or equivalently, given two sequences of skew-symmetric matrix tuples $\mathbf{A} = (A_1, \dots, A_m)$ and $\mathbf{B} = (B_1, \dots, B_m)$ of the same dimension $n$, determine whether the two tuples are \emph{congruent}, i.\,e., determine whether 
there is an invertible matrix $S$ such that $S^T A_i S = B_i$ for all $1 \leq i \leq m$.  
The alternating matrix tuple isometry problem in odd characteristic was solved in~\cite{IQ19}. However, the approach of~\cite{IQ19} does not work for characteristic 2. 
Specifically, there is a ``lifting'' procedure at the core of the algorithm of \cite{IQ19, GQSS25} that first solves the problem in a ``quotient space'' that satisfies a strong algebraic constraint, then lifts the solution back to the original spaces. But the lifting procedure inherently involves dividing by 2, and can fail badly in the case of $p=2$ (there can be solutions to the quotient problem that have no lifts to the original problem when $p=2$).
Previously, only a trivial algorithm was known for this problem in characteristic 2 in the general case 
\footnote{There is also a different algorithm for quadratic form tuple equivalence in \cite{berthomieu2015polynomial}, but that algorithm only works under a regular condition. For example, because of the regular condition, their algorithm does not work with even number of variables. Indeed, an open question in \cite{berthomieu2015polynomial} is about the general case of quadratic form tuple isometry.}, which is insufficient for improving group isomorphism algorithms. 
Overcoming this obstacle is one of our main technical contributions.

Finally, a third obstacle in the case of $p=2$ is that many properties of matrices and matrix tuples over odd or 0 characteristic, which are very useful for designing faster algorithms, do not hold in characteristic 2. 
One particular example is the block diagonalization, which decomposes matrices or matrix tuples into minimum blocks, under the congruence action. 
It is known that in odd and 0 characteristic, the minimum block diagonalization under the congruence action for symmetric/alternating matrices and matrix tuples is unique (up to ``obvious'' transformations like permuting the blocks, similar to uniqueness of prime factorization). 
But in characteristic 2, it is well-known that the minimum block diagonalization for symmetric matrices is not unique. 
For alternating matrix tuples, we also give an example of non-unique congruent block diagonalization in Appendix~\ref{sec:appendix}. 
In this paper, we develop new theory of congruent block diagonalization for matrix tuples (Theorem~\ref{thm:congruence_block_diagonalization}), which is a key part of our algorithms.

\paragraph{Definition and significance of Frattini class 2.} Among 2-groups, those of Frattini class 2 form a key, initial stepping stone. A 2-group $G$ has \emph{Frattini class 2} if it has  a normal subgroup $N$ contained in the center (all elements of $N$ commute with all elements of $G$) such that both $N$ and $G/N$ are elementary abelian: $N \cong \Z_2^k$ and $G/N \cong \Z_2^\ell$ for some $k,\ell$. 

One crude, initial measure of the structural complexity of a nilpotent group is its \emph{nilpotency class}, defined as follows. Recall the center of a group $G$ is $Z(G) := \{g \in G : (\forall h \in G) gh = hg\}$; we then define the upper central series as $Z^1(G) := Z(G)$ and $Z^{k+1}(G) := \{g \in G : (\forall h \in G) gh \equiv hg \pmod{Z^k(G)}\}$. The nilpotency class of $G$ is the least $k$ such that $Z^k(G) = G$. Nilpotency class 1 thus coincides with Abelian groups, and the largest nilpotency class of a $p$-group is $\log_p|G|$. Groups of maximal class $\log_p|G|$ exist, but groups whose nilpotency class is $\log_p|G|-O(1)$---which admit a beautiful theory, e.\,g., \cite{LGM}---are also generated by $O(1)$ elements,\footnote{Quick proof: If $G$ is a $p$-group, then the Frattini subgroup is $\Phi(G) = G^p [G,G]$, $G/\Phi(G) \cong \Z_p^d$, and $G$ is generated by $d$ elements. If the group has class $\log_p|G|-O(1)$, then each factor of the lower central series has order at least $p$, and since there are $\log_p|G|-O(1)$ such factors, each factor can have order at most $p^{O(1)}$. Thus $G/\Phi(G) \cong \Z_p^d$ where $d = O(1)$.} and isomorphism of $O(1)$-generated groups can be tested in polynomial time \cite{FN,Mil78}. Class 2 $p$-groups are expected to be some of the hardest cases for isomorphism testing, as they are somewhat at the ``opposite end of the spectrum'' of groups of near-maximal class, without being Abelian (testing Abelian groups for isomorphism can be done in linear time in the group order \cite{Iliopoulos1985AnalysisAbelian,Kavitha2007LinearAbelianIso}). Just as $p$-groups are log-asymptotically most groups, the same is true even for $p$-groups of nilpotency class 2: $\lim_{n \to \infty} \frac{\log(\text{\# $p$-groups of class 2 and order } \leq n)}{\log(\text{\# all groups of order} \leq n)} = 1$ \cite{HigmanEnum,sims1965enumerating,pyber1993enumerating}.

Furthermore, class 2 $p$-groups provide important structural information about general nilpotent groups. If one merely examines the layers of the upper central series $Z^{k+1}(G) / Z^k(G)$, one gets a series of Abelian groups, which does not tell us much about the structure of the group in general: there are many pairs of non-isomorphic groups where the layers of the upper central series are all isomorphic. For example, when $n$ is a power of $p$, there are $n^{\Theta(\log ^2 n)}$ $p$-groups of order $n$, class $2$, and exponent $p$ where $Z(G) = \Z_p^{(1/3)(\log n)}$ and $G/Z(G) = \Z_p^{(2/3)\log n}$. However, if one examines the quotients two layers apart, each two-layer quotient $Z^{k+2}(G) / Z^k(G)$ is nilpotent of class 2. If two groups have all their two-layer quotients of their upper central series isomorphic, it is intuitively very hard (but not impossible) for the two groups to be non-isomorphic. Such information was incorporated into a strategy for group isomorphism testing in, e.g., \cite{BGLQW}. 

The set of groups of Frattini class 2 is a large subset of those of nilpotency class 2. They furthermore have exponent at most $p^2$ (the \emph{exponent} of a group is the least integer $e$ such that $g^e=1$ for all $g \in G$); the latter is the first interesting case for 2-groups, since groups of exponent 2 are Abelian (easy exercise). Extending from Frattini class 2 to nilpotency class 2 will require dealing with quadratic forms over non-fields such as $\Z/2^k \Z$, and though we expect our methods to be useful there, we also expect that significant additional work is needed to handle such cases.

In addition to 2-groups of Frattini class 2 being  expected to be hard cases of \GpI, their isomorphism problem also enjoys a sort of ``universality'' among a large class of multilinear algebraic problems. 2-groups of Frattini class 2 are closely related to tensors over finite fields: groups in this class can be specified more succinctly by vector space bases and (multi)linear algebraic data, and in that setting the isomorphism problem for the groups we study is \TI-hard. The \TIlong problem over finite fields is connected to several other areas including cryptography \cite{RST, NQT, TI1} and number theory \cite{GQSS25} (tensors over other fields are connected to even more areas). As with \GI, the \TIlong problem also enjoys a universality among ``degree-3 linear-algebraic isomorphism problems'' \cite{FGS19} (see discussion in \cite[Sec.~7.1]{TI1}).\footnote{Although $d$-\TI reduces to 3-\TI \cite{TI1}, it is an open question whether the stronger universality results holds for linear-algebraic isomorphism problems of arbitrary degree, that is, it is open whether all linear-algebraic isomorphism problems of degree $d$ reduce to $d$-\TI.} 


To summarize, here is a diagram putting our work in the context of prior work and the more general problem. For anyone skimming, we highlight that the $p=2$ case is not merely ``a leftover add-on'' to the odd $p$ case, but presents significant additional obstacles that we discussed above.

\begin{center}
\begin{tikzcd}[row sep=tiny, column sep = small]
\GI & \GpI \arrow{l}[above]{\text{\cite{ZKT}}} &  \algprobm{Nilpotent Group Iso} \arrow{l} \\
& \algprobm{Class 2 $p$-Group Iso} \arrow[blue, thick]{ru} & \dotsb \arrow{u}& \algprobm{Class $\log n - O(1)$ $p$-Group Iso} \arrow{lu}
\\
 & \parbox{1in}{\centering \text{Class 2, $p$ odd} \cite{GQSS25}} \arrow{u} & \text{Class 2, $p=2$} \arrow[blue,thick]{lu} \\
 \parbox{1in}{\centering \algprobm{Class 2, exp. $p$} \algprobm{$p$\text{-}Group Iso} \cite{Sun23}} \arrow{r} & \parbox{1in}{\centering \algprobm{Frattini Class 2} \algprobm{$p$-Group Iso} \cite{IMQSZ}} \arrow{u} & \parbox{1in}{\centering {\color{blue} \algprobm{Frattini Class 2} \algprobm{$2$-Group Iso} [this paper]}}\arrow[blue,thick]{u}
\end{tikzcd}
\end{center}

\subsection{Our techniques} \label{sec:intro:techniques}

We give an overview of our techniques for the main results in this section. 
\subsubsection{Reduce isomorphism testing of 2-groups of Frattini class 2 to quadratic form space isometry.} We first reduce the isomorphism testing of 2-groups of Frattini class 2 to the quadratic form space isometry problem, i.e., given two sequences of homogeneous quadratic forms $f_1, \dots, f_m$ and $g_1, \dots, g_m$, determine whether there exists an invertible matrix $S$ such that 
$\linspan\{f_1^S, \dots, f_1^S \} = \linspan\{g_1, \dots, g_m\}$. 

In \cite{HigmanEnum}, Higman developed a criterion for testing isomorphism of $p$-groups of Frattini class $2$. 
Let $G$ and $H$ be a $p$-group of Frattini class $2$. 
By the Frattini class $2$ condition, $\Phi(G)$ and $G/\Phi(G)$ are both elementary abelian $p$-groups, where $\Phi(G)$ is the intersection of maximal subgroups of $G$. Let $F_{\Phi\text{-}2, p, n}$ be the relatively free $p$-group of Frattini class $2$ with $n$ generators, that is the quotient of the free group with $n$ generators by the subgroup generated by relations of the form $a^{p^2}$, $[a,b]^p$, and $[a, [b,c]]$. Then $G$ can be identified as the quotient of $F_{\Phi\text{-}2, p, n}$ by a subgroup $S_G\leq \Phi(F_{\Phi\text{-}2, p, n})$. For the other group $H$, we obtain $S_H\leq \Phi(F_{\Phi\text{-}2, p, n})$. To test whether $G$ and $H$ are isomorphic then boils down to test whether $S_G$ and $S_H$ are in the same orbit (as subgroups in $\Phi(F_{\Phi\text{-}2, p, n})$) under the induced action of $\aut(F_{\Phi\text{-}2, p, n}/\Phi(F_{\Phi\text{-}2, p, n}))\cong \GL(n, p)$. 

The action of  $\aut(F_{\Phi\text{-}2, p, n}/\Phi(F_{\Phi\text{-}2, p, n}))$ on $\Phi(F_{\Phi\text{-}2, p, n})$ for odd $p$ was explicitly calculated by Higman \cite[Theorem 2.2]{HigmanEnum}. For $p=2$, this is more complicated. Recall that $\Phi(F_{\Phi\text{-}2, p, n})=F_{\Phi\text{-}2, p, n}^p[F_{\Phi\text{-}2, p, n},F_{\Phi\text{-}2, p, n}]$. In the case of odd $p$, the actions of $\aut(F_{\Phi\text{-}2, p, n}/\Phi(F_{\Phi\text{-}2, p, n}))$ on $F_{\Phi\text{-}2, p, n}^p$ and $[F_{\Phi\text{-}2, p, n},F_{\Phi\text{-}2, p, n}]$ are independent. For $p=2$, this is no longer the case, so some care is required. Fortunately, it turns out that this action is equivalent to the natural action of $\GL(n, 2)$ on $\S(n, 2)$, the linear space of $n\times n$ symmetric matrices over $\F_2$ (Lemma~\ref{lem:induced-action}). 

Given this, we still need one more step, as the dimension of $S_G$ may be large relative to the order of $G$. This can be resolved by going to the dual action of $\GL(n, 2)$ on $\S(n, 2)$. Unlike the odd $p$ case, the dual action of $\GL(n, 2)$ on $\S(n, 2)$ is no longer equivalent to itself, but equivalent to the action of $\GL(n, 2)$ on $Q(n, 2)$, the linear space of quadratic forms in $n$ variables over $\F_2$ (Fact~\ref{fact:dual-action}). After this step, we complete the reduction from isomorphism testing of $2$-groups of Frattini class $2$ to the quadratic form space isometry problem. 

\subsubsection{Reduce quadratic form space isometry to quadratic form tuple isometry.} \label{sec:into_overview_2}
We show that quadratic forms and alternating matrices are closely related over $\F_2$. 
For a quadratic form $f$ over variables $x_1, \dots, x_n$, let $D_f$ be its partial derivative matrix, in which the $(i, j)$ entry and the $(j, i)$ entry are the coefficient of $x_i x_j$ for all $i < j$, and the diagonal entries are zero.
For an arbitrary invertible matrix $S$, the partial derivative matrix of $f^S$ is $S^T D_f S$. 
Furthermore, this relation is also additive, i.e., for any two quadratic forms $f$ and $g$ defined on the same variables, we have $D_{(f + g)^S} = D_{f^S} + D_{g^S}$. 

In order to determine whether there is an invertible matrix $S$ such that $\linspan\{f_1^S, \dots, f_m^S\} = \linspan\{g_1, \dots, g_m\}$ for two sequences of quadratic forms $f_1, \dots, f_m$ and $g_1, \dots, g_m$, one necessary condition is that $\linspan\{D_{f_1^S}, \dots, D_{f_m^S}\} = \linspan\{D_{g_1}, \dots, D_{g_m}\}$, which is an instance of the alternating matrix tuple isometry problem. 
The alternating matrix space isometry problem over $\F_p$ for  odd prime $p$ was studied in~\cite{Sun23,IMQSZ}. Although the results of \cite{Sun23,IMQSZ} cannot be used to solve alternating matrix space isometry over $\F_2$, their techniques allow us to reduce the quadratic form space isometry problem to the quadratic form tuple isometry problem. 
More specifically, based on $f_1, \dots, f_m$ and $g_1, \dots, g_m$, we can construct two tuples of quadratic forms $\mathbf{f} = (f_1', \dots, f_{m'}')$ and $\mathbf{g} = (g_1', \dots, g_{m'}')$ in $n'$ variables over $\F_2$ with $n' = O(n + m)$ and $m'=\poly(n, m)$
so that the linear space spanned by $f_1, \dots, f_m$ is isometric to that of $g_1, \dots, g_m$ if and only if there is an invertible matrix $S'$ such that $\mathbf{f}^{S'} = \mathbf{g}$, i.e., 
$f_i'^{S'} = g_i'$ for all $1 \leq i \leq m'$. 


We show that the quadratic form tuple isometry problem for the tuples derived from the space isometry problem can be solved in time $2^{O(n'^{1.5})}\cdot \poly(m')$ time, which is $2^{O((n + m)^{1.5})}$ due to the relations between $n, m, n'$ and $m'$ (Theorem~\ref{thm:matrix_tuple_constructed_main}). Consequently, Theorem~\ref{thm:group_iso} and Theorem~\ref{thm:quadratic_space_isometry} are obtained. 
For the general quadratic form tuple isometry problem (for tuples not necessarily obtained from quadratic form space problem), Theorem~\ref{thm:quadratictuple_isometry_intro} is obtained as a special case of Theorem~\ref{thm:quadratic_space_isometry}.

\subsubsection{An algorithm for quadratic form tuple isometry.}
In the rest of this section, we give a high-level overview of our algorithm for the quadratic form tuple problem for the tuples constructed from the space isometry problem.

For simplicity of notation, we assume that the input matrix tuples are $\mathbf{f}$ and $\mathbf{g}$ of length $m$ in variables $n$ over $\F_2$, where $m$ and $n$ in this section correspond to $m'$ and $n'$ respectively in Section~\ref{sec:into_overview_2}. 

Before getting into the details, we make some remarks. Our algorithm starts with a new block diagonalization theory for alternating matrix tuples. The block diagonalization then leads to two special cases of the quadratic form tuple isometry problem, which we call the frequent block case and the infrequent block case, respectively. We develop a combinatorial approach for the frequent block case. For the infrequent block case, we show a bound on the index of a normal
$2$-subgroup 
of an associated group, and use matrix group algorithms \cite{Luks92, LuksMiyazaki02, LuksMiyazaki11}
to solve the problem. As far as we know, this is the first time that finite matrix group algorithms are utilized to make progress on the worst-case complexities of $p$-group isomorphism. 
Finally, the general problem is reduced to the two special cases. 
Note that our algorithm follows a 
different strategy from the matrix tuple isometry algorithm for odd $p$ in \cite{IQ19}. 

\paragraph{Congruent block diagonalization of alternating matrix tuples.}
Our algorithm is built on a block-diagonalization theory for alternating matrix tuples under the congruence action which we develop here. Recall that for an alternating matrix tuple $\mathbf{A}\in\M(n, \F)^m$, the congruence action of $S\in\GL(n, \F)$ sends $\mathbf{A}$ to $S^T\mathbf{A}S$.

More specifically, we show that every alternating matrix tuple can be block-diagonalized under the congruence action with each block corresponding to one or two minimum blocks under the left-right action ($\mathbf{A} \mapsto P\mathbf{A}Q^T$ for $P,Q \in \GL(n, \F)$). 
(See Definition~\ref{def:block} for precise definitions of 
blocks and minimum blocks.)
If the block for the congruence action corresponds to two minimum left-right blocks, then the two minimum left-right blocks are the transposes of one another (Theorem~\ref{thm:congruence_block_diagonalization}). 


For isometry testing, we categorize the minimum left-right blocks within a minimum left-right block diagonalization into two types: (for some $0 < \Delta < 1$ 
to be determined later) frequent blocks, which appear more than $n^{1 - \Delta}$  
times, and infrequent blocks, which do not.
Consequently, every congruent block is either a frequent congruent block---a congruent block in which the minimum left-right  blocks are all frequent---or an infrequent congruent block, in which the minimum left-right blocks are all infrequent.

Our algorithm for the quadratic form tuple isometry first considers two special cases:
\begin{enumerate}
    \item Frequent block quadratic form tuple isometry problem, in which every congruent block is frequent.
    \item Infrequent block quadratic form tuple isometry problem, in which every congruent block is infrequent.
\end{enumerate}

We first give new algorithms to solve these two special cases in time 
$2^{O(n^{1+\Delta})} \cdot \poly(m)$ 
and $2^{O(n^{2 - \Delta})}\cdot \poly(m)$ 
respectively, and then reduce the general problem to these two special cases using $\Delta = 1/ 2$. 

\paragraph{Frequent block quadratic form tuple isometry.}
We develop a combinatorial approach for the frequent block quadratic form tuple problem (Theorem~\ref{thm:small_block}). 
The intuition behind our algorithm is that the number of quadratic form tuples congruent to a given input tuple---and satisfying certain conditions---is significantly smaller than the trivial upper bound of $2^{O(n^2)}$. Moreover, all such tuples can be enumerated efficiently.

This assumed frequent block property implies that every minimum left-right block of the matrix tuple has at most $n^\Delta$ rows and  columns. 
Consequently, if we apply a linear transformation to the quadratic form tuple such that the corresponding partial derivative matrix tuple becomes congruently block diagonalized as discussed,
then the entries of the partial derivative matrix tuple that are far from the diagonal are zero. Building on this observation, we show that the number of possible congruently block diagonalized partial derivative matrix tuples is roughly bounded by $2^{O(n^{1 + \Delta})}$ (with some additional conditions). 

We then develop an algorithm to find all quadratic form tuples that are congruent to a given input quadratic form tuple and whose partial derivative matrix tuples are congruently block diagonalized. 
Our algorithm iteratively applies \emph{local operations} to an arbitrary such quadratic form tuple to discover all of them. These operations are local in the sense that each affects only one or two congruent blocks in the congruent block diagonalization, while preserving the overall block diagonal structure.
We show that for any two quadratic form tuples congruent to an input quadratic form tuple satisfying our requirements, there exists a sequence of such local operations that transforms one into another.

\paragraph{Infrequent block quadratic form tuple isometry.} Unlike the combinatorial approach for the frequent block quadratic form tuple isometry problem, 
our algorithm for the infrequent block quadratic form tuple isometry heavily relies on algebraic insights (Theorem~\ref{thm:large_block}).

Our algorithm starts with constructively solving the left-right equivalence 
problem for the partial derivative matrix tuples of the two forms. That is, we attempt to find a pair $(S_0,S_0')$ of invertible matrices such that
the second partial derivative tuple can be obtained by multiplying the first one 
by $S_0^T$ on the left and by $S_0'$ on the right. This problem can be solved in polynomial time. 
If no such pair is found then the two quadratic form tuples are not isometric. 

Otherwise any matrix $S$ that transforms the first quadratic form tuple to the second satisfies that for $S_1=S_0^{-1}S$ and $S_1'=S^{-1}S_0'$, multiplication by $S_1^T$ from the left and by $S_1'$ from the right leaves the second partial derivative matrix tuple invariant. It turns out that the set of invertible matrices $S_1$ such that there exists an invertible matrix $S_1'$ with the property above is a group $G$ having a normal $2$-subgroup $R$ of index at most $2^{n^{2-\Delta}}$ 
which can be computed in polynomial time. We take a system of representatives of the cosets of $R$ in $G$ of each element $Y$ of the system we attempt to find $X\in R$ such that $S_0YX$ transforms the first quadratic form tuple to the second one. A wide generalization of this task can be solved in polynomial time by \cite{LuksMiyazaki02,LuksMiyazaki11}, because $R$ is a $2$-group. Of course we can stop as soon as such an $X$ is found for a particular $Y$. If we exhaust the coset representatives without success then the two form tuples are non-isometric. 

\paragraph{Quadratic form tuple isometry.}
Finally, we solve the quadratic form tuple isometry problem by reducing to roughly $2^{O(n^{1+\Delta})}$
instances of the aforementioned two special cases (Theorem~\ref{thm:quadratic_form_tuple_iso_algo}).

To start, for the input quadratic form $\mathbf{f}$, we apply an invertible linear transform so that the partial derivative matrix tuple of  $\mathbf{f}$  is congruently block diagonalized as discussed above. 
Naturally, we can divide  $\mathbf{f}$ into two induced quadratic form tuples  $\mathbf{f}'$ and  $\mathbf{f}''$ such that the partial derivative matrix tuple of 
$\mathbf{f}'$ contains all frequent congruent blocks from $\mathbf{f}$, and $\mathbf{f}''$ contains all the infrequent congruent blocks. 
In the same way, we can divide $\mathbf{g}$ into $\mathbf{g}'$ and $\mathbf{g}''$. 

Our goal is to run the frequent quadratic form isometry algorithm for $\mathbf{f}'$ and $\mathbf{g}'$ and the infrequent quadratic form isometry algorithm for $\mathbf{f}''$ and $\mathbf{g}''$. 
However, it is not hard to construct an example such that  although $\mathbf{f}$ and $\mathbf{g}$ are congruent, any linear transform $S$ such that $\mathbf{f}^S = \mathbf{g}$ does not map $\mathbf{f}'$ to $\mathbf{g}'$ and $\mathbf{f}''$ to $\mathbf{g}''$. 

In our algorithm, we produce a set $\{\mathbf{f}_1, \dots, \mathbf{f}_k\}$ of quadratic forms 
that are all congruent to $\mathbf{f}$ and have their partial derivative matrix tuples congruently block diagonalized. We divide each $\mathbf{f}_i$ into $\mathbf{f}_i'$ and $\mathbf{f}_i''$ as the way to divide $\mathbf{f}$ into $\mathbf{f}'$ and $\mathbf{f}''$.
We further require that if $\mathbf{f}$ and $\mathbf{g}$ are congruent, then there exists a produced $\mathbf{f}_i$ such that $\mathbf{f}_i'$ and $\mathbf{g}'$ are congruent, and $\mathbf{f}_i''$ and $\mathbf{g}''$ are congruent. 
We give an algorithm to produce a set of roughly
$2^{O(n^{1 + \Delta})}$ 
different quadratic form tuples satisfying the above requirements, and then the quadratic form tuple isometry problem is reduced to 
$2^{O(n^{1 + \Delta})}$ 
instances of the frequent quadratic form tuple isometry problem and the infrequent quadratic form tuple isometry problem. 




\section{Preliminaries}
In this paper, we assume that the groups are given by their Cayley tables, and matrices and quadratic forms are over $\F_2$ unless otherwise specified. 

We use $\M(n \times r, \F)$ to denote the linear space of all the $n\times r$ matrices over $\F$, and $\GL(n, \F)$ to denote the general linear group of degree $n$ over $\F$.

For a matrix $U \in \M(n \times r, \F_2)$, we use $\mathrm{columnspan}(U)$ to denote the linear space spanned by the columns of the matrix $U$. The notation $U^T$ denotes the transpose of $U$. 
For a vector space $V \leq \F_2^r$, we use $U V$ to denote the linear space $\langle U v : v \in V\rangle$, where $\langle \cdot \rangle$ denotes the linear span. 

Let $V$ be a vector space such that $V = V_1 \oplus \dots \oplus V_k$. Any vector $v \in V$ has a unique $v_i \in V_i$ for all $1 \leq i \leq k$ such that $v = v_1 + \dots + v_k$. We say that $v_i$ is the projection of $v$ on $V_i$ with respect to $V_1\oplus \dots\oplus V_k$.
Furthermore, let $U$ be a matrix with each column in $V$. 
The projection of $U$ on $V_i$ with respect to $V_1 \oplus \dots \oplus V_k$ is a matrix $U'$ of the same dimensions  
as $U$ such that the $j$-th column of $U'$ is the projection of the $j$-th column of $U$ on $V_i$ with respect to $V_1 \oplus \dots \oplus V_k$.

Let $f$ be a quadratic form (=homogeneous polynomial of degree 2) in $n$ variables over $\F_2$. 
Throughout the paper, we suppose that the $n$ variables are $x_1, \dots, x_n$ unless specified. 
$f$ is defined by the coefficients $f[i, j] \in \F_2$ for all $1 \leq i \leq j \leq n$ such that  $f = \sum_{1 \leq i \leq j \leq n} f[i, j] x_i x_j$.

Let $S$ be an invertible matrix over $\F_2$. Define $f^S$ on $n$ variables $y_1, \dots, y_n$ over $\F_2$ as the quadratic form 
derived from $f$ on $x_1, \dots, x_n$ by applying the substitution of variables $x_i = \sum_{j = 1}^n S[i, j] y_j$, viz. 
$f^S(\vec{y}) := f(S \vec{y})$. 

\subsection{Left-right block diagonalization of matrix tuples}

An $n \times r$ matrix tuple $\mathbf{A}$ of length $m$ over $\F_p$ for some prime $p$ is an ordered sequence $(A_1, \dots, A_m) \in \M(n \times r, \F_p)^m$. 
A square matrix tuple $\mathbf{A} = (A_1, \dots, A_m) \in \M(n \times n, \F_p)$ is an \emph{alternating} matrix tuple if $A_i[j, j]$ = 0 for any $1 \leq i \leq m, 1 \leq j \leq n$, and $A_i^T = - A_i$ for any $1 \leq i \leq m$.
If $p = 2$, this implies that every $A_i$ is a symmetric matrix (with zero diagonal). 

For an $n \times r$ matrix tuple $\mathbf{A}$ of length $m$ and a column vector $v \in \F_p^r$, we use $\mathbf{A}v$ to denote the $n \times 1$ column vector tuple $(A_1 v, \dots, A_m v)$.

Throughout this paper, unless specified, we assume that all matrix tuples are non-degenerate, i.e., for $\mathbf{A} \in \M(n \times r, \F_p)^m$, there is no nonzero column vector $v \in \F_p^r$ such that $\mathbf{A} v = 0$, and no nonzero column vector $u \in \F_p^n$ such that $u^T \mathbf{A} = 0$.

We consider two different actions on matrix tuples. For a tuple $\mathbf{A} = (A_1, \dots, A_m) \in \M(n \times r, \F_p)^m$, the left-right action by matrices $U\in \GL(n,  \F_p)$ and $W \in \GL(r, \F_p)$ is given by $U^T \mathbf{A} W := (U^T A_1 W, \dots, U^T A_m W)$. In the congruence action, we require that the matrices in $\mathbf{A}$ are square; under this action, a matrix $U \in \GL(n, \F_p)$ acts on $\mathbf{A}$ by conjugation via $U^T \mathbf{A} U$. 

\begin{definition}
    Two matrix tuples $\mathbf{A}, \mathbf{A}' \in \M(n \times r, \F_p)^m$ are \emph{left-right equivalent} if there are invertible matrices $U \in \GL(n, \F_p)$ and $W \in \GL(r, \F_p)$ such that $U^T \mathbf{A} W = \mathbf{A}'$. 

    Two alternating matrix tuples $\mathbf{A}, \mathbf{A}' \in \M(n \times n, \F_p)^m$ are \emph{congruent} 
    if there is an invertible matrix $U \in \GL(n, \F_p)$ such that $U^T \mathbf{A} U = \mathbf{A}'$. 
\end{definition}

\begin{definition}[Left-right block and minimum left-right block]\label{def:block}
For any prime $p$, a matrix tuple $\mathbf{A} \in \M(n \times r, \F_p)^m$,  and two matrices $X \in \M(n \times n', \F_p)$, $Y \in \M(r \times r' , \F_p)$ for two integers $n'$ and $r'$,
$X^T \mathbf{A} Y$ is a \emph{left-right block} of $\mathbf{A}$ if there exist matrices $X' \in \M(n \times (n - n'), \F_p )$ and $Y' \in \M(r\times (r - r'), \F_p)$ such that invertible matrices 
\[
\begin{pmatrix}
    X & X'
\end{pmatrix} \in \GL(n, \F_p) \text{ and }\begin{pmatrix}
    Y & Y'
\end{pmatrix} \in \GL(r, \F_p)
\] satisfy
\[
\begin{pmatrix}
    X & X'
\end{pmatrix}^T \mathbf{A} \begin{pmatrix}
    Y & Y'
\end{pmatrix}
=\begin{pmatrix}
    X^T \mathbf{A} Y & \\ & X'^T \mathbf{A} Y'  
\end{pmatrix}. 
\]
$X^T \mathbf{A} Y$ is a \emph{minimum left-right block} of $\mathbf{A}$ if there does not exist $U \in \M(n' \times n'', \F_p)$ and $W \in \M(r' \times r'', \F_p)$ for some $1 \leq n'' < n'$ and $1 \leq r'' < r'$ such that $U^T X^T \mathbf{A} Y W$ is a left-right block of $\mathbf{A}$.

Moreover, for a minimum left-right block $\mathbf{B} = X^T \mathbf{A} Y$ of $\mathbf{A}$ consisting of $n'$ rows and $r'$ columns, let $\Gamma_\mathbf{B}$ be the set 
\[\begin{split}\Gamma_\mathbf{B} := \{X' \in \M(n \times n', \F_p): & \exists Y' \in \M(r \times r', \F_p)\text{ s.t. }  \\ & X'^T \mathbf{A} Y' \text{ is a left-right block and } X'^T \mathbf{A} Y' =\mathbf{B}\}, \end{split}\]   
$V_{\mathbf{B}} := \langle \Gamma_\mathbf{B} \rangle$, and $K_\mathbf{B}$ be the set difference $V_\mathbf{B} \setminus \Gamma_\mathbf{B}$. 

\end{definition}

For invertible matrices $U$ and $W$, we say that $U^T \mathbf{A} W$ is a left-right block diagonalization of $\mathbf{A}$ if $U^T \mathbf{A} W = \diag(\mathbf{A}_1,\dots, \mathbf{A}_k)$ such that each of the $\mathbf{A}_i$ is a left-right block of $\mathbf{A}$, and the left-right block diagonalization is minimum if each of $\mathbf{A}_i$ is a minimum left-right block of $\mathbf{A}$.



In this paper, we use the following results for matrix tuples under the left-right action from \cite{qiao2024canonical}. We note that these results hold even when $p=2$. 

\begin{theorem}[Rephrased, Theorem 1.3 and Lemma 7.2 of \cite{qiao2024canonical}]\label{thm:left_right_decomposition}
Let $\mathbf{A} \in \M(n \times r, \F_p)^m$ be a matrix tuple over $\F_p$ for some prime $p$. 
Let $\mathbf{B} = X^T \mathbf{A} Y$ be a minimum left-right block of $\mathbf{A}$ with $s$ rows and $t$ columns. 
Then we have
\begin{enumerate}
\item
 $K_\mathbf{B}$ is a linear subspace of $V_\mathbf{B}$, i.e., 
 $\langle K_\mathbf{B} \rangle = K_\mathbf{B}$. 

\item 
Let $U = \begin{pmatrix}U_1 \cdots U_k \end{pmatrix}$ and $W =\begin{pmatrix}W_1  \cdots  W_k \end{pmatrix}$ be two invertible matrices such that 
$U^T \mathbf{A} W = \diag(U_1^T \mathbf{A} W_1, \dots, U_k ^T\mathbf{A} W_k)$ is a left-right block diagonalization of $\mathbf{A}$. 
The projection of $X$ on any $U_i$ with respect to $U_1 \oplus \dotsb \oplus U_k$ is in $V_\mathbf{B}$. 
\end{enumerate}
\end{theorem}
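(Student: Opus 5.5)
We prove the two items separately. Both rest on the structure of minimum left-right blocks as indecomposable modules: a matrix tuple $\mathbf{A}$ defines a representation of the generalized Kronecker quiver with $m$ arrows. Left-right block diagonalizations of $\mathbf{A}$ are exactly direct-sum decompositions of this representation. The Krull--Schmidt theorem then says any two minimum decompositions have isomorphic indecomposable summands with the same multiplicities. We encode a block $\mathbf{B}=X^T\mathbf{A}Y$ by the pair $(X,Y)$. This pair is a homomorphism-type datum: the direct summand is the image of an injection $\iota:\mathbf{B}\to\mathbf{A}$ that has a retraction $\pi:\mathbf{A}\to\mathbf{B}$ with $\pi\iota=\id$. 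Here $X$ records the row part of this data (dually on the left) and $Y$ the column part.

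For item 1, the plan is to identify $V_\mathbf{B}$ with the space of ``$X$-parts'' of homomorphisms between $\mathbf{A}$ and $\mathbf{B}$. We do this in the direction compatible with $X'^T\mathbf{A}Y'=\mathbf{B}$, i.e.\ pairs $(X',Y')$ satisfying the intertwining relations that make $X'^T\mathbf{A}Y'$ a block equal to $\mathbf{B}$, after dropping the invertibility/complement requirement. This space $H$ is linear. Then $\Gamma_\mathbf{B}$ consists of those elements of $H$ that are split, i.e.\ whose composite with the fixed retraction/section gives an automorphism of $\mathbf{B}$. Since $\mathbf{B}$ is indecomposable over the finite field $\F_p$, Fitting's lemma gives that $\mathrm{End}(\mathbf{B})$ is local: its non-invertible elements form the Jacobson radical $J$, which is an ideal. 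The map $\phi:H\to \mathrm{End}(\mathbf{B})$ sending a homomorphism to its composite with the canonical splitting is linear. An element of $H$ lies in $\Gamma_\mathbf{B}$ iff $\phi$ of it is invertible, i.e.\ not in $J$. Hence the non-split elements are $\phi^{-1}(J)$, a subspace, and $V_\mathbf{B}\setminus\Gamma_\mathbf{B}$ equals this subspace intersected with $V_\mathbf{B}$. This requires checking that $V_\mathbf{B}=\langle\Gamma_\mathbf{B}\rangle$ coincides with the relevant portion of $H$. That holds because $\Gamma_\mathbf{B}$ is the complement of a proper subspace in $H$, and over any field a proper subspace's complement spans the whole space (even over $\F_2$: if $x\in\Gamma$ and $k\in K$ then $x+k\in\Gamma$, so $k=x+(x+k)$).

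For item 2, we take the decomposition $\mathbf{A}\cong\bigoplus_i U_i^T\mathbf{A}W_i$ with projections $\pi_i$. The projection of $X$ on $U_i$ corresponds to the homomorphism $\pi_i\circ\iota$ composed back into $\mathbf{A}$, so it lies in $H$. This shows the projection is in $V_\mathbf{B}$ once $V_\mathbf{B}=H$-part. Summing the projections recovers $X$, consistent with linearity.

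The main obstacle is the bookkeeping in item 1: $\Gamma_\mathbf{B}$ only constrains $X'$ with the existence of some $Y'$ and a complement. We must show this matches exactly ``$\phi(X',Y')$ invertible'', using that a homomorphism from $\mathbf{B}$ into $\mathbf{A}$ that is split by some retraction yields a direct summand. We would handle this via a standard idempotent argument: $\iota\pi$ is an idempotent endomorphism of $\mathbf{A}$, and its kernel gives the complement $(X',Y')$.
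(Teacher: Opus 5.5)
The paper gives no proof of this statement; it imports it from \cite{qiao2024canonical}. Your overall plan can be made to work: view elements of $\Gamma_\mathbf{B}$ as row parts of split homomorphisms, use that $\mathrm{End}(\mathbf{B})$ is local, then apply the spanning trick. However, the step that is supposed to produce the subspace is false as written, and there is a genuine gap there.

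\textbf{The false step.} You claim $X'\in\Gamma_\mathbf{B}$ iff its composite with \emph{the} canonical splitting of the fixed block $X^T\mathbf{A}Y$ is invertible. Being split means split by \emph{some} homomorphism in the opposite direction, and one fixed retraction cannot detect this. Take $\mathbf{A}=\diag(\mathbf{B},\mathbf{B})$ with the fixed block in the first copy, and $X_2=\begin{pmatrix}0\\ I\end{pmatrix}$. Then $X_2\in\Gamma_\mathbf{B}$ (pair it with $Y'=\begin{pmatrix}0\\ I\end{pmatrix}$), yet its composite with the retraction onto the first copy is $0\in J$. In the obvious coordinates $\phi^{-1}(J)=J\times\mathrm{End}(\mathbf{B})$, while the true $K_\mathbf{B}$ is $J\times J$.

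\textbf{The ill-defined $H$.} $H$ cannot be a linear space of pairs $(X',Y')$, because the condition $X'^T\mathbf{A}Y'=\mathbf{B}$ is bilinear and inhomogeneous. The reason is that $X$ and $Y$ are components of two homomorphisms going in opposite directions. From $(X\ X_c)^T\mathbf{A}(Y\ Y_c)=\diag(\mathbf{B},\mathbf{C})$ one gets $X^T\mathbf{A}=\mathbf{B}Z$ with $Z=(I\ 0)(Y\ Y_c)^{-1}$, and $\mathbf{A}Y=Z'\mathbf{B}$ with $Z'=(X\ X_c)^{-T}\begin{pmatrix}I\\ 0\end{pmatrix}$. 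The equation $X^T\mathbf{A}Y=\mathbf{B}$ is just the relation retraction $\circ$ section $=\id$.

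\textbf{How to repair it.} Set $H=\{X':\exists Z,\ X'^T\mathbf{A}=\mathbf{B}Z\}$; here $Z$ is unique because $\mathbf{B}$ inherits non-degeneracy from $\mathbf{A}$. Set $P=\{(Z',Y''):\mathbf{A}Y''=Z'\mathbf{B}\}$. For $X'\in H$ and $(Z',Y'')\in P$, the pair $(Z'^TX',\,Y''^TZ^T)$ is an endomorphism of $\mathbf{B}$, since $X'^TZ'\mathbf{B}=X'^T\mathbf{A}Y''=\mathbf{B}ZY''$.
\begin{enumerate}
\item Let $N$ be the set of $X'\in H$ for which this composite lies in $J$ for \emph{every} $(Z',Y'')\in P$. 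Then $N$ is a subspace, because $J$ is closed under addition.
\item If $X'\in H\setminus N$, locality makes some composite invertible; rescale it to the identity. Then $X'$ and $Y''$, with bases of $\ker Z'^T$ and $\ker Z$ as complements, exhibit $X'^T\mathbf{A}Y''=\mathbf{B}$ as a left-right block.
\item Conversely, every element of $\Gamma_\mathbf{B}$ lies in $H\setminus N$.
\end{enumerate}
Hence $\Gamma_\mathbf{B}=H\setminus N$, and your spanning argument gives $V_\mathbf{B}=H$ and $K_\mathbf{B}=N$.

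Item 2 then goes through as you indicate. If $X_i$ is the projection of $X$ on $U_i$, then $X_i^T\mathbf{A}=\mathbf{B}ZW\Pi_i'W^{-1}$, where $\Pi_i'$ is the coordinate projection onto the $i$-th column block, so $X_i\in H$.

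Finally, indecomposability of $\mathbf{B}$, and hence locality of $\mathrm{End}(\mathbf{B})$, needs the paper's standing non-degeneracy assumption. The definition of a minimum block only forbids refinements with $1\le n''<n'$ and $1\le r''<r'$.
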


\begin{theorem}[Theorem 1.1 of {\cite{qiao2024canonical}}]\label{thm:left_right_decomposition_algorithm}
Given a matrix tuple $\mathbf{A} \in \M(n \times r, \F)^n$, 
there is a Left-Right Canonical Form Algorithm to compute $\diag(\mathbf{B}_1, \dots, \mathbf{B}_k)$ and two invertible matrices $U \in \GL(n, \F), W \in \GL(r, \F)$ such that 
$U^T \diag(\mathbf{B}_1, \dots, \mathbf{B}_k) W = \mathbf{A}$, and every $\mathbf{B}_i$ is a minimum left-right block of $\mathbf{A}$. 
The algorithm also outputs $V_{\mathbf{B}_i}$ and $K_{\mathbf{B}_i}$ for every $1 \leq i \leq k$.
\end{theorem}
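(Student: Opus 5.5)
The plan is to read a matrix tuple under the left-right action as a representation of the generalized Kronecker quiver with $m$ arrows. We set $\mathcal{V}=\F^r$ and $\mathcal{W}=\F^n$, and let the $m$ linear maps be $A_1,\dots,A_m\colon\mathcal{V}\to\mathcal{W}$. A pair of invertible matrices $(U,W)$ acting by $U^T\mathbf{A}W$ is a change of basis of this representation. Under this dictionary, left-right blocks of $\mathbf{A}$ in the sense of Definition~\ref{def:block} are direct summands, and minimum left-right blocks are indecomposable summands. The Krull--Schmidt theorem then says the multiset of minimum blocks is well defined up to isomorphism. The task reduces to decomposing a module over a finite-dimensional algebra constructively.

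\textbf{Step 1: the endomorphism algebra.} We compute
\[
E=\{(P,Q)\in \M(n,\F)\times\M(r,\F) : P A_i=A_i Q \text{ for all } i\}.
\]
This is the solution space of a homogeneous linear system, so a basis is found in polynomial time, and $E$ is closed under multiplication.

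\textbf{Step 2: primitive idempotents of $E$.} We compute the radical $\Rad E$ using the polynomial-time radical algorithms over finite fields (Rónyai; Cohen--Ivanyos--Wales). Next we compute the Wedderburn decomposition of $E/\Rad E$ into simple components, which are full matrix algebras $\M(d_j,\F_{q_j})$. Inside each component we find a complete set of primitive orthogonal idempotents. Finally we lift these idempotents modulo the nilpotent ideal $\Rad E$ by the standard iteration $e\mapsto 3e^2-2e^3$, followed by orthogonalization.

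\textbf{Step 3: the blocks.} Each primitive idempotent $(P_\ell,Q_\ell)$ yields column bases $X_\ell$ of $\mathrm{im}\,P_\ell^T$ and $Y_\ell$ of $\mathrm{im}\,Q_\ell$. Concatenating them gives $U$ and $W$ with $U^T\mathbf{A}W$ block diagonal, with indecomposable, hence minimum, blocks $\mathbf{B}_\ell$. We invert so that the output has the form stated in the theorem. We also group the blocks into isomorphism classes; isomorphism of two blocks is itself a module-isomorphism test, which is polynomial time.

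\textbf{Step 4: computing $V_\mathbf{B}$ and $K_\mathbf{B}$.} This is the step I expect to be the main obstacle, since it is a precise structural statement rather than a routine module computation. My approach is to describe everything through $\mathrm{Hom}(\mathbf{B},\mathbf{A})$ and $\mathrm{Hom}(\mathbf{A},\mathbf{B})$, both of which are computable as linear systems. An $X'$ lies in $\Gamma_\mathbf{B}$ exactly when there is a pair
\[
\varphi=(X',Y')\in\mathrm{Hom}(\mathbf{B},\mathbf{A}) \text{ and some } \psi\in\mathrm{Hom}(\mathbf{A},\mathbf{B}) \text{ with } \psi\varphi=\id,
\]
meaning $\varphi$ is a split embedding. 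Because $\mathrm{End}(\mathbf{B})$ is local, $\psi\varphi$ is either invertible or lies in $\Rad\,\mathrm{End}(\mathbf{B})$.

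From this I would identify $V_\mathbf{B}$ with the set of left components of all of $\mathrm{Hom}(\mathbf{B},\mathbf{A})$, or at least of the $\F$-span of the split embeddings. I would identify $K_\mathbf{B}$ with the left components of the subspace
\[
\{\varphi : \psi\varphi\in\Rad\,\mathrm{End}(\mathbf{B}) \text{ for all } \psi\}.
\]
This set is a subspace because it is defined by membership in an ideal. Showing that these identifications match Definition~\ref{def:block} exactly is the delicate part. In particular, one must show that adding a non-split homomorphism to a split embedding stays split, which is where locality of $\mathrm{End}(\mathbf{B})$ enters, and that the normalization $X'^T\mathbf{A}Y'=\mathbf{B}$ can be achieved by composing with an automorphism of $\mathbf{B}$. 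Once these identifications are verified, both spaces are computed by linear algebra in polynomial time.
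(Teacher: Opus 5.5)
Your Steps 1--3 are a sound, standard route: decompose the Kronecker-quiver module via primitive idempotents of its endomorphism algebra, all doable deterministically in polynomial time over $\F_2$. This differs from the paper, which gives no proof and simply imports the result from \cite{qiao2024canonical}; that algorithm also returns a \emph{canonical} form. Your idempotent lifting involves arbitrary choices, so it yields some minimum block diagonalization but not a canonical one. That suffices for the statement as rephrased. The one place the paper compares canonical forms (Step~1 of the Non-Zero-Square-Term algorithm) only needs a left-right equivalence test, which can be done directly as in Step~1 of the infrequent-block algorithm.

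The genuine problem is Step~4: the identification you propose is false. In your dictionary, $(P,Q)\in\Hom(\mathbf{B},\mathbf{A})$ means $A_iQ=PB_i$. The pair in the definition of $\Gamma_\mathbf{B}$ instead satisfies $X'^TA_iY'=B_i$. Suppose $\begin{pmatrix}X'&X''\end{pmatrix}^T\mathbf{A}\begin{pmatrix}Y'&Y''\end{pmatrix}=\diag(\mathbf{B},\ast)$. Then $Y'$ is indeed the $\F^r$-component of the split embedding $\mathbf{B}\to\mathbf{A}$. But that embedding's $\F^n$-component is the matrix $T$ with $X'^TT=I$ and $X''^TT=0$, not $X'$. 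It is $X'^T$ that is the $\F^n$-component of the split \emph{projection} $\mathbf{A}\to\mathbf{B}$.

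Concretely, over $\F_2$ let $\mathbf{A}=\mathbf{B}=(I_2,J)$ with $J=\begin{pmatrix}0&1\\0&0\end{pmatrix}$. This tuple is non-degenerate and indecomposable, since its endomorphisms are $(aI+bJ,\,aI+bJ)$. Here $X'\in\Gamma_\mathbf{B}$ iff $Y'=X'^{-T}$ and $X'^T$ commutes with $J$. So $V_\mathbf{B}=\langle I,J^T\rangle$ and $K_\mathbf{B}=\{0,J^T\}$. The left components of $\Hom(\mathbf{B},\mathbf{A})$ and of its non-split part are instead $\langle I,J\rangle$ and $\{0,J\}$.

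The repair keeps your idea but dualizes it. Use $\Hom(\mathbf{A},\mathbf{B})$, i.e.\ pairs $\pi=(\pi_s,\pi_t)$ with $\pi_tA_i=B_i\pi_s$, together with the linear map $\tau(\pi)=\pi_t^T$. Equivalently, use source components of split embeddings $\mathbf{B}^T\to\mathbf{A}^T$. Then you must actually prove the facts your sketch leaves open:
\begin{enumerate}
\item[(i)] $X'\in\Gamma_\mathbf{B}$ iff $X'=\tau(\pi)$ for a split epimorphism $\pi$. Given a section $\iota$, put $Y'=\iota_s$, let $\mathrm{col}(X'')$ be the annihilator of $\mathrm{col}(\iota_t)$, and let $\mathrm{col}(Y'')=\ker\pi_s$. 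Then $X'^TA_iY'=B_i\pi_s\iota_s=B_i$ automatically, so no renormalization by $\mathrm{Aut}(\mathbf{B})$ is needed. For the converse, read $\pi_s$ and $\iota_t$ off the inverses of $\begin{pmatrix}Y'&Y''\end{pmatrix}$ and $\begin{pmatrix}X'&X''\end{pmatrix}^T$.
\item[(ii)] Since $\mathrm{End}(\mathbf{B})$ is local, the non-split maps are exactly $\mathrm{rad}(\mathbf{A},\mathbf{B})=\{\pi:\pi\iota\in\Rad\,\mathrm{End}(\mathbf{B})\text{ for all }\iota\}$. This is a proper subspace because $\mathbf{B}$ is a summand.
\item[(iii)] The complement of a proper subspace $R\subsetneq H$ spans $H$: if $h_0\notin R$, then $r=(h_0+r)-h_0$. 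Hence $V_\mathbf{B}=\tau(\Hom(\mathbf{A},\mathbf{B}))$, not merely the span of the split maps.
\item[(iv)] $\tau$ is injective. Indeed, $\tau(\pi)=0$ forces $B_i\pi_s=0$ for all $i$, hence $\pi_s=0$, because a block of a non-degenerate tuple has no nonzero $v$ with $\mathbf{B}v=0$.
\end{enumerate}
Without (iv), the equality $K_\mathbf{B}=V_\mathbf{B}\setminus\Gamma_\mathbf{B}=\tau(\mathrm{rad}(\mathbf{A},\mathbf{B}))$ could fail, since one $X'$ might arise from both a split and a radical map. With (i)--(iv), both spaces are computable by linear algebra as you intended: check the condition in (ii) on a basis of $\Hom(\mathbf{B},\mathbf{A})$. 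You also get Theorem~\ref{thm:left_right_decomposition}(1) as a by-product.
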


\subsection{Tensor isomorphism}

A (3-)tensor $\mathfrak{T}$ of dimension $n \times m \times r$ over a finite field $\F$ is an element in $\F^{m \times n \times r}$. 
For convenience, we use $\mathfrak{T}[i, j, k]$ to denote the $(i, j, k)$-th entry of $\mathfrak{T}$. 
We also use $\mathcal{X}_{\mathfrak{T}, i}$ to denote the $n \times r$ matrix such that $\mathcal{X}_{\mathfrak{T}, i}[j, k] = \mathfrak{T}[i, j, k]$, 
$\mathcal{Y}_{\mathfrak{T}, j}$ to denote the $m \times r$ matrix such that $\mathcal{Y}_{\mathfrak{T}, j}[i, k] = \mathfrak{T}[i, j, k]$, 
and $\mathcal{Z}_{\mathfrak{T}, k}$ to denote the $m \times n$ matrix such that $\mathcal{Z}_{\mathfrak{T}, k}[i, j] = \mathfrak{T}[i, j, k]$. 

We say that a tensor $\mathfrak{T}$ is an alternating tensor if the tensor is of dimension $m \times n \times n$ such that $\mathcal{X}_{\mathfrak{T}, i}^T = -\mathcal{X}_{\mathfrak{T}, i}$ for any $1 \leq i \leq m$.

For an alternating tensor $\mathfrak{T}$ and two invertible matrices $M \in \mathrm{GL}(m, \F), N \in \mathrm{GL}(n, \F)$, we use 
$\mathfrak{T}_{\mathrm{Trans}(M, N)}$ to denote the tensor $\mathfrak{S}$ such that $\mathcal{X}_{\mathfrak{S}, i} = \sum_{j = 1}^m M[j, i] N^T \mathcal{X}_{\mathfrak{T}, j} N$. 
\begin{problem}[Tensor isomorphism problem]\label{def:tensor_iso}
Given two alternating 3-tensors $\mathfrak{T}$ and $\mathfrak{T}'$ of dimension $m \times n \times n$ over some  field $\F$, determine if there exist two matrices $M \in \mathrm{GL}(m, \F)$ and $N \in \mathrm{GL}(n, \F)$ such that $\mathfrak{T}_{\mathrm{Trans}(M, N)} = \mathfrak{T}'$. 
\end{problem}

In \cite{Sun23}, a $p^{O((n + m)^{1.8}\log p)}$ time algorithm for the tensor isomorphism problem with tensors over $\F_p$, where $p$ is odd, was given. This result was improved to $p^{O((n + m)^{1.5}\log p)}$ in~\cite{IMQSZ}. However, neither result works for tensors over $\F_2$. 

We say that an $m \times n \times n$ alternating tensor $\mathfrak{T}$ is with respect to the parameters $\alpha_0, \alpha_1, \beta_0, \beta_1$ if the following two conditions hold: 
\begin{enumerate}
\item For any $1 \leq i \leq \alpha_0$, $\mathcal{X}_{\mathfrak{T}, i}[j, k] = 0$ for any $1 \leq j, k \leq \beta_0 + \beta_1$. 
\item For any $\alpha_0 + 1 \leq i \leq \alpha_0 + \alpha_1$, $\mathcal{X}_{\mathfrak{T}, i}[j, k] = 0$ for any $1 \leq j \leq \beta_0 + \beta_1, 1 \leq k \leq \beta_0$ and any $1 \leq j \leq \beta_0,  1 \leq k \leq \beta_0 + \beta_1$. 
\end{enumerate}
\begin{problem}[Restricted tensor isomorphism problem]\label{def:restricted_tensor_iso}
Given two alternating 3-tensors $\mathfrak{T}$ and $\mathfrak{T}'$ of dimension $m \times n \times n$ over some finite field $\F$ with respect to parameters $\alpha_0, \alpha_1, \beta_0, \beta_1$, determine if there are matrices $M \in \mathrm{GL}(m, \F)$ and $N \in \mathrm{GL}(n, \F)$ such that $\mathfrak{T}_{\mathrm{Trans}(M, N)} = \mathfrak{T}'$ with 
\begin{equation}\label{equ:restricted_tensor_iso}
M = \begin{pmatrix}
M_1\\
& I_{m - \alpha_0 - \alpha_1}
\end{pmatrix}
\text{ and }
N = \begin{pmatrix}
N_1 \\
& I_{n - \beta_0 - \beta_1}
\end{pmatrix},
\end{equation}
where 
\begin{equation}\label{equ:restricted_tensor_iso_1}
M_1 = \begin{pmatrix}
M' & M''  \\
& I_{\alpha_1} \\
\end{pmatrix}
\end{equation}
and 
\begin{equation}\label{equ:restricted_tensor_iso_2}
N_1 = \begin{pmatrix}
N' & N''\\
 & I_{\beta_1} \\
\end{pmatrix}
\end{equation}
for some $M' \in \mathrm{GL}(\alpha_0, \F)$, $M'' \in \mathrm{M}(\alpha_0 \times \alpha_1, \F)$, 
$N' \in \mathrm{GL}(\beta_0, \F)$, and $N'' \in \mathrm{M}(\beta_0 \times \beta_1, \F)$.
\end{problem}

Although the tensor isomorphism algorithms in~\cite{Sun23,IMQSZ} do not work for tensors over $\F_2$, the following reduction from the tensor isomorphism problem to the restricted tensor isomorphism problem continues to work over $\F_2$.

\begin{theorem}[Rephrased, Lemma 5.14 of \cite{Sun23}, Lemma 3.4 and Lemma 3.6 of \cite{IMQSZ}]\label{thm:tensor_iso_to_restricted_tensor_iso}
For two input alternating tensors $\mathfrak{T}$ and $\mathfrak{T}'$ of dimension $m \times n \times n$ over $\F_p$ for any prime $p$, 
there is an algorithm, called Tensor Isomorphism Reduction Algorithm,  with running time $p^{\tilde O((m + n)n^{0.5})}$
to reduce the tensor isomorphism problem 
$p^{\tilde O((m + n)n^{0.5})}$
instances of the restricted tensor isomorphism problem.
For any $1 \leq i \leq r$, the algorithm produces invertible matrices $A_i, A_i', B_i, B_i'$
and parameters $\alpha_{i, 0}, \alpha_{i, 1}, \beta_{i, 0}, \beta_{i, 1}$ satisfying $0 \leq m - \alpha_{i, 0} - \alpha_{i, 1}, n - \beta_{i, 0} - \beta_{i, 1} \leq n^{0.5}$ such that 
the $i$-th input instance of the restricted tensor isomorphism problem is $\mathfrak{T}_{\mathrm{Trans}(A_i, B_i)}, \mathfrak{T}'_{\mathrm{Trans}(A_i', B_i')}$ with respect to $\alpha_{i, 0}, \alpha_{i, 1}, \beta_{i, 0}, \beta_{i, 1}$.


Moreover, 
for any $P$ and $Q$ satisfying $\mathfrak{T}' = \mathfrak{T}_{\mathrm{Trans}(P, Q)}$, there exist $1 \leq i \leq r$ and $M, N$ satisfying Equations~(\ref{equ:restricted_tensor_iso}), (\ref{equ:restricted_tensor_iso_1}), and (\ref{equ:restricted_tensor_iso_2}) 
as a solution for the $i$-th instance of the restricted tensor isomorphism problem such that $P = A_i M A_i'^{-1}$ and $Q = B_i N B_i'^{-1}$. 
\end{theorem}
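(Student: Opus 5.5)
This statement is quoted from \cite{Sun23,IMQSZ}, so the plan is to check that their argument never uses that $p$ is odd. The underlying strategy is individualization-and-refinement on the tensor, driven by a random-subspace or low-rank argument on the frontal slices $\mathcal{X}_{\mathfrak{T},i}$. First, from $\mathfrak{T}$ we pick a small set of $O(n^{0.5})$ vectors (or slices) and guess their images in $\mathfrak{T}'$. This costs $p^{\tilde O((m+n)n^{0.5})}$ guesses, one choice of $(A_i',B_i')$ per guess. After this individualization we change bases by $A_i,B_i$ and $A_i',B_i'$. In the new bases the last $m-\alpha_{i,0}-\alpha_{i,1}$ slice coordinates and the last $n-\beta_{i,0}-\beta_{i,1}$ vector coordinates, each at most $n^{0.5}$ in number, are the individualized ones. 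Any isomorphism consistent with the guess must fix these, which gives the identity blocks in Equation~(\ref{equ:restricted_tensor_iso}).

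Next comes the refinement step, which produces the block-triangular shape (\ref{equ:restricted_tensor_iso_1})--(\ref{equ:restricted_tensor_iso_2}). Let $W$ be the individualized vector space. Define $\beta_1$-type coordinates as those whose pairing with $W$ under the slices is nontrivial, and $\beta_0$-type coordinates as the rest, i.e.\ the radical-type part relative to the fixed data. Slices are split the same way: the $\alpha_0$ slices vanish on the whole top-left $(\beta_0+\beta_1)$ block, and the $\alpha_1$ slices vanish whenever a $\beta_0$ index is involved. These subspaces are defined canonically from the individualized data by kernels of linear maps. Hence any isomorphism $(P,Q)$ extending the guess preserves the resulting flags. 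This yields the unitriangular form with identity on the $\alpha_1$ and $\beta_1$ diagonal blocks; the identity there comes from the fact that these blocks are determined by their action on the fixed individualized vectors. All of this is linear algebra (kernels, images, flags) valid over any field, including $\F_2$. The alternating property is used only through $\mathcal{X}^T=-\mathcal{X}$, which in characteristic 2 means symmetric with zero diagonal, and this is still enough for the left and right kernels to coincide.

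The completeness claim is then routine. Given $(P,Q)$ with $\mathfrak{T}'=\mathfrak{T}_{\mathrm{Trans}(P,Q)}$, one of the guesses sends the chosen vectors to their true images under $(P,Q)$. For that index $i$, set $M=A_i^{-1}PA_i'$ and $N=B_i^{-1}QB_i'$ (with inverses oriented to match the paper's convention $P=A_iMA_i'^{-1}$). Canonicity of the flags then forces $M,N$ to have the required shape.

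The main obstacle is the counting bound. One must show that $\tilde O(n^{0.5})$ individualized vectors, together with up to $m$ slice images, suffice to push the unconstrained part down to codimension $\le n^{0.5}$. In \cite{IMQSZ} this is done by an averaging argument over ranks of slices restricted to a random subspace: either many vectors have low-rank images, which can then be absorbed into the $\beta_0$ part, or a few random vectors already span a large image. I would verify that this argument uses only dimension counts over $\F_p$, with no division by $2$ and no square-root or quadratic-character facts. That check is what justifies the remark that the reduction continues to work over $\F_2$.
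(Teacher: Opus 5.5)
Your proposal takes the same route as the paper: the paper gives no proof of its own, but cites Lemma 5.14 of \cite{Sun23} and Lemmas 3.4 and 3.6 of \cite{IMQSZ} and asserts that this individualization--refinement reduction is pure linear algebra and so carries over to $\F_2$. One caution about your reconstruction: you characterize the $\beta_0$ part as the vectors pairing trivially with the individualized space $W$, but the restricted form asks for something else. It only requires the non-individualized slices to vanish on (first $\beta_0$ coordinates) $\times$ (first $\beta_0+\beta_1$ coordinates), and says nothing about pairings with the last $n-\beta_0-\beta_1$ coordinates, so the characteristic check has to be run against the cited lemmas themselves rather than your sketch.
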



\section{Reduce Frattini class 2 group isomorphism to quadratic form space isometry}


In this section, we reduce the problem of testing isomorphism of $2$-groups of Frattini class 2 to the following quadratic form space isometry problem.

\begin{problem}[Quadratic form space isometry problem]\label{problem:quadratic_form_space_iso}
Let $Q(n, \F_2)$ be the linear space of quadratic forms in $n$ variables over $\F_2$.
Given two sequences of quadratic forms $f_1, \dots f_m$ and $g_1, \dots, g_m$ defined in $n$ variables, determine whether there exists an invertible matrix $S \in \mathrm{GL}(n, \F_2)$ such that $\langle f_1^S, \dots, f_m^S \rangle = \langle g_1, \dots, g_m\rangle$, where $\langle \cdot \rangle$ denotes the linear span.
\end{problem}

This reduction is based on Higman's results on enumerating $p$-groups of Frattini class $2$ \cite{HigmanEnum}. We recall some results from there, and we will refer to \cite{BNV07} which contains a more detailed exposition of the results in \cite{HigmanEnum}.

\begin{definition}\label{def:frattini}
Let $G$ be a finite group. The \emph{Frattini subgroup} of $G$, denoted as $\Phi(G)$, is the intersection of all maximal subgroups of $G$. 

A group $G$ is a  \emph{$p$-group of Frattini class $2$} for some prime $p$, if the order of every element of $G$ is  a power of $p$, and 
its Frattini subgroup is central and elementary abelian, or equivalently, $G$ satisfies that for any $a, b, c\in G$, $a^{p^2}=\id$, $[a, b]^p=\id$, and $[a, [b, c]]=\id$, where $[a, b] = a^{-1} b^{-1} a b$, and $\id$ is the identity of the group. 

The \emph{relatively free $p$-group of Frattini class $2$ with $n$ generators}, denoted as $F_{\Phi\text{-}2, p, n}$, is the quotient of the free group with $n$ generators $\{a_1, \dots, a_n\}$ by the subgroup generated by all words of the form $a^{p^2}$, $[a, b]^p$, and $[a, [b, c]]$.
\end{definition}

\paragraph{Relatively free $p$-groups of Frattini class $2$.} 
Let $F_{\Phi\text{-}2, p, n}$ be the relatively free $p$-group of Frattini class $2$ with $n$ generators.
Let $a_1, \dots, a_n$ be a set of generators of $F_{\Phi\text{-}2, p, n}$. Let $P$ be the group generated by $a_i^2$, $i\in[n]$. Let $D$ be the commutator subgroup generated by $[a_i, a_j]$ for $1\leq i<j\leq n$. By \cite[pp. 24]{BNV07}, $a_i^p$ and $[a_i,a_j]$ form a minimal generating set of $\Phi(F_{\Phi\text{-}2, p, n})$; that is, there is no non-trivial relation between them. Therefore, \[\Phi(F_{\Phi\text{-}2, p, n})\cong\Z_p^{n+\binom{n}{2}}.\] 


\paragraph{Automorphisms of $F_{\Phi\text{-}2, p, n}$.} Let $\alpha$ be an automorphism of $F_{\Phi\text{-}2, p, n}$. By \cite[Lemma 4.2]{BNV07}, if $\alpha$ induces the identity mapping on $F_{\Phi\text{-}2, p, n}/\Phi(F_{\Phi\text{-}2, p, n})$, then $\alpha$ fixes $\Phi(F_{\Phi\text{-}2, p, n})$ pointwise. Or in other words, the natural homomorphism from $\aut(F_{\Phi\text{-}2, p, n})$ to $\aut(F_{\Phi\text{-}2, p, n}/\Phi(F_{\Phi\text{-}2, p, n}))$ is injective. It is also clear that any $\alpha'\in \aut(F_{\Phi\text{-}2, p, n}/\Phi(F_{\Phi\text{-}2, p, n})$ yields $\alpha\in\aut(F_{\Phi\text{-}2, p, n})$. So the natural homomorphism 
from $\aut(F_{\Phi\text{-}2, p, n})$ to $\aut(F_{\Phi\text{-}2, p, n}/\Phi(F_{\Phi\text{-}2, p, n}))$ is an isomorphism.

\paragraph{Higman's isomorphism criterion.} Let $G$ be a $p$-group of Frattini class $2$, such that the Frattini quotient $G/\Phi(G)$ is isomorphic to $\Z_p^n$. Then there exists $S_G\leq \Phi(F_{\Phi\text{-}2, p, n})$, such that $G$ is isomorphic to $F_{\Phi\text{-}2, p, n}/S_G$. Let $H$ be another $p$-group of Frattini class $2$ with $H/\Phi(H)\cong \Z_p^n$, so that $H\cong F_{\Phi\text{-}2, p, n}/S_H$, where $S_H\leq \Phi(F_{\Phi\text{-}2, p, n})$. By \cite[Lemma 4.3]{BNV07}, $G$ and $H$ are isomorphic if and only if the induced action of $\aut(F_{\Phi\text{-}2, p, n}/\Phi(F_{\Phi\text{-}2, p, n}))\cong \GL(n, \F_p)$ on $\Phi(F_{\Phi\text{-}2, p, n})$ sends $S_G$ to $S_H$ as subgroups of $\Phi(F_{\Phi\text{-}2, p, n})$.

\paragraph{The induced action of $\aut(F_{\Phi\text{-}2, p, n}/\Phi(F_{\Phi\text{-}2, p, n}))$ on  $\Phi(F_{\Phi\text{-}2, p, n})$.} Higman noted that for $p>2$, the induced action of $\aut(F_{\Phi\text{-}2, p, n}/\Phi(F_{\Phi\text{-}2, p, n}))$ on  $\Phi(F_{\Phi\text{-}2, p, n})$ is the natural action of $\GL(n, \F_p)$ on $(\F_p^n\wedge \F_p^n)\oplus \F_p^n$, where $\wedge$ denotes the exterior product \cite[Theorem 2.2]{HigmanEnum}. The case of $p=2$ seems not explicitly written in the literature, so we include an exposition here. 

\begin{lemma}\label{lem:induced-action}
    Let $p=2$. The induced action of $\aut(F_{\Phi\text{-}2, 2, n}/\Phi(F_{\Phi\text{-}2, 2, n}))$ on  $\Phi(F_{\Phi\text{-}2, 2, n})$ is equivalent to the congruence action of $\GL(n, \F_2)$ on $\S(n, 2)$, the linear space of $n\times n$ symmetric matrices over $\F_2$.
\end{lemma}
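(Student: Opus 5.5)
We will write down an explicit $\F_2$-linear isomorphism $\Psi:\Phi(F_{\Phi\text{-}2,2,n})\to \S(n,2)$ and check that it intertwines the two actions. Write $F=F_{\Phi\text{-}2,2,n}$ and $\Phi=\Phi(F)$. By the discussion above, $\Phi\cong\Z_2^{n+\binom n2}$ with basis $a_i^2$ ($i\in[n]$) and $[a_i,a_j]$ ($i<j$). Note that $\dim \S(n,2)=n+\binom n2$. Define $\Psi(a_i^2)=E_{ii}$ and $\Psi([a_i,a_j])=E_{ij}+E_{ji}$ for $i<j$, and extend linearly; this is a bijection. The key conceptual point is that the squaring map $q:F/\Phi\to\Phi$, $x\Phi\mapsto x^2$, is well defined because $\Phi$ is central of exponent $2$, and it is a ``quadratic map'' whose polarization is the commutator. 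Indeed, since $F$ has class $2$,
\[(xy)^2=x^2y^2[y,x].\]
Here commutators are bilinear and alternating, and $[y,x]=[x,y]$ because they have order $2$. For $x=\prod_i a_i^{v_i}$ with $v\in\F_2^n$, induction on this identity gives
\[q(v)=\prod_i (a_i^2)^{v_i}\prod_{i<j}[a_i,a_j]^{v_iv_j}.\]
Hence $\Psi(q(v))=vv^T$, where the off-diagonal entries of $vv^T$ are $v_iv_j$ and the diagonal entries are $v_i^2=v_i$. Similarly, $\Psi([x_u,x_v])=uv^T+vu^T$.

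Next, take $\alpha\in\aut(F)$, identified with $S\in\GL(n,\F_2)$ via its action on $F/\Phi\cong\F_2^n$, so that $\alpha(a_i)\equiv\prod_k a_k^{S[k,i]}\pmod\Phi$. The induced action on $\Phi$ is determined on generators. Since $\Phi$ is central of exponent $2$, $\alpha(a_i)^2$ and $[\alpha(a_i),\alpha(a_j)]$ depend only on the cosets mod $\Phi$, which is the same observation that makes $q$ well defined. So $\Psi(\alpha(a_i^2))=s_is_i^T$, where $s_i$ is the $i$th column of $S$, and $\Psi(\alpha([a_i,a_j]))=s_is_j^T+s_js_i^T$. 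On the other hand, the congruence action gives $S E_{ii}S^T=s_is_i^T$ and $S(E_{ij}+E_{ji})S^T=s_is_j^T+s_js_i^T$. Both sides are linear, because $\alpha$ restricted to $\Phi$ is a group homomorphism, so $\Psi\circ\alpha|_\Phi=(M\mapsto SMS^T)\circ\Psi$. This is the congruence action, up to replacing $S$ by $S^T$ (or by $S^{-T}$, depending on left/right conventions), which is just a relabeling of the group by an automorphism or anti-automorphism composed with inversion. Finally, the isomorphism $\aut(F)\cong\GL(n,\F_2)$ recalled above ensures every $S$ arises.

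The main obstacle is the bookkeeping in the formula for $q(v)$: keeping track of the cross term $[y,x]$ in $(xy)^2$ and confirming that it produces exactly one copy of each $[a_i,a_j]$. This is the place where characteristic $2$ couples the squares to the commutators, unlike Higman's odd-$p$ decomposition. I would verify it by induction on the support of $v$, using centrality of $\Phi$ and bilinearity of commutators in class $2$.
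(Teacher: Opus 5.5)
Your proposal is correct and follows essentially the same route as the paper. Both arguments use $(xy)^2=x^2y^2[y,x]$ with $[y,x]=[x,y]$ to expand squares of products of generators, identify $a_i^2$ with diagonal entries and $[a_i,a_j]$ with the symmetric off-diagonal pair, and then check on generators that the induced action agrees with congruence. The remaining differences are cosmetic: you use the column convention $M\mapsto SMS^T$ where the paper uses rows and gets $U^TVU$ directly, and you verify the commutator part explicitly via polarization where the paper only asserts it ``can be verified''.
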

\begin{proof}
In the following we shall make use of some basic properties of commutator brackets, as summarized in \cite[Section 3.2]{BNV07}.

Let $F_{\Phi\text{-}2, p, n}$ be the relatively free $p$-group of Frattini class $2$ with $n$ generators.
Let $a_1, \dots, a_n$ be a set of generators of $F_{\Phi\text{-}2, p, n}$. Recall that $(a_ia_j)^p=a_i^pa_j^p[a_j, a_i]^{\binom{p}{2}}$, $P$ is the group generated by $a_i^p$, $i\in[n]$, and $D$ is the commutator subgroup generated by $[a_i, a_j]$, for $1\leq i<j\leq n$. 


When $p=2$, we have $(a_ia_j)^2=a_i^2a_j^2[a_j,a_i]$. As $[a_j,a_i]=[a_i,a_j]^{-1}=[a_i,a_j]$ (using the fact that in $F_{\Phi\text{-}2, 2, n}$, $[a_i,a_j]^2=\id$), this leads to $(a_ia_j)^2=a_i^2a_j^2[a_i,a_j]$. Therefore, $P$ is not invariant under $\aut(F_{\Phi\text{-}2, 2, n}/\Phi(F_{\Phi\text{-}2, 2, n}))$, unlike the odd $p$ case. By induction, it can be shown that $(a_{i_1}a_{i_2}\dots a_{i_s})^2=a_{i_1}^2a_{i_2}^2\dots a_{i_s}^2 \prod_{1\leq j<k\leq s}[a_{i_j}, a_{i_k}]$.

We identify $\aut(F_{\Phi\text{-}2, 2, n}/\Phi(F_{\Phi\text{-}2, 2, n}))\cong \GL(n, 2)$ and write the group operations in $F_{\Phi\text{-}2, 2, n}$ by additive notation. 

It can be checked easily that $D$ is invariant under $\aut(F_{\Phi\text{-}2, 2, n}/\Phi(F_{\Phi\text{-}2, 2, n}))$. This is because $[ab,c]=[a,c][[a,c],b][b,c]$, so by $[[a,c],b]=\id$ in $F_{\Phi\text{-}2, 2, n}$, we have $[ab,c]=[a,c][b,c]$. It can be verified that the action of $\aut(F_{\Phi\text{-}2, 2, n}/\Phi(F_{\Phi\text{-}2, 2, n}))$ on $D$ is equivalent to the action of $\GL(n, 2)$ on the linear space of alternating matrices over $\F_2$, similar to the odd $p$ case.

It remains to understand the induced action of $\aut(F_{\Phi\text{-}2, 2, n}/\Phi(F_{\Phi\text{-}2, 2, n}))$ on $P$. 
As mentioned above, this action will take elements of $P$ outside of $P$ itself (into the group $PD$), so we are really understanding the part of the action on $PD$ where we start with an element of $P$ and an automorphism, and see where that element is mapped by the automorphism.

Note that $P\cong \Z_2^n$, which can be understood as the vector space of length-$n$ \emph{row} vectors over $\F_2$.

Let $U=(u_{i,j})\in \GL(n, 2)$. 
Let ${v}=(v_1, \dots, v_n)\in \F_2^n$, and let $V=\diag({v})\in\M(n, 2)$ be the diagonal matrix with $V(i,i)=v_i$. We claim that the coefficients in $U^T V U$ 
correspond to the action of $U\in\aut(F_{\Phi\text{-}2, 2, n}/\Phi(F_{\Phi\text{-}2, 2, n}))$ on $P$; here we identify the diagonal elements of matrices with elements of $P$ and the off-diagonal part (which will always be alternating) with $D$.

To see this, we first work on the group side. First, $a_i^2$ is sent to $(\prod_{j\in[n], u_{i, j}\neq 0}a_j)^2$, which is equal to \[\prod_{j\in[n], u_{i, j}\neq 0}a_j^2\prod_{1\leq j<k\leq n, u_{i, j}\neq 0, u_{i, k}\neq 0}[a_j, a_k].\] Second, note that ${v}$ corresponds to $\prod_{i\in[n], v_i\neq 0}a_i^2$. Collecting over $i\in[n], v_i\neq 0$, we see that whether $a_j^2$ appears in the final result depends on the parity of $|\{i\in [n]\mid v_i\neq 0, u_{i, j}\neq 0\}|$. Since $U$ is a matrix over $\F_2$, the parity of $|\{i\in [n]\mid v_i\neq 0, u_{i, j}\neq 0\}|$ is equal to $\sum_{i\in[n], v_i\neq 0}u_{i, j}$. Similarly, for $1\leq j<k\leq n$, whether $[a_j, a_k]$ appears in the final result depends on the parity of $|\{i\in [n]\mid v_i\neq 0, u_{i, j}u_{i, k}\neq 0\}|$, which is equal to $\sum_{i\in[n], v_i\neq 0} u_{i, j}u_{i, k}$.

We then examine $U^T VU$. 
We first examine its $j$th diagonal entry, which is $\sum_{i\in[n], v_i\neq 0}u_{i, j}^2$. Over $\F_2$, we have $\sum_{i\in[n], v_i\neq 0}u_{i, j}^2=\sum_{i\in[n], v_i\neq 0}u_{i, j}$. We then examine its $(j, k)$-th entry, which is $\sum_{i\in[n], v_i\neq 0}u_{i, j}u_{i, k}$. Therefore, the coefficients of $U^TVU$ correspond to the action of $U$ on $P$. 

Combining our analyses for the action of $\aut(F_{\Phi\text{-}2, 2, n}/\Phi(F_{\Phi\text{-}2, 2, n}))$ on $D$ and $P$, and noting that $\S(n, 2)$ is a direct sum of the linear space of diagonal matrices and the linear space of alternating matrices, we can conclude the proof.
\end{proof}
\begin{example}
	Consider $F_{\Phi\text{-}2, 2, 3}$ with generators $a_1, a_2, a_3$. An automorphism of $F_{\Phi\text{-}2, 2, 3}/\Phi(F_{\Phi\text{-}2, 2, 3})$ can be $a_1\to a_1a_3$, $a_2\to a_2a_3$, $a_3\to a_1a_2a_3$. This corresponds to the matrix \[U=\begin{bmatrix}
		1 & 0 & 1 \\
		0 & 1 & 1 \\
		1 & 1 & 1
	\end{bmatrix}\in\GL(3, 2).\] Then $a_1^2a_3^2$ is transformed to \[(a_1a_3)^2(a_1a_2a_3)^2=a_1^2a_3^2a_1^2a_2^2a_3^2[a_1,a_3][a_1,a_2][a_1,a_3][a_2,a_3]=a_2^2[a_1,a_2][a_2,a_3].\] 
    Let \[V=\begin{bmatrix}
		1 & 0 & 0 \\
		0 & 0 & 0 \\
		0 & 0 & 1 
	\end{bmatrix}.\] Then we compute \[U^TVU=\begin{bmatrix}
		0 & 1 & 0 \\
		1 & 1 & 1 \\
		0 & 1 & 0 
	\end{bmatrix},\] whose coefficients in the upper triangular part correspond to $a_2^2[a_1,a_2][a_2,a_3]$.
\end{example}

We then use the following fact from linear algebra. 
\begin{fact}\label{fact:dual-action}
Let $\S(n, 2)$ be the linear space of $n\times n$ symmetric matrices over $\F_2$. Let $Q(n, 2)$ be the linear space of quadratic forms in $n$ variables over $\F_2$. 
The dual of the congruence representation of $\GL(n,2)$ on
$\S(n,2)$ is canonically isomorphic to the
change-of-variables representation of $\GL(n, 2)$ on $Q(n,2)$.
\end{fact}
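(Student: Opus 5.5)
We will exhibit an explicit $\GL(n,2)$-invariant, nondegenerate bilinear pairing $\langle\cdot,\cdot\rangle : Q(n,2)\times \S(n,2)\to \F_2$. The identification $Q(n,2)\cong \S(n,2)^*$ then intertwines the two actions. Both spaces have dimension $n+\binom{n}{2}$: a quadratic form $f=\sum_{i\le j} f[i,j]x_ix_j$ has one coefficient for each pair $i\le j$, and a symmetric matrix $A$ has entries $A[i,j]$ for $i\le j$. We define
\[
\langle f, A\rangle := \sum_{1\le i\le j\le n} f[i,j]\, A[i,j].
\]
This is plainly bilinear. It is nondegenerate because the monomials $x_ix_j$ ($i\le j$) and the elementary symmetric matrices $E_{ii}$, $E_{ij}+E_{ji}$ ($i<j$) form dual bases.

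The main step is invariance. We must check that $\langle f^S, A\rangle=\langle f, S A S^T\rangle$ for all $S\in\GL(n,2)$, with the transpose placed as required by the paper's convention $f^S(\vec y)=f(S\vec y)$. This says the change-of-variables action on $Q$ is the transpose (contragredient, up to inverting $S$) of congruence on $\S$, which is exactly the meaning of ``dual representation.'' By bilinearity it suffices to verify the identity on a spanning set of $\S(n,2)$. The most convenient spanning set is the rank-one matrices $vv^T$ for $v\in\F_2^n$. These do span $\S(n,2)$ in characteristic $2$: $E_{ii}=e_ie_i^T$, and $E_{ij}+E_{ji}=(e_i+e_j)(e_i+e_j)^T+e_ie_i^T+e_je_j^T$.

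For $A=vv^T$ we have $A[i,j]=v_iv_j$, and over $\F_2$ also $v_i^2=v_i$. Hence
\[
\langle f, vv^T\rangle=\sum_{i\le j} f[i,j]v_iv_j=f(v),
\]
so the pairing with $vv^T$ is just evaluation at $v$. Then $\langle f^S, vv^T\rangle=f^S(v)=f(Sv)=\langle f,(Sv)(Sv)^T\rangle=\langle f, S\,vv^T S^T\rangle$, which is the required identity. Because the evaluation trick makes this computation essentially free, there is no real obstacle in the invariance step.

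The only remaining care point is matching conventions: whether the dual action is $A\mapsto S^TAS$ or $A\mapsto SAS^T$, and whether the representation is $S$ or $S^{-1}$. We resolve this by noting that the contragredient representation is defined by $\langle g\cdot f, g\cdot A\rangle=\langle f,A\rangle$. Applying the identity above with $S^{-1}$ in place of $S$ and reparametrizing gives this form.

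Canonicity follows because the pairing is characterized intrinsically as $\langle f, vv^T\rangle=f(v)$ on the spanning set $\{vv^T\}$. This characterization is basis-free. Finally, it is worth noting why $Q(n,2)$ is not isomorphic to $\S(n,2)$ itself as a representation. Symmetric matrices contain the invariant subspace of alternating matrices, whereas in $Q(n,2)$ the invariant subspace is the space of squares of linear forms (diagonal forms); the two sides are thus dual rather than isomorphic.
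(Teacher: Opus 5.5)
Your proof is correct and uses the same coordinate pairing as the paper. What differs is how you verify equivariance. The paper only asserts that expanding $q(U^{-1}x)$ and comparing coefficients gives $\langle U^TCU,q\rangle=\langle C,U^{-1}\cdot q\rangle$. You instead restrict to the rank-one spanning set $\{vv^T\}$, on which the pairing is just evaluation $f(v)$. This turns equivariance into a one-line computation and gives a basis-free justification of canonicity.

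Your bookkeeping is also the right one, and the paper's is not. Pair the paper's displayed identity against $C=vv^T$: its left side is $q(U^Tv)$, while its right side is $q(Uv)$. For $n=2$, $q=x_1^2$, $v=e_2$, $U=I+E_{12}$, these are $0$ and $1$. The correct relation is yours, $\langle f^S,A\rangle=\langle f,SAS^T\rangle$. It makes $Q(n,2)$ exactly the contragredient of the left action $A\mapsto gAg^T$ when $g$ acts on forms by $f\mapsto f^{g^{-1}}$. With the paper's convention $C\mapsto U^TCU$, this holds only after twisting by the automorphism $U\mapsto U^{-T}$ of $\GL(n,2)$. That twist is harmless for the orbit-equivalence use the paper makes of the fact.

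One caveat concerns your closing aside, which the statement does not need. Exhibiting invariant subspaces of different dimensions on the two sides does not by itself rule out an isomorphism. In fact the claim fails for $n=2$. There the form $x_1^2+x_1x_2+x_2^2$, the indicator of $\F_2^2\setminus\{0\}$, is $\GL(2,2)$-fixed and is not a square of a linear form. Both spaces then split as trivial $\oplus$ natural, so they are isomorphic.
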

\begin{proof}
Define a pairing between $\S(n,2)$ and $Q(n,2)$ by
$\left\langle C,q\right\rangle
=
\sum_{i\in[n]} c_{ii}d_i+\sum_{1\le i<j\le n} c_{ij}e_{ij}$,
where $C=(c_{ij})\in \S(n,2)$ and
$q(x_1,\dots,x_n)=\sum_{i\in[n]} d_i x_i^2+\sum_{1\le i<j\le n} e_{ij}x_ix_j
\in Q(n,2)$, $d_i, e_{i,j}\in\F_2$.
This is a perfect bilinear pairing, since both spaces have dimension
$\binom{n+1}{2}$ and the pairing matches the standard coordinates.

Let $U\in \GL(n,2)$. The congruence action on $\S(n,2)$ is
$C \mapsto U^TCU$,
and the change-of-variables action on $Q(n,2)$ is
$(U\cdot q)(x)=q(U^{-1}x)$.
By expanding $q(U^{-1}x)$ and comparing coefficients, one checks that
$
\langle U^TCU,\, q\rangle
=
\langle C,\, U^{-1}\cdot q\rangle
$
for all $C\in \S(n,2)$ and $q\in Q(n,2)$. 
Hence the representation on $Q(n,2)$ is the contragredient of the
congruence representation on $\S(n,2)$.
\end{proof}

\begin{theorem}
    There exists a polynomial-time algorithm that takes $G$ and $H$, which are $2$-groups of Frattini class-$2$ given by their Cayley tables, and produces two sequences of quadratic forms $g_1, \dots, g_m$ and $h_1, \dots, h_m$ where $g_i, h_j\in\F_2[x_1, \dots, x_n]$, such that $G$ and $H$ are isomorphic if and only if the quadratic form spaces spanned by $g_1, \dots, g_m$ and $h_1, \dots, h_m$ are isometric. Furthermore, $m=\dim(\Phi(G))$ and $n=\dim(G/\Phi(G))$.
\end{theorem}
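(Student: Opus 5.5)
The plan has three parts: compute an explicit presentation of each group over the relatively free group, translate the Higman subgroup into a subspace of $\S(n,2)$ via Lemma~\ref{lem:induced-action}, and pass to annihilators in $Q(n,2)$ via Fact~\ref{fact:dual-action}.

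\textbf{Step 1: presentations.} Given $G$ by its Cayley table, I will compute $\Phi(G)=G^2[G,G]$ in polynomial time. Since $G$ has Frattini class~2, $\Phi(G)$ is central and elementary abelian; I will verify this condition on the input. I then choose elements $a_1,\dots,a_n\in G$ whose images form a basis of $G/\Phi(G)\cong\F_2^n$; this is a linear-algebra computation over $\F_2$. They generate $G$, giving a surjection $F_{\Phi\text{-}2,2,n}\to G$ whose kernel $S_G$ lies in $\Phi(F_{\Phi\text{-}2,2,n})$. Concretely, identify $\Phi(F_{\Phi\text{-}2,2,n})\cong\S(n,2)$ as in Lemma~\ref{lem:induced-action}: the diagonal entries correspond to the $a_i^2$ and the off-diagonal pairs to the $[a_i,a_j]$. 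Then consider the linear map $\pi_G\colon\S(n,2)\to\Phi(G)\cong\F_2^{m}$ sending $E_{ii}\mapsto a_i^2$ and $E_{ij}+E_{ji}\mapsto[a_i,a_j]$. Its images are computed from the Cayley table, and coordinates in $\Phi(G)$ come from a basis found by linear algebra. The map $\pi_G$ is surjective because these elements generate $\Phi(G)$. Hence $S_G=\ker\pi_G$, with $\dim S_G=\binom{n+1}{2}-m$. The same construction applied to $H$ gives $S_H$, after first checking that $\dim H/\Phi(H)=n$ and $\dim\Phi(H)=m$; if either fails, we output a trivially non-isometric pair.

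\textbf{Step 2: orbit condition.} By Higman's criterion \cite[Lemma 4.3]{BNV07}, together with Lemma~\ref{lem:induced-action}, $G\cong H$ if and only if there is $U\in\GL(n,2)$ with $U^TS_GU=S_H$ under the congruence action on $\S(n,2)$.

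\textbf{Step 3: dualization.} Using the perfect pairing of Fact~\ref{fact:dual-action}, I will replace $S_G$ by its annihilator $S_G^\perp\le Q(n,2)$, which has dimension $m$. The spanning forms are exactly the rows of the matrix of $\pi_G$ read as coefficient vectors of quadratic forms: the $\ell$-th coordinate functional of $\pi_G$ is a linear functional on $\S(n,2)$, i.e.\ a quadratic form $g_\ell$, and $S_G^\perp=\langle g_1,\dots,g_m\rangle$. The identity $\langle U^TCU,q\rangle=\langle C,U^{-1}\cdot q\rangle$ shows that $U^TS_GU=S_H$ if and only if $U^{-1}\cdot S_G^\perp=S_H^\perp$ (annihilators are reversed by this adjoint relation). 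Here $(U^{-1}\cdot q)(x)=q(Ux)=q^U(x)$, so the condition becomes $\langle g_1^U,\dots,g_m^U\rangle=\langle h_1,\dots,h_m\rangle$. This is exactly Problem~\ref{problem:quadratic_form_space_iso}.

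The main obstacle is bookkeeping the direction of the action in Step 3: checking that the annihilator of $U^TS_GU$ is $\{q^{U}:q\in S_G^\perp\}$ rather than the image under $U^{-1}$ or under a transpose. If the convention came out inverted, it would not matter, since we quantify over all $U\in\GL(n,2)$ and inversion and transposition are bijections. A second point to check is that the identification of Step 1 matches the basis used in Lemma~\ref{lem:induced-action}, so that Higman's criterion transfers verbatim. All computations are polynomial in $|G|$.
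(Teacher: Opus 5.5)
Your proposal is correct and takes the same route as the paper. You realize $S_G$ as the kernel of $\Phi(F_{\Phi\text{-}2,2,n})\to\Phi(G)$, transfer Higman's criterion to the congruence action on $\S(n,2)$ via Lemma~\ref{lem:induced-action}, and pass to annihilators in $Q(n,2)$ via Fact~\ref{fact:dual-action}; your row-space description of $S_G^\perp$ is a concrete version of the paper's ``take any basis of $S_G^\perp$''. Your remark that the direction of the adjoint action is immaterial is exactly what is needed, because in fact $\langle U^TCU,q\rangle=\langle C,q^{U^T}\rangle$ (test on $C=vv^T$, where the pairing is $q(v)$), so the annihilator of $U^TS_GU$ is $\{g^{U^{-T}}:g\in S_G^\perp\}$ rather than $\{g^U\}$, and the substitution $U\mapsto U^{-T}$ preserves the existence of a solution.
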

\begin{proof}
By the discussions above, we need to compute the identification of $G$ as $F_{\Phi\text{-}2, 2, n}/S_G$ for $S_G\leq \Phi(F_{\Phi\text{-}2, 2, n})$, and then take the dual space of $S_G$ and interpret it as a subspace of $Q(n, 2)$. 

First, we compute $S_G$. Find a linear basis $\{a_1, \dotsc, a_n\}$ of $G / \Phi(G)$; then $G$ is a quotient of $F_{\Phi\text{-}2,2,n}$, since the latter is the relatively free $n$-generated group of Frattini class 2. Then $S_G$ is the kernel of the quotient map $F_{\Phi\text{-}2,2,n} \to G$. To compute $S_G$ explicitly, write out the elements $a_1^2, a_2^2, \dotsc, a_n^2, [a_1, a_2], [a_1,a_3], \dotsb, [a_{n-1},a_n]$ of $\Phi(G)$ (in some arbitrary but fixed order). For each such element, if it is in the subgroup generated by the prior elements in the list (in $G$), then writing it as such gives an $\F_2$ linear relation among these elements; viz. if we have $[a_i,a_j] = \sum_{k \in S} a_k^2 + \sum_{(k,\ell) \in T} [a_k,a_\ell]$ in $\Phi(G)$, then we include the element $[a_i,a_j] + \sum_{k \in S} a_k^2 + \sum_{(k,\ell) \in T} [a_k,a_\ell]$ in $S_G$, where here the $a_i$ are the generators of $F_{\Phi\text{-}2,2,n}$. The set of all such elements generates $S_G$.

Next, by Lem.~\ref{lem:induced-action}, the action of $\aut(F_{\Phi\text{-}2,2,n})$ on $\Phi(F_{\Phi\text{-}2,2,n})$ is equivalent to that of $\GL(n,2)$ on $S(n,2)$. We may thus view $\Phi(F_{\Phi\text{-}2,2,n})$ as $S(n,2)$, and view $S_G,S_H$ as linear spaces of symmetric matrices. Higman's criterion then tells us that $G \cong H$ iff the symmetric matrix spaces $S_G,S_H$ are isometric.

Now, let $S$ be a vector space and $A$ a group acting on $S$ by linear transformations. If $S_i \leq S$ is a subspace, we define the dual space $S_i^\bot$ to be the subspace of $S^* := \Hom(S,\F)$ consisting of those linear functions on $S$ that vanish on $S_i$, viz. $S_i^\bot := \{\ell \colon S \to \F | \ell(S_i) = 0\}$. It is a standard fact that two subspaces $S_1,S_2 \leq S$ are in the same $A$-orbit iff their duals $S_1^\bot$ and $S_2^\bot$ are in the same $A$-orbit under the contragradient action on $S^*$. 

Now apply the preceding paragraph to $A = \aut(F_{\Phi\text{-}2,2,n})$ acting on $S(n,2)$, with $S_1 = S_G$ and $S_2 = S_H$. We have $S_G^\bot,S_H^\bot \leq S(n,2)^*$. By Fact~\ref{fact:dual-action}, we may view the latter space as the space of quadratic forms $Q(n,2)$, and thus $G \cong H$ iff the quadratic form spaces $S_G^\bot$ and $S_H^\bot$ are isometric. 

Finally, we have $\dim_{\F_2} S_G^\bot = \dim \Phi(F_{\Phi\text{-}2,2,n}) - \dim S_G$. But by our description of $S_G$ above we have $\dim S_G = \dim \Phi(F_{\Phi\text{-}2,2,n}) - \dim \Phi(G)$, so we have $\dim S_G^\bot = \dim \Phi(G)$, as claimed. Finally, we may take $g_1,\dotsc,g_m$ to be any basis of $S_G^\bot$ and $h_1,\dotsc,h_m$ to be any basis of $S_H^\bot$. This completes the proof. \qedhere

\end{proof}

We note that, had we been willing to look at the spaces $S_G,S_H \leq S(n,2)$ instead, then we could have solved the group isomorphism problem in this class of groups by reduction to the symmetric matrix space isometry problem, much more similar to the alternating matrix space isometry problem used in the case of odd $p$ \cite{Sun23, IMQSZ, GQSS25}. However, the issue with that approach is that the latter spaces can have dimension quadratically larger than $\dim \Phi(G)$ (for example, if $\dim \Phi(G) = O(\dim G/\Phi(G))$), thus essentially negating any gains from a faster isometry algorithm. By taking the dual space instead, we guarantee that our spaces of quadratic forms (rather than matrices) have the same dimension as $\Phi(G)$. 

\section{Reduce quadratic form space isometry to quadratic form tuple isometry}

In this section, we reduce the quadratic form space isometry problem to the quadratic form tuple isometry problem. 

In Section~\ref{sec:basic_property_quad_form}, we prove some useful properties of quadratic forms and their partial derivative matrices. 
In Section~\ref{sec:reduction_restricted_quad_space}, we reduce the quadratic form space isometry problem to the restricted quadratic form space isometry problem for the convenience of the final reduction. 
In Section~\ref{sec:quadratic_form_tuple_construction}, we reduce the problem to the quadratic form tuple isometry problem.

\subsection{The partial derivative matrix of a quadratic form}\label{sec:basic_property_quad_form}

We first establish the relations between the coefficients of quadratic forms $f$ and $f^S$ for some invertible $S$. 


\begin{fact}\label{fact:quadratic_form_transform}
For an invertible matrix $S$ over $\F_2$, the value of $f^S[k, k]$ for any $1 \leq k \leq n$ is
\[\sum_{i = 1}^n f[i, i] S[i, k]^2 + \sum_{1 \leq i < j \leq n} f[i, j] S[i, k] S[j, k]\] 
and the value of $f^S[k, \ell]$ for any $1 \leq k < \ell \leq n$ is
 \[
 \sum_{1 \leq i < j \leq n} f[i, j] \left(S[i, k] S[j, \ell] + S[i, \ell] S[j, k]\right)\]
\end{fact}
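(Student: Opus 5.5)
The plan is a direct coefficient computation: substitute the change of variables into the monomial expansion of $f$, expand each monomial, and collect the coefficient of each $y_k^2$ and each $y_ky_\ell$ with $k<\ell$. Recall that $f^S(\vec{y}) = f(S\vec{y})$, where $x_i = \sum_{k=1}^n S[i,k] y_k$. Writing
\[
f = \sum_{i=1}^n f[i,i]\, x_i^2 + \sum_{1\le i<j\le n} f[i,j]\, x_i x_j ,
\]
I will handle the square terms and the cross terms separately and then add the contributions. This is legitimate because the map $f \mapsto f^S$ is linear in the coefficients of $f$.

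\textbf{Square terms.} For a term $x_i^2$, I will use that over $\F_2$ (indeed in any commutative ring of characteristic $2$) the Frobenius map is additive. This gives
\[
x_i^2 = \Bigl(\sum_k S[i,k]y_k\Bigr)^2 = \sum_k S[i,k]^2 y_k^2 .
\]
So $x_i^2$ contributes $f[i,i]S[i,k]^2$ to the coefficient of $y_k^2$, and contributes nothing to any mixed monomial $y_ky_\ell$ with $k\ne \ell$. The cross terms of the square vanish because they carry the factor $2$.

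\textbf{Cross terms.} For a term $x_ix_j$ with $i<j$, I will expand
\[
x_ix_j = \sum_{k,\ell} S[i,k]S[j,\ell]\, y_k y_\ell
\]
and group the pairs $(k,\ell)$ into unordered monomials:
\begin{itemize}
\item The diagonal pairs $k=\ell$ give the coefficient $S[i,k]S[j,k]$ on $y_k^2$.
\item For $k<\ell$, the two ordered pairs $(k,\ell)$ and $(\ell,k)$ together give the coefficient $S[i,k]S[j,\ell]+S[i,\ell]S[j,k]$ on $y_ky_\ell$.
\end{itemize}

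Summing over all terms of $f$ yields exactly the two stated formulas. For the coefficient of $y_k^2$, the square terms contribute $\sum_i f[i,i]S[i,k]^2$ and the cross terms contribute $\sum_{i<j} f[i,j]S[i,k]S[j,k]$. For the coefficient of $y_ky_\ell$ with $k<\ell$, only the cross terms contribute, and they give the symmetrized sum in the statement.

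There is no real obstacle. The only points needing care are the characteristic-$2$ vanishing of the cross terms in $x_i^2$, and correctly pairing $(k,\ell)$ with $(\ell,k)$ so that nothing is double counted. Invertibility of $S$ is not used in the computation itself; it only ensures that $f^S$ is an honest change of variables.
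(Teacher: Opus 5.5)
Your proposal is correct and is essentially the paper's own argument, a direct expansion of $f(S\vec{y})$ followed by collecting the coefficients of $y_k^2$ and of $y_ky_\ell$. The only difference is the order of the bookkeeping. The paper expands every monomial $x_ix_j$ with $i\le j$ uniformly and afterwards discards the $2f[i,i]S[i,k]S[i,\ell]$ contribution to $y_ky_\ell$, while you separate out the square terms first and eliminate those same cross terms using the additivity of squaring in characteristic~$2$.
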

\begin{proof}
By the definition of the quadratic form, we have
\[
\begin{split}
f =  & \sum_{1 \leq i \leq j \leq n} f[i, j] x_i x_j = \sum_{1 \leq i \leq j \leq n} f[i, j] \left(\sum_{k = 1}^n S[i, k] y_k\right) \left(\sum_{\ell = 1}^n S[j, \ell] y_\ell\right) \\
= & \sum_{k = 1}^n \left(\sum_{1 \leq i \leq j \leq n} f[i, j] S[i, k] S[j, k]\right) y_k^2 \\ & + \sum_{1 \leq k < \ell \leq n} \left(\sum_{1 \leq i \leq j \leq n} f[i, j](S[i, k] S[j, \ell] + S[i, \ell] S[j, k])\right) y_k y_\ell.
\end{split}
\]
Hence, we have 
\[\begin{split}
& \sum_{k = 1}^n \left(\sum_{1 \leq i \leq j \leq n} f[i, j] S[i, k] S[j, k]\right) y_k^2 \\
= & \sum_{k = 1}^n \left(\sum_{i = 1}^n f[i, i] S[i, k]^2 + \sum_{1 \leq i < j \leq n} f[i, j] S[i, k] S[j, k]\right) y_k^2 \\
\end{split}
\]
and 
\[
\begin{split}
&  \sum_{1 \leq k < \ell \leq n} \left(\sum_{1 \leq i \leq j \leq n} f[i, j](S[i, k] S[j, \ell] + S[i, \ell] S[j, k])\right) y_k y_\ell\\
= & \sum_{1 \leq k < \ell \leq n} \left(\sum_{i = 1}^n 2 f[i, i] S[i, k] S[i, l] + \sum_{1 \leq i < j \leq n} f[i, j] (S[i, k] S[j, \ell] + S[i, \ell] S[j, k])\right) y_k y_\ell. \\ 
= &  \sum_{1 \leq k < \ell \leq n} \left(\sum_{1 \leq i < j \leq n} f[i, j] (S[i, k] S[j, \ell] + S[i, \ell] S[j, k])\right) y_k y_\ell. \\ \qedhere
\end{split} 
\]
\end{proof}


Now we consider the partial derivative matrix for a quadratic form. 

\begin{definition}
    The \emph{partial derivative matrix} $D_f$ over $ \F_2$ for a quadratic form $f$ over $ \F_2$ is defined as follows: 
$D_f[i, i] = 0$ for all $i \in [n]$, $D_f[i, j] = D_f[j, i] = f[i, j]$ for all $1 \leq i < j \leq n$. 
\end{definition}

\begin{lemma}\label{lem:partial_derivative_matrix}
Let $f$ be a homogeneous quadratic form in $n$ variables over $ \F_2$, and $D_f$ be the partial derivative matrix of $f$. 
For an arbitrary invertible $n \times n$ matrix $S$ over $\F_2$, $S^T D_f S$ is the partial derivative matrix of $f^S$. 

In addition, for two quadratic forms $f_1$ and $f_2$ in $n$ variables over $ \F_2$. 
$D_{f_1} + D_{f_2}$ is the partial derivative matrix of $f_1 + f_2$. 
\end{lemma}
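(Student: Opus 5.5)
The plan is to compute entries directly and compare with Fact~\ref{fact:quadratic_form_transform}. Write $D = D_f$, so $D[i,j] = D[j,i] = f[i,j]$ for $i<j$ and $D[i,i]=0$. The $(k,\ell)$ entry of $S^T D S$ is
\[
(S^T D S)[k,\ell] = \sum_{i,j} S[i,k]\, D[i,j]\, S[j,\ell].
\]
Since the diagonal of $D$ vanishes, only $i\neq j$ contributes. Splitting into the terms with $i<j$ and those with $i>j$, and using the symmetry of $D$, we get
\[
(S^T D S)[k,\ell] = \sum_{1\le i<j\le n} f[i,j]\bigl(S[i,k]S[j,\ell] + S[j,k]S[i,\ell]\bigr).
\]

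For $k<\ell$ this is exactly the formula for $f^S[k,\ell]$ in Fact~\ref{fact:quadratic_form_transform}. That formula equals $D_{f^S}[k,\ell]$ by definition, and by symmetry of both matrices it also gives $D_{f^S}[\ell,k]$.

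For $k=\ell$ the bracket becomes $2S[i,k]S[j,k]=0$ over $\F_2$. So the diagonal of $S^T D S$ is zero, matching $D_{f^S}[k,k]=0$. This diagonal check is the one point that needs care: it is where characteristic $2$ is used. It is also why the diagonal coefficients $f[i,i]$ never enter, consistent with the fact that $D_f$ does not record them. With it, the first claim is established.

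The additivity claim is immediate. The coefficients satisfy $(f_1+f_2)[i,j] = f_1[i,j]+f_2[i,j]$, and $D_f$ depends linearly on the off-diagonal coefficients with zero diagonal, so $D_{f_1+f_2} = D_{f_1}+D_{f_2}$ entrywise. There is no real obstacle here; the only subtle step is the vanishing of the diagonal described above.
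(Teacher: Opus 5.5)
Your proposal is correct and matches the paper's proof essentially step for step. Like the paper, you expand $(S^T D_f S)[k,\ell]$, drop the zero diagonal of $D_f$, pair the $i<j$ and $i>j$ terms to recover the off-diagonal formula of Fact~\ref{fact:quadratic_form_transform}, note that the diagonal vanishes because $2=0$ in $\F_2$, and read off additivity from the definition of $D_f$.
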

\begin{proof}
For any $1 \leq k, \ell \leq n$, $(S^T D_f S) [k, l]$ is 
\begin{align*}
\sum_{j = 1}^n \left(\sum_{i = 1}^n S[i, k] D_f[i, j]\right) S[j, \ell] = & \sum_{i = 1}^n D_f[i, i]S[i, k] S[i, \ell] +  \sum_{i \neq j} D_f[i, j]S[i, k] S[j, \ell] \\
=  & \sum_{1 \leq i < j \leq n} D_f[i, j] (S[i, k] S[j,\ell] + S[i, \ell] S[j, k]).
\end{align*}
If $k = \ell$, then $(S^T D_f S) [k, k] = 0$. Otherwise $(S^T D_f S) [k, \ell] = f^S[k, \ell]$ if $k < \ell$ or $(S^T D_f S) [k, \ell] = f^S[\ell, k]$ if $k > \ell$ by Fact~\ref{fact:quadratic_form_transform}. 

The second property is obtained by the definition of the partial derivative matrix for a quadratic form. 
\end{proof}

We use the following lemma for the square terms of quadratic forms under some special linear transforms.

\begin{lemma}\label{lem:lemma_square_term}
Let $f$ be a homogeneous quadratic form in $n$ variables over $\F_2$. 
Let $S$ be an invertible matrix equal to $I + S'$ such that for any $1 \leq i, j \leq n$,  $(D_f)_i \cdot S'[i, j] = 0$, where $(D_f)_i$
is the $i$-th column of $D_f$. 
Then $f^S$ satisfies the following conditions:
\begin{enumerate}
\item $D_{f^S} = D_{f}$.
\item
Let $a$ and $b$ be the column vectors 
\[a = \begin{pmatrix}f[1, 1] \\ \cdots \\ f[n, n]\end{pmatrix} \text{ and } b = \begin{pmatrix}f^S[1, 1] \\ \cdots \\ f^S[n, n]\end{pmatrix}.\]
We have $b = S^T a$.
\end{enumerate}
\end{lemma}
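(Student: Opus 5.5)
The plan is to apply Lemma~\ref{lem:partial_derivative_matrix} for the first claim and Fact~\ref{fact:quadratic_form_transform} for the second. Everything rests on one consequence of the hypothesis. Fix $j$ and write $s_j := S'e_j$ for the $j$-th column of $S'$. The condition $(D_f)_i\, S'[i,j]=0$ says that $S'[i,j]=1$ forces the column $(D_f)_i$ to vanish. Hence $D_f s_j=\sum_i S'[i,j](D_f)_i=0$ for every $j$, that is, $D_fS'=0$. Because $D_f$ is symmetric, transposing gives $S'^TD_f=0$ as well.

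\emph{Claim 1.} By Lemma~\ref{lem:partial_derivative_matrix}, $D_{f^S}=S^TD_fS$. Expanding with $S=I+S'$ gives
\[
S^TD_fS=D_f+S'^TD_f+D_fS'+S'^TD_fS'=D_f ,
\]
since every term except the first contains $D_fS'$ or $S'^TD_f$, both of which are zero.

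\emph{Claim 2.} Fact~\ref{fact:quadratic_form_transform} gives
\[
b_k=f^S[k,k]=\sum_i f[i,i]S[i,k]^2+\sum_{i<j}f[i,j]S[i,k]S[j,k].
\]
Over $\F_2$ we have $S[i,k]^2=S[i,k]$, so the first sum equals $(S^Ta)_k$. It therefore suffices to show that the cross term $c_k:=\sum_{i<j}f[i,j]S[i,k]S[j,k]$ vanishes for every $k$.

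\emph{The cross term.} This is the only step needing care, because it is a quadratic expression in the column $Se_k=e_k+s_k$ rather than the bilinear form $D_f$ applied to a pair of vectors, so the identity $D_fS'=0$ does not kill it directly. The approach is to note that $c_k$ depends only on the support of $Se_k$. Write $T:=\{i:S[i,k]=1\}$. Then $c_k$ counts, mod 2, the pairs $i<j$ in $T$ with $f[i,j]=1$.

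Every index $i\in T$ with $i\neq k$ satisfies $S'[i,k]=1$, so by the hypothesis the column $(D_f)_i$ is zero. That means $f[i,j]=0$ for all $j\neq i$. Whether or not $k\in T$, any pair of distinct indices in $T$ contains at least one index different from $k$, and that index has a zero column in $D_f$. Thus every pair in $T$ has $f[i,j]=0$, so $c_k=0$ and $b=S^Ta$.
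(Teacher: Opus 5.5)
Your proof is correct and follows the paper's argument. Claim~1 comes from expanding $S^TD_fS$ with $S=I+S'$. Claim~2 comes from Fact~\ref{fact:quadratic_form_transform}: you kill each cross term by noting that for $i\neq j$ at least one of $S[i,k]$, $S[j,k]$ equals the corresponding entry of $S'$, and a nonzero such entry forces a zero column of $D_f$. You also make explicit that the hypothesis yields $D_fS'=S'^TD_f=0$, which the paper uses without comment. You rightly observe that this matrix identity alone would not dispose of the cross term; the stronger columnwise hypothesis is needed there.
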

\begin{proof}
By Lemma~\ref{lem:partial_derivative_matrix}, 
\[D_{f^S} = S^T D_f S = D_f + S'^T D_f + D_f S' + S'^T D_f S' = D_f.\]
By Fact~\ref{fact:quadratic_form_transform}, \[f^S[k, k] = \sum_{i = 1}^n f[i, i] S[i, k]^2 + \sum_{1 \leq i < j \leq n} f[i, j] S[i, k] S[j, k].\]

We first show that $f[i, j] S[i, k] S[j, k] = 0$ for any $1 \leq i < j \leq n$. 
By the definition of $D_f$, we have \[
f[j, i] S'[i, k] = D_f[j, i] S'[i, k] = 0\]
and \[f[i, j] S'[j, k] = D_f[i, j] S'[j, k] = 0.\] Since $i$ is not equal to $j$, either $S[i, k]$ is equal to $S'[i,k]$, or $S[j, k]$ is equal to $S'[j, k]$. Hence, $f[i, j] S[i, k] S[j, k] = 0$ for any $1 \leq i < j \leq n$. 

Consequently, by Fact~\ref{fact:quadratic_form_transform},
we have 
\[f^S[k, k] = \sum_{i = 1}^n f[i, i]S[i, k]^2 =  \sum_{i = 1}^n f[i, i]S[i, k]\]
using the fact that $S[i, k]$ is in $\F_2$. 
\end{proof}

\subsection{Reduce quadratic form space isometry to non-degenerate restricted quadratic form space isometry}\label{sec:reduction_restricted_quad_space}

Using the reduction from the tensor isomorphism problem (Definition~\ref{def:tensor_iso}) to the restricted tensor isomorphism problem (Definition~\ref{def:restricted_tensor_iso}), we first reduce the quadratic form space isometry problem to the restricted quadratic form space isometry problem defined as follows.

\begin{problem}[Restricted quadratic form space isometry problem]\label{problem:restricted_quadratic_form_space_iso}
Given two sequences of quadratic forms $f_1, \dots f_m$ and $g_1, \dots, g_m$ both in $n$ variables such that $\dim(\langle f_1, \dots f_m\rangle) = \dim(\langle g_1, \dots g_m\rangle) = m$ and parameters $\alpha_0, \alpha_1, \beta_0, \beta_1$, 
determine if there are two matrices $M \in \text{GL}(m, \F_2)$ and $N \in \text{GL}(n, \F_2)$ satisfying Equation~(\ref{equ:restricted_tensor_iso}) such that $g_i = \sum_{j = 1}^m M[j, i] f_j^{N}$. 
\end{problem}

We prove the following theorem for the reduction. 


\begin{theorem}\label{thm:reduction_space_iso_to_restricted_space_iso}
For an input of the quadratic form space isometry problem with two quadratic form spaces of dimension $m$ on $n$ variables over $\F_p$ for any prime $p$, 
there is an algorithm with running time $p^{O((m + n)n^{0.5}\log p)}$ to reduce the quadratic form space isometry problem 
to $r = p^{O((m + n)n^{0.5}\log p)}$ instances of the restricted quadratic form space isometry problem such that for each instance, the input quadratic form spaces are of dimension $m$ on $n$ variables over $\F_p$, and the parameters $\alpha_0, \alpha_1, \beta_0, \beta_1$ satisfy
$\max\{m-\alpha_0-\alpha_1,n-\beta_0-\beta_1\} \leq n^{0.5}$. 
\end{theorem}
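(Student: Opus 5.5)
The plan is to run the Tensor Isomorphism Reduction Algorithm of Theorem~\ref{thm:tensor_iso_to_restricted_tensor_iso} on the alternating tensors built from the partial derivative matrices, and then lift its output back to quadratic forms. First I pass to bases. If $\dim\langle f_1,\dots,f_m\rangle \neq \dim\langle g_1,\dots,g_m\rangle$, the answer is no. Otherwise I replace each sequence by a basis of its span, so both spaces have dimension $m$ (possibly smaller than the original $m$, which only helps the bounds). Space isometry then reads: there exist $S \in \GL(n,\F_p)$ and $M \in \GL(m,\F_p)$ with $g_i = \sum_j M[j,i] f_j^S$ for all $i$.

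Next I build the tensors. Let $\mathfrak{T}$ be the $m \times n \times n$ alternating tensor with $\mathcal{X}_{\mathfrak{T},i} = D_{f_i}$, and define $\mathfrak{T}'$ from the $g_i$ in the same way. By Lemma~\ref{lem:partial_derivative_matrix}, linearity of $f \mapsto D_f$ and the rule $D_{f^S} = S^T D_f S$ give the following: if $(M,S)$ witnesses isometry of the quadratic form spaces, then $\mathfrak{T}' = \mathfrak{T}_{\mathrm{Trans}(M,S)}$. The converse fails, since diagonal coefficients are lost, so the reduction must carry the forms themselves.

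I then apply Theorem~\ref{thm:tensor_iso_to_restricted_tensor_iso} to $(\mathfrak{T},\mathfrak{T}')$. This takes time $p^{\tilde O((m+n)n^{0.5})}$ and yields $r$ tuples $(A_i,A_i',B_i,B_i')$ with parameters $\alpha_{i,0},\alpha_{i,1},\beta_{i,0},\beta_{i,1}$ satisfying the stated bounds. For each $i$, the $i$-th restricted quadratic instance is the pair of transformed sequences
\[
\tilde f_k = \sum_j A_i[j,k] f_j^{B_i}, \qquad \tilde g_k = \sum_j A_i'[j,k] g_j^{B_i'},
\]
taken with the same parameters. Because $A_i$ and $A_i'$ are invertible, both sequences still span spaces of dimension $m$. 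By the same linearity, the partial derivative tensors of these sequences are exactly $\mathfrak{T}_{\mathrm{Trans}(A_i,B_i)}$ and $\mathfrak{T}'_{\mathrm{Trans}(A_i',B_i')}$. I need to fix the composition convention for $\mathrm{Trans}$; applying first $(A,B)$ and then $(M,N)$ should equal $\mathrm{Trans}(AM,BN)$, which the index conventions appear to give.

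Correctness has two directions.

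\emph{Soundness.} Suppose $(M,N)$ solves instance $i$. Composing the transformations, $S = B_i N B_i'^{-1}$ together with the corresponding product of the $A$'s maps the $f$-space onto the $g$-space. So any solved instance certifies isometry.

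\emph{Completeness.} Suppose $(P,Q)$ witnesses quadratic form space isometry. Then $\mathfrak{T}' = \mathfrak{T}_{\mathrm{Trans}(P,Q)}$, and the ``moreover'' part of Theorem~\ref{thm:tensor_iso_to_restricted_tensor_iso} gives some $i$ and a restricted pair $(M,N)$ with $P = A_i M A_i'^{-1}$ and $Q = B_i N B_i'^{-1}$. Substituting these identities into $g_k = \sum_j P[j,k] f_j^Q$ and simplifying shows that $(M,N)$ maps $(\tilde f)$ to $(\tilde g)$ at the level of quadratic forms, not just of derivative matrices.

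The main obstacle, and really the only subtle point, is completeness. The guarantee is stated for every tensor isomorphism $(P,Q)$, so it applies in particular to the genuine quadratic form isometries, and it preserves the exact factorization. That is what allows the diagonal coefficients to be transported; it is not enough to know that the tensors are merely isomorphic. The running time is the tensor reduction cost plus polynomial work per instance, which is $p^{O((m+n)n^{0.5}\log p)}$.
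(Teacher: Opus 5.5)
Your proposal is correct and follows the paper's proof essentially step for step. Both build the partial derivative tensors, run the Tensor Isomorphism Reduction Algorithm, transport the forms themselves by $(A_i,B_i)$ and $(A_i',B_i')$, get soundness by comparing spans, and get completeness from the ``moreover'' clause of Theorem~\ref{thm:tensor_iso_to_restricted_tensor_iso} applied to a genuine isometry $(P,Q)$. The substitution you leave implicit does go through: from $f^{ST}=(f^S)^T$, $PA_i'=A_iM$ and $QB_i'=B_iN$ one gets $\sum_j M[j,k]\tilde f_j^{N}=\sum_l (A_iM)[l,k]\, f_l^{B_iN}=\sum_l (PA_i')[l,k]\, f_l^{QB_i'}=\tilde g_k$.

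One caveat applies equally to the paper: both arguments rely on Lemma~\ref{lem:partial_derivative_matrix}, which holds only over $\F_2$. So the ``any prime $p$'' clause rests on an adaptation that neither argument supplies.
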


Let $f_1, \dots, f_m$ be a sequence of quadratic forms in $n$ variables. The partial derivative tensor of $f_1, \dots, f_m$ is an $m \times n \times n$ tensor $\mathfrak{T}$ such that $\mathfrak{T}[i, j, k] = D_{f_i}[j, k]$.

To prove Theorem~\ref{thm:reduction_space_iso_to_restricted_space_iso}, we consider the following algorithm.

\begin{framed}\begin{flushleft}
\vspace{-.3cm}\textbf{Quadratic Form Space Isometry Algorithm} 

\textbf{Input:} 
Quadratic forms $f_1, \dots, f_m$ and $g_1, \dots, g_m$ such that $f_1, \dots, f_m$ are linearly independent, and $g_1, \dots, g_m$ are linearly independent.

\textbf{Output:} Yes or no. 
\begin{enumerate}
\item Construct the partial derivative tensors $\mathfrak{T}$ from $f_1, \dots, f_m$ and $\mathfrak{T}'$ from $g_1, \dots, g_m$. Run the Tensor Isomorphism Reduction Algorithm on $\mathfrak{T}$ and $\mathfrak{T}'$, and let $A_i, B_i, A_i', B_i'$, $\alpha_{i, 0}$, $\alpha_{i, 1}$, $\beta_{i, 0}$, and $\beta_{i, 1}$ for $1 \leq i \leq \ell$ be the inputs for the restricted tensor isomorphism problem for some $\ell$. 
\item For each $1 \leq i \leq \ell$, 
run the Restricted Quadratic Form Space Isometry Algorithm on $f_1', \dots, f_m'$ and $g_1', \dots, g_m'$ with $f_\ell' = \sum_{j = 1}^m A_i[j, \ell] f_j^{B_{i}}$ and  $g_\ell' = \sum_{j = 1}^m A_i'[j, \ell] g_j^{B_{i}'}$ for any $1 \leq \ell \leq m$, and parameters $\alpha_{i, 0}, \alpha_{i, 1}, \beta_{i, 0}, \beta_{i, 1}$. Output yes if any output of the Restricted Quadratic Form Space Isometry Algorithm is yes. 
\item Output no. 
\end{enumerate}
\end{flushleft}
\vspace{-.3cm}\end{framed}

\begin{proof}[Proof of Theorem~\ref{thm:reduction_space_iso_to_restricted_space_iso}]
For any input $f_1', \dots, f_m'$ and $g_1', \dots, g_m'$ of the restricted quadratic form space isometry problem, there exist two invertible matrices $B_i$ and $B_i'$ such that 
\[\left\langle f_1', \dots, f_m'\right\rangle = \left\langle f_1^{B_i}, \dots, f_m^{B_i} \right\rangle \text{ and } \left\langle g_1', \dots, g_m'\right\rangle = \left\langle g_1^{B_i'}, \dots, g_m^{B_i'} \right\rangle.\]
Therefore, if there is no invertible matrix $S$ such  that $\langle f_1^S, \dots, f_m^S \rangle = \langle g_1, \dots, g_m \rangle$, then the restricted quadratic form space isometry problem for every input instance outputs no, and thus the algorithm outputs no. 

If the input of the Quadratic Form Space Isometry Algorithm is a yes instance, then there exist invertible matrices $U$ and $W$ such that $g_\ell = \sum_{j = 1}^m W[j, \ell] f_j^U$, and consequently $\mathfrak{T}_{\mathrm{Trans}(W, U)} = \mathfrak{T'}$. 
By Theorem~\ref{thm:tensor_iso_to_restricted_tensor_iso}, there exist $A_i, B_i, A_i', B_i', \alpha_{i, 0}, \alpha_{i, 1}, \beta_{i, 0}, \beta_{i, 1}$ as inputs for the restricted tensor isomorphism problem such that there exist $N$ and $M$ satisfying Equation~(\ref{equ:restricted_tensor_iso}) such that $U = B_i N B_i'^{-1}$ and $W = A_i M A_i'^{-1}$. Hence, the restricted quadratic form space isometry problem instance for $A_i, B_i, A_i', B_i', \alpha_{i, 0}, \alpha_{i, 1}, \beta_{i, 0}, \beta_{i, 1}$ outputs yes, and thus the Quadratic Form Space Isometry Algorithm outputs yes. 

The bound on the number of instances of the restricted quadratic form space isometry problem and the running time is obtained by Theorem~\ref{thm:tensor_iso_to_restricted_tensor_iso}. 
\end{proof}

For the convenience of the reduction to the quadratic form tuple isometry problem, we further reduce the restricted quadratic form space isometry problem to the \emph{non-degenerate restricted quadratic form space isometry problem} as Definition~\ref{def:non-degenerate_quad_form_space}.

For a quadratic form $f$ over $\F_2$ on $n$ variables and a column vector $v \in \F_2^n$, we say that $v$ is \emph{irrelevant} to $f$ if there exists an invertible matrix $S$ with $v$ as the first column such that $f^S[1, i] = 0$ for any $1 \leq i \leq n$. 
\begin{lemma}
    The following properties hold for any quadratic form $f$.
    \begin{enumerate}
        \item If $v$ is a column vector irrelevant to $f$, then any invertible matrix $U$ with $v$ as the first column satisfies $f^U[1, i] = 0$ for any $1 \leq i \leq n$. 
        \item For any two non-zero column vectors $v_1$ and $v_2$ both irrelevant to $f$, $v_1 + v_2$ is irrelevant to $f$.
    \end{enumerate}
\end{lemma}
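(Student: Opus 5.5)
The plan is to prove both properties at once by giving a characterization of irrelevance that does not mention $S$. Namely, I will show that a nonzero vector $v$ is irrelevant to $f$ if and only if
\[
f(v)=0 \quad\text{and}\quad D_f v = 0 .
\]
Property 1 follows immediately from the ``only if'' direction, since the condition involves only $v$. Property 2 follows from a one-line computation.

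For the characterization, let $U$ be any invertible matrix with first column $v$, and write $u_1=v,u_2,\dots,u_n$ for its columns.
\begin{itemize}
\item By Fact~\ref{fact:quadratic_form_transform}, $f^U[1,1]=\sum_i f[i,i]v_i^2+\sum_{i<j}f[i,j]v_iv_j = f(v)$.
\item For $i\ge 2$, Lemma~\ref{lem:partial_derivative_matrix} gives $f^U[1,i]=(U^TD_fU)[1,i]=v^TD_f u_i$.
\end{itemize}
So $f^U[1,i]=0$ for all $i$ holds exactly when $f(v)=0$ and $v^TD_fu_i=0$ for every $i\ge 2$.

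Since $D_f$ is symmetric with zero diagonal, $v^TD_fv=0$ always holds. The columns $u_1,\dots,u_n$ span $\F_2^n$, so the second condition is equivalent to $v^TD_fw=0$ for all $w\in\F_2^n$. By symmetry of $D_f$, this is the same as $D_fv=0$.

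This argument works for an arbitrary $U$ with first column $v$. Hence if some such $S$ witnesses irrelevance, then $f(v)=0$ and $D_fv=0$, and therefore every such $U$ also satisfies $f^U[1,i]=0$ for all $i$. This proves Property 1.

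For Property 2, suppose $v_1,v_2$ are irrelevant, so $f(v_k)=0$ and $D_fv_k=0$ for $k=1,2$. Linearity gives $D_f(v_1+v_2)=0$. Over $\F_2$ the polarization identity reads $f(x+y)=f(x)+f(y)+x^TD_fy$. Applying it,
\[
f(v_1+v_2)=f(v_1)+f(v_2)+v_1^TD_fv_2 = 0+0+0 = 0 .
\]
If $v_1+v_2\neq 0$, it extends to an invertible matrix, and the characterization shows it is irrelevant.

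The only subtle point is the degenerate case $v_1=v_2$, where the sum is $0$ and cannot be the first column of an invertible matrix. I would handle this by reading the statement for distinct vectors, or equivalently by noting that the set of irrelevant vectors together with $0$ is the subspace $\{v : D_fv=0,\ f(v)=0\}$. That this set is a subspace follows because $f$ restricted to $\ker D_f$ is additive, by the polarization identity above.

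The main thing to be careful about is verifying $f^U[1,1]=f(v)$ from Fact~\ref{fact:quadratic_form_transform}, since Lemma~\ref{lem:partial_derivative_matrix} does not capture the diagonal coefficients.
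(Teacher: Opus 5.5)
Your proof is correct, and it takes a different route from the paper's. The paper never characterizes irrelevance intrinsically; it works entirely with changes of basis.
\begin{itemize}
\item For Property~1, it writes $U = SW$ with $W = S^{-1}U$ fixing $e_1$, i.e.\ $W=\begin{pmatrix}1 & W''\\ & W'\end{pmatrix}$. It then observes that since $f^S$ has no monomial involving $x_1$, neither does $(f^S)^W$.
\item For Property~2, it takes $v_1,v_2$ as the first two columns of some invertible $S$. Property~1, together with the coordinate swap, shows that $f^S$ involves neither $x_1$ nor $x_2$. It then applies the transvection $Y$ sending $e_1 \mapsto e_1+e_2$. This leaves $f^S$ unchanged and makes $v_1+v_2$ the first column of $SY$.
\end{itemize}

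You instead identify the irrelevant vectors as exactly the nonzero elements of $\{v : D_f v = 0,\ f(v)=0\}$, the radical of the quadratic form $f$. After that, Property~1 is immediate and Property~2 is the polarization identity.

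The paper's route needs nothing beyond the coefficient-transformation rule of Fact~\ref{fact:quadratic_form_transform}. Yours gives more information. Since $f$ is additive, hence $\F_2$-linear, on $\ker D_f$, the set of degenerate vectors together with $0$ is cut out by linear equations. That is exactly what is needed to compute the space $V$ in Step~1 of the Degeneracy Identification Algorithm in polynomial time, a point the paper leaves implicit.

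Both proofs need $v_1 \neq v_2$. The paper uses it silently when it chooses an invertible $S$ whose first two columns are $v_1, v_2$; you were right to flag it explicitly.
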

\begin{proof}
    For the first property, by definition, there exists an invertible matrix $S$ with $v$ as the first column such that $f^S[1, i] = 0$ for any $1 \leq i \leq n$. 
    
    Then for any invertible matrix $U$ with $v$ as the first column, there exists another matrix 
    \[W = \begin{pmatrix}1 & W'' \\ & W'\end{pmatrix}\]
    for some $W'' \in M(1 \times (n-1), \F_2)$ and  $W' \in \text{GL}(n-1, \F_2)$ such that 
    \[S W = U.\] 
    Since $f^S[1, i] = 0$ for any $1 \leq i\leq n$ and \[f^U = f^{SW} = (f^S)^W,\] $f^U[1, i] = 0$ for any $1 \leq i \leq n$ by Fact~\ref{fact:quadratic_form_transform}. 

    For the second property, let $S$ be an arbitrary invertible matrix with $v_1$ as the first column and $v_2$ as the second column. By the first property, $f^S[1, i] = 0$ for any $1\leq i \leq n$. Since $f^{SW}[1, i] = 0$ for any $1 \leq i \leq n$ with \[W = \begin{pmatrix}& 1 \\ 1 \\ & & I\end{pmatrix},\]
    we have $f^S[2, i] = 0$ for any $2 \leq i \leq n$ by Fact~\ref{fact:quadratic_form_transform}. 
    Also by Fact~\ref{fact:quadratic_form_transform}, $f^{SY}[1, i] = 0$ for any $1 \leq i \leq n$ with 
    \[Y = \begin{pmatrix}1   \\ 1 & 1\\ & & I\end{pmatrix}.\]
    Hence, $v_1 + v_2$ is irrelevant to $f$. 
\end{proof}

We say that a column vector $v \in F_2^n$ is \emph{degenerate} for a quadratic form sequence $f_1, \dots, f_m$ over $n$ variables with respect to parameters $\alpha_0, \alpha_1, \beta_0, \beta_1$ if the following conditions hold:
\begin{enumerate}
\item Denote $v$ as $(v_1, \dots, v_n)^T$. $v_i$ is zero for any $i > \beta_0$. 
\item $v$ is irrelevant to $f_i$ for any $1 \leq i \leq \alpha_0 + \alpha_1$. 
\item $D_{f_i} v$ = 0 for any $\alpha_0 + \alpha_1 + 1 \leq i \leq m$. 
\end{enumerate}


We say that a sequence of quadratic forms $f_1, \dots, f_m$ with respect to parameters $\alpha_0, \alpha_1, \beta_0, \beta_1$ is \emph{degenerate} if there is a degenerate vector for the quadratic form sequence, otherwise, the quadratic form sequence is non-degenerate.

\begin{problem}[Non-degenerate restricted quadratic form space isometry problem]\label{def:non-degenerate_quad_form_space}
An instance of the restricted quadratic form space isometry problem is an instance of the non-degenerate restricted quadratic form space isometry problem if the two input quadratic form sequences are non-degenerate (with respect to the parameters of the input).
\end{problem}

We use the following algorithm to apply linear transforms for a sequence of quadratic forms to identify the degenerate vectors. 

\begin{framed}\begin{flushleft}
\vspace{-.3cm}\textbf{Degeneracy Identification Algorithm} 

\textbf{Input:} 
Quadratic forms $f_1, \dots, f_m$ over $n$ variables such that $f_1, \dots, f_m$ are linearly independent, and parameters $\alpha_0, \alpha_1, \beta_0, \beta_1$.

\textbf{Output:} Quadratic forms $f_1', \dots, f_m'$ over $n$ variables and an integer $d$.

\begin{enumerate}
    \item Compute the linear space spanned by all column vectors degenerate for $f_1, \dots, f_m$ with respect to parameters $\alpha_0, \alpha_1, \beta_0, \beta_1$, denoted as $V$. 
    \item Let $d = \dim(V)$ and compute a linear basis $v_1, \dots, v_d$ of $V$ such that \[a_{1, 1}, \dots, a_{1, m -\alpha_0 - \alpha_1}, a_{2,1}, \dots,a_{2, m -\alpha_0 - \alpha_1}, \dots, a_{d, 1}, \dots,  a_{d, m -\alpha_0 - \alpha_1}\] is lexically minimized for all the possible linear basis of $V$, where $a_{i, j}$ is $f_{\alpha_0 + \alpha_1 + j}^{N}[i, i]$ for any $N$ that is an arbitrary matrix satisfying Equation~(\ref{equ:restricted_tensor_iso}) with $v_i$ as $i$-th column for any $1 \leq i \leq d$.  
    \item Compute an arbitrary matrix $N$ satisfying Equation~(\ref{equ:restricted_tensor_iso}) with $v_i$ as $i$-th column for any $1 \leq i \leq d$, and return $f_1^N, \dots, f_m^N$ and the integer $d$.
\end{enumerate}
\end{flushleft}
\vspace{-.3cm}\end{framed}



We reduce the restricted
quadratic form space isometry problem to the non-degenerate restricted
quadratic form space isometry problem by the following algorithm. 
\begin{framed}\begin{flushleft}
\vspace{-.3cm}\textbf{Restricted Quadratic Form Space Isometry Algorithm} 

\textbf{Input:} 
Quadratic forms $f_1, \dots, f_m$ and $g_1, \dots, g_m$ such that $f_1, \dots, f_m$ are linearly independent and $g_1, \dots, g_m$ are linearly independent, and parameters $\alpha_0, \alpha_1, \beta_0, \beta_1$.

\textbf{Output:}
Yes or no.

\begin{enumerate}
    \item Run the Degeneracy Identification Algorithm for both $f_1, \dots, f_m$ and $g_1, \dots, g_m$ with parameters $\alpha_0, \alpha_1, \beta_0, \beta_1$, and denote the outputs as $f_1', \dots, f_m'$, $d_f$ and $g_1', \dots, g_m'$, $d_g$. Return no if $d_f \neq d_g$ or $f_{i}'[j, j] \neq g_{i}'[j, j]$ for some $\alpha_0 + \alpha_1 + 1 \leq i \leq m, 1 \leq j \leq d_f$.
    \item Let $g''_i$ be the induced quadratic form of $g_i'$ on  the last $n-d_f$ variables for any $1 \leq i \leq m$. 
    \item For any \[S = \begin{pmatrix} I_{d_f} & S'\\
    & I_{\beta_0 + \beta_1 - d_f} \\ & &  I_{n - \beta_0 - \beta_1}\end{pmatrix}\] with $S' \in \M(d_f \times (\beta_0 + \beta_1 - d_f), \F_2)$, let $f_i''$ be the induced quadratic form of $f_i'^{S}$ on the last $n - d_f$ variables for any $1 \leq i \leq m$, run the Non-Degenerate Restricted Quadratic Form Space Isometry Algorithm with $f_1'', \dots, f_m''$, $g_1'', \dots, g_m''$, and parameters $\alpha_0, \alpha_1, \beta_0- d_f, \beta_1$, and return yes if any output is yes. 
    \item Return no.
\end{enumerate}
\end{flushleft}
\vspace{-.3cm}\end{framed}

\begin{theorem}\label{thm:restricted_space_iso_algo}
For an input of the restricted quadratic form space isometry problem consisting of two quadratic form sequences $f_1, \dots, f_m$, $g_1, \dots, g_m$ in $n$ variables over $\F_2$ and parameters $\alpha_0, \alpha_1, \beta_0, \beta_1$, 
there is an algorithm with running time $2^{O((m - \alpha_0 - \alpha_1)\cdot n)}$ to reduce the restricted quadratic form space isometry problem 
to $2^{O((m - \alpha_0 - \alpha_1)\cdot n)}$ instances of the non-degenerate restricted quadratic form space isometry problem with input quadratic form sequences of length $m$ in $n'$ variables with parameters $\alpha_0, \alpha_1, \beta_0', \beta_1$ satisfying the following conditions:
\begin{enumerate}
    \item $0 \leq \beta_0' \leq \beta_0$. 
    \item $0 \leq n' \leq n$. 
    \item $n - n' = \beta_0 - \beta_0'$. 
\end{enumerate}
\end{theorem}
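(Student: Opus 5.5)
The plan is to show that the Restricted Quadratic Form Space Isometry Algorithm is correct and runs in the claimed time. The running time is the easy part. Computing the degenerate space $V$ is linear algebra. Conditions 1 and 3 are linear constraints on $v$. For condition 2, irrelevance of $v$ to $f_i$ says $D_{f_i}v=0$ together with $f_i(v)=0$ (the coefficient $f^S[1,1]$ equals $f(v)$). Since $v\mapsto f_i(v)$ is additive on $\ker D_{f_i}$ over $\F_2$, this condition is also linear there, so $V$ is computable in polynomial time. The lexicographically minimal basis in step 2 is found greedily, column by column. The diagonal values $a_{i,j}=f^N_{\alpha_0+\alpha_1+j}[i,i]$ depend only on $v_i$, through $f_{\alpha_0+\alpha_1+j}(v_i)$. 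They are therefore additive on $V$ (because $D_{f_k}v=0$ for those $k$), so only $m-\alpha_0-\alpha_1$ linear functionals on $V$ are involved. Step 3 enumerates $S'\in\M(d_f\times(\beta_0+\beta_1-d_f),\F_2)$, but only the part of $S'$ that affects $f_i'^{S}$ matters. I would argue that one can restrict to the $S'$ modulo those with no effect: $S'$ changes only the diagonal coefficients, which are indexed by forms $i>\alpha_0+\alpha_1$. This gives $2^{O((m-\alpha_0-\alpha_1)n)}$ instances. The parameter conditions $\beta_0'=\beta_0-d_f$ and $n'=n-d_f$ are then immediate.

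For soundness (a yes from some instance implies a yes overall), I would take the instance's solution $(M,N'')$ on the last $n-d_f$ variables and extend $N''$ by $I_{d_f}$ on the degenerate coordinates. Three facts make this extension work. The first $d_f$ columns of $N$ are degenerate, so they are irrelevant to $f_i$ for $i\le\alpha_0+\alpha_1$ and contribute nothing. For $i>\alpha_0+\alpha_1$, these coordinates contribute only diagonal terms, and step 1 has already checked that these agree between $f'$ and $g'$. Finally, the block shape of $M$ in Equation~(\ref{equ:restricted_tensor_iso}) keeps the last forms fixed, so no mixing creates mismatched diagonal terms. Composing with the Degeneracy Identification transforms and with $S$ then gives a solution of the original restricted instance.

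For completeness, suppose $(M,N)$ solves the original instance. First I would show that $N$ maps the degenerate space of $g$ onto that of $f$. The three conditions are preserved: condition 1 by the shape of $N$, and conditions 2 and 3 because $M$ mixes forms only within the first $\alpha_0+\alpha_1$ block and fixes the rest. Hence $d_f=d_g$. Because the last $m-\alpha_0-\alpha_1$ forms are fixed individually, the functionals $v\mapsto f_{\alpha_0+\alpha_1+j}(v)$ on $V$ are carried to the corresponding ones on the degenerate space of $g$. The canonical (lexicographically minimal) basis choice then forces the diagonal check in step 1 to pass. This canonical-form step is the one I expect to be hardest: the minimal basis is determined by the tuple of functionals only up to a change of basis within the fibres, so I must check that the resulting sequences $a_{i,j}$ coincide for $f'$ and $g'$.

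Next, conjugate $N$ into $N^*=N_f^{-1}NN_g$, where $N_f$ and $N_g$ are the matrices from the Degeneracy Identification Algorithm. Then $N^*$ preserves $\langle e_1,\dots,e_{d_f}\rangle$ and is block upper triangular with respect to the splitting into degenerate and non-degenerate coordinates. Its off-diagonal block, restricted to the first $\beta_0+\beta_1$ coordinates, is some $S'$; its entries on the last coordinates vanish by the shape of Equation~(\ref{equ:restricted_tensor_iso}). Its lower-right block then solves the induced instance for $f_i''$ built from $f_i'^{S}$ with this $S'$. The degenerate part's diagonal block can be absorbed, since degenerate coordinates only affect the already-matched diagonal terms. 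It remains to check that the induced sequences are non-degenerate: any degenerate vector of the quotient would lift to a degenerate vector outside $V$, contradicting maximality.
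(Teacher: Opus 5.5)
Your proposal is correct. It uses the same algorithm and the same two-directional argument as the paper:
\begin{itemize}
\item \emph{Soundness:} extend the instance's solution by $I_{d_f}$, then compose with $S$ and with the Degeneracy Identification matrices.
\item \emph{Completeness:} $N$ carries the degenerate space of $g$ onto that of $f$. So one may normalize to a restricted-shape $N$ that is block upper triangular with identity on the degenerate coordinates, and read off $S'$.
\end{itemize}

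Where you differ, you are more complete than the paper. Most importantly, the paper's proof never bounds the running time. Step~3 as literally written enumerates all $2^{d_f(\beta_0+\beta_1-d_f)}$ matrices $S'$, which is not within $2^{O((m-\alpha_0-\alpha_1)n)}$ in general: $d_f\approx\beta_0+\beta_1-d_f\approx n/2$ gives $2^{n^2/4}$. Your observation is what yields the stated bound. Each $f_i'$ involves $x_1,\dots,x_{d_f}$ only through $\sum_{j\le d_f} f_i'[j,j]x_j^2$, and not at all for $i\le\alpha_0+\alpha_1$. Since squaring is additive in characteristic $2$, the induced forms $f_i''$ depend on $S'$ only through $A^TS'$. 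Here $A$ is the $d_f\times(m-\alpha_0-\alpha_1)$ matrix of these diagonal coefficients. Hence there are at most $2^{\mathrm{rank}(A)(\beta_0+\beta_1-d_f)}\le 2^{(m-\alpha_0-\alpha_1)n}$ distinct instances. With the lexicographically minimal basis, the first $d_f-\mathrm{rank}(A)$ rows of $A$ are zero, so the corresponding rows of $S'$ can simply be fixed to $0$.

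You also do two things the paper takes for granted:
\begin{itemize}
\item You check that the produced instances are non-degenerate. A degenerate $u$ for $f''$ gives $(0,u)$ degenerate for $f'^S$, and $(0,u)$ lies outside $S^{-1}\langle e_1,\dots,e_{d_f}\rangle=\langle e_1,\dots,e_{d_f}\rangle$.
\item You characterize irrelevance as $D_fv=0$ and $f(v)=0$, which makes $V$ computable by linear algebra.
\end{itemize}

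The step you flag as hardest is actually immediate and needs no analysis of fibres. $N$ restricts to a bijection from $V_g$ to $V_f$ with $f_{\alpha_0+\alpha_1+j}(Nw)=g_{\alpha_0+\alpha_1+j}(w)$. So it maps bases of $V_g$ to bases of $V_f$ with identical sequences $(a_{i,j})$. The two sets of attainable sequences are therefore equal, and so are their lexicographic minima. The paper simply asserts this equality.

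One small slip, which the paper's choice $S'=(N_3\ N_4)$ shares: the normalized matrix $\begin{pmatrix} I & Q\\ 0 & R\end{pmatrix}$ factors as $\begin{pmatrix} I & QR^{-1}\\ 0 & I\end{pmatrix}\mathrm{diag}(I,R)$. So the matching $S'$ is the relevant part of $QR^{-1}$, not $Q$. This is harmless, since every $S'$ (or every value of $A^TS'$) is tried.
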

\begin{proof}
    We prove the correctness of the Restricted Quadratic Form Space Isometry Algorithm.
    For $f_1, \dots, f_m$ and $g_1, \dots, g_m$ as input of the Degeneracy Identification Algorithm, we have the following properties:
    \begin{enumerate}
        \item $f_i'[j, k] = 0$ for any $1\leq i\leq m$,
        $1 \leq j \leq d_f$, and $j < k \leq n$.
        \item $g_i'[j, k] = 0$ for any $1 \leq i \leq m$, $1 \leq j \leq d_g$, and $j < k \leq n$. 
        \item $f_i'[j, j] = 0$ for any $1\leq i\leq m - \alpha_0 - \alpha_1$,
        $1 \leq j \leq d_f$.
        \item $g_i'[j, j] = 0$ for any $1\leq i\leq m - \alpha_0 - \alpha_1$,
        $1 \leq j \leq d_g$.
        \item 
        $\langle e_1, \dots, e_{d_f}\rangle$ are degenerate columns for $f_1', \dots, f_m'$. 
        \item 
        $\langle e_1, \dots, e_{d_g}\rangle$ are degenerate columns for $g_1', \dots, g_m'$. 
    \end{enumerate}
    Let $V_f$ be the linear space spanned by degenerate columns for $f_1, \dots, f_m$, and $V_g$ be the linear space spanned by degenerate columns for $g_1, \dots, g_m$.
    If there exist matrices $M$ and $N$ that satisfy Equation~(\ref{equ:restricted_tensor_iso}) such that $g_i = \sum_{j = 1}^m M[i, j] f_j^{N}$, then $V_g = N^{-1}V_f$. 
    Then $d_f = d_g$ and $f'_{\alpha_0 + \alpha_1 + i}[j, j] = g'_{\alpha_0 + \alpha_1 + i}[j, j]$ for any $1 \leq i \leq m - \alpha_0 - \alpha_1$ and $1 \leq j \leq d_f$. 

    Hence, there must exist matrices $M$ and $N$  such that $g_i' = \sum_{j = 1}^m M[i, j] f_j'^{N}$, $M$ satisfies Equation~(\ref{equ:restricted_tensor_iso}), and 
    \[N = \begin{pmatrix}I_{d_f} & N_3 & N_4 \\ & N_1 & N_5 \\ & & N_2 \\ & & & I_{n - \beta_0 - \beta_1}\end{pmatrix}\]
    for some $N_1 \in \text{GL}(\beta_0 - d_f, \F_2)$ and $N_2 \in \text{GL}(\beta_1, \F_2)$,  $N_3 \in M(d_f \times (\beta_0 - d_f), \F_2)$, $N_4 \in M(d_f \times \beta_1, \F_2)$, and $N_5 \in M((\beta_0 - d_f) \times \beta_1, \F_2)$. 
    For an arbitrary such $N$, using matrix $S$ with \[S' = \begin{pmatrix}N_3 & N_4\end{pmatrix}\] for Step 3 of the Restricted Quadratic Form Space Isometry Algorithm, the Non-Degenerate Restricted Quadratic Form Space Isometry Algorithm for $f_1'', \dots, f_m''$ and $g_1'', \dots, g_m''$ outputs yes, and thus the overall algorithm outputs yes. 

    Now we show that the algorithm outputs yes only when the input instance is a yes instance. 
    Notice that the algorithm outputs yes only when the non-degenerate restricted quadratic form space isometry algorithm outputs yes for some $f_1'', \dots, f_m''$ obtained from some $S$.  
    If so, there exist $M$ and $N'$ such that $g_i'' = \sum_{j = 1}^m M[j, i] f_j''^{N'}$. 
    Therefore, $g_i' = \sum_{j = 1}^m M[j, i] f_j'^{N}$ with 
    \[N = S \begin{pmatrix}I_{d_f} & \\ & N'\end{pmatrix},\] and consequently, the input instance of the restricted quadratic form space isometry problem is a yes instance. 
\end{proof}

\subsection{Reduce non-degenerate restricted quadratic form space isometry to quadratic form tuple isometry}\label{sec:quadratic_form_tuple_construction}
We reduce the non-degenerate restricted quadratic form space isometry problem to the quadratic form tuple isometry problem by encoding the information of a non-degenerate restricted quadratic form space as a tuple of quadratic forms. 

\begin{definition}
    A quadratic form tuple $\mathbf{f}$ of length $m$ is an ordered sequence of $m$ quadratic forms defined on the same variables. 

    The partial derivative matrix tuple of $\mathbf{f}$, denoted as $\mathbf{D}_\mathbf{f}$, 
    is the matrix tuple $(D_1, \dots, D_m)$, where $D_i$ is the partial derivative matrix of $i$-th quadratic form of $\mathbf{f}$. 
    
\end{definition}

For a sequence of quadratic forms $f_1, \dots, f_m$ in $n$ variables $x_1, \dots, x_n$ with respect to parameters $\alpha_0, \alpha_1, \beta_0, \beta_1$, we use a quadratic form tuple to encode the non-degenerate restricted quadratic form space spanned by $f_1, \dots, f_m$. The construction is given in Figure~\ref{fig:constrution}. 
In the construction, coefficients are zero unless specified for each constructed quadratic form. We say that a quadratic form is of type $i(j)$ if it is constructed for the ($j$) copy of Step i in the construction. For example, if a quadratic form of the (b) copy is constructed in Step 1, then it is a type 1(b) quadratic form.
\begin{figure}
\begin{framed}\begin{flushleft}
\vspace{-.1cm}\textbf{Quadratic Form Tuple Construction} 

\textbf{Input:}
Quadratic forms $f_1, \dots, f_m$ on $n$ variables, and parameters $\alpha_0, \alpha_1, \beta_0, \beta_1$.

\textbf{Output:}
A quadratic form tuple on $2(\alpha_0 + \alpha_1 + \beta_0 + \beta_1) +3$ variables. 

\begin{enumerate}
\item Construct two copies of $m - \alpha_0 - \alpha_1$ quadratic forms such that for any $1 \leq i \leq j \leq \beta_1 + \beta_0$ and $1 \leq \ell \leq m - \alpha_0 - \alpha_1$, the $\ell$-th quadratic form has the coefficient of $x_i x_j$ for the $(a)$ copy, and $x_{\beta_0 + \beta_1 + i} x_{\beta_0 + \beta_1 + j}$ for the $(b)$ copy, equal to $f_{\ell + \alpha_0 + \alpha_1}[i, j]$. 




\item Construct four copies of $n - \beta_0$ quadratic forms such that for any $1 \leq i\leq \beta_1 + \beta_0$, $1 \leq j \leq \alpha_0 + \alpha_1$, and $1 \leq \ell \leq n - \beta_0$, the $\ell$-th quadratic form has the coefficient of 
$x_i x_{2\beta_0 +2 \beta_1 + j}$ for the (a) copy,
$x_{\beta_0 + \beta_1 + i} x_{2\beta_0 + 2\beta_1 + j}$ for the (b) copy,
$x_{i} x_{2\beta_0 + 2\beta_1 + \alpha_0 + \alpha_1 + j}$  for the (c) copy,
and $x_{\beta_0 + \beta_1 + i} x_{2\beta_0 + 2\beta_1 + \alpha_0 + \alpha_1 + j}$ for the (d) copy,
equal to $D_{f_{j}}[i, \beta_0 + \ell]$. 

\item Construct four copies of one quadratic form such that for any $1 \leq i\leq \beta_1 + \beta_0$ and $1 \leq j \leq \alpha_0 + \alpha_1$, 
the quadratic form has the coefficient of 
$x_i x_{2\beta_0 +2 \beta_1 + j}$ for the (a) copy,
$x_{\beta_0 + \beta_1 + i} x_{2\beta_0 + 2\beta_1 + j}$ for the (b) copy,
$x_{i} x_{2\beta_0 + 2\beta_1 + \alpha_0 + \alpha_1 + j}$  for the (c) copy,
and $x_{\beta_0 + \beta_1 + i} x_{2\beta_0 + 2\beta_1 + \alpha_0 + \alpha_1 + j}$ for the (d) copy, 
equal to $f_{j}[i, i]$. 

\item (Let $\gamma = 2\beta_0 + 2\beta_1 + 2 \alpha_0 + 2\alpha_1$.) Construct (a) a quadratic form such that only the coefficient of $x_{\gamma+ 1}x_{\gamma + 2}$ is $1$,  (b) a quadratic form such that only the coefficient of $x_{\gamma+ 1}x_{\gamma + 3}$ is $1$,  and (c) a quadratic form such that only the coefficient of $x_{\gamma+ 2}x_{\gamma + 3}$ is $1$. 

\item Construct $(m - \alpha_0 - \alpha_1) \cdot (n - \beta_0 - \beta_1 + 1)\cdot (n - \beta_0 - \beta_1) / 2$ quadratic forms such that 
for any $1 \leq k \leq m - \alpha_0 - \alpha_1, 1 \leq i \leq j \leq n - \beta_0 - \beta_1$, use one quadratic form to encode $f_{\alpha_0 + \alpha_1 + k}[\beta_0 + \beta_1 + i, \beta_0 + \beta_1 + j]$ as the coefficient of $x_{\gamma + 1} x_{\gamma + 2}$. 

\item Construct four copies of $(m - \alpha_0 - \alpha_1) \cdot (n - \beta_0 - \beta_1)$ quadratic forms such that the $\ell$-th quadratic form with $\ell = (\ell_1-1) \cdot (n - \beta_0 - \beta_1) + \ell_2 $ for any $1 \leq \ell_1 \leq m - \alpha_0 - \alpha_1$, $1 \leq \ell_2 \leq n - \beta_0 - \beta_1$ has the coefficient of $x_{i} x_{\gamma + 1}$ for the (a) copy, $x_{\beta_0 + \beta_1 + i}x_{\gamma+1}$ for the (b) copy, $x_{i} x_{\gamma + 2}$ for the (c) copy, and $x_{\beta_0 + \beta_1 + i}x_{\gamma+2}$ for the (d) copy, equal to $f_{\alpha_0 + \alpha_1 + \ell_1}[i, \beta_0 + \beta_1 + \ell_2]$, for any $1 \leq i \leq \beta_0 + \beta_1$. 

\item Construct four copies of $(n - \beta_0 - \beta_1 + 1)\cdot (n - \beta_0 - \beta_1) / 2$ quadratic forms such that 
for any $1 \leq \ell_1 \leq \ell_2 \leq n - \beta_0 - \beta_1$, use one quadratic form to encode $f_{\alpha_0 + \alpha_1 + i}[\beta_0 + \beta_1 + \ell_1, \beta_0 + \beta_1 + \ell_2]$ as the coefficient of $x_{2\beta_0 + 2\beta_1 + i} x_{\gamma + 1}$ for the (a) copy, 
the coefficient of $x_{2\beta_0 + 2\beta_1 + \alpha_0 + \alpha_1 + i} x_{\gamma + 1}$ for the (b) copy,
the coefficient of $x_{2\beta_0 + 2\beta_1 + \alpha_0 + \alpha_1 + i} x_{\gamma + 2}$ for the (c) copy,
and the coefficient of $x_{2\beta_0 + 2\beta_1 + \alpha_0 + \alpha_1 + i} x_{\gamma + 2}$ for the (d) copy
for any $1 \leq i \leq \alpha_0 + \alpha_1$. 

\item Construct four copies of $\beta_1$ quadratic forms such that for any $1 \leq \ell \leq \beta_1$,  the $\ell$-th quadratic form has the coefficient of $x_{\beta_0 + \ell} x_{\gamma + 1}$ for the (a) copy, 
$x_{2\beta_0 +\beta_1 + \ell} x_{\gamma + 1}$ for the (b) copy, 
$x_{\beta_0 + \ell} x_{\gamma + 2}$ for the (c) copy, and $x_{2\beta_0 +\beta_1 + \ell} x_{\gamma + 2}$ for the (d) copy, equal to $1$.


\item Construct four copies of $\alpha_1$ quadratic forms such that for any $1 \leq \ell \leq \alpha_1$, the $\ell$-th quadratic form has the coefficient of
$x_{2\beta_0 + 2\beta_1 + \alpha_0 + \ell} x_{\gamma + 1}$ for the (a) copy, $x_{2 \beta_0  + 2 \beta_1 + 2\alpha_0 + \alpha_1 + \ell} x_{\gamma + 1}$ for the (b) copy,  
$x_{2\beta_0 + 2\beta_1 + \alpha_0 + \ell} x_{\gamma + 2}$ for the (c) copy, and $x_{2 \beta_0  + 2 \beta_1 + 2\alpha_0 + \alpha_1 + \ell} x_{\gamma + 2}$ for the (d) copy, equal to 1. 
\end{enumerate}
\end{flushleft}
\vspace{-.3cm}\end{framed}
\vspace{-.3cm}\caption{Quadratic Form Tuple Construction}\label{fig:constrution}
\end{figure}

We prove some properties of our matrix tuple construction. 

\begin{lemma}\label{lem:space_to_tuple_basic_1}
For a quadratic form tuple $\mathbf{f}$ over $n$ variables and a linear space of column vectors $V$, if any of the following conditions is satisfied, then for any invertible matrices \[U = \begin{pmatrix}U_1 & \cdots & U_k\end{pmatrix} \text{ and } W = \begin{pmatrix}W_1 & \cdots & W_k\end{pmatrix}\] such that $U^T \mathbf{D}_\mathbf{f} W = \diag(U_1^T \mathbf{D}_\mathbf{f}W_1, U_k^T \mathbf{D}_\mathbf{f}W_k)$, the projection of any $v \in V$ on any $U_i$ (or $W_i$) with respect to $U_1 \oplus \dots \oplus U_k$ (or $W_1 \oplus \dots \oplus W_k$) is in $V$. 
\begin{enumerate}
\item $V = \langle v : D_{f} v = 0\rangle$ for some $f$ in $\mathbf{f}$. 
\item $V = V_1 \cap V_2$ for some linear spaces $V_1$ and $V_2$ such that for any invertible matrices \[U' = \begin{pmatrix}U_1' & \cdots & U_{k'}'\end{pmatrix} \text{ and } W' = \begin{pmatrix}W_1' & \cdots & W_{k'}'\end{pmatrix}\] satisfying $U'^T \mathbf{D}_\mathbf{f} W' = \diag(U_1'^T \mathbf{D}_\mathbf{f}W_1', U_{k'}'^T \mathbf{D}_\mathbf{f}W_{k'}')$,  for any $1 \leq i \leq 2$ and $1 \leq j \leq k'$, the projection of any $v \in V_i$ on any $U_j'$ (or $W_j'$) with respect to $U_1' \oplus \dots \oplus U_{k'}'$ (or $W_1' \oplus \dots \oplus W_{k'}'$) is in $V_i$. 
\end{enumerate}
\end{lemma}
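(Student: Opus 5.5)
Both conditions reduce to the defining block structure $U^T \mathbf{D}_\mathbf{f} W = \diag(U_1^T\mathbf{D}_\mathbf{f}W_1,\dots,U_k^T\mathbf{D}_\mathbf{f}W_k)$, rewritten as $\mathbf{D}_\mathbf{f} = U^{-T}\diag(\dots)W^{-1}$. Since every $D_f$ in $\mathbf{D}_\mathbf{f}$ is symmetric, the roles of $U$ and $W$ are interchangeable after transposing. So I will treat only the projection onto the $W_i$, and obtain the $U_i$ statement by applying the same argument to the transposed decomposition $W^T\mathbf{D}_\mathbf{f}U$.

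\emph{Condition 1.} Take $v$ with $D_f v = 0$ and write $v = \sum_i w_i$, where $w_i$ is its projection on $W_i$. Then $v = W c$ with $c$ split into blocks $c_1,\dots,c_k$, and $w_i = W_i c_i$. From $D_f = U^{-T}\diag(B_1,\dots,B_k)W^{-1}$, where $B_j = U_j^T D_f W_j$, we get
\[
0 = D_f v = U^{-T}\,(B_1c_1,\dots,B_kc_k)^T .
\]
Because $U^{-T}$ is invertible, $B_jc_j = 0$ for every $j$. For a fixed $i$, let $c^{(i)}$ be $c$ with all blocks except the $i$-th set to zero. Then
\[
D_f w_i = D_f W c^{(i)} = U^{-T}\,(0,\dots,B_ic_i,\dots,0)^T = 0,
\]
so $w_i \in V$. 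The only thing to check here is that $U^T D_f W$ is block diagonal for the specific $f$, not just for the tuple as a whole. This holds because block diagonality of the tuple means each member is block diagonal.

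\emph{Condition 2.} Let $v \in V_1\cap V_2$. By hypothesis, applied with $U'=U$ and $W'=W$, its projection $w_i$ on $W_i$ lies in $V_1$ and also in $V_2$. Projections are uniquely determined by the direct sum decomposition, so it is the same vector in both cases, and hence $w_i \in V_1\cap V_2 = V$. The same reasoning applies to the $U_i$.

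I expect no real obstacle. The only care needed is the bookkeeping between the decomposition $U_1\oplus\dots\oplus U_k$, meaning the column spaces of the $U_i$, and the coordinate vector $c$. One should also note that projection is linear, so it suffices to verify the claim for each $v$ individually, as done above.
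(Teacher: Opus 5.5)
Your proof is correct and follows essentially the same argument as the paper. For condition 1 you use the block-diagonal form of $U^T D_f W$ to split $D_f v = 0$ into the blockwise conditions $B_j c_j = 0$ and check each projection separately; for condition 2 you apply the hypothesis with $U'=U$, $W'=W$ and use uniqueness of projections. One point of comparison: your explicit use of the symmetry of $D_f$ (transposing to $W^T\mathbf{D}_\mathbf{f}U$) to handle the $U_i$ side is a detail the paper leaves under ``similarly'', and it is genuinely needed there.
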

\begin{proof}
    For the first property, let $v$ be an arbitrary vector such that $D_{f}v = 0$, and $v_j$ be the projection of $v$ on $W_j$ with respect to $W_1\oplus \dots\oplus W_k$ for any $1 \leq j \leq k$. 
    We have \[U^T D_{f} v = \begin{pmatrix}U_1^T D_{f} v_1 \\ \cdots \\ U_k^T D_{f} v_k\end{pmatrix} = 0. \]
    Hence, $U_j^T D_{f} v_j = 0$ for any $1 \leq j \leq k$. On the other hand, $U_j^T  D_{f} v_\ell = 0$ for any $1 \leq j, \ell \leq k$ such that $j \neq \ell$. 
    So we have $U^T D_f v_j = 0$ for any $1 \leq j \leq k$, and thus  $D_f v_j = 0$ for any $1 \leq j \leq k$. 
    Hence, $v_j$ is in $V$ for any $1 \leq j \leq k$. Similarly, we can show that the projection of any $v \in V$ on any $U_j$ with respect to $U_1 \oplus \dots \oplus U_k$ is in $V$.

    For the second property, let $U' = U$ and $W' = W$. 
    Then for any $v \in V_1 \cap V_2$, 
    the projection of $v$ on any $U_j'$ (or $W_j'$) with respect to $U_1' \oplus \dots \oplus U_{k}'$ (or $W_1' \oplus \dots \oplus W_{k}'$) is in $V_1 \cap V_2$. 
\end{proof}

\begin{lemma}\label{lem:space_to_tuple_basic_0}
For two quadratic form tuples $\mathbf{f}$ and $\mathbf{g}$ on $n$ variables and two linear spaces of column vectors $V$ and $V'$, if any of the following conditions is satisfied, then for any invertible matrix $U$ such that $\mathbf{f}^{U} = \mathbf{g}$, $U^{-1} V = V'$. 
\begin{enumerate}
\item There exists an integer $i$ such that $V = \langle v : D_{\mathbf{f}, i} v = 0\rangle$ and $V' = \langle v : D_{\mathbf{g}, i} v = 0\rangle$, where $D_{\mathbf{f}, i}$ and $D_{\mathbf{g}, i}$ are the partial derivative matrices of the $i$-th quadratic forms of $\mathbf{f}$ and $\mathbf{g}$ respectively. 
\item Let $V_1, V_2$ and $V_1', V_2'$ be linear spaces such that for any invertible matrix $U$ satisfying $\mathbf{f}^{U} = \mathbf{g}$,  $U^{-1} V_i = V'_i$ for any $1 \leq i \leq 2$. 
Let $V = V_1 \cap V_2$ and $V' = V_1' \cap V_2'$. 
\end{enumerate}
\end{lemma}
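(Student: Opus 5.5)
We treat the two conditions separately. Both rest on Lemma~\ref{lem:partial_derivative_matrix}, which says that $\mathbf{f}^U = \mathbf{g}$ gives $U^T D_{\mathbf{f}, i} U = D_{\mathbf{g}, i}$ for every index $i$.

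\emph{Condition 1.} Fix $U$ with $\mathbf{f}^U = \mathbf{g}$. We first show $U^{-1} V \subseteq V'$. Take $v \in V$, so $D_{\mathbf{f},i} v = 0$, and set $w = U^{-1} v$. Then
\[
D_{\mathbf{g}, i} w = U^T D_{\mathbf{f}, i} U U^{-1} v = U^T D_{\mathbf{f}, i} v = 0,
\]
so $w \in V'$.

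For the reverse inclusion, take $w \in V'$. Since $U^T$ is invertible, $U^T D_{\mathbf{f}, i} U w = 0$ forces $D_{\mathbf{f},i}(Uw) = 0$. Hence $Uw \in V$, and $w = U^{-1}(Uw) \in U^{-1}V$. The two inclusions give $U^{-1}V = V'$.

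\emph{Condition 2.} Here we use only that $U^{-1}$ is a bijection on $\F_2^n$, so it commutes with intersections: $U^{-1}(V_1 \cap V_2) = U^{-1}V_1 \cap U^{-1}V_2$. By hypothesis each $U^{-1}V_i = V_i'$, so the right-hand side is $V_1' \cap V_2' = V'$.

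There is no real obstacle. The only point needing care is to use the transformation rule $D_{f^S} = S^T D_f S$ with the correct orientation, so that the kernel transforms by $U^{-1}$ and not by $U$. This orientation matches the definition $f^S(\vec y) = f(S\vec y)$.
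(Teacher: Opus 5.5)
Your proof is correct and follows essentially the same route as the paper. For condition~1 you prove both inclusions from $D_{\mathbf{g},i} = U^T D_{\mathbf{f},i} U$ (Lemma~\ref{lem:partial_derivative_matrix}), and for condition~2 you use that the bijection $U^{-1}$ commutes with intersections, which the paper checks element-wise.
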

\begin{proof}

For the first case,
since $\mathbf{f}^U = \mathbf{g}$, by Lemma~\ref{lem:partial_derivative_matrix}, for any vector $v \in V$, 
\[\mathbf{D}_{\mathbf{f}, i} v = \mathbf{D}_{\mathbf{g}^{U^{-1}}, i} v = U^{-T}\mathbf{D}_{\mathbf{g}, i} U^{-1} v = 0.\] 
Since $U$ is invertible, $U^{-1} v$ is in $V'$. 
Conversely, for any $v' \in V'$, \[\mathbf{D}_{\mathbf{g}, i} v' = \mathbf{D}_{\mathbf{f}^{U}, i} v' = U^{T}\mathbf{D}_{\mathbf{f}, i} U v' = 0.\]
Therefore, $U v'$ is in $V$. So, $U^{-1} V = V'$.

For the second case, for any $v \in V_1 \cap V_2$, we have that $U^{-1} v$ is in both $V_1' $ and $V_2'$, and thus in $V_1' \cap V_2'$. Conversely, for any $v' \in V_1' \cap V_2'$, we have that $U v'$ is in both $V_1$ and $V_2$, and thus in $V_1 \cap V_2$. Hence, $U^{-1} (V_1 \cap V_2) = V_1' \cap V_2$.
\end{proof}

Let $E_1 = \langle e_1, \dots, e_{\beta_0 + \beta_1} \rangle$, $E_2 = \langle e_{\beta_0 + \beta_1 + 1}, \dots, e_{2\beta_0 +2 \beta_1} \rangle$, $E_3 = \langle e_{2\beta_0 + 2\beta_1 + 1}, \dots, e_{2\beta_0 +2 \beta_1 + \alpha_0 + \alpha_1} \rangle$, $E_4 = \langle e_{2\beta_0 + 2\beta_1 + \alpha_0 + \alpha_1 + 1}, \dots, e_{2\beta_0 +2 \beta_1 + 2\alpha_0 + 2\alpha_1} \rangle$, $E_5 = \langle e_{\gamma + 1}\rangle$, $E_6 = \langle e_{\gamma + 2}\rangle$, and $E_7 = \langle e_{\gamma + 3}\rangle$.

\begin{lemma}\label{lem:space_to_tuple_basic_2}
    Let $\mathbf{f}$ be a quadratic form tuple constructed by the Quadratic Form Tuple Construction process.
    Then for any non-zero column vector $v$, there exists a quadratic form $f \in \mathbf{f}$ such that $D_{f} v$ is a non-zero vector.  
\end{lemma}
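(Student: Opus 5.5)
We argue by contradiction and use the decomposition $\F_2^{\gamma+3} = E_1 \oplus E_2 \oplus \dots \oplus E_7$. Suppose $v \neq 0$ with $D_f v = 0$ for every $f \in \mathbf{f}$, and write $v = v_1 + \dots + v_7$ with $v_i \in E_i$. The approach is to kill the components one at a time, using the gadget forms of Steps 4, 8 and 9, and then to invoke the non-degeneracy of the input sequence for the remaining part.

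\textbf{The three special coordinates.} The Step 4 forms have partial derivative matrices $E_{\gamma+1,\gamma+2}+E_{\gamma+2,\gamma+1}$, and similarly for the pairs $(\gamma+1,\gamma+3)$ and $(\gamma+2,\gamma+3)$. Applying these three matrices to $v$ reads off the coordinates $v_{\gamma+2}, v_{\gamma+1}$ (from the first), $v_{\gamma+3}, v_{\gamma+1}$ (from the second) and $v_{\gamma+3}, v_{\gamma+2}$ (from the third). So $v_5 = v_6 = v_7 = 0$.

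\textbf{The $E_3$, $E_4$ and $\beta_1$ coordinates.} Since $v_5 = v_6 = 0$, look at the row indexed by $\gamma+1$ (or $\gamma+2$) of $D_f v$.
\begin{itemize}
\item For a Step 9 form of copy (a), this row equals the coordinate $v_{2\beta_0+2\beta_1+\alpha_0+\ell}$. So the last $\alpha_1$ coordinates of $E_3$ vanish, and copy (b) gives the same for $E_4$.
\item Step 8 similarly gives that $v$ vanishes on the coordinates $\beta_0+\ell$ in $E_1$ and the coordinates $2\beta_0+\beta_1+\ell$ in $E_2$.
\item The Step 7 forms, read on row $\gamma+1$, give $\sum_i f_{\alpha_0+\alpha_1+i}[\cdot,\cdot]\, v_{2\beta_0+2\beta_1+i}$.
\end{itemize}
Combining these with the Step 2 and 3 forms (rows indexed by $E_1$, $E_2$) is how I would force the $\alpha_0$ part of $E_3$, $E_4$ to vanish. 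The hope is that the rows in $E_1$ of the Step 2 and Step 3 forms, taken together, express $\sum_j v_{2\beta_0+2\beta_1+j}\, D_{f_j}[i,\cdot]$ and $\sum_j v_{2\beta_0+2\beta_1+j}\, f_j[i,i]$. If these all vanish, then the form $\sum_j v_j f_j$ restricted suitably is trivial. This would contradict the linear independence of $f_1,\dots,f_m$, once we also use that the Step 5--7 forms encode the remaining coefficients. Some care is needed, because the Step 2 forms only range over columns $\beta_0+\ell$; those columns are exactly the ones not covered by the Step 1 forms.

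\textbf{The $E_1$, $E_2$ part.} Now $v \in E_1 \oplus E_2$ is supported on the first $\beta_0$ coordinates of each block. Take the $E_1$ component, viewed as $u \in \F_2^n$ supported on the first $\beta_0$ coordinates; this is condition 1 of a degenerate vector.
\begin{itemize}
\item The (a)/(c) forms of Steps 2 and 3, read on rows in $E_3$ and $E_4$, give $D_{f_j}[u,\beta_0+\ell]=0$ for all $\ell$ and $\sum_i f_j[i,i]u_i$-type conditions for $j \le \alpha_0+\alpha_1$. Together these should yield that $u$ is irrelevant to each such $f_j$, which is condition 2.
\item The Step 1(a) and Step 6(a),(c) forms, read on rows in $E_1$ and at $\gamma+1,\gamma+2$, give $D_{f_{\alpha_0+\alpha_1+\ell}}u=0$, which is condition 3.
\end{itemize}
Hence $u$ is degenerate for the input sequence, contradicting non-degeneracy unless $u=0$. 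The $E_2$ component is handled identically via the (b)/(d) copies.

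\textbf{Main obstacle.} The main obstacle is the bookkeeping in the last two steps. One must match precisely which rows of which constructed $D_f$ encode which coefficients, in particular the Step 3 diagonal-coefficient forms. One must also check that "irrelevant" is exactly equivalent to the vanishing of $D_{f_j}u$ together with $f_j[u,u]=0$, where the latter means the quadratic form evaluated at $u$. I would first verify this characterization of irrelevance from Fact~\ref{fact:quadratic_form_transform}, and then carry out the row-reading systematically.
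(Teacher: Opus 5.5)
Your route is the paper's. The paper decomposes $v$ along $E_1\oplus\dots\oplus E_7$ and notes that each constructed form couples at most two of the $E_i$, so that $D_f v_i\neq 0$ forces $D_f v\neq 0$. It then asserts that every nonzero vector of a single $E_i$ is detected; your row readings make this explicit. Several of your steps are correct as they stand: your analysis of the Step 4, 8 and 9 gadgets, and your observation that $E_3,E_4$ are handled by the linear independence of $f_1,\dots,f_m$ rather than by non-degeneracy. So is your characterization ``$u$ is irrelevant to $f$ iff $D_f u=0$ and $f(u)=0$''.

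The gap sits in the two steps you leave as hopes. The missing ingredient is the zero pattern that a restricted instance inherits from its tensor being ``with respect to the parameters $\alpha_0,\alpha_1,\beta_0,\beta_1$'': for every $j\le\alpha_0+\alpha_1$, $D_{f_j}[k,i]=0$ whenever $k,i\le\beta_0$. The paper's one-line appeal to non-degeneracy uses this silently. No step of the construction records this $\beta_0\times\beta_0$ block:
\begin{itemize}
\item Step 2 records only columns $>\beta_0$.
\item Step 3 records only diagonal terms.
\item Step 7 records only the last $n-\beta_0-\beta_1$ variables.
\end{itemize}
Your suggested fix, that the missing columns are ``not covered by the Step 1 forms'', does not help either. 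Step 1 records the forms $f_{\alpha_0+\alpha_1+\ell}$, not the $f_j$ with $j\le\alpha_0+\alpha_1$. This breaks two of your steps.
\begin{enumerate}
\item[(i)] In the $E_3/E_4$ step, Steps 2, 3 and 7 show only that $\sum_j c_j f_j$ has zero coefficient on every monomial except the $x_kx_i$ with $k<i\le\beta_0$. You need the zero pattern to conclude $\sum_j c_jf_j=0$ and invoke linear independence. Note also that Step 7 must be read as encoding $f_i$ for $1\le i\le\alpha_0+\alpha_1$. The index $f_{\alpha_0+\alpha_1+i}$ in the construction, which you copied, is a typo, and read literally it would not feed your argument.
\item[(ii)] In the $E_1/E_2$ step, Steps 2 and 3 give only $(D_{f_j}u)_k=0$ for $k>\beta_0$, together with $\sum_i f_j[i,i]u_i=0$. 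To obtain $D_{f_j}u=0$ and $f_j(u)=\sum_i f_j[i,i]u_i$, and hence irrelevance, you again need $D_{f_j}[k,i]=0$ for $k,i\le\beta_0$. This uses that $u$ is supported on the first $\beta_0$ coordinates.
\end{enumerate}
The hypothesis is not cosmetic. Suppose some $f_j$ with $j\le\alpha_0+\alpha_1$ contained a monomial $x_kx_i$ with $k\neq i$ and $k,i\le\beta_0$. Then $e_i\in E_1$ could be annihilated by every constructed $D_f$ without being degenerate, and the lemma would fail.
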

\begin{proof}
    Since the input to the Quadratic Form Tuple Construction process is non-degenerate with respect to the parameters, for any $1 \leq i \leq 7$ and any non-zero vector $v \in E_i$, there exists a quadratic form $f$ in the tuple $\mathbf{f}$ such that $D_f v \neq 0$. 

Now we show that for any column vector $v$, there exists some quadratic form $f$ in $\mathbf{f}$ such that $D_f v \neq 0$. 
Let $v_1, \dots, v_7$ be the projection of $v$ on $E_1, \dots, E_7$ with respect to $E_1\oplus \dots \oplus E_7$. 
By the construction of the quadratic form tuple, for each quadratic form $f$ in $\mathbf{f}$, if $D_f v_i \neq 0$, then $D_f v \neq 0$. Therefore, the lemma holds. 
\end{proof}

We also define linear spaces $I_{\mathbf{f}, 1}, \dots, I_{\mathbf{f}, 7}$ for a quadratic form tuple $\mathbf{f}$ obtained by the Quadratic Form Tuple Construction process as follows:
\begin{enumerate}
    \item $I_{\mathbf{f}, 1}$ is the linear space spanned by all the column vectors $v$ such that $D_fv = 0$ for any $f$ in $\mathbf{f}$ of type 1(b), 2(b), 2(d), 3(b), 3(d), 4, 5, 6(b), 6(d), 7, 8(b), 8(d), or 9.
    \item  $I_{\mathbf{f}, 2}$ is the linear space spanned by all the column vectors $v$ such that $D_fv = 0$ for any $f$ in  $\mathbf{f}$ of type 1(a), 2(a), 2(c), 3(a), 3(c), 4, 5, 6(a), 6(c), 7, 8(a), 8(c), or 9.
    \item $I_{\mathbf{f}, 3}$ is the linear space spanned by all the column vectors $v$ such that $D_fv = 0$ for any $f$ in  $\mathbf{f}$ of type 1, 2(c), 2(d), 3(c), 3(d), 4, 5, 6, 7(b), 7(d), 8, 9(b), or 9(d).
    \item  $I_{\mathbf{f}, 4}$ is the linear space spanned by all the column vectors $v$ such that $D_fv = 0$ for any $f$ in  $\mathbf{f}$ of type 1, 2(a), 2(b), 3(a), 3(b), 4, 5, 6, 7(a), 7(c), 8, 9(a), or 9(c). 
    \item $I_{\mathbf{f}, 5}$ is the linear space spanned by all the column vectors $v$ such that $D_fv = 0$ for any $f$ in  $\mathbf{f}$ of type 1, 2, 3, 4(c), 6(c), 6(d), 7(c), 7(d), 8(c), 8(d), 9(c), or 9(d).
    \item $I_{\mathbf{f}, 6}$ is the linear space spanned by all the column vectors $v$ such that $D_fv = 0$ for any $f$ in  $\mathbf{f}$ of type 1, 2, 3, 4(b), 6(a), 6(b), 7(a), 7(b), 8(a), 8(b), 9(a), or 9(b).
    \item $I_{\mathbf{f}, 7}$ is the linear space spanned by all the column vectors $v$ such that $D_fv = 0$ for any $f$ in  $\mathbf{f}$ of type 1, 2, 3, 4(a), 5, 6, 7, 8, or 9. 
\end{enumerate}

\begin{lemma}\label{lem:space_equivalence}
For a sequence of quadratic forms $f_1, \dots, f_m$ and parameters $\alpha_0, \alpha_1, \beta_0, \beta_1$, 
let $\mathbf{f}$ be a quadratic form tuple obtained by the Quadratic Form Tuple Construction process for  $f_1, \dots, f_m$. 
Then $E_i = I_{\mathbf{f}, i}$ for any $1 \leq i \leq 7$. 
\end{lemma}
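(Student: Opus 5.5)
We prove the two inclusions $E_i \subseteq I_{\mathbf{f},i}$ and $I_{\mathbf{f},i} \subseteq E_i$ separately for each $i$, working throughout with the block structure of the partial derivative matrices. The construction is organized by the coordinate blocks $E_1,\dots,E_7$. Every constructed form $f$ has its nonzero mixed coefficients $f[s,t]$ ($s<t$) supported on index pairs drawn from a specified pair of blocks $(E_a,E_b)$, or within a single block ($E_1$ or $E_2$ in Step 1). Its partial derivative matrix $D_f$ is therefore supported on the corresponding off-diagonal positions $(E_a\times E_b)\cup(E_b\times E_a)$.

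\textbf{Inclusion $E_i \subseteq I_{\mathbf{f},i}$.} For each $i$ we read off from Figure~\ref{fig:constrution} which types involve a coordinate of $E_i$. For instance, $E_1$ appears in types 1(a), 2(a), 2(c), 3(a), 3(c), 6(a), 6(c), 8(a) and 8(c). We then check that the list defining $I_{\mathbf{f},i}$ is exactly the complementary set of types. For each listed $f$, $D_f$ has no nonzero column indexed by $E_i$, so $D_f v=0$ for all $v\in E_i$. This step is a routine case check over the nine steps.

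\textbf{Inclusion $I_{\mathbf{f},i}\subseteq E_i$.} Take $v\in I_{\mathbf{f},i}$ and decompose $v=v_1+\dots+v_7$ with $v_j\in E_j$. Because each $D_f$ is supported on a pair of blocks, $D_f v=0$ splits into independent conditions on block components. For example, if $f$ is supported on $(E_a,E_b)$, the $E_a$-rows of $D_f v$ depend only on $v_b$, and the $E_b$-rows depend only on $v_a$. Hence $D_fv=0$ is equivalent to $D_f v_a=0$ and $D_f v_b=0$. So for each $j\neq i$ it suffices to show that no nonzero $w\in E_j$ satisfies $D_f w=0$ for all types $f$ listed for $I_{\mathbf{f},i}$. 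The key observation is that the listed types for $I_{\mathbf{f},i}$ include a mirror copy of every block interaction that $E_j$ has. For instance, for $i=1$ and $j=2$ the list contains 1(b), 2(b), 2(d), 3(b), 3(d), 6(b), 6(d) and 8(b), 8(d), which are exactly the $E_2$-analogues of the $E_1$-forms. Vanishing of all these on $w\in E_2$ would then yield a degenerate vector, or a vector violating the rank conditions, for the original data.

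\textbf{Main obstacle.} The expected difficulty is this last implication for the blocks $E_1,E_2$ (and similarly $E_3,E_4$). We must show that if $w\in E_1$, identified with a vector in $\F_2^{\beta_0+\beta_1}$, is killed by all the forms of types 1(a), 2(a), 2(c), 3(a), 3(c), 6(a), 6(c), 8(a) and 8(c), then $w=0$. The plan is as follows.
\begin{enumerate}
\item Types 8(a) and 8(c) force the $\beta_1$-coordinates of $w$ to vanish, so $w$ is supported on the first $\beta_0$ coordinates.
\item Types 1(a) and 6(a) together give $D_{f_\ell}w=0$ for $\ell>\alpha_0+\alpha_1$.
\item Types 2(a) and 3(a) encode $D_{f_j}[\cdot,\beta_0+\ell]$ and the square coefficients $f_j[i,i]$ for $j\le\alpha_0+\alpha_1$. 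Using the restricted shape of the tensor (Problem~\ref{def:restricted_tensor_iso}), their vanishing on $w$ says that $w$ is irrelevant to each such $f_j$.
\end{enumerate}
Then $w$ is degenerate in the sense defined before Problem~\ref{def:non-degenerate_quad_form_space}, contradicting non-degeneracy unless $w=0$. The blocks $E_3,E_4$ are handled via types 2, 3, 7 and 9 using the linear independence of $f_1,\dots,f_m$ together with the identity columns from Step 9. The blocks $E_5,E_6,E_7$ are handled directly from Step 4, whose three forms pairwise couple $x_{\gamma+1},x_{\gamma+2},x_{\gamma+3}$. The delicate point is matching the encoded coefficients with the irrelevance condition, including the diagonal terms $f_j[i,i]$; I would verify this using Fact~\ref{fact:quadratic_form_transform}.
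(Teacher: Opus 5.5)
Your proposal is correct and takes essentially the same route as the paper. Support analysis of each type gives $E_i\subseteq I_{\mathbf{f},i}$, and the reverse inclusion comes from the fact that every nonzero vector of $E_j$ is detected by the forms touching $E_j$ (via non-degeneracy, linear independence and Step~4); your block-splitting and mirror-copy observations make explicit what the paper's terse proof leaves implicit.

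One small detail for $E_3,E_4$: types 2, 3 and 7 do not encode the coefficients $f_j[a,b]$ with $a<b\le\beta_0$. You therefore also need the restricted tensor shape there, which forces these coefficients to vanish for $j\le\alpha_0+\alpha_1$; with that in hand, Step~9 is not actually needed.
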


\begin{proof}
We first prove some properties for the quadratic form tuple obtained by the Quadratic Form Tuple. 
We have the following observations:

\begin{enumerate}
\item For any non-zero $v \in E_1$, there is a quadratic form $f$ of type 1(a), 2(a), 2(c), 3(a), 3(c), 6(a), 6(c), 8(a) or 8(c) such that $D_f v$ is non-zero.
\item For any non-zero $v \in E_2$, there is a quadratic form $f$ of type 1(b), 2(b), 2(d), 3(b), 3(d), 6(b), 6(d), 8(b), or 8(d) such that $D_f v$ is non-zero.
\item For any non-zero $v \in E_3$, there is a quadratic form $f$ of type 2(a), 2(b), 3(a), 3(b), 7(a), 7(c), 9(a), or 9(c) such that $D_f v$ is non-zero.
\item For any non-zero $v \in E_4$, there is a quadratic form $f$ of type 2(c), 2(d), 3(c), 3(d), 7(b), 7(d), 9(b), or 9(d) such that $D_f v$ is non-zero.
\end{enumerate}

For any quadratic form $f$ of type $\dagger$  and linear space $V$ specified below, $D_f v$ is zero. 

\begin{enumerate}
\item $V = \langle E_2 \cup E_3 \cup E_4 \cup E_5 \cup E_6 \cup E_7\rangle$, and $\dagger = $ 1(a).
\item $V = \langle E_1 \cup E_3 \cup E_4 \cup E_5 \cup E_6 \cup E_7\rangle$, and $\dagger = $ 1(b).  
\item $V = \langle E_2 \cup E_4 \cup E_5 \cup E_6 \cup E_7\rangle$, and $\dagger = $ 2(a) or 3(a).
\item $V = \langle E_2 \cup E_3 \cup E_5 \cup E_6 \cup E_7\rangle$, and $\dagger = $ 2(b) or 3(b).
\item $V = \langle E_1 \cup E_4 \cup E_5 \cup E_6 \cup E_7\rangle$, and $\dagger = $ 2(c) or 3(c). 
\item $V = \langle E_1 \cup E_3 \cup E_5 \cup E_6 \cup E_7\rangle$, and $\dagger = $ 2(d) or 3(d). 
\item $V = \langle E_1 \cup E_2 \cup E_3 \cup E_4 \cup E_7\rangle$, and $\dagger = $ 4(a). 
\item 
$V = \langle E_1 \cup E_2 \cup E_3 \cup E_4 \cup E_6\rangle$, and $\dagger = $ 4(b).
\item 
$V = \langle E_1 \cup E_2 \cup E_3 \cup E_4 \cup E_5\rangle$, and $\dagger= $ 4(c). 
\item $V = \langle E_1 \cup E_2 \cup E_3 \cup E_4 \cup E_7\rangle$, and $\dagger = $ type 5.
\item $V = \langle E_2 \cup E_3 \cup E_4 \cup E_6 \cup E_7\rangle$, and $\dagger = $ 6(a) or 8(a).
\item $V = \langle E_1 \cup E_3 \cup E_4 \cup E_6 \cup E_7\rangle$, and $\dagger = $ 6(b) or 8(b).
\item $V = \langle E_2 \cup E_3 \cup E_4 \cup E_5 \cup E_7\rangle$, and $\dagger = $ 6(c) or 8(c). 
\item $V = \langle E_1 \cup E_3 \cup E_4 \cup E_5 \cup E_7\rangle$, and $\dagger = $ 6(d) or 8(d). 
\item $V = \langle E_1 \cup E_2 \cup E_4 \cup E_6 \cup E_7\rangle$, and $\dagger = $ 7(a) or 9(a).
\item $V = \langle E_1 \cup E_2 \cup E_3 \cup E_6 \cup E_7\rangle$, and $\dagger = $ 7(b) or 9(b).
\item $V = \langle E_1 \cup E_2 \cup E_4 \cup E_5 \cup E_7\rangle$, and $\dagger = $ 7(c) or 9(c). 
\item 
$V = \langle E_1 \cup E_2 \cup E_3 \cup E_5 \cup E_7\rangle$, and $\dagger = $ 7(d) or 9(d). 
\end{enumerate}
Then the lemma follows by Lemma~\ref{lem:space_to_tuple_basic_0}. 
\end{proof}

\begin{lemma}\label{lem:reduction_space_to_tuple_decomposition}
The following properties hold for the quadratic form tuple constructed by the Quadratic Form Tuple Construction process. 
\begin{enumerate}
\item 
For
any 
quadratic form tuple $\mathbf{f}$ constructed by the  Quadratic Form Tuple Construction process, 
$1 \leq i \leq 7$, $v \in E_i$, and any invertible matrices \[U = \begin{pmatrix}U_1 & \cdots & U_k\end{pmatrix} \text{ and } W = \begin{pmatrix}W_1 & \cdots & W_k\end{pmatrix}\] such that $U^T \mathbf{D}_\mathbf{f} W = \diag(U_1^T \mathbf{D}_\mathbf{f}W_1, U_k^T \mathbf{D}_\mathbf{f}W_k)$,
the projection of $v$ on $U_j$ (or $W_j$) with respect to $U_1 \oplus \dots \oplus U_k$ (or $W_1 \oplus \dots \oplus W_k$) is in $E_i$. 
\item 
For two sequences of quadratic forms $f_1, \dots, f_m$ and $g_1, \dots, g_m$ on $n$ variables with parameters $\alpha_0, \alpha_1, \beta_0, \beta_1$, 
let $\mathbf{f}$ and $\mathbf{g}$ be the quadratic form tuples constructed by the  Quadratic Form Tuple Construction process for  $f_1, \dots, f_m$ and $g_1, \dots, g_m$ respectively. For any invertible matrix $U$ such that $\mathbf{f}^{U} = \mathbf{g}$, 
$U^{-1} E_i = E_i$ for any $1 \leq i \leq 7$. 
\end{enumerate}
\end{lemma}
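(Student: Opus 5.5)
The plan is to reduce both parts to the characterization $E_i = I_{\mathbf{f},i}$ from Lemma~\ref{lem:space_equivalence}, and then to the two closure lemmas, Lemma~\ref{lem:space_to_tuple_basic_1} and Lemma~\ref{lem:space_to_tuple_basic_0}. Each $I_{\mathbf{f},i}$ is, by definition, the common kernel of the partial derivative matrices $D_f$ over a prescribed set of types. Such a common kernel is a finite intersection $\bigcap_{f} \langle v : D_f v = 0\rangle$. Both closure lemmas handle the single-kernel case in their first condition and closure under pairwise intersection in their second condition. By induction on the number of forms in the intersection, each property therefore transfers to the whole intersection.

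For part 1, fix $i$ and list the forms $f^{(1)},\dots,f^{(t)}$ of $\mathbf{f}$ whose types are in the list defining $I_{\mathbf{f},i}$. Set $V^{(s)} = \bigcap_{r\le s}\langle v: D_{f^{(r)}}v=0\rangle$. The base case $V^{(1)}$ satisfies the projection-closure property by condition 1 of Lemma~\ref{lem:space_to_tuple_basic_1}. For the step, $V^{(s+1)} = V^{(s)}\cap \langle v : D_{f^{(s+1)}}v=0\rangle$. Both factors satisfy the projection-closure property for every block diagonalization $U'^T\mathbf{D}_\mathbf{f}W'$, one by induction and one by condition 1. Condition 2 then gives the property for $V^{(s+1)}$. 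Taking $s=t$ yields the property for $I_{\mathbf{f},i}$, and Lemma~\ref{lem:space_equivalence} identifies this space with $E_i$. The induction hypothesis must be stated for all block diagonalizations simultaneously, since condition 2 quantifies over all $U',W'$. It must also cover both the $U_j$ and the $W_j$ projections.

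For part 2, I argue the same way with Lemma~\ref{lem:space_to_tuple_basic_0}. The tuples $\mathbf{f}$ and $\mathbf{g}$ come from the same construction with the same parameters, so the $r$-th form of $\mathbf{f}$ and the $r$-th form of $\mathbf{g}$ have the same type. Hence $I_{\mathbf{f},i}$ and $I_{\mathbf{g},i}$ are intersections of kernels over the same index set. Condition 1 gives $U^{-1}\langle v : D_{\mathbf{f},r}v=0\rangle = \langle v: D_{\mathbf{g},r}v=0\rangle$ for every solution $U$ of $\mathbf{f}^U=\mathbf{g}$. Inducting with condition 2 then gives $U^{-1}I_{\mathbf{f},i}=I_{\mathbf{g},i}$. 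Lemma~\ref{lem:space_equivalence}, applied to both $\mathbf{f}$ and $\mathbf{g}$, gives $I_{\mathbf{f},i}=E_i=I_{\mathbf{g},i}$, so $U^{-1}E_i=E_i$.

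I expect the only real obstacle to be the hypotheses of Lemma~\ref{lem:space_equivalence}. That lemma needs the non-degeneracy of the input sequences, through Lemma~\ref{lem:space_to_tuple_basic_2} and the observations that every nonzero $v\in E_i$ is detected by some form of the listed types. So I would state explicitly that $\mathbf{f}$ and $\mathbf{g}$ arise from non-degenerate restricted instances, and that the construction aligns forms of equal type at equal positions in the two tuples. With those in place, the rest is bookkeeping.
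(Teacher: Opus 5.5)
Your proposal is correct and follows the paper's proof. You identify $E_i$ with $I_{\mathbf{f},i}$ (and with $I_{\mathbf{g},i}$) via Lemma~\ref{lem:space_equivalence}, then obtain part 1 from Lemma~\ref{lem:space_to_tuple_basic_1} and part 2 from Lemma~\ref{lem:space_to_tuple_basic_0}, using that forms at the same position in $\mathbf{f}$ and $\mathbf{g}$ have the same type. The only differences are that you spell out the induction over the finite intersection of kernels (with the hypothesis quantified over all block diagonalizations), which the paper leaves implicit, and that you correctly flag the non-degeneracy assumption that Lemma~\ref{lem:space_equivalence} silently relies on.
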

\begin{proof}
For the first property, for any $1 \leq i \leq 7$, 
let $v$ be an arbitrary vector in $I_{\mathbf{f}, i}$.
By the definition of $I_{\mathbf{f}, i}$ and Lemma~\ref{lem:space_to_tuple_basic_1}, the projection of $v$ on any $U_i$ (or $W_i$) with respect to $U_1 \oplus \dots \oplus U_k$ (or $W_1 \oplus \dots \oplus W_k$) is in $I_{\mathbf{f}, i}$.  
On the other hand, by Lemma~\ref{lem:space_equivalence}, $I_{\mathbf{f}, i} = E_i$. 
Then the first case holds. 

For the second property, since $\mathbf{f}$ and $\mathbf{g}$ are constructed with respect to the same parameters, $\mathbf{f}$ and $\mathbf{g}$ are of the same length, and for each $\ell$, $\mathbf{f}_\ell$ is of the same type as $\mathbf{g}_\ell$. 
Hence, 
for any $1 \leq i \leq 7$, 
by the definition of $I_{\mathbf{f}, i}$ and Lemma~\ref{lem:space_to_tuple_basic_0}, 
$U^{-1} I_{\mathbf{f}, i} = I_{\mathbf{g}, i}$. 
By Lemma~\ref{lem:space_equivalence}, $U^{-1}E_i = E_i$. 
\end{proof}

We show that the non-degenerate restricted quadratic form space isometry problem is reducible to the quadratic form tuple isometry problem if the input quadratic form tuples of the quadratic form tuple isometry problem are constructed by the Quadratic Form Tuple Construction process. 
\begin{lemma}
Let $f_1, \dots, f_m$ and $g_1, \dots, g_m$ be two sequences of quadratic forms with the same parameters $\alpha_0, \alpha_1, \beta_0, \beta_1$ as the input for the non-degenerate restricted quadratic form space isometry problem. 
Let $\mathbf{f}$ and $\mathbf{g}$ be the quadratic form tuples constructed by the Quadratic Form Tuple Construction process for $f_1, \dots, f_m$ and $g_1, \dots, g_m$ respectively. 
The following properties hold. 
\begin{enumerate}
\item For any  $M$ and $N$ satisfying Equation~(\ref{equ:restricted_tensor_iso}) 
with $M_1$ and $N_1$ satisfying Equation~(\ref{equ:restricted_tensor_iso_1}) and Equation~(\ref{equ:restricted_tensor_iso_2}) respectively
such that $g_i = \sum_{j = 1}^m M[j, i] f_j^{N}$ for any $1 \leq i \leq m$, then $\mathbf{f}^S = \mathbf{g}$ for 
\begin{equation}\label{equ:space_to_tuple}S = \begin{pmatrix} M_1 \\ & M_1 \\ & & N_1 \\ & & & N_1 \\ & & & & I_3\end{pmatrix}.\end{equation}

\item For any $S$ such that $\mathbf{f}^S = \mathbf{g}$, 
$S$ satisfies Equation~(\ref{equ:space_to_tuple}) with $M_1$ and $N_1$ satisfying Equation~(\ref{equ:restricted_tensor_iso_1}) and Equation~(\ref{equ:restricted_tensor_iso_2}) respectively and
$g_i = \sum_{j = 1}^m M[j, i] f_j^{N}$ for any $1 \leq i \leq m$ with $M$ and $N$ defined by Equation~(\ref{equ:restricted_tensor_iso}).
\end{enumerate}
\end{lemma}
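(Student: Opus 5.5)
The plan is to prove the two directions separately, relying on the rigidity from Lemma~\ref{lem:reduction_space_to_tuple_decomposition} for the converse. Throughout, the variables of the constructed tuple are split into the blocks $E_1,\dots,E_7$. The blocks $E_1,E_2$ are two copies of the first $\beta_0+\beta_1$ original variables, $E_3,E_4$ are two copies of the "coefficient" coordinates indexed by the first $\alpha_0+\alpha_1$ forms, and $E_5,E_6,E_7$ are the three gadget variables.

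For the first property, fix $M,N$ of the shape (\ref{equ:restricted_tensor_iso}) with $g_i=\sum_j M[j,i]f_j^N$. Let $S$ be as in (\ref{equ:space_to_tuple}), acting on the blocks in the order $E_3,E_4,E_1,E_2$ (after matching the block ordering of the construction) and as the identity on $E_5,E_6,E_7$. The check then goes type by type using Fact~\ref{fact:quadratic_form_transform}.

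\begin{itemize}
\item Type 1 forms are $f_{\alpha_0+\alpha_1+\ell}$ restricted to the first $\beta_0+\beta_1$ variables. Because $M$ is the identity on indices beyond $\alpha_0+\alpha_1$ and $N$ has block form $\mathrm{diag}(N_1,I)$, these transform into the corresponding restrictions of $g$.
\item Types 2 and 3 encode the bilinear pairing between the $E_1/E_2$ variables and the $E_3/E_4$ coefficient variables, through $D_{f_j}[i,\beta_0+\ell]$ and $f_j[i,i]$. The substitution by $N_1$ on the variable side and by $M_1$ on the coefficient side reproduces exactly the formula $g_i=\sum_j M[j,i]f_j^N$ restricted to these coefficients. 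The upper-triangular shapes (\ref{equ:restricted_tensor_iso_1}) and (\ref{equ:restricted_tensor_iso_2}) ensure that the columns $\beta_0+\ell$ with $\ell>\beta_0$, which are fixed by $N$, behave correctly.
\item The diagonal terms in type 3 need Lemma~\ref{lem:lemma_square_term}-style bookkeeping, since in characteristic 2 square coefficients pick up cross terms.
\item Types 4--9 involve only $E_5,E_6,E_7$ together with coordinates on which $S$ acts via the identity blocks $I_{\alpha_1}$, $I_{\beta_1}$ or via $N_1,M_1$ in a way that is compatible with the fixed last block. These are checked directly.
\end{itemize}

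For the second property, let $\mathbf{f}^S=\mathbf{g}$. By Lemma~\ref{lem:reduction_space_to_tuple_decomposition}(2), $S^{-1}E_i=E_i$ for all $i$, so $S$ is block diagonal with blocks $S_1,\dots,S_7$ acting on $E_1,\dots,E_7$.

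Type 4 forms are fixed single monomials $x_{\gamma+1}x_{\gamma+2}$, $x_{\gamma+1}x_{\gamma+3}$, $x_{\gamma+2}x_{\gamma+3}$, and the $S_5,S_6,S_7$ are $1\times1$ invertible blocks over $\F_2$. So these three blocks are the identity, giving the $I_3$ block.

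Next I use the type 8 and 9 forms. These pair $x_{\gamma+1},x_{\gamma+2}$ with the $\beta_1$ and $\alpha_1$ distinguished coordinates, with coefficient $1$. They force $S_1,S_2$ to fix the last $\beta_1$ coordinates modulo the first $\beta_0$, and $S_3,S_4$ to fix the last $\alpha_1$ coordinates modulo the first $\alpha_0$, which yields the shapes (\ref{equ:restricted_tensor_iso_2}) and (\ref{equ:restricted_tensor_iso_1}).

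To get $S_1=S_2$ and $S_3=S_4$, I compare the copies. Type 2(a) versus 2(c) share $E_1$ but use $E_3$ versus $E_4$, while 2(a) versus 2(b) share $E_3$. The copies give the same bilinear map, and Lemma~\ref{lem:space_to_tuple_basic_2} together with non-degeneracy makes that map injective. From this I expect to conclude $S_1=S_2$ and $S_3=S_4$. Setting $N_1=S_1$ and $M_1=S_3$ and reading the identities encoded by types 1, 2, 3, 5, 6 and 7 then gives $g_i=\sum_j M[j,i]f_j^N$. Types 5--7 supply the coefficients of the last $n-\beta_0-\beta_1$ original variables, on which $N$ is the identity.

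The main obstacle I expect is this last step: extracting $S_1=S_2$ and $S_3=S_4$ and recovering the full space relation, including the square coefficients $f[i,i]$, from the congruence. My first approach would be to show that the pair $(S_1,S_3)$ and the pair $(S_2,S_3)$ both transform the same injective tensor of types 2/3 into the target tensor, and then use non-degeneracy (no degenerate vector) to force uniqueness of the compatible partner. Failing that, I would use the type 6 and 7 copies (a)--(d), which tie $E_1$ to $E_5$ and $E_2$ to $E_5$ identically and so directly equate the actions of $S_1$ and $S_2$.
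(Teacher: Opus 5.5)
\textbf{There is a genuine gap, and it is the step you flag as the main obstacle.} You cannot conclude $S_1=S_2$ (the blocks on $E_1$ and $E_2$), because it is false. The paper's own proof makes the same assertion in one line, citing non-degeneracy, so you are not missing a trick.

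The rest of your outline matches the paper. Property 1 is by direct substitution; you are right that $M_1$ must act on $E_3,E_4$ and $N_1$ on $E_1,E_2$, so the display (\ref{equ:space_to_tuple}) lists them in the wrong order. For property 2, block-diagonality comes from Lemma~\ref{lem:reduction_space_to_tuple_decomposition}(2), the blocks on $E_5,E_6,E_7$ are trivial $1\times1$ blocks, and types 8 and 9 give the shapes.

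\emph{Counterexample.} Take $n=2$, $m=1$, $\alpha_0=\alpha_1=\beta_1=0$, $\beta_0=2$, $f_1=g_1=x_1x_2$. This instance is non-degenerate because $D_{f_1}$ is invertible. The construction gives $\mathbf{f}=\mathbf{g}=(x_1x_2,\,x_3x_4,\,0,0,0,0,\,x_5x_6,\,x_5x_7,\,x_6x_7)$. The matrix $S=\diag(I_2,P,I_3)$, with $P$ swapping $x_3$ and $x_4$, satisfies $\mathbf{f}^S=\mathbf{g}$, yet $S_1=I_2\neq P=S_2$. Taking instead $m=2$, $\alpha_0=1$, $f_1=g_1=x_1^2+x_2^2$, $f_2=g_2=x_1x_2$ gives the same failure with nonempty $E_3,E_4$.

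\emph{Why your two routes fail.} Comparing the (a) and (b) copies only shows three things about $P=S_2S_1^{-1}$: it has shape (\ref{equ:restricted_tensor_iso_2}), it fixes the linear data of types 2, 3 and 6, and it is a common isometry of the type-1 forms on the first $\beta_0+\beta_1$ variables. Non-degeneracy excludes a nonzero common radical vector. It does not exclude a nontrivial common isometry, so there is no ``uniqueness of the compatible partner.''
\begin{itemize}
\item The type 2/3 map alone is not injective on $E_1$; in the example those forms are all zero.
\item Type 6 only pins down $S_1^Tc$ for the specific vectors $c$ it encodes, so it cannot equate $S_1$ and $S_2$.
\end{itemize}
The other equality, $S_3=S_4$, does hold, but by linear independence rather than non-degeneracy. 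Types 2, 3, 7 and the zero pattern show that $Q=S_4S_3^{-1}$ satisfies $\sum_j Q[j,i]f_j=f_i$ for $i\le\alpha_0+\alpha_1$, which forces $Q=I$.

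\textbf{The fix.} Property 2 is false as literally stated, so drop the equal-copies claim. What is true, and all the reduction needs, is that $S=\diag(S_1,S_2,S_3,S_4,I_3)$ and that $N_1:=S_1$, $M_1:=S_3$ solve the restricted problem. Your last step already does this, and it needs only the (a) copies:
\begin{itemize}
\item 8(a) and 9(a) give the shapes of $S_1$ and $S_3$;
\item 1(a), 5 and 6(a) give $g_i=f_i^N$ for $i>\alpha_0+\alpha_1$;
\item 2(a), 3(a) and 7(a) give $g_i=\sum_j M_1[j,i]f_j^N$ for $i\le\alpha_0+\alpha_1$.
\end{itemize}

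In that verification, and in property 1, you need the zero pattern of the restricted problem: for $j\le\alpha_0+\alpha_1$, $D_{f_j}$ vanishes on the first $\beta_0+\beta_1$ rows times the first $\beta_0$ columns. This does two jobs:
\begin{itemize}
\item In type 2, it kills the contribution of $N''$ in the columns $\beta_0+\ell$ with $\ell\le\beta_1$, which $N$ does not fix.
\item In type 3, together with the identity rows of $N_1$, it kills the cross terms $f_j[a,b]N[a,i]N[b,i]$ in the square coefficients. Lemma~\ref{lem:lemma_square_term} is not the relevant tool there.
\end{itemize}
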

\begin{proof}
By the construction of $\mathbf{f}$ and $\mathbf{g}$, the first property holds. 

For the second property, if $\mathbf{f}^S = \mathbf{g}$, then by Lemma~\ref{lem:reduction_space_to_tuple_decomposition}, we have 
\[
S = \begin{pmatrix}M_1 \\ & M_2 \\ & & N_1 \\ & & & N_2 \\ & & & & I_3 \end{pmatrix}
\] 
for invertible $M_1, M_2, N_1$ and $N_2$. 
By type 8 and 9 quadratic forms for $\mathbf{f}$ and $\mathbf{g}$, both $M_1$ and $M_2$ are of the form Equation~(\ref{equ:restricted_tensor_iso_1}), and 
both $N_1$ and $N_2$ are of the form Equation~(\ref{equ:restricted_tensor_iso_2}). 
By the construction of $\mathbf{f}$ and $\mathbf{g}$, since $f_1, \dots, f_m$ and $g_1, \dots, g_m$ are non-degenerate, $M_1 = M_2$ and $N_1 = N_2$. Then the lemma follows.
\end{proof}

Finally, we show that the quadratic form tuple constructed by the Quadratic Form Tuple Construction process has non-zero coefficients for square terms only for type 1 quadratic forms, even under certain linear transforms of the quadratic form tuple.

\begin{definition}\label{def:non-zero-square-term-condition}
A quadratic form tuple $\mathbf{f} = (f_1, \dots, f_m)$ on $n$ variables satisfies the \emph{$\rho$-non-zero-square-term condition} with respect to a matrix tuple $\mathbf{A}$ if for any invertible matrix $S$ such that $\mathbf{D}_{\mathbf{f}^S} = \mathbf{A} W$ for some invertible matrix $W$, $f^{S}_i [j, j] = 0$ for any $i > \rho$ and $1 \leq j \leq n$. 
\end{definition}

\begin{lemma}\label{lem:nonzero_suare_term_condition}
    Let $\mathbf{f}$ and $\mathbf{g}$ be two quadratic form tuples on $n_0$ variables and with the same length constructed by the Quadratic Form Tuple Construction process for $f_1, \dots, f_m$ and $g_1, \dots, g_m$ on $n$ variables using the same parameters $\alpha_0, \alpha_1, \beta_0, \beta_1$, and let $U$ be an arbitrary invertible matrix such that every column of $U$ is in one of $E_1$, $E_2$, $E_3$, $E_4$, $E_5$, $E_6$, and $E_7$. 
    Let $\mathbf{A} = U^T \mathbf{D}_\mathbf{f} U$. 
    Both $\mathbf{f}$ and $\mathbf{g}$ satisfy the $(m - \alpha_0 - \alpha_1)$-non-zero-square-term condition with respect to $\mathbf{A}$.
\end{lemma}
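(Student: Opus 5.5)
We need to show: for any invertible $S$ with $\mathbf{D}_{\mathbf{f}^S} = \mathbf{A}W$ for some invertible $W$, the forms of $\mathbf{f}^S$ of index $i > m-\alpha_0-\alpha_1$ have all square coefficients zero (and similarly for $\mathbf{g}$). The ordering of the construction puts the type 1 forms first; there are $2(m-\alpha_0-\alpha_1)$ of them, so I will take $\rho$ to count the type 1 block as it is indexed in the tuple. If the count is really the doubled one, the same argument applies verbatim. Every later form (types 2--9) has, by construction, only cross terms $x_ix_j$ with $i\neq j$, with all squares zero. Each of them is \emph{bipartite}: its support lies in $E_a\times E_b$ for a fixed pair $a\neq b$ (for instance $E_1\times E_3$ for type 2(a), or $E_5\times E_6$ for types 4(a) and 5).

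The plan is to turn the hypothesis into a structural constraint on $S$. By Lemma~\ref{lem:partial_derivative_matrix}, $\mathbf{D}_{\mathbf{f}^S}=S^T\mathbf{D}_\mathbf{f}S$, and $\mathbf{A}=U^T\mathbf{D}_\mathbf{f}U$. Setting $T=SU^{-1}$, which changes $W$ by an invertible factor, we get $T^T\mathbf{D}_{\mathbf{f}^U}T=\mathbf{D}_{\mathbf{f}^U}W'$.

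\textbf{Step 1.} Because every column of $U$ lies in some $E_i$, we may regroup $U$'s columns. Then $\mathbf{f}^U$ is again block-structured with respect to the coordinate blocks $E_1,\dots,E_7$: a change of basis within each $E_i$ preserves the bipartite supports, and it creates no squares in types 2--9, since a cross form supported on $E_a\times E_b$ with $a\neq b$ stays such under a block-diagonal map. This reduces the lemma to the case $U=I$, after replacing $\mathbf{f}$ by $\mathbf{f}^U$.

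\textbf{Step 2.} Use the left-right invariance. From $T^T\mathbf{D}T=\mathbf{D}W'$, the left-right pair $(T,W'^{-1}T)$ is a left-right symmetry of $\mathbf{D}$ in the sense of the left factor. Following the argument of Lemmas~\ref{lem:space_to_tuple_basic_0} and \ref{lem:space_equivalence}, the kernel spaces $I_{\mathbf{f},i}$ are preserved. A vector $v$ with $D_fv=0$ for a given family of forms is sent by $T$ to a vector with the same property, because $D_f(Tv)=T^{-T}D_fW'v$ — more precisely, one checks that the column space/kernel conditions transfer. We conclude $TE_i=E_i$ for every $i$, so $T$ is block diagonal with respect to $E_1\oplus\cdots\oplus E_7$.

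\textbf{Step 3.} Apply Fact~\ref{fact:quadratic_form_transform}. For a form $f$ of types 2--9 and a block-diagonal $T$, the square coefficient is
\[f^T[k,k]=\sum_{i<j}f[i,j]T[i,k]T[j,k].\]
A nonzero summand needs $i\in E_a$, $j\in E_b$ with $a\neq b$, and both $T[i,k]$ and $T[j,k]$ nonzero. That would force column $k$ of $T$ to meet two blocks, which is impossible. Hence all squares vanish for $i>\rho$.

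The main obstacle is Step 2. The hypothesis gives only a one-sided equivalence $\mathbf{D}_{\mathbf{f}^S}=\mathbf{A}W$, not a congruence between $\mathbf{f}$ and $\mathbf{g}$, so Lemma~\ref{lem:reduction_space_to_tuple_decomposition}(2) does not apply directly. I would try first to derive block-diagonality of $T$ from the right kernels and left images of the individual $D_f$ together with non-degeneracy (Lemma~\ref{lem:space_to_tuple_basic_2}). If full block-diagonality fails, the fallback is weaker: it suffices that each column of $T$ lies in one of the $E_i$, and this should follow from the projection property of Lemma~\ref{lem:reduction_space_to_tuple_decomposition}(1), applied to the block diagonalization induced by $T$.
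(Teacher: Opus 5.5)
Your overall plan matches the paper's. You show that every column of the transforming matrix lies in a single $E_\ell$, and then your Step~3 kills the square terms of types 2--9, since every monomial $x_ix_j$ of those forms straddles two different blocks. Step~3 is correct and more explicit than the paper's ``by construction''. Your remark that the type-1 part consists of $2(m-\alpha_0-\alpha_1)$ forms is also fair.

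The gap is Step~2, which is the heart of the lemma, and you have not proved it. The identity you write, $D_f(Tv)=T^{-T}D_fW'v$, only shows that $T$ maps $W'^{-1}\ker D_f$ into $\ker D_f$, not that it preserves $\ker D_f$. You then leave the step open with two speculative routes: non-degeneracy, or the projection property of Lemma~\ref{lem:reduction_space_to_tuple_decomposition}(1) applied to an unspecified ``block diagonalization induced by $T$''. Neither is carried out, and neither is needed. A smaller slip: the substitution should be $T=U^{-1}S$, which gives $T^T\mathbf{D}_{\mathbf{f}^U}T=\mathbf{D}_{\mathbf{f}^U}W$ exactly; $T=SU^{-1}$ does not.

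The missing idea is to read the hypothesis row-wise, i.e.\ multiply on the left, and use that each partial derivative matrix is symmetric. From $S^T D_{h_t} S = U^T D_{f_t} U W$ for every index $t$, you get $s_j^T D_{h_t} = u_j^T D_{f_t}\,(UWS^{-1})$, where $s_j$ and $u_j$ are the $j$-th columns of $S$ and $U$. Suppose $u_j\in E_\ell=I_{\mathbf{f},\ell}$ (Lemma~\ref{lem:space_equivalence}). Then $u_j^TD_{f_t}=(D_{f_t}u_j)^T=0$ for every $t$ of the types listed in the definition of $I_{\mathbf{f},\ell}$. Hence $D_{h_t}s_j=0$ for the same $t$, so $s_j\in I_{\mathbf{h},\ell}=E_\ell$. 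This needs no reduction to $U=I$ and no block-diagonality of $T$. It also needs no non-degeneracy beyond what Lemma~\ref{lem:space_equivalence} already uses.

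The same argument repairs your treatment of $\mathbf{g}$. For $\mathbf{h}=\mathbf{g}$ the hypothesis relates $\mathbf{D}_\mathbf{g}$ to $\mathbf{A}$, which is built from $\mathbf{D}_\mathbf{f}$. So there is no ``left-right symmetry of $\mathbf{D}$'' to exploit, and ``similarly for $\mathbf{g}$'' is not automatic. What makes it work is that $\mathbf{f}$ and $\mathbf{g}$ have the same type in every position. The same index set then defines both $I_{\mathbf{f},\ell}$ and $I_{\mathbf{g},\ell}$, and both equal $E_\ell$.
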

\begin{proof}
    Let $\mathbf{f}_i$ and $\mathbf{g}_i$ denote the $i$-th quadratic form of $\mathbf{f}$ and $\mathbf{g}$ respectively. 
    By the construction of $\mathbf{f}$ and $\mathbf{g}$, for any invertible $S$ such that every column of $S$ is from one of $E_1, E_2, E_3, E_4, E_5, E_6$, or $E_7$, $\mathbf{f}^S$ and $\mathbf{g}^S$ satisfy $\mathbf{f}^S_i[j, j] = \mathbf{g}^S_i[j, j] = 0$ for any $i > m - \alpha_0 - \alpha_1$ and $1 \leq j \leq n_0$.
    Therefore, it is sufficient to show that for any $\mathbf{h}\in \{\mathbf{f}, \mathbf{g}\}$, and any invertible $S$ such that there exists an invertible matrix $W$ satisfying $\mathbf{D}_\mathbf{h}^S = \mathbf{A} W$, every column of $S$ is from one of $E_i$ for some $1 \leq i \leq 7$. 

    For any $1 \leq j \leq n_0$, by the definition of $U$, the $j$-th column of $U$, denoted as $u_j$, is in $E_\ell$ for some $1 \leq \ell \leq 7$. 
    By Lemma~\ref{lem:space_equivalence}, $u_j$ is also in $I_{\mathbf{f}, \ell}$. 

    Since $\mathbf{f}$ and $\mathbf{g}$ are constructed with respect to the same parameters, $\mathbf{f}$ and $\mathbf{g}$ are of the same length, and for each $i$, $\mathbf{f}_i$ is of the same type as $\mathbf{g}_i$. 

    For any $\mathbf{h} \in \{\mathbf{f}, \mathbf{g}\}$, 
    since \[\mathbf{D}_\mathbf{h}^S = S^T \mathbf{D}_\mathbf{h} S = \mathbf{A} W = U^T \mathbf{D}_\mathbf{f} U W,\] 
    $s_j^T \mathbf{D}_{\mathbf{h}} = u_j^T \mathbf{D}_\mathbf{f} U W S^{-1}$, where $s_j$ denotes the $j$-th column of $S$. 
    Hence, $s_j$  must be in $I_{\mathbf{h}, \ell}$ by the definition of $I_{\mathbf{h}, \ell}$. 
    By Lemma~\ref{lem:space_equivalence}, the $j$-th column of $S$ is from $E_\ell$ for any $1 \leq j \leq n_0$. 
\end{proof}

In the following sections, we prove the following theorem. 

\begin{theorem}\label{thm:matrix_tuple_constructed_main}
Let $f_1, \dots, f_m$ and $g_1, \dots, g_m$ be two sequences of quadratic forms over $n$ variables with the same parameters $\alpha_0, \alpha_1, \beta_0, \beta_1$ as the input for the non-degenerate restricted quadratic form space isometry problem. 
Let $\mathbf{f}$ and $\mathbf{g}$ be the quadratic form tuples obtained by the Quadratic Form Tuple Construction Algorithm for $f_1, \dots, f_m$ and $g_1, \dots, g_m$ respectively. 
There is an algorithm to determine whether there is an invertible matrix $S$ such that $\mathbf{f}^S = \mathbf{g}$ in time \[2^{O((n+m)^{1.5} + (n+m)\cdot (m - \alpha_0 -\alpha_1))}.\]
\end{theorem}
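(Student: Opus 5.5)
The plan follows the strategy in the introduction, with $\Delta = 1/2$ and $N := 2(\alpha_0+\alpha_1+\beta_0+\beta_1)+3 = O(n+m)$ the number of variables of $\mathbf{f}$ and $\mathbf{g}$.

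\textbf{Step 1: normal form.} Compute a minimum left-right block diagonalization of $\mathbf{D}_\mathbf{f}$ with Theorem~\ref{thm:left_right_decomposition_algorithm}. Lemma~\ref{lem:reduction_space_to_tuple_decomposition} says projections onto the $E_i$ are compatible with any left-right block diagonalization. So we may choose the transforming matrices with every column in some $E_i$. Then apply the congruent block diagonalization theory (Theorem~\ref{thm:congruence_block_diagonalization}). This gives an invertible $U$, with columns in the $E_i$, such that $U^T\mathbf{D}_\mathbf{f}U$ is congruently block diagonal, each congruent block being one minimum left-right block or a pair of mutually transposed ones. Set $\mathbf{A} = U^T\mathbf{D}_\mathbf{f}U$ and do the same for $\mathbf{g}$.

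By Lemma~\ref{lem:nonzero_suare_term_condition}, both tuples satisfy the $(m-\alpha_0-\alpha_1)$-non-zero-square-term condition with respect to $\mathbf{A}$. So the only diagonal (square-term) coefficients that can be nonzero lie in the first $\rho := m-\alpha_0-\alpha_1$ forms, namely the type 1 forms. Every other form is determined by its partial derivative matrix. This is the source of the extra term $(n+m)\cdot\rho$: in the "lifting" from partial derivative data to forms, there are at most $2^{O(N\rho)}$ choices of square-term vectors to guess or enumerate. Lemma~\ref{lem:lemma_square_term} describes how the square-term vector changes, $b = S^Ta$, under transformations that fix $\mathbf{D}$.

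\textbf{Step 2: split into frequent and infrequent parts.} Call a minimum left-right block frequent if its isomorphism type occurs more than $N^{1/2}$ times, and infrequent otherwise. Split $\mathbf{f}^U$ into $\mathbf{f}'$ (frequent congruent blocks) and $\mathbf{f}''$ (infrequent ones). Then, as in the overview, enumerate a family $\mathbf{f}_1,\dots,\mathbf{f}_k$, $k = 2^{O(N^{1.5})}$, of congruent, congruently block-diagonalized tuples with the following guarantee: if $\mathbf{f}\cong\mathbf{g}$, then some $\mathbf{f}_i$ satisfies $\mathbf{f}_i'\cong\mathbf{g}'$ and $\mathbf{f}_i''\cong\mathbf{g}''$.

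The enumeration uses local operations touching one or two congruent blocks. The coupling between the frequent and infrequent parts is only through off-diagonal corrections. By Theorem~\ref{thm:left_right_decomposition}, the projections of a block's basis lie in $V_\mathbf{B}$, which limits these corrections. Since there are few infrequent blocks, or the frequent blocks are small, the number of corrections is $2^{O(N^{1.5})}$.

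\textbf{Step 3: solve the two special cases.}

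For the \emph{frequent} part: each minimum block has at most $N^{1/2}$ rows, since there are more than $N^{1/2}$ copies within dimension $N$. So the block-diagonal partial derivative tuples are supported near the diagonal. There are $2^{O(N^{1.5})}$ such targets up to the square-term data, which contributes $2^{O(N\rho)}$. Starting from one congruent copy, we enumerate all of them by closure under local operations and test whether $\mathbf{g}'$ is reached.

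For the \emph{infrequent} part: constructively solve left-right equivalence of $\mathbf{D}_{\mathbf{f}''}$ and $\mathbf{D}_{\mathbf{g}''}$ in polynomial time. Then compute the group $G$ of left factors of the left-right stabilizer and its normal $2$-subgroup $R$. The index $[G:R]$ is bounded by the product over infrequent block types of the $\GL$ of the multiplicity, which is $2^{O(N\cdot N^{1/2})}$. For each coset representative $Y$, use the Luks--Miyazaki algorithms for $2$-groups to find $X\in R$ with $(\mathbf{f}'')^{S_0YX}=\mathbf{g}''$. This is a polynomial-time subgroup/coset intersection, since $R$ is solvable and the action on forms is linear.

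\textbf{Main obstacle.} The hardest step will be the correctness of Step 2 and the frequent case: showing that local operations connect \emph{all} congruently block-diagonalized copies, despite the non-uniqueness of congruent block decompositions in characteristic $2$. Since congruence decompositions are not unique, I will first reduce to the left-right decomposition using Theorem~\ref{thm:left_right_decomposition}, item 2. Then I will argue that any change of basis between two diagonalizations factors through block-by-block replacements inside $V_\mathbf{B}$ modulo $K_\mathbf{B}$. Finally I will check that the $\mathbf{D}$-preserving parts act on square terms as in Lemma~\ref{lem:lemma_square_term}. Together, Steps 1--3 give total time $2^{O(N^{1.5}+N\rho)}$.
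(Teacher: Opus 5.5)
Your plan is essentially the paper's argument. The paper's proof runs the Representative Blocks Algorithm (an $E_i$-compatible minimum left-right decomposition, which exists by Lemma~\ref{lem:reduction_space_to_tuple_decomposition}), gets the $(m-\alpha_0-\alpha_1)$-non-zero-square-term hypothesis from Lemma~\ref{lem:nonzero_suare_term_condition}, and then calls Theorem~\ref{thm:quadratic_form_tuple_iso_algo}, which is your $\Delta=1/2$ combination of the frequent-block (Theorem~\ref{thm:small_block}) and infrequent-block (Theorem~\ref{thm:large_block}) algorithms on $O(n+m)$ variables; the one point to state explicitly is that $\mathbf{g}$ must be block-diagonalized against the same representative blocks taken from $\mathbf{f}$, not normalized independently, so that (via Lemma~\ref{lem:standard_diagonalization_1}) every tuple you enumerate and the normal form of $\mathbf{g}$ have partial derivative tuples of the form $\mathbf{A}W$, which is what makes the square-term bound apply throughout and the final equality test sound.
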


With Theorem~\ref{thm:matrix_tuple_constructed_main}, we obtain Theorem~\ref{thm:quadratic_space_isometry}. 
\begin{proof}[Proof of Theorem~\ref{thm:quadratic_space_isometry}]
    The correctness of the theorem is obtained by Theorem~\ref{thm:reduction_space_iso_to_restricted_space_iso}, Theorem~\ref{thm:restricted_space_iso_algo}, and Theorem~\ref{thm:matrix_tuple_constructed_main}. 

    We bound the running time of the algorithm. 
    By Theorem~\ref{thm:reduction_space_iso_to_restricted_space_iso} and Theorem~\ref{thm:restricted_space_iso_algo}, the problem reduces to 
    \[2^{O((m + n)n^{0.5})}\cdot 2^{O(n^{1.5})} = 2^{O((m + n)n^{0.5})}\]
    instances of the following problem: determine the isometry of two quadratic form tuples $\mathbf{f}$ and $\mathbf{g}$ constructed by the Quadratic Form Tuple Construction Algorithm for $f_1, \dots, f_m$ and $g_1, \dots, g_m$ with parameters $\alpha_0, \alpha_1, \beta_0, \beta_1$ satisfying $m - \alpha_0 - \alpha_1 \leq n^{0.5}$.
    By Theorem~\ref{thm:matrix_tuple_constructed_main}, the overall running time is upper bounded by 
    \[2^{O((m+n)n^{0.5})}\cdot 2^{O((n+m)^{1.5} + (n+m)\cdot (m - \alpha_0 -\alpha_1))}\cdot \mathrm{poly}(n, m) = 2^{O((m+n)^{1.5})}.\]
\end{proof}

To prove Theorem~\ref{thm:quadratictuple_isometry_intro}, we consider the following algorithm.

\begin{framed}\begin{flushleft}
\vspace{-.3cm}\textbf{Quadratic Form Tuple Isometry Algorithm} 

\textbf{Input:} 
Quadratic forms $f_1, \dots, f_m$ and $g_1, \dots, g_m$ such that $f_1, \dots, f_m$ are linearly independent, and $g_1, \dots, g_m$ are linearly independent.

\textbf{Output:} Yes or no, according to whether there exists $S \in \GL_n(\F_2)$ such that $f_i^S = g_i$ for all $i$. 
\begin{enumerate}
\item Construct the partial derivative tensors $\mathfrak{T}$ from $f_1, \dots, f_m$ and $\mathfrak{T}'$ from $g_1, \dots, g_m$. Run the Tensor Isomorphism Reduction Algorithm on $\mathfrak{T}$ and $\mathfrak{T}'$, and let $A_i, B_i, A_i', B_i'$, $\alpha_{i, 0}$, $\alpha_{i, 1}$, $\beta_{i, 0}$, and $\beta_{i, 1}$ for $1 \leq i \leq \ell$ be the inputs for the restricted tensor isomorphism problem for some $\ell$. 
\item For each $1 \leq i \leq \ell$, if there is a matrix $M$ satisfying Equation~(\ref{equ:restricted_tensor_iso}) such that $A_i = M A_i'$, then 
run the Restricted Quadratic Form Space Isometry Algorithm on $f_1', \dots, f_m'$ and $g_1', \dots, g_m'$ with $f_j' = \sum_{j = 1}^m A_i[j, i] f_j^{B_{i}}$ and  $g_j' = \sum_{j = 1}^m A_i[j, i] g_j^{B_{i}'}$ for any $1 \leq j \leq m$, and parameters $\alpha_{i, 0}', \alpha_{i, 1}', \beta_{i, 0}, \beta_{i, 1}$ with $\alpha_{i, 0}' = 0$ and $\alpha_{i, 1}' = \alpha_{i, 0} + \alpha_{i, 1}$. Output yes if any output of the Restricted Quadratic Form Space Isometry Algorithm is yes. 
\item Output no. 
\end{enumerate}
\end{flushleft}
\vspace{-.3cm}\end{framed}

\begin{proof}[Proof of Theorem~\ref{thm:quadratictuple_isometry_intro}]
For any input $f_1', \dots, f_m'$ and $g_1', \dots, g_m'$ of the restricted quadratic form space isometry problem, there exist three invertible matrices $A_i$, $B_i$ and $B_i'$ such that 
\[f_\ell' =  \sum_{j = 1}^m A_i[j, \ell] f_j^{B_{i}} \text{ and } g_\ell' =  \sum_{j = 1}^m A_i[j, \ell] g_j^{B_{i}'}, \]
for any $1 \leq \ell \leq m$. 
Therefore, if there is no invertible matrix $S$ such  that $(f_1^S, \dots, f_m^S ) = ( g_1, \dots, g_m )$, then the restricted quadratic form space isometry problem for every input instance outputs no, and thus the algorithm outputs no. 

If the input of the Quadratic Form Tuple Isometry Algorithm is a yes instance, then there exists an invertible matrix $U$ such that $g_\ell =  f_\ell^U$ for any $1 \leq \ell \leq m$, and consequently $\mathfrak{T}_{\mathrm{Trans}(I, U)} = \mathfrak{T'}$. 
By Theorem~\ref{thm:tensor_iso_to_restricted_tensor_iso}, there exist $A_i, B_i, A_i', B_i', \alpha_{i, 0}, \alpha_{i, 1}, \beta_{i, 0}, \beta_{i, 1}$ as inputs for the restricted tensor isomorphism problem such that there exist $N$ and $M$ satisfying Equation~(\ref{equ:restricted_tensor_iso}) such that $U = B_i N B_i'^{-1}$ and $I = A_i M A_i'^{-1}$. Hence, the restricted quadratic form space isometry problem instance for $A_i, B_i, A_i', B_i', \alpha_{i, 0}, \alpha_{i, 1}, \beta_{i, 0}, \beta_{i, 1}$ outputs yes, and thus the Quadratic Form Tuple Isometry Algorithm outputs yes.

    Then the correctness of the theorem is obtained by  Theorem~\ref{thm:restricted_space_iso_algo} and Theorem~\ref{thm:matrix_tuple_constructed_main}. 
    The running time is obtained by Lemma~\ref{thm:tensor_iso_to_restricted_tensor_iso}, Theorem~\ref{thm:restricted_space_iso_algo}, and Theorem~\ref{thm:matrix_tuple_constructed_main}. 
\end{proof}



\section{Congruent block diagonalization of alternating matrix tuple}

In this section, we give a new block diagonalization for alternating matrix tuples over $\F_2$ under the congruence action.

For an alternating matrix tuple $\mathbf{A} \in \M(n, \F_2)^m$ and a matrix $X \in \M(n \times n', \F_2)$ for some $0 < n' < n$, we say $X^T \mathbf{A} X$ is a congruent block of $\mathbf{A}$ if there exist another matrix $X' \in \M((n - n')\times n, \F_2 )$ such that the invertible matrix
\[
\begin{pmatrix}
    X & X'
\end{pmatrix} \in \GL(n, \F_2)
\] satisfies
\[
\begin{pmatrix}
    X & X'
\end{pmatrix}^T \mathbf{A} \begin{pmatrix}
    X & X'
\end{pmatrix}
=\begin{pmatrix}
    X^T \mathbf{A} X & \\ & X'^T \mathbf{A} X'
\end{pmatrix}. 
\]

We show that congruent blocks can be obtained from left-right blocks. 

\begin{theorem}\label{thm:congruence_block_diagonalization}
Let $X^T \mathbf{A}Y$ be a minimum left-right block for an alternating matrix tuple $\mathbf{A}$ over $\F_2$ such that the number of columns of $X$ is less than or equal to the number of columns of $Y$. Then one of the following two conditions holds:
\begin{enumerate}
\item $X^T \mathbf{A}X$ is a congruent block of $\mathbf{A}$, and $X$ has the same number of columns as $Y$. 
\item $\begin{pmatrix}X & Y \end{pmatrix}^T  \mathbf{A}\begin{pmatrix}X & Y \end{pmatrix}$ is a congruent block of $\mathbf{A}$. 
\end{enumerate}
\end{theorem}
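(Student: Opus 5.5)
We will exploit the symmetry $A_i^T = A_i$ of the tuple together with the structure theory of minimum left-right blocks from Theorem~\ref{thm:left_right_decomposition}. First, fix a minimum left-right block diagonalization $U^T\mathbf{A}W=\diag(\mathbf{B}_1,\dots,\mathbf{B}_k)$ in which $X^T\mathbf{A}Y$ is one of the blocks, say $\mathbf{B}_1$, with $U=(X\ U')$ and $W=(Y\ W')$. Transposing gives $W^T\mathbf{A}U=\diag(\mathbf{B}_1^T,\dots,\mathbf{B}_k^T)$. This is again a minimum left-right block diagonalization, in which $Y^T\mathbf{A}X=\mathbf{B}_1^T$ is a minimum block. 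So the roles of rows and columns can be swapped freely. We then split according to the relationship between the column spaces of $X$ and $Y$ inside $\F_2^n$, working modulo the ``kernel'' parts $K_{\mathbf{B}}$.

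The first case is when $\mathrm{columnspan}(Y)$ can be taken inside the row-side space $V_{\mathbf{B}}$, i.e., $\mathbf{B}$ is (left-right equivalent to) its own transpose and $Y$ may be replaced by a matrix of the form $XT$. Using Theorem~\ref{thm:left_right_decomposition}(2), we project the columns of $Y$ onto the $U$-decomposition. The projection onto the $X$-component lies in $V_{\mathbf{B}}$, and the projections onto the other components $U_j$ lie in $V_{\mathbf{B}}$ only through $K_{\mathbf{B}}$-type corrections. From this we aim to show that $X^T\mathbf{A}Y'=0$ for a complement $Y'$ of $\mathrm{columnspan}(X)$ chosen inside $\mathrm{columnspan}(W')$, after modifying $X$ by an element of $K_{\mathbf{B}}$. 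Then $(X\ Y')^T\mathbf{A}(X\ Y')$ is block diagonal with first block $X^T\mathbf{A}X$, which must be nondegenerate of size equal to that of $Y$ by minimality and non-degeneracy. This yields condition~1.

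Otherwise, $\mathrm{columnspan}(X)+\mathrm{columnspan}(Y)$ should be a direct sum, up to the $K$-modifications. We take $Z=(X\ Y)$ and need a complement $Z'$ with $Z^T\mathbf{A}Z'=0$. We build $Z'$ from the columns of $U'$ and $W'$ that lie in both "other" parts, namely a basis of the intersection of the span of the non-$\mathbf{B}_1$ components on the left with that on the right. Orthogonality to both $X$ and $Y$ follows from block diagonality of the two decompositions. We then need a dimension count showing that $Z$ and $Z'$ together span $\F_2^n$. This count is where Theorem~\ref{thm:left_right_decomposition}(2) is used again, to push any vector into the $X$-, $Y$-, or joint-complement parts. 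The assumption that $X$ has at most as many columns as $Y$ is used to pick the correct side to project from when the sizes differ.

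The main obstacle is the first case in characteristic $2$. Here $X^T\mathbf{A}X$ could a priori be degenerate, since alternating forms can be isotropic, and the "obvious'' choice of $X$ may fail; this is exactly the non-uniqueness phenomenon mentioned in the introduction. My plan is to use the freedom of replacing $X$ by $X+K$ with $K\in K_{\mathbf{B}}$, which is a linear space by Theorem~\ref{thm:left_right_decomposition}(1). I would try to show that some choice makes $X^T\mathbf{A}X$ nondegenerate as a tuple, or else conclude that $\mathrm{columnspan}(X)\cap\mathrm{columnspan}(Y)$ is trivial after such a modification, which falls into condition~2. I expect the precise dichotomy to require a careful minimality argument: a degenerate $X^T\mathbf{A}X$ would produce a smaller left-right block of $\mathbf{B}$, contradicting minimality.
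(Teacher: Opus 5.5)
There is a genuine gap. Your outline never supplies the mechanism that decides between the two conditions, and the dichotomy you do propose is the wrong one.

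\textbf{Problems with the proposed split.}
\begin{itemize}
\item Being left-right equivalent to its own transpose is necessary for condition~1 but not sufficient. For $m=1$, every minimum block is $1\times 1$ with $\mathbf{B}=(1)=\mathbf{B}^T$, yet $x^T\mathbf{A}x=0$ always, so condition~1 never holds.
\item The same example refutes your planned contradiction ``a degenerate $X^T\mathbf{A}X$ would produce a smaller left-right block of $\mathbf{B}$''. A degenerate $X^T\mathbf{A}X$ is not contradictory; it is exactly the situation of condition~2.
\item Replacing $X$ by $X+K$ with $K\in K_{\mathbf{B}}$ changes the statement, which is about the given $X$ and $Y$.
\item In your second case, the two facts that carry all the content are only asserted (``should be a direct sum'', ``we then need a dimension count''). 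These are that $\mathrm{columnspan}(X)\cap\mathrm{columnspan}(Y)=0$, and that $\mathrm{columnspan}(X)+\mathrm{columnspan}(Y)$ meets $\mathrm{columnspan}(U')\cap\mathrm{columnspan}(W')$ trivially. The orthogonality you do check is the trivial part.
\end{itemize}

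\textbf{The missing idea.} Decompose $X$ itself along the \emph{right} decomposition: write $X=X_0+X_1$, with $X_0$ and $X_1$ the projections onto $\mathrm{columnspan}(Y)$ and $\mathrm{columnspan}(W')$.
\begin{itemize}
\item Applying Theorem~\ref{thm:left_right_decomposition}(2) to the transposed diagonalization $W^T\mathbf{A}U$ puts both $X_0$ and $X_1$ in $V_{\mathbf{B}}$. This is where the symmetry of $\mathbf{A}$ enters.
\item Since $K_{\mathbf{B}}$ is a subspace and $X\notin K_{\mathbf{B}}$, at least one of $X_0, X_1$ lies in $V_{\mathbf{B}}\setminus K_{\mathbf{B}}$.
\end{itemize}

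\emph{If $X_0\in V_{\mathbf{B}}\setminus K_{\mathbf{B}}$:} minimality gives $d_x=d_y$ and $X_0=YC$ with $C$ invertible. Because $X^T\mathbf{A}W'=0$, we get $X^T\mathbf{A}X=X^T\mathbf{A}X_0=X^T\mathbf{A}YC$, and Lemma~\ref{lem:block_diagonal_basic} gives condition~1.

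\emph{Otherwise $X_1\in V_{\mathbf{B}}\setminus K_{\mathbf{B}}$:} then $X_1$ has full column rank, which immediately forces $\mathrm{columnspan}(X)\cap\mathrm{columnspan}(Y)=0$, since that intersection lies in the kernel of the projection to $\mathrm{columnspan}(W')$. One then rechooses the complement so that it contains the transposed copy $Y_1^T\mathbf{A}X_1=\mathbf{B}^T$, i.e.\ $U'=(Y_1\ U'')$ and $W'=(X_1\ W'')$. Next, check that $\mathrm{columnspan}(U'')$ is orthogonal to both $X$ and $Y$. Finally, prove $(X\ Y\ U'')$ is invertible by showing that every nonzero combination of columns of $Y$ has a nonzero $\mathrm{columnspan}(Y_1)$-component.

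That invertibility argument, which uses the extra copy of $\mathbf{B}^T$ inside the complement, is what your ``dimension count'' would have to be. Nothing in your outline produces it.
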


We first prove the following lemma. 

\begin{lemma}\label{lem:block_diagonal_basic}
Suppose $X^T \mathbf{A} Y$ is a minimum left-right block of an alternating matrix tuple $\mathbf{A} \in \M(n, \F_2)^m$, and the number of columns of $X$ is less than or equal to that of $Y$.
If there exists an invertible matrix Z such that $X^T \mathbf{A} X Z = X^T \mathbf{A} Y$, then $X^T \mathbf{A} X$ is both a left-right block and a congruent block of $\mathbf{A}$. 
\end{lemma}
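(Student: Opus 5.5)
We first note that the hypothesis forces $X$ and $Y$ to have the same number of columns. $Z$ is invertible and $X^T\mathbf{A}XZ = X^T\mathbf{A}Y$, so $Z$ is a square matrix whose size is the number of columns of $X$. For the product $X^T\mathbf{A}XZ$ to be defined and equal to $X^T\mathbf{A}Y$, the latter must have the same number of columns, say $s$.

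\textbf{Step 1: $X^T\mathbf{A}X$ is a left-right block.} Since $X^T\mathbf{A}Y$ is a left-right block, there are $X',Y'$ with $(X\ X')$ and $(Y\ Y')$ invertible such that $X'^T\mathbf{A}Y = 0$, $X^T\mathbf{A}Y'=0$, and $X'^T\mathbf{A}Y'$ is the complementary block.

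Put $\tilde Y := YZ^{-1}$. Then $X^T\mathbf{A}\tilde Y = X^T\mathbf{A}X$, and $(\tilde Y\ Y')$ is invertible, so $X^T\mathbf{A}\tilde Y$ is again a left-right block with the same left complement. We now want to replace $\tilde Y$ by $X$ itself. Set $D := X - \tilde Y$, so $X^T\mathbf{A}D = 0$. Decompose each column of $D$ with respect to the direct sum $\mathrm{columnspan}(\tilde Y)\oplus \mathrm{columnspan}(Y')$, writing $D = \tilde Y P + Y'Q$. Then
\[
0 = X^T\mathbf{A}D = X^T\mathbf{A}\tilde Y P = X^T\mathbf{A}X\,P .
\]
Non-degeneracy of the minimum block (it inherits non-degeneracy from $\mathbf{A}$) gives $P=0$, so $X = \tilde Y + Y'Q$. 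Hence $(X\ Y')$ is invertible and $X'^T\mathbf{A}X = X'^T\mathbf{A}\tilde Y + X'^T\mathbf{A}Y'Q$.

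The leftover term $X'^T\mathbf{A}Y'Q$ need not vanish. We remove it by the column operation $Y' \mapsto Y'$ on the right and a compensating change of the left complement: replace $X'$ by $X'' := X' + XR$ for a suitable $R$, chosen to kill the off-diagonal part while keeping the $(X, Y')$ entry zero. This is the usual argument that a block-triangular decomposition with invertible diagonal blocks can be cleared. I expect to need the symmetry $\mathbf{A}^T=\mathbf{A}$ here to transfer the vanishing of $X^T\mathbf{A}Y'$ into the vanishing of $Y'^T\mathbf{A}X$.

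\textbf{Step 2: $X^T\mathbf{A}X$ is a congruent block.} For a congruent block we need a single complement $X'$ with $(X\ X')$ invertible and $X^T\mathbf{A}X' = 0$; the transpose condition $X'^T\mathbf{A}X=0$ is then automatic by symmetry. Take
\[
W := \{v : X^T\mathbf{A}v = 0\},
\]
the orthogonal of $\mathrm{columnspan}(X)$ with respect to the tuple. It suffices to show $\mathrm{columnspan}(X)\cap W = 0$ and $\dim W \ge n-s$.

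The dimension bound holds because $W$ is cut out by the $s$ linear conditions $X^T\mathbf{A}v=0$, one for each of the $s$ rows of the block, where a row means a linear functional to $\F_2^m$. To see this, take the columns of $Y'$: they lie in $W$, and together with $\mathrm{columnspan}(X)$ they span everything, because $(X\ Y')$ is invertible from Step 1. So $W \supseteq \mathrm{columnspan}(Y')$, which has dimension $n-s$.

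For the trivial intersection, suppose $v = Xc \in W$. Then $X^T\mathbf{A}Xc=0$, and non-degeneracy of the block $X^T\mathbf{A}X$ gives $c=0$. Taking $X' := Y'$ then yields the congruent block decomposition, after checking that $Y'^T\mathbf{A}Y'$ is whatever remains.

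\textbf{Main obstacle.} The delicate point is the non-degeneracy of $X^T\mathbf{A}X$, that is, that $X^T\mathbf{A}Xc=0$ forces $c=0$. This should follow from $X^T\mathbf{A}X = X^T\mathbf{A}YZ^{-1}$ together with non-degeneracy of the minimum block $X^T\mathbf{A}Y$ (no nonzero kernel vector, since a kernel vector of a block is a kernel vector of $\mathbf{A}$). I would verify this first, since it also drives $P=0$ in Step 1; with it, Step 1's clearing may even be unnecessary, because Step 2 only needs $X^T\mathbf{A}Y'=0$ and the invertibility of $(X\ Y')$.
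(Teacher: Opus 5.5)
Your Step~2 is exactly the paper's argument: the paper takes $\{v : X^T\mathbf{A}v=0\}$ as the complement of $\mathrm{columnspan}(X)$, and you supply the two facts it leaves implicit. These are $\mathrm{columnspan}(Y')\subseteq W$, which gives $\dim W\ge n-s$ (the ``$s$ linear conditions'' count alone would only give $n-sm$), and trivial intersection, which follows from $X^T\mathbf{A}X=X^T\mathbf{A}YZ^{-1}$ together with the non-degeneracy that blocks inherit from $\mathbf{A}$. As you suspected, the clearing in Step~1 is unnecessary, since $(X\ Y')^T\mathbf{A}(X\ Y')=\diag(X^T\mathbf{A}X,\,Y'^T\mathbf{A}Y')$ is simultaneously the required congruent and left-right block decomposition.
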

\begin{proof}
It is impossible for $X$ to have fewer columns than $Y$ because $X^T \mathbf{A} X Z = X^T \mathbf{A} Y$ for some invertible $Z$. Hence, $X$ has the same number of columns as $Y$. 

Let $U = \langle u \in \F_2^n :X^T \mathbf{A} u = 0 \rangle$. $U$ is of dimension $n - d$, where $d$ is the number of columns of $X$. Take an arbitrary matrix $W \in \M(n \times (n - d), \F_2)$ such that the columns of $W$ form a linear basis of $U$. Then the matrix \[\begin{pmatrix}X &  W\end{pmatrix}\] is invertible and 
\[\begin{pmatrix}X  &  W\end{pmatrix}^T \mathbf{A} \begin{pmatrix}X  & W\end{pmatrix}  = \diag(X^T \mathbf{A} X, W^T \mathbf{A} W).\] Therefore, $X^T \mathbf{A} X$ is a block of $\mathbf{A}$ in both left-right and congruence actions. 
 \end{proof}
 
 \begin{proof}[Proof of Theorem~\ref{thm:congruence_block_diagonalization}]
 We suppose that the first case does not hold, and show that the second case holds. 
 Let $d_x$ denote the number of columns of $X$ and $d_y$ denote the number of columns of $Y$. 
 
 Since $X^T \mathbf{A} Y$ is a minimum left-right block of $\mathbf{A}$, there are invertible matrices \[U = \begin{pmatrix}X &  U' \end{pmatrix} \text{ and } W = \begin{pmatrix}Y & W' \end{pmatrix} \] such that \[U^T \mathbf{A} W = \diag(X^T \mathbf{A} Y, U'^T \mathbf{A} W').\] 
 Let $X_0$ and $X_1$ be the projection of $X$ on $\mathrm{columnspan}(Y)$ and $\mathrm{columnspan}(W')$ with respect to $\mathrm{columnspan}(Y) \oplus \mathrm{columnspan}(W')$. Then we have $X = X_0 + X_1$. 
 By Theorem~\ref{thm:left_right_decomposition}, both $X_0$ and $X_1$ are in $V_{X^T \mathbf{A} Y}$.
 If $X_0$ is in $V_{X^T \mathbf{A} Y}\setminus K_{X^T \mathbf{A} Y}$, then $d_x = d_y$ because otherwise $X^T \mathbf{A} Y$ is not a minimum left-right block by Theorem~\ref{thm:left_right_decomposition}.
 By Lemma~\ref{lem:block_diagonal_basic}, $X^T \mathbf{A} X$ is a block, contradicting the assumption that the first case does not hold.
 Hence, $X_0$ is in $K_{X^T \mathbf{A} Y}$. 
By Theorem~\ref{thm:left_right_decomposition}, $X_1$ must be an element in $V_{X^T \mathbf{A} Y} \setminus K_{X^T \mathbf{A} Y}$. Hence, the rank of $X_1$ is the same as the rank of $X$.

Without loss of generality, we can assume 
\[U' = \begin{pmatrix}Y_1 & U''\end{pmatrix} \text{ and } W' = \begin{pmatrix}X_1 & W''\end{pmatrix}\] so that $X_1^T \mathbf{A} Y_1$ is a minimum left-right block of $\mathbf{A}$ equal to $X^T \mathbf{A} Y$ and \[U'^T \mathbf{A} W' = \diag(Y_1^T \mathbf{A} X_1, \mathbf{C}) = \diag(Y_1^T \mathbf{A}^T X_1, U''^T \mathbf{A} W'').\]
Hence, 
\[U^T \mathbf{A} W = \diag(X^T \mathbf{A} Y, Y_1^T \mathbf{A} X_1, U''^T \mathbf{A} W'').\]
There is an invertible matrix 
\[S = \begin{pmatrix}I_{d_y} & S' & \\ & I_{d_x} & \\ & & I_{n - x - y}\end{pmatrix}\] such that 
\[W S= \begin{pmatrix} Y &  X & W''\end{pmatrix}.\]
Consequently, we have 
\[U^T \mathbf{A} W S = \begin{pmatrix} X^T A Y & \mathbf{C}  & \\ & Y_1^T A X_1 & \\ & & U''^T \mathbf{A} W''\end{pmatrix}\]
for some $d_y \times d_x$ matrix tuple $\mathbf{C}$. 

Since $X^T\mathbf{A}Y$ is a minimum left-right block, for every column vector $v$, $X^T \mathbf{A} v$ is a linear combination of the columns of $X^T\mathbf{A}Y$. 
On the other hand, since \[Y_1^T \mathbf{A} X_1 = Y_1^T \mathbf{A}^T X_1 = (X_1^T \mathbf{A} Y_1)^T = (X^T \mathbf{A} Y)^T 
= Y^T \mathbf{A} X,\] every row of $\mathbf{C}$ is a linear combination of the rows of $Y_1^T \mathbf{A} X_1$. Consequently, $X^T \mathbf{A} Y_1$ is also a minimum left-right block of $\mathbf{A}$. 

Hence, the projection of any non-zero linear combination of the columns of $Y$ on $\mathrm{columnspan}(Y_1)$ with respect to $\mathrm{columnspan}(X)\oplus \mathrm{columnspan}(Y_1)\oplus \mathrm{columnspan}(U'')$ must be non-zero. 
Thus, the matrix \[\begin{pmatrix}X & Y & U''\end{pmatrix}\] is an invertible matrix such that 
\[ \begin{pmatrix}X & Y & U''\end{pmatrix}^T \mathbf{A} \begin{pmatrix}X & Y & U''\end{pmatrix} = \begin{pmatrix}X^T \mathbf{A} X & X^T \mathbf{A} Y & \\
Y^T \mathbf{A} X & Y^T \mathbf{A} Y & \\
& & U''^T \mathbf{A} U'' \end{pmatrix}.\]
So the second case of the theorem holds. 
\end{proof}

 Theorem~\ref{thm:congruence_block_diagonalization} implies the following corollary. 
 \begin{corollary}\label{cor:standard_block_diagonalization}
Any alternating matrix tuple $\mathbf{A}$ has a block diagonalization under the congruence action such that every block satisfies one of the following two conditions:
\begin{enumerate}
\item The block is $X^T \mathbf{A}X$ for some $X$ such that there exists another matrix $Y$ with a number of columns greater than or equal to $X$ and 
$X^T \mathbf{A} Y$ a minimum left-right block of $\mathbf{A}$.  
\item The block is $\begin{pmatrix} X &  Y\end{pmatrix}^T \mathbf{A} \begin{pmatrix} X & Y\end{pmatrix} $, where $X ^T \mathbf{A}Y$ is a minimum left-right block of $\mathbf{A}$. 
\end{enumerate}
Furthermore, one can construct a block diagonalization of $\mathbf{A}$ under the congruence action as follows: 
Arbitrarily select a minimum left-right block  $X^T \mathbf{A} Y$ of $\mathbf{A}$ such that the number of columns of $X$ is less than or equal to the number of columns of $Y$,
if $X$ satisfies the first condition of Theorem~\ref{thm:congruence_block_diagonalization}, then let $U_0 = X$; otherwise, let $U_0 = \begin{pmatrix}X & Y\end{pmatrix}$. 
Let $U = \begin{pmatrix} U_0 & U_1\end{pmatrix}$ be an arbitrary invertible matrix such that $U^T \mathbf{A} U = \diag(U_0^T \mathbf{A} U_0, U_1^T \mathbf{A} U_1)$.
Let $U_0^T \mathbf{A} U_0$ be a congruent block and recurse on $U_1^T \mathbf{A} U_1$.
\end{corollary}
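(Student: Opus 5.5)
The plan is to prove Corollary~\ref{cor:standard_block_diagonalization} by induction on $n$, the number of rows of $\mathbf{A}$, following the recursive construction in its second part. The existence claim then follows from the correctness of that construction. Throughout, $\mathbf{A}$ is non-degenerate, as assumed in the paper.

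\textbf{Base step.} By Theorem~\ref{thm:left_right_decomposition_algorithm}, $\mathbf{A}$ has a minimum left-right block diagonalization. Pick any of its minimum blocks $X^T\mathbf{A}Y$. If $X$ has more columns than $Y$, we replace it by the block $Y^T\mathbf{A}X$. Since $\mathbf{A}^T=\mathbf{A}$ over $\F_2$, this transpose is again a minimum left-right block, coming from the transposed diagonalization. So we may assume $X$ has at most as many columns as $Y$.

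\textbf{Splitting off one congruent block.} Theorem~\ref{thm:congruence_block_diagonalization} gives a congruent block $U_0^T\mathbf{A}U_0$ with either $U_0=X$ (case 1) or $U_0=(X\ Y)$ (case 2). Case 1 satisfies condition 1 of the corollary with witness $Y$. Case 2 satisfies condition 2 directly. By the definition of a congruent block, there is $U_1$ with $U=(U_0\ U_1)$ invertible and $U^T\mathbf{A}U=\diag(U_0^T\mathbf{A}U_0,U_1^T\mathbf{A}U_1)$. We also need $U_0$ to have full column rank. In case 2 this is the content of the invertibility of $(X\ Y\ U'')$ established in the proof of the theorem.

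\textbf{Recursion.} The tuple $\mathbf{A}':=U_1^T\mathbf{A}U_1$ is alternating with fewer rows. It is also non-degenerate: if $\mathbf{A}'v=0$, then $U^T\mathbf{A}U(0,v)^T=0$, so $v=0$. If $U_0$ uses all $n$ columns, we are done. Otherwise, the induction hypothesis gives $V$ such that $V^T\mathbf{A}'V$ is block diagonal with blocks of the required form relative to $\mathbf{A}'$. Set $T=\diag(I,V)$. Then $(UT)^T\mathbf{A}(UT)$ is block diagonal, with first block $U_0^T\mathbf{A}U_0$ followed by the blocks of $\mathbf{A}'$.

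\textbf{Main obstacle: transferring the block conditions from $\mathbf{A}'$ back to $\mathbf{A}$.} The conditions on the later blocks are stated relative to $\mathbf{A}'$, and we must show they hold relative to $\mathbf{A}$. Concretely, suppose $X'^T\mathbf{A}'Y'$ is a minimum left-right block of $\mathbf{A}'$. We must show that $(U_1X')^T\mathbf{A}(U_1Y')$ is a minimum left-right block of $\mathbf{A}$.

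\emph{It is a left-right block.} Take the left-right complements of $X'$ and $Y'$ inside $\mathbf{A}'$. Adjoining $U_0$ to both complements gives complements in $\mathbf{A}$, because the cross terms between $U_0$ and $U_1$ vanish.

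\emph{It is minimum.} Minimality is intrinsic to the block tuple itself. A left-right block splitting of $\mathbf{B}=X'^T\mathbf{A}'Y'$, i.e.\ $U^T\mathbf{B}W$ becoming block diagonal, is a property of $\mathbf{B}$ alone. Combined with the complement above, any such splitting yields a smaller left-right block of $\mathbf{A}$, and also one of $\mathbf{A}'$. So minimality relative to $\mathbf{A}$ and relative to $\mathbf{A}'$ coincide.

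Finally, the congruent blocks of $\mathbf{A}'$ map to congruent blocks of $\mathbf{A}$ in the same way. This completes the induction, and it also verifies the described construction.
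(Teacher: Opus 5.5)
Your plan (split off one congruent block via Theorem~\ref{thm:congruence_block_diagonalization}, recurse on $\mathbf{A}'=U_1^T\mathbf{A}U_1$, and transfer the block conditions back to $\mathbf{A}$) is the natural expansion of the paper's justification; the paper gives no argument beyond saying the theorem implies the corollary. The ``it is a left-right block'' half of your transfer is correct. The ``it is minimum'' half, however, proves the wrong direction. You show that a splitting $U^T\mathbf{B}W$ of $\mathbf{B}=X'^T\mathbf{A}'Y'$ yields a smaller left-right block of $\mathbf{A}$ and of $\mathbf{A}'$. That only gives ``minimum in $\mathbf{A}$ (or in $\mathbf{A}'$) $\Rightarrow$ $\mathbf{B}$ has no splitting''. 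The induction needs ``minimum in $\mathbf{A}'$ $\Rightarrow$ minimum in $\mathbf{A}$'': a smaller left-right block $(U_1X'P)^T\mathbf{A}(U_1Y'Q)$ of $\mathbf{A}$ must force $P^T\mathbf{B}Q$ to split off inside $\mathbf{B}$. This does not follow from the definition, because the complements witnessing a left-right block of $\mathbf{A}$ may have components outside $\mathrm{columnspan}(U_1X')$ and $\mathrm{columnspan}(U_1Y')$. Your sentence ``is a property of $\mathbf{B}$ alone'' asserts exactly what has to be shown.

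The missing step is a short projection argument. Use the complements you built to change bases on the left and right, so that you may assume $\mathbf{A}=\diag(\mathbf{B},\mathbf{C})$, where $\mathbf{C}$ collects the complement of $\mathbf{B}$ in $\mathbf{A}'$ together with $U_0^T\mathbf{A}U_0$. Suppose $\begin{pmatrix}P\\ 0\end{pmatrix}^T\mathbf{A}\begin{pmatrix}Q\\ 0\end{pmatrix}$ is a left-right block of $\mathbf{A}$ with complements $\tilde P=\begin{pmatrix}\tilde P_1\\ \tilde P_2\end{pmatrix}$ and $\tilde Q=\begin{pmatrix}\tilde Q_1\\ \tilde Q_2\end{pmatrix}$. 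Block diagonality of $\mathbf{A}$ turns the vanishing off-diagonal conditions into $P^T\mathbf{B}\tilde Q_1=0$ and $\tilde P_1^T\mathbf{B}Q=0$. The matrices $\begin{pmatrix}P & \tilde P_1\\ 0 & \tilde P_2\end{pmatrix}$ and $\begin{pmatrix}Q & \tilde Q_1\\ 0 & \tilde Q_2\end{pmatrix}$ are invertible, so their rows are independent and $(P\ \tilde P_1)$, $(Q\ \tilde Q_1)$ have full row rank. Extend the independent columns of $P$ and $Q$ to bases using columns of $\tilde P_1$ and $\tilde Q_1$. This gives $P',Q'$ with $(P\ P')^T\mathbf{B}(Q\ Q')=\diag(P^T\mathbf{B}Q,\,P'^T\mathbf{B}Q')$, a genuine splitting of $\mathbf{B}$. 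Your own argument then converts it into a smaller left-right block of $\mathbf{A}'$, contradicting minimality there. With this step inserted, the induction and the verification of the recursive construction go through as you describe.
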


\begin{lemma}\label{lem:two_blocks_right_multi}
Let $\mathbf{A}$ be an alternating matrix tuple and $X^T \mathbf{A} Y$ be a minimum left-right block such that $\begin{pmatrix}X & Y \end{pmatrix}^T \mathbf{A} \begin{pmatrix}X & Y \end{pmatrix}$ is a congruence block of $\mathbf{A}$. Then there is an invertible matrix $U$ such that 
\[\begin{pmatrix}X & Y \end{pmatrix}^T \mathbf{A} \begin{pmatrix}X & Y \end{pmatrix} U = \begin{pmatrix} X^T \mathbf{A} Y & \\ & Y^T \mathbf{A} X\end{pmatrix}.\]
\end{lemma}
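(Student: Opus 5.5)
The plan is to compare column spaces rather than perform explicit column operations. Write $d_x,d_y$ for the numbers of columns of $X,Y$, and set
\[
\mathbf{B} := \begin{pmatrix}X & Y \end{pmatrix}^T \mathbf{A} \begin{pmatrix}X & Y \end{pmatrix}, \qquad \mathbf{D} := \begin{pmatrix} X^T \mathbf{A} Y & \\ & Y^T \mathbf{A} X\end{pmatrix}.
\]
View each as the linear map $\F_2^{d_x+d_y} \to (\F_2^{d_x+d_y})^m$ given by $v \mapsto \mathbf{B}v$ (respectively $v\mapsto\mathbf{D}v$). I will show that both maps are injective and have the same image. It then follows that $\mathbf{B} = \mathbf{D}U_0$ for a unique invertible $U_0$, and $U := U_0^{-1}$ gives the claim.

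\emph{Step 1: the image of $\mathbf{B}$ lies in the image of $\mathbf{D}$.} Since $X^T\mathbf{A}Y$ is a left-right block, there are $X',Y'$ with $(X\ X')$, $(Y\ Y')$ invertible and $X^T\mathbf{A}Y'=0$. Any $v\in\F_2^n$ can be written as $v=Ya+Y'b$, so $X^T\mathbf{A}v = X^T\mathbf{A}Y a$.

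Over $\F_2$ each $A_i$ is symmetric. Transposing the block-diagonal identity $(X\ X')^T\mathbf{A}(Y\ Y')=\diag(\cdot,\cdot)$ gives $(Y\ Y')^T\mathbf{A}(X\ X')=\diag(Y^T\mathbf{A}X,\,Y'^T\mathbf{A}X')$. Hence $Y^T\mathbf{A}X$ is also a (minimum) left-right block with $Y^T\mathbf{A}X'=0$, and the same argument gives $Y^T\mathbf{A}v = Y^T\mathbf{A}Xc$ for some $c$.

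Applying this to each column vector $v$ of $(X\ Y)$, every column tuple of $\mathbf{B}$ has the form $\bigl(X^T\mathbf{A}Ya;\ Y^T\mathbf{A}Xc\bigr)=\mathbf{D}\binom{a}{c}$. The point is that the two halves of a column of $\mathbf{B}$ can be matched independently, which is exactly what the block-diagonal $\mathbf{D}$ allows. This avoids the coupled cross terms one gets when trying to clear $X^T\mathbf{A}X$ and $Y^T\mathbf{A}Y$ by explicit column operations.

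\emph{Step 2: injectivity.} We need the standing non-degeneracy assumption on $\mathbf{A}$. If $X^T\mathbf{A}Yw=0$, then $(X\ X')^T\mathbf{A}(Yw)=0$, because $X'^T\mathbf{A}Y=0$ from the block-diagonal form. Invertibility of $(X\ X')$ then gives $\mathbf{A}(Yw)=0$, so $Yw=0$ and $w=0$. The same argument applies to $Y^T\mathbf{A}X$, so $\mathbf{D}$ is injective.

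For $\mathbf{B}$, use that it is a congruent block: $\mathbf{A}$ is congruent to $\diag(\mathbf{B},\cdot)$. Non-degeneracy is preserved under congruence and under passing to a diagonal block, so $\mathbf{B}$ is non-degenerate, i.e.\ injective. (Here $(X\ Y)$ has full column rank because it is part of an invertible matrix.)

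\emph{Step 3: conclude.} Both images have dimension $d_x+d_y$, and by Step 1 one contains the other, so they are equal. Let $\mathbf{D}^{-1}$ denote the inverse of $\mathbf{D}$ on its image. Then $U_0:=\mathbf{D}^{-1}\mathbf{B}$ is a well-defined square matrix with $\mathbf{D}U_0=\mathbf{B}$, and it is invertible since $\mathbf{B}$ is injective. Taking $U=U_0^{-1}$ finishes the proof.

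The only delicate point is Step 2, namely checking that the non-degeneracy convention really passes to the sub-blocks $X^T\mathbf{A}Y$ and $\mathbf{B}$. I would verify this directly from the block-diagonal decompositions, as sketched above.
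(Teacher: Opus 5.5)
Your proof is correct, and it is organized differently from the paper's.

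The paper proceeds by explicit column operations in three stages:
\begin{itemize}
\item It rewrites $\mathrm{columnspan}(X\ Y)$ in the basis $(Y\ U_0')$, so that the top row becomes $(X^T\mathbf{A}Y\ \ 0)$.
\item It argues from non-degeneracy that the new lower-right block $\mathbf{C}$ equals $Y^T\mathbf{A}X$ up to an invertible right factor $S$.
\item It uses the columns of $\mathbf{C}$ to clear the lower-left block.
\end{itemize}

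You instead compare column spaces in one step. Two facts do the work: $X^T\mathbf{A}v\in\mathrm{columnspan}(X^T\mathbf{A}Y)$, and, by the symmetry of $\mathbf{A}$ over $\F_2$, $Y^T\mathbf{A}v\in\mathrm{columnspan}(Y^T\mathbf{A}X)$ for every $v$. Together they give $\mathbf{B}=\mathbf{D}U_0$ directly, because the two halves of each column can be matched independently. Injectivity of $\mathbf{B}$ then makes $U_0$ invertible, and you correctly derive that injectivity from the congruent-block hypothesis together with the standing non-degeneracy of $\mathbf{A}$.

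The ingredients are the same as the paper's, but your packaging has three advantages:
\begin{itemize}
\item It avoids the two-stage elimination.
\item It makes explicit where non-degeneracy and the congruent-block hypothesis enter; in the paper the invertibility of $S$ and the final clearing step are only sketched.
\item It shows that minimality of $X^T\mathbf{A}Y$ is never used, only that it is a left-right block.
\end{itemize}

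One small simplification: you do not need injectivity of $\mathbf{D}$. Any choice of preimages of the columns of $\mathbf{B}$ gives some $U_0$ with $\mathbf{D}U_0=\mathbf{B}$. Invertibility of $U_0$ follows from injectivity of $\mathbf{B}$ alone, and then $\mathbf{D}=\mathbf{B}U_0^{-1}$.
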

\begin{proof}

Since $X^T\mathbf{A} Y$ is a minimum left-right block of $\mathbf{A}$, there exists an invertible matrix $U_0$ such that the following two conditions hold:
\begin{enumerate}
\item $ \begin{pmatrix}X & Y \end{pmatrix} U_0 = \begin{pmatrix} Y & U_0'\end{pmatrix}$ for some $U_0'$. 
\item \[\begin{pmatrix}X & Y \end{pmatrix}^T \mathbf{A} \begin{pmatrix}X & Y \end{pmatrix} U_0 = \begin{pmatrix} X^T \mathbf{A} Y & \\ \mathbf{B} & \mathbf{C}\end{pmatrix}\]
for some matrix tuples $\mathbf{B}$ and $\mathbf{C}$. 
\end{enumerate}


On the other hand, 
since $\mathbf{A}^T = \mathbf{A}$, $Y^T \mathbf{A} X$ is also a minimum block of $\mathbf{A}$.
Therefore, there must be an invertible matrix $S$ such that $\mathbf{C} S = Y^T \mathbf{A}X$. Because otherwise there is a column vector $v$ as a non-zero linear combination of the columns of $U_0'$ such that $\mathbf{A} v$ is a zero column tuple, contradicting the assumption that $\mathbf{A}$ is non-degenerate.
Consequently, every column of $\mathbf{B}$ is a linear combination of columns of $\mathbf{C}$. 
\end{proof}

We define the standard congruent block diagonalization of an alternating matrix tuple as follows. 

\begin{definition}\label{def:standard_congruent_block}
A congruent block $U^T\mathbf{A}U$ of an alternating matrix tuple $\mathbf{A}$ for some matrix $U$ is a \emph{standard congruent block} if one of the following two conditions holds:
\begin{enumerate}
    \item $U^T \mathbf{A} U$ is a minimum left-right block of $\mathbf{A}$.
    \item $U = \begin{pmatrix}X & Y \end{pmatrix}$ such that $X^T \mathbf{A} X$ is not a minimum left-right block of $\mathbf{A}$, and $X^T \mathbf{A} Y$ is a minimum left-right block of $\mathbf{A}$.
\end{enumerate}

We say that a standard congruent block $U^T\mathbf{A}U$ is obtained from a minimum left-right block $X^T \mathbf{A} Y$ if one of the following two cases is satisfied:
\begin{enumerate}
    \item $X = U$ and $Y$ is a matrix such that $X^T \mathbf{A} Y$ is a left-from minimum block of $\mathbf{A}$. 
    \item $U = \begin{pmatrix}X & Y \end{pmatrix}$. 
\end{enumerate}
\end{definition}

\begin{definition}\label{def:standard_diagonalization}
Matrix tuples $\mathbf{B}_1, \dots, \mathbf{B}_\ell$ are \emph{representative blocks} of an alternating matrix tuple $\mathbf{A}$ if the following conditions hold:
\begin{enumerate}
    \item Each of $\mathbf{B}_1, \dots, \mathbf{B}_\ell$ is a minimum left-right block of $\mathbf{A}$.
    \item For every $1 \leq i < j \leq \ell$, $\mathbf{B}_i$ is not left-right equivalent to either $\mathbf{B}_j$ or $\mathbf{B}_j^T$.
    \item If $\mathbf{B}_i^T$ is left-right equivalent to $\mathbf{B}_i$ for some $1 \leq i \leq \ell$, then there is an invertible matrix $W$ such that $\mathbf{B}_i^T W = \mathbf{B}_i$. 
    \item Let $n_i$ be the number of blocks in a minimum left-right block diagonalization of $\mathbf{A}$ that are left-right equivalent to $\mathbf{B}_i$. Then $n_1 \leq n_2 \leq \dots \leq n_\ell$. 
\end{enumerate}

An alternating matrix tuple $\mathbf{C} = \diag(\mathbf{C}_1, \dots, \mathbf{C}_k)$ is a \emph{standard congruent block diagonalization} of an alternating matrix tuple $\mathbf{A}$ with respect to representative blocks $\mathbf{B}_1, \dots, \mathbf{B}_\ell$ if the following conditions hold:
\begin{enumerate}
    \item Every $\mathbf{C}_i$ is a standard congruent block of $\mathbf{A}$. 
    \item There are $0 = t_0 < t_1 < \dots < t_\ell = k$ such that
    for each $1 \leq i \leq \ell$ and $t_{i-1} + 1 \leq j \leq t_i$, $\mathbf{C}_j$ is obtained from $X^T \mathbf{A} Y$ for some $X$ and $Y$ such that $X^T \mathbf{A} Y = \mathbf{B}_i$.
\end{enumerate}
\end{definition}

The following lemma suggests that for any two different standard congruent block diagonalizaitons $\mathbf{A}'$ and $\mathbf{A}''$ of an alternating matrix tuple $\mathbf{A}$ with respect to the same representative blocks, there exists an invertible matrix $W$ such that $\mathbf{A}' = \mathbf{A}'' W$. 

\begin{lemma}\label{lem:standard_diagonalization_1}
    Let $\mathbf{C} = \diag(\mathbf{C}_1, \dots, \mathbf{C}_k)$ be a standard congruent block diagonalization of an alternating matrix tuple $\mathbf{A}$ with respect to representative blocks $\mathbf{B}_1, \dots, \mathbf{B}_\ell$. 
    Let $n_i$ be the number of blocks in a minimum left-right block diagonalization of $\mathbf{A}$ left-right equivalent to $\mathbf{B}_i$ for any $1 \leq i \leq \ell$.
    There exist invertible matrices $W_1, \dots, W_k$ such that $\diag(\mathbf{C}_1 W_1, \dots, \mathbf{C}_k W_k) = \diag(\mathbf{F}_1, \dots, \mathbf{F}_\ell)$, where for any $1 \leq i \leq \ell$, $\mathbf{F}_i$ is the diagonal matrix tuple of $n_i$ copies of $\mathbf{B}_i$ if $\mathbf{B}_i^T$ is left-right equivalent to $\mathbf{B}_i$, or $n_i$ copies of $\diag(\mathbf{B}_i, \mathbf{B}_i^T)$ if $\mathbf{B}_i^T$ is not left-right equivalent to $\mathbf{B}_i$.
\end{lemma}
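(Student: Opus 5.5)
The plan is to handle each standard congruent block $\mathbf{C}_j$ separately, right-multiplying it by a suitable $W_j$. Afterwards I count how many copies of each representative appear, using uniqueness of minimum left-right decompositions (Krull--Schmidt for the left-right action, implicit in Theorem~\ref{thm:left_right_decomposition} and \cite{qiao2024canonical}).

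\textbf{Step 1 (per-block normalization).} Fix $j$ with $t_{i-1}+1\le j\le t_i$. By Definition~\ref{def:standard_diagonalization}, $\mathbf{C}_j$ is obtained from some $X^T\mathbf{A}Y=\mathbf{B}_i$. There are two cases from Definition~\ref{def:standard_congruent_block}.
\begin{enumerate}
\item If $\mathbf{C}_j=X^T\mathbf{A}X$ with $X^T\mathbf{A}X$ a minimum left-right block, then Lemma~\ref{lem:block_diagonal_basic} and its proof give an invertible $Z$ with $X^T\mathbf{A}XZ=X^T\mathbf{A}Y=\mathbf{B}_i$. Here I need that the left factor $X$ is shared, so only a right multiplication is needed. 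I get this from minimality: every $X^T\mathbf{A}v$ lies in the column span of $X^T\mathbf{A}Y$, and equal dimensions give invertibility. Set $W_j=Z$.
\item If $\mathbf{C}_j=(X\ Y)^T\mathbf{A}(X\ Y)$, Lemma~\ref{lem:two_blocks_right_multi} gives $U$ with $\mathbf{C}_jU=\diag(\mathbf{B}_i,\mathbf{B}_i^T)$, using $Y^T\mathbf{A}X=(X^T\mathbf{A}Y)^T$ since $\mathbf{A}$ is symmetric over $\F_2$.
\end{enumerate}

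\textbf{Step 2 (case 2 cannot occur when $\mathbf{B}_i^T\sim\mathbf{B}_i$, and case 1 cannot occur otherwise).} This step matches the dichotomy in the statement. If $\mathbf{B}_i^T$ is not left-right equivalent to $\mathbf{B}_i$, then case 1 is impossible. Indeed, $X^T\mathbf{A}X$ is symmetric, so $X^T\mathbf{A}X\sim (X^T\mathbf{A}X)^T$; combined with $X^T\mathbf{A}X\sim\mathbf{B}_i$ this gives $\mathbf{B}_i\sim\mathbf{B}_i^T$, a contradiction. Hence all blocks of type $i$ become $\diag(\mathbf{B}_i,\mathbf{B}_i^T)$. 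If $\mathbf{B}_i^T\sim\mathbf{B}_i$, condition 3 of the representative-block definition gives $W$ with $\mathbf{B}_i^TW=\mathbf{B}_i$. In case 2 the block then becomes $\diag(\mathbf{B}_i,\mathbf{B}_i)$, which is two copies of $\mathbf{B}_i$ and is fine for the claimed form. So I do not need to exclude case 2; I only need to right-multiply by $\diag(I,W)$. In case 1 the block becomes a single copy $\mathbf{B}_i$.

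\textbf{Step 3 (counting).} After Steps 1--2, $\diag(\mathbf{C}_1W_1,\dots,\mathbf{C}_kW_k)$ is a block-diagonal tuple. Its diagonal pieces are minimum left-right blocks, namely copies of $\mathbf{B}_i$ or $\mathbf{B}_i^T$ grouped by $i$. This tuple is left-right equivalent to $\mathbf{A}$, so it is a minimum left-right block diagonalization of $\mathbf{A}$. By uniqueness of such decompositions up to equivalence and multiplicity, the number of pieces equivalent to $\mathbf{B}_i$ is $n_i$. In the non-self-transpose case, the number of pieces equivalent to $\mathbf{B}_i^T$ equals the number equivalent to $\mathbf{B}_i$, since each block contributes one of each. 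So the group indexed by $i$ consists of exactly $n_i$ copies of $\mathbf{B}_i$, or $n_i$ copies of $\diag(\mathbf{B}_i,\mathbf{B}_i^T)$. This is $\mathbf{F}_i$.

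The main obstacle I expect is Step 3. It requires a Krull--Schmidt-type uniqueness of minimum left-right block multiplicities, which I would extract from \cite{qiao2024canonical} via Theorem~\ref{thm:left_right_decomposition}. A secondary obstacle is the precise right-multiplication claim in case 1 of Step 1.
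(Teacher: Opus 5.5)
Your proposal is correct and follows the same route as the paper's proof. You normalize each $\mathbf{C}_j$ by a right multiplication, use Lemma~\ref{lem:two_blocks_right_multi} for the two-block case, and apply condition 3 of the representative blocks to turn $\mathbf{B}_i^T$ into $\mathbf{B}_i$ when the two are left-right equivalent. Your Step 2 (ruling out a single-block $\mathbf{C}_j$ when $\mathbf{B}_i^T$ is not left-right equivalent to $\mathbf{B}_i$) and your Step 3 (the Krull--Schmidt count, which also tacitly uses condition 2 of the representative blocks) make explicit what the paper compresses into ``follows by Definition~\ref{def:standard_diagonalization}''. One small slip: Lemma~\ref{lem:block_diagonal_basic} assumes the invertible $Z$ rather than producing it, but your direct column-span argument supplies $Z$ anyway.
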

\begin{proof}

    If $\mathbf{C}_i$ itself is a minimum left-right block of $\mathbf{A}$, then by Definition~\ref{def:standard_congruent_block} and Definition~\ref{def:standard_diagonalization}, there is an invertible matrix $W_i$ such that $\mathbf{C}_i W_i = \mathbf{B}_j$ for some $1 \leq j \leq \ell$. 

    Otherwise, by Definition~\ref{def:standard_congruent_block}, there exist $X$ and $Y$ such that $X^T \mathbf{A} Y$ is a minimum left-right block of $\mathbf{A}$ and $\mathbf{C} = \begin{pmatrix}X & Y \end{pmatrix}^T \mathbf{A} \begin{pmatrix}X & Y \end{pmatrix}$. 
     By Lemma~\ref{lem:two_blocks_right_multi}, there is an invertible matrix $W_i'$ such that $\mathbf{C}_i W_i' = \diag(\mathbf{B}_j, \mathbf{B}_j^T)$ for some $1 \leq j \leq \ell$. 
     Furthermore, if $\mathbf{B}_j^T$ is left-right equivalent to $\mathbf{B}_j$, then there exists another invertible matrix $W_i''$ such that $\mathbf{C}_i W_i' W_i'' = \diag(\mathbf{B}_j, \mathbf{B}_j)$. 
     Then the lemma follows by Definition~\ref{def:standard_diagonalization}. 
\end{proof}

Now we describe an algorithm to construct a standard congruent block diagonalization of an alternating matrix tuple, given representative blocks.

\begin{framed}\begin{flushleft}
\vspace{-.3cm}
\textbf{Standard Congruent Block Diagonalization Algorithm}

\textbf{Input:} An alternating matrix tuple $\mathbf{A}$, and representative blocks $\mathbf{B}_1, \dots, \mathbf{B}_\ell$ for $\mathbf{A}$.

\textbf{Output:}
An invertible matrix $S$ such that $S^T \mathbf{A}S$ is a standard congruent block diagonalization of $\mathbf{A}$ with respect to $\mathbf{B}_1, \dots, \mathbf{B}_\ell$. 
\begin{enumerate}

\item Let $\mathbf{D} = \mathbf{A}$, and $U$ be the identity matrix initially. 
\item Repeat the following process until $\mathbf{D}$has no rows or columns remaining: 
\begin{enumerate}
    \item Let $i$ be the smallest integer such that there is a minimum left-right block $X^T \mathbf{D} Y$ of $\mathbf{D}$ equal to $\mathbf{B}_i$. 
    \item Let $Z_0 = X$ if $X^T \mathbf{D} X$ is a  minimum left-right block of $\mathbf{D}$, otherwise let $Z_0 = \begin{pmatrix}X & Y \end{pmatrix}$. 
    \item Compute an arbitrary invertible matrix $Z = \begin{pmatrix}Z_0 & Z_1\end{pmatrix}$ such that $Z^T \mathbf{D} Z = \diag(Z_0^T \mathbf{D} Z_0, Z_1^T \mathbf{D} Z_1)$. 
    \item Let $S=UZ_0$ if $S$ is not yet defined; otherwise let $S = \begin{pmatrix} S & UZ_0 \end{pmatrix}$. 
    Let $\mathbf{D} = Z_1^T \mathbf{D}Z_1$, and $U = U Z_1$. 
\end{enumerate}
\item Return $S$.
\end{enumerate}
\end{flushleft}
\vspace{-.3cm}\end{framed}

By Theorem~\ref{thm:left_right_decomposition_algorithm}, Corollary~\ref{cor:standard_block_diagonalization}, and Definition~\ref{def:standard_diagonalization}, we have the following corollary. 

\begin{corollary}\label{cor:standard_cong_block_diag_algo}
    Given an alternating matrix tuple $\mathbf{A} \in \M(n \times n, \F_2)^m$ and representative blocks $\mathbf{B}_1, \dots, \mathbf{B}_\ell$, the Standard Congruent Block Diagonalization Algorithm outputs a matrix $S$ such that $S^T \mathbf{A} S$ is a standard congruent block diagonalization of $\mathbf{A}$, with running time polynomial of $n$ and $m$. 
\end{corollary}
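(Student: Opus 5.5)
We prove Corollary~\ref{cor:standard_cong_block_diag_algo} in two parts: the output is a standard congruent block diagonalization with respect to $\mathbf{B}_1,\dots,\mathbf{B}_\ell$, and each loop iteration runs in polynomial time.

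\emph{Each iteration is well defined.} Invariant: $\mathbf{D}$ is a congruent block of $\mathbf{A}$, namely $\mathbf{D}=U^T\mathbf{A}U$ with the columns of $S$ and $U$ together forming an invertible matrix that block-diagonalizes $\mathbf{A}$. A left-right block diagonalization of $\mathbf{D}$, padded with the blocks already split off, is one of $\mathbf{A}$. Hence, by the uniqueness part of Theorem~\ref{thm:left_right_decomposition}, every minimum left-right block of $\mathbf{D}$ is a minimum left-right block of $\mathbf{A}$. Its left-right equivalence class is also among the classes of $\mathbf{B}_i$ or $\mathbf{B}_i^T$.

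Step (a): we run the algorithm of Theorem~\ref{thm:left_right_decomposition_algorithm} on $\mathbf{D}$ and obtain its minimum blocks. For each $i$ we test left-right equivalence with $\mathbf{B}_i$ or $\mathbf{B}_i^T$ in polynomial time; this is again canonical-form comparison via Theorem~\ref{thm:left_right_decomposition_algorithm}. If a block equals $\mathbf{B}_i^T$ after change of basis, we swap the roles of $X$ and $Y$, using the symmetry $\mathbf{A}^T=\mathbf{A}$. This yields $X,Y$ with $X^T\mathbf{D}Y=\mathbf{B}_i$ for the smallest $i$.

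Step (b): by Theorem~\ref{thm:congruence_block_diagonalization}, either $X^T\mathbf{D}X$ is a congruent (and minimum left-right) block, or $(X\ Y)^T\mathbf{D}(X\ Y)$ is a congruent block. Which case holds is checked by linear algebra: test whether $X^T\mathbf{D}X\,Z=X^T\mathbf{D}Y$ is solvable with $Z$ invertible, as in Lemma~\ref{lem:block_diagonal_basic}. Here there is one subtlety. Theorem~\ref{thm:congruence_block_diagonalization} assumes $X$ has at most as many columns as $Y$, while $\mathbf{B}_i$ may be wide or tall. We handle this by taking the transposed pair when needed; this is harmless since the pair $(X,Y)$ still produces a standard block obtained from $\mathbf{B}_i$ in the sense of Definition~\ref{def:standard_congruent_block}.

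Step (c): $Z_1$ is computed as a basis of the $\mathbf{D}$-orthogonal complement $\{u: Z_0^T\mathbf{D}u=0\}$, exactly as in the proof of Lemma~\ref{lem:block_diagonal_basic}. This is a complement because the restriction of $\mathbf{D}$ to the column span of $Z_0$ is non-degenerate.

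\emph{Correctness of the output.} Each produced block $Z_0^T\mathbf{D}Z_0=(UZ_0)^T\mathbf{A}(UZ_0)$ is a standard congruent block of $\mathbf{A}$. The resulting $S$ is invertible and block diagonalizes $\mathbf{A}$, by induction on the invariant.

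The ordering requirement of Definition~\ref{def:standard_diagonalization} holds because the algorithm always picks the smallest available $i$. A block of type $i$ can never reappear after the algorithm has moved to a larger index, since removing congruent blocks only removes minimum left-right blocks, by the multiplicity count of Lemma~\ref{lem:standard_diagonalization_1}/Theorem~\ref{thm:left_right_decomposition}. So the indices are nondecreasing, which gives the $t_j$.

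The main obstacle is justifying that minimum left-right blocks of the residual $\mathbf{D}$ remain minimum blocks of $\mathbf{A}$ with the correct multiplicities. We would derive this from the Krull--Schmidt-type uniqueness implicit in Theorem~\ref{thm:left_right_decomposition}.

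\emph{Running time.} There are at most $n$ iterations, since each removes at least one column. Each iteration costs polynomially many calls to polynomial-time subroutines, so the total running time is polynomial in $n$ and $m$.
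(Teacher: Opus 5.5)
Your proposal is correct and follows the same route as the paper. The paper's proof just cites Theorem~\ref{thm:left_right_decomposition_algorithm}, Corollary~\ref{cor:standard_block_diagonalization} (i.e.\ Theorem~\ref{thm:congruence_block_diagonalization}) and Definition~\ref{def:standard_diagonalization}, and you fill in those same ingredients: the invariant on the residual $\mathbf{D}$, Krull--Schmidt uniqueness to keep block types and make the chosen indices nondecreasing, and the orthogonal-complement construction of Lemma~\ref{lem:block_diagonal_basic}.

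One small correction concerns step (c). The space $\{u : Z_0^T\mathbf{D}u=0\}$ is a complement because $Z_0^T\mathbf{D}Z_0$ is already known to be a congruent block, which forces the map $u\mapsto Z_0^T\mathbf{D}u$ to have rank exactly the number of columns of $Z_0$. Non-degeneracy of the restriction alone does not suffice for tuples, since it does not bound that rank.
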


\section{Frequent block quadratic form tuple isometry}

In this section, we consider the following computational problem. 

\begin{problem}\label{problem:small_block}
Given two quadratic form tuples $\mathbf{f}_1$ and $\mathbf{f}_2$ over $n$ variables of length $m$, the representative blocks $\mathbf{B}_1, \dots, \mathbf{B}_\ell$ for $\mathbf{D}_{\mathbf{f}_1}$, and two parameters $1 \leq r \leq n$, $0 \leq \rho \leq n$, satisfying the following conditions:
\begin{enumerate}
\item For any $i \in \{1, 2\}$ and any minimum left-right block $\mathbf{B}$ of the partial derivative matrix tuple $\mathbf{D}_{\mathbf{f}_i}$  of $\mathbf{f}_i$, $\mathbf{B}$ has at most $r$ rows and at most $r$ columns.
\item Both $\mathbf{f_1}$ and $\mathbf{f}_2$ satisfy the $\rho$-non-zero-square-term condition with respect to any standard congruent block diagonalization of $\mathbf{D}_{\mathbf{f}_1}$ with respect to representative blocks $\mathbf{B}_1, \dots, \mathbf{B}_\ell$.
\end{enumerate}
decide whether there exists an invertible matrix $S$ such that $\mathbf{f}_1^S = \mathbf{f}_2$. 
\end{problem}

We remark that the parameter $r$ in this section will correspond to $n^\Delta$ as discussed in Section~\ref{sec:intro:techniques}. 
In this section, we prove the following theorem. 

\begin{theorem}\label{thm:small_block}
There is a $2^{O(n\cdot (\rho + r))}\cdot \mathrm{poly}(m)$ time algorithm to solve Problem~\ref{problem:small_block}. 
\end{theorem}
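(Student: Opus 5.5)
The plan is to canonicalize both inputs with respect to the same representative blocks and then enumerate the normalized congruent copies of $\mathbf{f}_1$. First I run the Standard Congruent Block Diagonalization Algorithm (Corollary~\ref{cor:standard_cong_block_diag_algo}) on $\mathbf{D}_{\mathbf{f}_1}$ and on $\mathbf{D}_{\mathbf{f}_2}$ with the given representative blocks $\mathbf{B}_1,\dots,\mathbf{B}_\ell$, and replace $\mathbf{f}_i$ by $\mathbf{f}_i^{S_i}$. Before doing so I check that $\mathbf{B}_1,\dots,\mathbf{B}_\ell$ are also valid representative blocks for $\mathbf{D}_{\mathbf{f}_2}$, with the same multiplicities $n_j$; if not, I reject. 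By Lemma~\ref{lem:standard_diagonalization_1}, both partial derivative tuples then equal $\diag(\mathbf{F}_1,\dots,\mathbf{F}_\ell)W$ for suitable block-diagonal $W$'s, so the "shape" is fixed. Call a quadratic form tuple \emph{normalized} if it is congruent to $\mathbf{f}_1$ and its partial derivative tuple is a standard congruent block diagonalization with respect to $\mathbf{B}_1,\dots,\mathbf{B}_\ell$. The original instance is a yes instance iff the normalized $\mathbf{f}_2$ lies in the set $\mathcal{N}$ of normalized tuples.

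The next step is to bound $|\mathcal{N}|$ by $2^{O(n(\rho+r))}$. Every congruent block has at most $2r$ rows, and the block sizes are determined by the shape. So the partial derivative tuple of an element of $\mathcal{N}$ is $\mathbf{C}=\diag(\mathbf{C}_1,\dots,\mathbf{C}_k)$, where each $\mathbf{C}_j=\mathbf{F}'_jW_j^{-1}$ for a block $\mathbf{F}'_j$ fixed by the shape. The tuple is therefore determined by the invertible matrices $W_j$ of size at most $2r$. This gives at most $\sum_j (2r)\cdot d_j\le 2rn$ bits, which is $2^{O(nr)}$ choices.

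The off-diagonal coefficients of each form are exactly the entries of $\mathbf{C}$. By the $\rho$-non-zero-square-term condition (which is why the hypothesis is stated with respect to every standard diagonalization), only the first $\rho$ forms can have nonzero square terms. Those contribute $\rho n$ further bits. Hence $|\mathcal{N}|\le 2^{O(n(r+\rho))}$, and each element is described by $O(n(r+\rho))$ bits plus $\poly(m)$ bookkeeping.

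To enumerate $\mathcal{N}$ I will do a graph search (BFS) starting from the normalized $\mathbf{f}_1$. The moves are \emph{local operations}: transformations $T$ that act nontrivially only on the coordinates of one or two congruent blocks. A move is kept only if it preserves normality, which is checked by verifying that the result is again a standard congruent block diagonalization. For a pair of blocks of total size at most $4r$, the number of such $T$ is $2^{O(r^2)}\le 2^{O(nr)}$. The total running time is therefore $|\mathcal{N}|\cdot \poly(n)\cdot 2^{O(nr)}\cdot\poly(m)$, and finally I test whether the normalized $\mathbf{f}_2$ was reached.

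The main obstacle is connectivity: I must show that any two normalized tuples are connected by local moves. Suppose $\mathbf{f}^S=\mathbf{g}$ with both normalized; I have to factor $S$ into block-local pieces. The approach I would try first uses Theorem~\ref{thm:left_right_decomposition}. The projection of each block's column space onto the other blocks lies in $V_{\mathbf{B}}$, and components in $K_{\mathbf{B}}$ can be removed one pair of blocks at a time. This reduces $S$, by induction on the number of off-block-diagonal nonzero parts, to a block-permutation times block-diagonal matrix. That matrix is itself a product of transpositions of equivalent blocks and single-block changes.

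The delicate point, which is specific to characteristic 2, is the square terms. An elementary two-block shear may preserve $\mathbf{D}$ but shift the diagonal coefficients; Lemma~\ref{lem:lemma_square_term} describes exactly that shift. I have to check that each intermediate tuple still has vanishing square terms beyond the first $\rho$ forms. This should follow from the $\rho$-non-zero-square-term condition, because every intermediate partial derivative tuple is of the form $\mathbf{A}W$.
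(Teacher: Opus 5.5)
Your architecture matches the paper's proof of this theorem: normalize both tuples, bound $|\mathcal N|$ through the block-diagonal $W$'s plus $n\rho$ square-term bits, and run a BFS over one- and two-block local moves. The genuine gap is the connectivity step. You remove only the $K_{\mathbf B}$-components and then assert that what remains is a block permutation times a block-diagonal matrix, i.e.\ a product of block swaps and single-block changes. In characteristic $2$ this is false. A congruence between normalized tuples can mix isomorphic blocks through components in $V_{\mathbf B}\setminus K_{\mathbf B}$, and removing $K$-parts never touches these. A connecting path may also have to regroup copies of a block into pair-type standard blocks, so your premise that the shape is fixed fails too.

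For example, let $\mathbf C=(E_{12}+E_{21},E_{13}+E_{31},E_{23}+E_{32})$, with $E_{ij}$ the matrix units. This is the block $\mathbf A_1$ of the appendix. Its left-right endomorphisms are scalars, so $K_{\mathbf C}=0$. Also $T^T\mathbf CT=\mathbf C$ forces $T=I$, since this condition says $\wedge^2T=I$.
\begin{itemize}
\item Let $\mathbf A=\diag(\mathbf C,\mathbf C,\mathbf C,\mathbf C)$, and let $\mathbf f$ have $\mathbf D_{\mathbf f}=\mathbf A$ and a single square term $x_1^2$ in its first form.
\item Put $S=(I_4+J_4)\otimes I_3$, with $J_4$ the all-ones matrix. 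Over $\F_2$ we have $(I_4+J_4)^T(I_4+J_4)=I_4$, so $\mathbf g=\mathbf f^S$ is again normalized.
\item By Fact~\ref{fact:quadratic_form_transform}, the square terms of $\mathbf g$ lie in blocks $2,3,4$.
\item Every congruence from $\mathbf f$ to $\mathbf g$ is $M\otimes I_3$ with $M^TM=I_4$ and first row $(0,1,1,1)$. Hence none is a block permutation times a block-diagonal matrix, and there is nothing in $K$ to remove.
\item Local moves that keep all blocks single only permute blocks here. So any connecting path must pass through tuples containing pair-type blocks such as $\begin{pmatrix}0&\mathbf C\\ \mathbf C&\mathbf C\end{pmatrix}$, obtained from $X=(e_1+e_2)\otimes I_3$ and $Y=e_1\otimes I_3$.
\end{itemize}

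The same example shows that the single/pair grouping varies. So in the count you must also enumerate $k$ and $(\alpha_1,\dots,\alpha_k)$, as Lemma~\ref{lem:small_block_diag_number} does; this costs only a harmless factor $n2^n$.

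The paper argues block by block instead:
\begin{itemize}
\item It takes the first block $\beta$ where $U_{\mathbf f}$ deviates from the identity.
\item A local move places the $\beta$-projection of $U_{\mathbf f}X$ in $V_{\mathbf B_t}\setminus K_{\mathbf B_t}$.
\item Further local moves clear the other projections $X_i$, whether or not they lie in $K$; these moves may change the grouping.
\item It then inducts on $\beta$.
\end{itemize}
You need an argument of that kind, one that handles this mixing of isomorphic blocks and not only the radical part.

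Finally, the square-term issue you single out is not the real obstacle. Normality depends only on $\mathbf D$, and the $\rho$-condition is used only in the count.
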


We say that a quadratic form tuple is \emph{standard congruent block diagonalized} with respect to representative blocks $\mathbf{B}_1, \dots, \mathbf{B}_\ell$ if its partial derivative matrix is a standard block diagonalization with respect to $\mathbf{B}_1, \dots, \mathbf{B}_\ell$. 

We first bound the number of different standard congruent block diagonalized quadratic form tuples congruent to a given quadratic form tuple. 


\begin{lemma}\label{lem:small_block_diag_number}
Let $\mathbf{f}$ be a matrix tuple over $n$ variables such that every minimum left-right block has at most $r$ rows and $r$ columns. 
Let $\mathbf{B}_1, \dots, \mathbf{B}_\ell$ be representative blocks of $\mathbf{D}_\mathbf{f}$ such that $\mathbf{f}$ satisfies the $\rho$-non-zero-square-term condition with respect to any standard congruent block diagonalization of $\mathbf{D}_\mathbf{f}$ with respect to $\mathbf{B}_1, \dots, \mathbf{B}_\ell$.

There are at most $2^{O(n\cdot (\rho + r))}$ different possible standard congruent block diagonalized quadratic form tuples congruent to $\mathbf{f}$.

\end{lemma}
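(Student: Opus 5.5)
We bound the number of possible coefficient patterns of a standard congruent block diagonalized tuple $\mathbf{f}^S$. A quadratic form tuple is determined by two pieces of data: its partial derivative matrix tuple $\mathbf{D}_{\mathbf{f}^S} = S^T\mathbf{D}_\mathbf{f}S$, and the square coefficients $f^S_i[j,j]$. We count each separately and multiply.

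\emph{Square terms.} Since $\mathbf{D}_{\mathbf{f}^S}$ is a standard congruent block diagonalization with respect to $\mathbf{B}_1,\dots,\mathbf{B}_\ell$, the $\rho$-non-zero-square-term condition applies. Here we take $\mathbf{A}$ to be any fixed standard diagonalization. By Lemma~\ref{lem:standard_diagonalization_1}, any two standard diagonalizations $\mathbf{A}',\mathbf{A}''$ satisfy $\mathbf{A}'=\mathbf{A}''W$, so $\mathbf{D}_{\mathbf{f}^S}=\mathbf{A}W$. The condition then forces $f^S_i[j,j]=0$ for all $i>\rho$. Only the first $\rho$ forms can have nonzero square terms, giving at most $2^{\rho n}$ choices.

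\emph{Partial derivative tuple.} Write $\mathbf{C}=\mathbf{D}_{\mathbf{f}^S}=\diag(\mathbf{C}_1,\dots,\mathbf{C}_k)$.
\begin{enumerate}
\item The ordering of blocks by representative index, together with the counts $n_i$, is determined by $\mathbf{f}$. Each $\mathbf{C}_j$ has size at most $2r$, so the block structure, including the choice of one- versus two-piece blocks, contributes only $2^{O(n)}$ choices.
\item By Lemma~\ref{lem:standard_diagonalization_1}, $\mathbf{C}_jW_j$ equals $\mathbf{B}_i$ or $\diag(\mathbf{B}_i,\mathbf{B}_i^T)$, so $\mathbf{C}_j$ is determined by $\mathbf{B}_i$ and the invertible matrix $W_j^{-1}$ of size at most $2r$.
\item Choosing $W_j$ costs $2^{O(r^2)}$ per block. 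Summed over all blocks this gives $2^{O(\sum_j r_j^2)}\le 2^{O(r\sum_j r_j)}=2^{O(rn)}$, where $r_j\le 2r$ is the size of block $j$ and $\sum_j r_j=n$.
\end{enumerate}
Hence $\mathbf{C}$ takes at most $2^{O(rn)}$ values.

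\emph{Conclusion and main obstacle.} The quadratic form tuple is determined by $\mathbf{C}$ together with the square terms. Multiplying the two counts gives $2^{O(n(\rho+r))}$. The delicate step is justifying that applying the non-zero-square-term condition to the single fixed $\mathbf{A}$ covers every standard diagonalization. This requires that different standard diagonalizations differ only by right multiplication, which is exactly what Lemma~\ref{lem:standard_diagonalization_1} provides after composing $\mathbf{A}'=\mathbf{F}W'^{-1}$ and $\mathbf{A}''=\mathbf{F}W''^{-1}$. Note that the count of $\mathbf{C}$ uses only block sizes and representatives; the length $m$ does not enter, since $\mathbf{C}$ is fixed by the $W_j$.
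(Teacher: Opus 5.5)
Your proposal is correct and follows essentially the same route as the paper. You split the tuple into its partial derivative tuple, counted as $2^{O(n)}$ block-structure choices times at most $2^{\sum_j m_j^2}\le 2^{2rn}$ block-diagonal right multipliers via Lemma~\ref{lem:standard_diagonalization_1}, and its square terms, which contribute at most $2^{n\rho}$ choices. The step you flag as delicate is unnecessary: the hypothesis already applies to every standard congruent block diagonalization, so you can take $\mathbf{A}=\mathbf{D}_{\mathbf{f}^S}$ and $W=I$ directly in the definition of the $\rho$-non-zero-square-term condition.
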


\begin{proof}
For a standard congruent block diagonalization of $\mathbf{D}_\mathbf{f}$, let $k$ denote the number of standard congruent blocks, and $k$ numbers $\alpha_1, \dots, \alpha_k$ being either $1$ or $2$ indicating the number of minimum left-right blocks in the $i$-th standard congruent block of the standard congruent block diagonalization.

We bound the number of possible different standard congruent block diagonalized quadratic form tuples for fixed $k$ and $\alpha_1, \dots, \alpha_k$. 
Let $\mathbf{f}'$ be an arbitrary such quadratic form tuple. 
For any $1 \leq i \leq k$, let $m_i$ denote the number of rows and columns of the $i$-th standard congruent block and $g_i$ be the integer such that the $i$-th standard congruent block is obtained from a minimum left-right block equal to $\mathbf{B}_{g_i}$. 
By Lemma~\ref{lem:standard_diagonalization_1}, $m_1, \dots, m_k$ and $g_1, \dots, g_k$ are determined once $k$ and $\alpha_1, \dots, \alpha_k$ are fixed.

Furthermore, by Lemma~\ref{lem:standard_diagonalization_1}, there is an invertible matrix $W = \diag(W_1, \dots, W_k)$ with $W_i \in \GL(m_i, \F_2)$ for any $1 \leq i \leq k$ such that $\mathbf{D}_{\mathbf{f}'} W = \diag(\mathbf{F}_1,\dots, \mathbf{F}_\ell)$ with $\mathbf{F}_i$ defined as the diagonalization of $n_i$ copies of $\mathbf{B}_i$ if $\mathbf{B}_i^T$ is left-right equivalent to $\mathbf{B}_i$, or $n_i$ copies of $\diag(\mathbf{B}_i, \mathbf{B}_i^T)$, where $n_i$ is the number of blocks in a minimum left-right block diagonalization of $\mathbf{D}_{\mathbf{f}_1}$ that is left-right equivalent to $\mathbf{B}_i$ for any $1 \leq i \leq \ell$.
Since there are at most $2^{\sum_{i = 1}^k m_i^2}$ different possible $W$ once $m_1, \dots, m_k$ are fixed, 
there are at most $2^{\sum_i m_i^2}$ 
different partial derivative matrix tuples for $\mathbf{f}'$ once $k$ and $\alpha_1, \dots, \alpha_k$ are fixed. 
Since $\mathbf{f}$ satisfies the $\rho$-non-zero-square-term condition with respect to any standard congruent block diagonalization of $\mathbf{D}_\mathbf{f}$ with respect to $\mathbf{B}_1, \dots, \mathbf{B}_\ell$, 
there are at most
\[2^{\sum_{i = 1}^k m_i^2}\cdot 2^{n \rho} \leq 2^{n \cdot 2r} \cdot 2^{n \rho} = 2^{n \cdot (2r + \rho)}\]
different quadratic form tuples satisfying the two conditions of the lemma once $k$ and $\alpha_1, \dots, \alpha_k$ are fixed.

On the other hand, there are at most $n \cdot 2^n$ different possible $k$ and $(\alpha_1, \dots, \alpha_k)$. The total number of different possible $\mathbf{f}'$ is upper bounded by 
\[ n \cdot 2^n \cdot 2^{n \cdot (2r + \rho)} = 2^{O(n \cdot (r + \rho))}.\]
\end{proof}

Our algorithm enumerates all possible quadratic form tuples congruent to $\mathbf{f}_1$ having their partial derivative matrix tuples being standard congruent block diagonalizations of $\mathbf{D}_{\mathbf{f}_1}$ with respect to $\mathbf{B}_1, \dots, \mathbf{B}_\ell$. 
And check if a quadratic form congruent to $\mathbf{f}_2$ with a partial derivative matrix tuple satisfying the same condition equals any of the quadratic form tuples for $\mathbf{f}_1$ enumerated.

In our algorithm, we obtain all the standard congruent block diagonalized quadratic form tuples from a given one by performing some local operations iteratively. 
For a standard congruent block diagonalized quadratic form tuple $\mathbf{f}$ over $n$ variables (with respect to $\mathbf{B}_1, \dots, \mathbf{B}_\ell$). Denote $\mathbf{D}_\mathbf{f}$ as $\diag(\mathbf{D}_1, \dots, \mathbf{D}_d)$ with standard congruent blocks $\mathbf{D}_1, \dots, \mathbf{D}_d$. Let $m_i$ denote the number of rows and columns of $\mathbf{D}_i$, and $r_i = \sum_{j = 1}^{i} m_i$ for any $1 \leq i \leq d$.
A \emph{local operation} for $\mathbf{f}$ is an invertible matrix $S$ satisfying one of the following conditions:

\begin{enumerate}
    \item For any $1 \leq i \leq d$, let 
    \[S = \begin{pmatrix}I_{r_{i - 1}} & &  \\ & U & \\ & & I_{n - r_i} \end{pmatrix}\]
    be an invertible matrix such that $\mathbf{f}^S$ is a standard congruent block diagonalized quadratic form tuple with respect to $\mathbf{B}_1, \dots, \mathbf{B}_\ell$. 
    \item For any $1 \leq i < j \leq d$, let 
    \[S = \begin{pmatrix}I_{r_{i - 1}} & &  \\ & U_1 & & U_2\\ 
    & & I_{r_{j - 1} - r_i} \\
    & U_3 & & U_4 \\
    & & & & I_{n - r_j} \end{pmatrix}\]
    with $U_1 \in \M(m_i \times t, \F_2), U_2 \in \M(m_i \times (m_i + m_j - t), \F_2)$, $U_3 \in \M(m_j \times t, \F_2)$,  and $U_4 \in \M(m_j \times (m_i + m_j - t), \F_2)$ for some $0 \leq t \leq m_i + m_j$ such that $S$ is an invertible matrix and $\mathbf{f}^S$ is a standard congruent block diagonalized quadratic form tuple with respect to $\mathbf{B}_1, \dots, \mathbf{B}_\ell$. 

\end{enumerate}

We have the following algorithm for the Problem~\ref{problem:small_block}:

\begin{framed}\vspace{-.3cm}\begin{flushleft}
\textbf{Frequent Block Quadratic Form Tuple Isometry Algorithm}

\textbf{Input:} Two quadratic form tuples $\mathbf{f}_1$, $\mathbf{f}_2$, representative blocks $\mathbf{B}_1, \dots, \mathbf{B}_\ell$ for $\mathbf{D}_{\mathbf{f}_1}$, and a parameter $0 \leq \rho \leq n$.

\textbf{Output:}
Yes or no. 

\begin{enumerate}
    \item Output no if $\mathbf{D}_{\mathbf{f}_1}$ and $\mathbf{D}_{\mathbf{f}_2}$ are not left-right equivalent. 
    \item Compute $\mathbf{f}_1'$ and $\mathbf{f}_2'$, which are arbitrary standard congruent block diagonalized quadratic form tuples for $\mathbf{f}_1$ and $\mathbf{f}_2$, respectively, both with respect to $\mathbf{B}_1, \dots, \mathbf{B}_\ell$. 
    Let $\Gamma = \{\mathbf{f}_1'\}$. 
    \item Iteratively perform the following process until $\Gamma$ does not change: For any $\mathbf{f} \in \Gamma$ and any local operation $S$ for $\mathbf{f}$, add $\mathbf{f}^S$ to $\Gamma$. 
    \item Output yes if $\mathbf{f}_2'$ is in $\Gamma$. Otherwise output no. 
\end{enumerate}
\end{flushleft}
\vspace{-.3cm}
\end{framed}

\begin{proof}[Proof of Theorem~\ref{thm:small_block}]
By Definition~\ref{def:standard_diagonalization}, if $\mathbf{f}_1$ and $\mathbf{f}_2$ are congruent, then any standard congruent block diagonalized quadratic form tuple of $\mathbf{f}_1$ with respect to $\mathbf{B}_1, \dots, \mathbf{B}_\ell$ is also a standard congruent block diagonalized quadratic form tuple of $\mathbf{f}_2$ with respect to $\mathbf{B}_1, \dots, \mathbf{B}_\ell$. 
Then the correctness of the algorithm is obtained if $\Gamma$ contains all the standard congruent block diagonalized quadratic form tuples of $\mathbf{f}_1$ with respect to $\mathbf{B}_1, \dots, \mathbf{B}_\ell$. 
By the algorithm, it is sufficient to show that for two arbitrary standard congruent block diagonalized quadratic form tuples $\mathbf{f}$ and $\mathbf{g}$ for $\mathbf{f}_1$ with respect to $\mathbf{B}_1, \dots, \mathbf{B}_\ell$, there exists a sequence of local operations to turn $\mathbf{f}$ into $\mathbf{g}$.

Let $U_\mathbf{f}$ be an arbitrary invertible matrix such that $\mathbf{g} = \mathbf{f}^{U_\mathbf{f}}$. 
We use $k_\mathbf{f}$ (or $k_\mathbf{g}$) to denote the number of standard congruent blocks in $\mathbf{f}$ (or $\mathbf{g}$), and $m_{\mathbf{f}, i}$ (or $m_{\mathbf{g}, i}$) to denote the number of columns/rows of $i$-th standard congruent block of $\mathbf{D}_\mathbf{f}$ (or $\mathbf{D}_\mathbf{g}$).
Let $\alpha_{U_\mathbf{f}}$ denote the first integer such that the $\alpha_{U_\mathbf{f}}$-th column of $U_\mathbf{f}$ is different from $e_{\alpha_{U_\mathbf{f}}}$, where $e_{\alpha_{U_\mathbf{f}}}$ denotes the column vector in which only the $\alpha_{U_\mathbf{f}}$-th coordinate is one, and all the other coordinates are zero. 
Suppose the $\alpha_{U_\mathbf{f}}$-th column of $U_\mathbf{f}$ is used to construct the $\beta_{U_\mathbf{f}}$-th standard congruent block of $\mathbf{D}_\mathbf{g}$. 

We show that there exists a sequence of local operations to turn $\mathbf{f}$ into $\mathbf{f}'$ such that $\beta_{U_{\mathbf{f}'}} > \beta_{U_{\mathbf{f}}}$. Then the lemma is obtained by iteratively applying the local operations until the quadratic form tuple is identical to $\mathbf{g}$.

To prove this argument, we only need to show that there exists a sequence of local operations to turn $\mathbf{f}$ to $\mathbf{h}$, another standard congruent block diagonalized quadratic form tuple of $\mathbf{f}_1$, such that there exists an invertible matrix $U_\mathbf{h}$ satisfying the following conditions:
\begin{enumerate}
    \item $\mathbf{h}^{U_\mathbf{h}} =\mathbf{g}$.
    \item The $i$-th column of $U_\mathbf{h}$ is $e_i$ for any $1 \leq i \leq \sum_{j = 1}^{\beta_{U_\mathbf{f}}} m_{\mathbf{g}, j}$. 
    \item The first $\sum_{j = 1}^{\beta_{U_\mathbf{f}}}m_{\mathbf{g}, j}$ rows and columns of $\mathbf{h}$ correspond to the first $\beta_{U_\mathbf{f}}$ standard congruent blocks of $\mathbf{h}$.
\end{enumerate}

Let $X$ and $Y$ be arbitrary matrices satisfying the following conditions:
\begin{enumerate}
    \item The $\beta_{U_\mathbf{f}}$-th standard congruent block of $\mathbf{D}_\mathbf{g}$ is obtained from $X^T \mathbf{D}_\mathbf{g} Y = \mathbf{B}_t$ for some $1 \leq t \leq \ell$.
    \item $\langle \mathrm{columnspan}(X) \cup \mathrm{columnspan}(Y)\rangle$ is a subset of \[\left\langle e_{\left(\sum_{j = 1}^{\beta_{U_\mathbf{f}}- 1}m_{\mathbf{g}, j} \right) + 1}, \dots, e_{\sum_{j = 1}^{\beta_{U_\mathbf{f}}} m_{\mathbf{g}, j}}\right\rangle.\]
\end{enumerate}
By Definition~\ref{def:standard_congruent_block}, such $X$ and $Y$ exist. 
Therefore, we have 
$X^T \mathbf{D}_\mathbf{g} Y = X^T U_\mathbf{f}^T \mathbf{D}_\mathbf{f} U_\mathbf{f} Y$ of $\mathbf{D}_\mathbf{g}$. 
Our goal is to use $X^T U_\mathbf{f}^T \mathbf{D}_\mathbf{f} U_\mathbf{f}Y$ to construct $\beta_{U_\mathbf{f}}$-th standard congruent block of $\mathbf{h}$ for any $1 \leq i \leq k_\mathbf{f}$. 

Let 
\[E_{\mathbf{f}, i} = \left\langle e_{\left(\sum_{j = 1}^{i - 1}m_{\mathbf{f}, j} \right) + 1}, \dots, e_{\sum_{j = 1}^{i} m_{\mathbf{f}, j}}\right\rangle,\]
and $X_i$ be the projection of $U_{\mathbf{f}}X$ on $E_{\mathbf{f}, i}$ with respect to $E_{\mathbf{f}, 1} \oplus \dots \oplus E_{\mathbf{f}, k_\mathbf{f}}$. Since $\mathbf{f}^{U_\mathbf{f}} = \mathbf{g}$, 
$X_i$ is a zero matrix for any $i < \beta_{U_\mathbf{f}}$. 

If $X_\beta$ is not in $V_{\mathbf{B}_t} \setminus K_{\mathbf{B}_t}$ with the definitions of $V_{\mathbf{B}_t}$ and $K_{\mathbf{B}_t}$ following Theorem~\ref{thm:left_right_decomposition}, then we can use a local operation to make $X_\beta$ be in $V_{\mathbf{B}_t} \setminus K_{\mathbf{B}_t}$. By Theorem~\ref{thm:left_right_decomposition} and Theorem~\ref{thm:congruence_block_diagonalization}, such a local operation exists. 
So, in the rest of this proof, we suppose $X_\beta$ is in $V_{\mathbf{B}_t} \setminus K_{\mathbf{B}_t}$.  

Let $i > \beta_{U_\mathbf{f}}$ be the smallest integer such that $X_i$ is a non-zero matrix. 
By Theorem~\ref{thm:left_right_decomposition} and Theorem~\ref{thm:congruence_block_diagonalization}, there is a local operation such that after applying the operation, the quadratic form tuple satisfies that $X_j$ is a non-zero matrix for any $j \leq i$ not equal to $\beta$. 
Hence, after at most $n$ local operations, the quadratic form satisfies that $X_i$ is a zero matrix for all $i \neq \beta_{U_\mathbf{f}}$. 

Then one can obtain $\mathbf{h}$ by applying one local operation on the $\beta$-th standard congruent block by Theorem~\ref{thm:left_right_decomposition} and Theorem~\ref{thm:congruence_block_diagonalization}. 




Finally, we bound the running time. 
By Lemma~\ref{lem:small_block_diag_number}, the set $\Gamma$ is of size at most $2^{O(n\cdot (\rho + r))}$.
For each standard block diagonalization, there are at most $n^2 \cdot 2^{O(r^2)}$ possible local operations. Hence, the overall running time is $2^{O(n\cdot (\rho + r))}\cdot \mathrm{poly}(m)$.  
\end{proof}

\section{Infrequent block quadratic form tuple isometry}

In this section, we consider the following problem. 

\begin{problem}\label{problem:large_block}
Decide the equivalence of two quadratic form tuples $\mathbf{f}_1$ and 
$\mathbf{f}_2$ satisfying the following
conditions: 
\begin{enumerate}
    \item For any $i \in \{1, 2\}$ and any minimum block $\mathbf{B}$ of the 
partial derivative matrix tuple $\mathbf{D}_{\mathbf{f}_i}$ of 
$\mathbf{f}_i$, there are at most $D$ 
minimum left-right blocks in an arbitrary minimum left-right block 
diagonalization of $\mathbf{D}_{\mathbf{f}_i}$ that are left-right equivalent to $\mathbf{B}$ for a given parameter $D \leq n$. 
\item For any $i \in \{1, 2\}$, there is no nonzero vector $v$ such that
$\mathbf{D}_{\mathbf{f}_i}v=0$.
\end{enumerate}
\end{problem}

Let us remark that the multiplicity bound $D$ in this section will correspond to $n^{1-\Delta}$ as discussed in Section~\ref{sec:intro:techniques}. 
We show the following.

\begin{theorem}
\label{thm:large_block}
Problem \ref{problem:large_block} can be solved in time $2^{nD}\cdot 
\poly(m, n)$, 
where $m$ is the length of the tuples.     
\end{theorem}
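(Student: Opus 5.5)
We follow the strategy sketched in the introduction. Write $\mathbf{A}_i=\mathbf{D}_{\mathbf{f}_i}$. Any $S$ with $\mathbf{f}_1^S=\mathbf{f}_2$ satisfies $S^T\mathbf{A}_1S=\mathbf{A}_2$, so $\mathbf{A}_1$ and $\mathbf{A}_2$ are left-right equivalent. We first solve left-right equivalence constructively in polynomial time, using the canonical form algorithm of Theorem~\ref{thm:left_right_decomposition_algorithm} on both tuples and comparing the results. This gives $(S_0,S_0')$ with $S_0^T\mathbf{A}_1S_0'=\mathbf{A}_2$, or we reject. Replacing $\mathbf{f}_1$ by $\mathbf{f}_1^{S_0}$ and writing $\mathbf{A}=\mathbf{A}_2$, any solution $S=S_0S_1$ has the property that $S_1^T\mathbf{A}S_1'=\mathbf{A}$ with $S_1'=S^{-1}S_0'$. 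Hence $S_1$ lies in
\[G=\{P\in\GL(n,\F_2):\exists Q\in\GL(n,\F_2),\ P^T\mathbf{A}Q=\mathbf{A}\}.\]
The set $G$ is a group, namely the projection of the left-right automorphism group of $\mathbf{A}$. By condition 2 of Problem~\ref{problem:large_block}, $Q$ is determined by $P$, so $G$ is the unit group of an $\F_2$-algebra $\mathcal{E}$ cut out by linear equations. A basis for $\mathcal{E}$ and generators for $G$ can therefore be computed in polynomial time.

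Next we analyse the structure of $G$. Let $R=1+\Rad(\mathcal{E})\cap G$. It is normal and is a $2$-group, since $\Rad(\mathcal{E})$ is nilpotent over $\F_2$. The quotient $G/R$ embeds in the unit group of $\mathcal{E}/\Rad(\mathcal{E})$, which is a product of matrix algebras $\M(d_j,\F_{q_j})$. Using Theorem~\ref{thm:left_right_decomposition} and the minimum left-right block diagonalization, we will identify these simple factors with the classes of left-right equivalent minimum blocks. Each class contributes a factor $\M(d_j,\F_{q_j})$, where $d_j\le D$ is the multiplicity of the block. The extension field $\F_{q_j}$ is the endomorphism field of the block and has degree at most the block's number of rows $s_j$. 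Hence $|G/R|\le\prod_j q_j^{d_j^2}\le 2^{\sum_j s_jd_j\cdot d_j}\le 2^{nD}$, using $\sum_j s_jd_j\le n$ and $d_j\le D$. The main obstacle is making this rigorous: showing that the semisimple quotient really is this product with the stated sizes, and bounding the extension-field degree. We would try to prove it by computing the Wedderburn decomposition of $\mathcal{E}$ directly from the minimum block structure, as in \cite{qiao2024canonical}.

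Then we compute $R$ and a transversal of $R$ in $G$ of size at most $2^{nD}$. One way is to compute $\Rad(\mathcal{E})$ in polynomial time and enumerate the units of $\mathcal{E}/\Rad(\mathcal{E})$ lifted back to $G$. For each representative $Y$, we ask whether some $X\in R$ satisfies $(\mathbf{f}_1^{Y})^{X}=\mathbf{f}_2$. This is a test of whether two points lie in the same orbit of the $2$-group $R$, which is given by generators and acts linearly on the space of quadratic form tuples. This linear action can be realised as a matrix group action on a vector space of dimension $O(mn^2)$. The results of Luks and Luks--Miyazaki \cite{Luks92,LuksMiyazaki02,LuksMiyazaki11} decide such orbit/transporter problems for solvable, in particular $2$-, matrix groups in polynomial time. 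We accept if some $Y$ succeeds. This is correct because $S_1\in G=\bigsqcup YR$. The total time is $2^{nD}\cdot\poly(m,n)$.
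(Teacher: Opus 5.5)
Your algorithm is the paper's algorithm. You compute $(S_0,S_0')$ and observe that $S_0^{-1}S$ lies in the unit group $G$ of the algebra $\mathcal{A}=\{P:\exists P',\ P^T\mathbf{D}_{\mathbf{f}_2}=\mathbf{D}_{\mathbf{f}_2}P'\}$. You set $R=I+\Rad(\mathcal{A})$, take the invertible elements of a linear complement of $\Rad(\mathcal{A})$ as coset representatives, and solve a transporter problem in the $2$-group $R$ for each one. The gap is in the only step with real mathematical content: the bound $|\mathcal{A}/\Rad(\mathcal{A})|\le 2^{nD}$. You describe the semisimple quotient you expect, then say that making it rigorous is ``the main obstacle'' which you ``would try'' to handle. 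The rest of your proposal is citation or routine linear algebra, so as written the running-time bound, which is the whole theorem, is not established. The structure you predict is in fact correct, so what is missing is an argument, not a new idea.

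The paper supplies this argument in Lemma~\ref{lem:semisimple-size}, and it never needs to match simple components exactly with classes of blocks.
\begin{itemize}
\item Non-degeneracy and symmetry of $\mathbf{D}=\mathbf{D}_{\mathbf{f}_2}$ make $P\mapsto P'$ a well-defined anti-automorphism of $\mathcal{A}$.
\item An idempotent $E$ satisfies $E^T\mathbf{D}(I-E')=(I-E)^T\mathbf{D}E'=0$. Hence a complete set of orthogonal primitive idempotents gives a minimum left-right block diagonalization.
\item Take a Wedderburn complement $\mathcal{W}\cong\bigoplus_i\M(d_i\times d_i,\F_{2^{m_i}})$. Two primitive idempotents $E,F$ in the same component are conjugate by a unit $M\in\mathcal{W}$. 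The identity $F^T\mathbf{D}F'=M^TE^T\mathbf{D}E'M'^{-1}$ shows their blocks are left-right equivalent, so $d_i\le D$ by condition~1 of Problem~\ref{problem:large_block}.
\item Since $\mathcal{W}$ acts faithfully on $\F_2^n$, we get $\sum_i d_im_i\le n$, hence $\dim\mathcal{W}=\sum_i d_i^2m_i\le nD$. This step also removes the need for your per-block bound $[\F_{q_j}:\F_2]\le s_j$.
\end{itemize}

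If you want to keep your own route, it can be completed as follows.
\begin{itemize}
\item The pair $(P',P^T)$ is an endomorphism of the $m$-Kronecker module $M$ defined by $\mathbf{D}$, so $\mathcal{A}$ is anti-isomorphic to $\mathrm{End}(M)$.
\item Krull--Schmidt and Fitting's lemma then give $\mathcal{A}/\Rad(\mathcal{A})\cong\prod_j\M(d_j\times d_j,\F_{q_j})$, where the $d_j$ are the block multiplicities.
\item The bound $\log_2 q_j\le s_j$ holds because the $s_j$-dimensional target space of $M_j$ is a nonzero module over the local algebra $\mathrm{End}(M_j)$. All of its composition factors are the residue field $\F_{q_j}$.
\end{itemize}

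There are also some smaller points.
\begin{itemize}
\item The relation should read $S_1^T\mathbf{A}=\mathbf{A}S_1'$, not $S_1^T\mathbf{A}S_1'=\mathbf{A}$. This is harmless.
\item You need an explicit generating set for $R$. Taking $I+b$ over an arbitrary basis of $\Rad(\mathcal{A})$ need not generate $R$; use a basis adapted to the powers of the radical, as in Lemma~\ref{lem:generators-unip}.
\item The Luks--Miyazaki bound for general solvable groups depends on further parameters. What you actually need is the unipotent case over $\F_2$, which the paper verifies in Fact~\ref{fact:unip-transp}.
\end{itemize}
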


We are actually looking for an invertible matrix $S$ such that 
$\mathbf{f}_2=\mathbf{f}_1^S$. We can limit the search space
based on the following observation.

\begin{lemma}
\label{lem:in-adjoint}
Let $S,S_0,S_0'\in \GL(n,\F_2)$ such that
$\mathbf{f}_2=\mathbf{f}_1^S$ and 
$\mathbf{D_{f_2}}={S_0}^T\mathbf{D_{f_1}}S_0'$.
Then the matrix $S_1:=S_0^{-1}S$ satisfies the
following properties:
\begin{itemize}
\item $\mathbf{f_2}=\mathbf{f_1}^{S_0S_1}$;
\item 
there exists a matrix $S_1'$ 
such that 
$S_1^T\mathbf{D_{f_2}}=\mathbf{D_{f_2}}S_1'$.
\end{itemize}
\end{lemma}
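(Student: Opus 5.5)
The proof is a direct computation from Lemma~\ref{lem:partial_derivative_matrix}. The first item is immediate: since $S_0S_1 = S_0S_0^{-1}S = S$, we get $\mathbf{f}_1^{S_0S_1} = \mathbf{f}_1^S = \mathbf{f}_2$.

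For the second item, I will compute $\mathbf{D}_{\mathbf{f}_2}$ in two ways. From $\mathbf{f}_2 = \mathbf{f}_1^S$ and Lemma~\ref{lem:partial_derivative_matrix}, applied componentwise to the tuple, we have $\mathbf{D}_{\mathbf{f}_2} = S^T \mathbf{D}_{\mathbf{f}_1} S$. From the hypothesis, $\mathbf{D}_{\mathbf{f}_2} = S_0^T \mathbf{D}_{\mathbf{f}_1} S_0'$, so $\mathbf{D}_{\mathbf{f}_1} = S_0^{-T}\mathbf{D}_{\mathbf{f}_2} S_0'^{-1}$. Substituting this into the first expression, and using $S^T S_0^{-T} = (S_0^{-1}S)^T = S_1^T$, gives
\[
\mathbf{D}_{\mathbf{f}_2} = S^T S_0^{-T}\mathbf{D}_{\mathbf{f}_2} S_0'^{-1} S = S_1^T \mathbf{D}_{\mathbf{f}_2} S_0'^{-1} S .
\]

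To put this in the required form, I multiply on the right by $(S_0'^{-1}S)^{-1} = S^{-1}S_0'$. This yields
\[
\mathbf{D}_{\mathbf{f}_2} S^{-1} S_0' = S_1^T \mathbf{D}_{\mathbf{f}_2},
\]
so $S_1' := S^{-1}S_0'$ works. This matches the notation in the overview, and $S_1'$ is even invertible.

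There is no real obstacle here. The only care needed is keeping transposes and inverses in the right order, and noting that Lemma~\ref{lem:partial_derivative_matrix} applies to each component of the tuple separately.
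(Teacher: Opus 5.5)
Your proof is correct and is essentially the paper's argument: both use Lemma~\ref{lem:partial_derivative_matrix} to get $\mathbf{D}_{\mathbf{f}_2}=S^T\mathbf{D}_{\mathbf{f}_1}S$, and both take $S_1'=S^{-1}S_0'$. The only difference is bookkeeping: the paper starts from $S_1^T\mathbf{D}_{\mathbf{f}_2}$ and rewrites it in a chain of equalities, whereas you substitute $\mathbf{D}_{\mathbf{f}_1}=S_0^{-T}\mathbf{D}_{\mathbf{f}_2}S_0'^{-1}$ and then multiply on the right by $S^{-1}S_0'$.
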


\begin{proof}
The first property of $S_1$ is obvious. To see the second, 
put $S_1'=S^{-1}S_0'$. By Lemma~\ref{lem:partial_derivative_matrix}, 
we have 
$S^T\mathbf{D_{f_1}}=\mathbf{D_{f_2}}S^{-1}$ which can be used to show
\[S_1^T\mathbf{D_{f_2}}=S^T{S_0^T}^{-1}\mathbf{D_{f_2}}=
S^T{S_0^T}^{-1}S_0^T\mathbf{D_{f_1}}S_0'=S^T\mathbf{D_{f_1}}S_0'=
\mathbf{D_{f_2}}S^{-1}S_0'=\mathbf{D_{f_2}}S_1'.\]
\end{proof}

We shall begin our algorithm with the relatively easy task of
finding $S_0$ and $S_0'$ as in the lemma. Then the equivalence of
${\mathbf f_1}$ and $\mathbf f_2$ is equivalent to the existence
of $S_1$ in the set 
$$G=\{S_1\in\GL(n, \F_2):\exists S_1'\in\GL(n,\F_2)\mbox{~such~that~} 
S_1\mathbf{D_{f_2}}=\mathbf{D_{f_2}}S_1'\}$$
such that $\mathbf{f_2}=\mathbf{f_1}^{S_0S_1}$.
It turns out that $G$ is a matrix group having an efficiently computable
 unipotent normal subgroup $R$ of index at most $2^{nD}$, 
and for
any invertible matrix $Y$, the problem of existence (and finding) $X\in R$
such that $\mathbf{f}_2=\mathbf{f}_1^{YX}$ can be solved in time
$\poly(mn)$. We take a system $\Gamma$ of coset representatives of $R$ in $G$
and for all $Z\in \Gamma$ attempt to solve the above search problem for
$Y=S_0Z$. 

Before describing the formal algorithm, we introduce some
notation and terminology. Also, we present some basic facts
and some tools that can be used in the algorithm.
We shall consider the set 
\begin{equation}\label{eq:calA}
{\cal A}=\{S_1\in\\M(n\times n,\F_2):\exists S_1'\mbox{~such~that~} 
S_1\mathbf{D_{f_2}}=\mathbf{D_{f_2}}S_1'\}
\end{equation}
closely related to $G$. 
Since the condition
is linear, ${\cal A}$ is a linear subspace of $\M(n\times n,\F_2)$.
Also, if $S_1,S_2\in{\cal A}$, that is,
$S_1\mathbf{D_{f_2}}=\mathbf{D_{f_2}}S_1'$ and $S_2\mathbf{D_{f_2}}=\mathbf{D_{f_2}}S_2'$ for
some matrices $S_1'$ and $S_2'$, then
\[(S_1S_2)^T\mathbf{D_{f_2}}=S_2^TS_1^T\mathbf{D_{f_2}}=S_2^T\mathbf{D_{f_2}}S_1'=\mathbf{D_{f_2}}S_2'S_1',\]
whence $S_1S_2\in {\cal A}$. 
Thus ${\cal A}$ is in fact a subalgebra of $\M(n\times n,\F_2)$ and
$G$ is the group of invertible elements of ${\cal A}$.
The Jacobson radical $\Rad({\cal A})$ is the largest nilpotent
ideal of ${\cal A}$.  We define $R$
as the set 
$$R=I+\Rad({\cal A})=\{I+S_1:S_1\in \Rad({\cal A})\}.$$ 
Using the fact
that $\Rad({\cal A})$ is an ideal of ${\cal A}$, it is straightforward
to check that $R$ is a normal subgroup of $G$. 

\begin{lemma}
\label{lem:generators-unip}
A system of generators for $R$ can be computed in polynomial time.
\end{lemma}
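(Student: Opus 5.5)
The plan is to compute, in order, a linear basis of ${\cal A}$, then a linear basis of $\Rad({\cal A})$, and finally to convert the latter into a generating set of $R$.

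\textbf{Basis of ${\cal A}$.} By \eqref{eq:calA}, ${\cal A}$ is the projection to the first coordinate of the solution space of a homogeneous linear system in the entries of the pair $(S_1,S_1')$. The system is $S_1^T D_i = D_i S_1'$ for every matrix $D_i$ in the tuple $\mathbf{D_{f_2}}$. It has $2n^2$ unknowns and $mn^2$ equations over $\F_2$. Gaussian elimination therefore gives a basis of the solution space in $\poly(m,n)$ time, and projecting to the $S_1$-coordinates gives a basis $A_1,\dots,A_k$ of ${\cal A}$.

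\textbf{Basis of $\Rad({\cal A})$.} The algebra ${\cal A}$ is a subalgebra of $\M(n\times n,\F_2)$ given by a basis, and we need its Jacobson radical. I would invoke the known deterministic polynomial-time algorithms for the radical of a matrix algebra over a finite field, for instance R\'onyai's algorithm or the version of Cohen, Ivanyos and Wales. In characteristic $0$ or large characteristic, Dickson's criterion already characterizes the radical as $\{a : \mathrm{tr}(ab)=0 \ \forall b\in{\cal A}\}$. Over $\F_2$ this criterion fails, so one must use the iterated trace construction of R\'onyai, lifting to integer matrices and using traces modulo powers of $p$. This works because ${\cal A}$ acts faithfully on $\F_2^n$ with $n$ polynomial. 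This is the main technical step, but it is a black-box citation; all remaining steps are elementary. The output is a basis $N_1,\dots,N_s$ of $\Rad({\cal A})$.

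\textbf{Generating $R$.} Since $\Rad({\cal A})$ is nilpotent, $J:=\Rad({\cal A})$ satisfies $J^n=0$, so every $I+N$ with $N\in J$ is invertible and unipotent. Consider the filtration $J\supseteq J^2\supseteq\cdots\supseteq J^n=0$, where each power $J^t$ has a basis computable by multiplying basis elements and running linear algebra. Choose a basis of $J$ adapted to this filtration, i.e.\ elements $N_{t,1},\dots$ whose images form a basis of $J^t/J^{t+1}$. I claim the matrices $I+N_{t,j}$ generate $R$. The point is that the map $I+N\mapsto N+J^{t+1}$ is a homomorphism from $I+J^t$ onto the additive group $J^t/J^{t+1}$. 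Indeed, $(I+N)(I+N')=I+N+N'+NN'$ and $NN'\in J^{2t}\subseteq J^{t+1}$. Now let $I+N\in I+J^t$. We can multiply it by the product of the chosen generators that matches its image in $J^t/J^{t+1}$, landing in $I+J^{t+1}$. Inducting downward from $t=n$ gives the claim. In total we obtain at most $n^2$ generators, all in polynomial time.
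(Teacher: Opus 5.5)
Your proof is correct and follows essentially the same route as the paper: compute $\Rad({\cal A})$ via R\'onyai's algorithm, choose a basis adapted to the filtration $\Rad({\cal A})\supseteq\Rad({\cal A})^2\supseteq\cdots$, and show by downward induction, using $(I+M_1)(I+M_2)=I+M_1+M_2+M_1M_2$, that the matrices $I+M$ for $M$ in this basis generate each $I+\Rad({\cal A})^t$. Your formulation via the homomorphism $I+J^t\to J^t/J^{t+1}$ is a cleaner way of stating the paper's induction step, not a different argument.
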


\begin{proof}
A basis for $\Rad({\cal A})$ can be computed
in polynomial time by a method of R\'onyai~\cite{Ronyai90}.
Using this basis, bases for $\Rad({\cal A})^2,\ldots,
\Rad({\cal A})^k$ can be efficiently computed where
$k$ is the smallest integer such that
$\Rad({\cal A})^{k+1}=0.$ Then take the basis of
$\Rad({\cal A})^{k}$, extend it to a basis of
$\Rad({\cal A})^{k-1}$, further extend this basis
to a basis of $\Rad({\cal A})^{k-2}$, and so on. 
This way we obtain a basis of
$\Rad({\cal A})$ that contains bases of 
$\Rad({\cal A})^i$ ($i=1,\ldots,k$). 
We define the subgroup $R_i=I+\Rad({\cal A})^i$
($i=1,\ldots,k$) $R_i$ is the kernel of the homomorphism
from $R$ obtained from the projection ${\cal A}/\Rad({\cal A})^i$.
We show by induction on $i$ that the matrices
$I+M$ where $M$ is taken from the basis of
$\Rad({\cal A})^{k-i}$ generate the subgroup
$R_{k-i}=I+\Rad({\cal A})^{k-i}$ ($i=0,\ldots,k-1$). 
The initial case $i=0$ as well as the induction step
is based on \[(I+M_1)(I+M_2)=I+M_1+M_2+M_1M_2\in I+M_1+M_2+
\Rad({\cal A})^{2(k-i)}.\]
\end{proof}

A square matrix $M$ is called 
unipotent if $M-I$ is nilpotent. A matrix group is unipotent if all its
elements are unipotent.  
Over $\F_2$ an
invertible matrix is unipotent if and only if its order is a power of $2$.
Thus, over $\F_2$ the matrix groups which are 2-groups and unipotent
matrix groups coincide.

Let $L$ be a $d$-dimensional matrix
group over a finite field given by generators.
The {\em vector transporter problem} for $L$ is the following. Given
$d$-dimensional vectors $v_1,v_2$ compute the set
$L_{v_1\rightarrow v_2}=\{M \in L:v_2=Mv_1\}$. It is either
the empty set or a right coset of the stabilizer $L_{v_2}=L_{v_2\rightarrow
v_2}$. By computing $L_{v_1\rightarrow v_2}$ (if it is non-empty) we mean
giving a list of generators for the stabilizer $L_{v_2}$ and a 
single representative of the coset of $L_{v_1\rightarrow v_2}$.
We shall refer to $v_1$ as the {\em initial vector} and $v_2$ as the {\em target vector}. In fact, looking for
$S$ such that $\mathbf{f}_2=\mathbf{f}_1^S$ can be cast as 
(a weaker version of) an instance of the vector transporter problem for
$\GL(n,\F_2)$. 
However, we shall consider instances of the vector transporter problem
for the unipotent group $R$. The following is implicit
in some works of Luks and Miyazaki \cite{LuksMiyazaki02,LuksMiyazaki11}.

\begin{fact}
\label{fact:unip-transp}
The vector transporter problem for a unipotent matrix group over
$\F_2$ can be solved in polynomial time.
\end{fact}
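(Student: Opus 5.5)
The plan is to reduce to the classical Schreier--Sims style ``sifting'' along a composition series, using that a unipotent group over $\F_2$ stabilizes a flag. First, since $L$ is unipotent over $\F_2$, it is a $2$-group, and it fixes a complete flag $0=W_0<W_1<\dots<W_d=\F_2^d$ with $LW_i\subseteq W_i$. Such a flag can be computed in polynomial time. Take $W_1$ to be the common fixed space of the generators, which is nonzero since the group is unipotent. Then recurse on the induced action on $\F_2^d/W_1$. After conjugating by a change of basis, $L$ consists of upper unitriangular matrices. It follows that $L$ has a subnormal series $L=L_0\ge L_1\ge\dots\ge L_t=1$ with all factors of order $2$, and $t\le\binom{d}{2}$. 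Generators for each $L_j$ are obtained by standard polynomial-time $p$-group algorithms; the polynomial-time solvable-group machinery of Luks applies.

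Second, I would solve the transporter problem by a ``linearized orbit'' induction in the spirit of Luks--Miyazaki. I would process the flag from the top. Suppose we want $M\in L$ with $Mv_1=v_2$. Consider the images of $v_1,v_2$ in $\F_2^d/W_{d-1}$. Since $L$ acts trivially on each factor $W_i/W_{i-1}$, these images must already agree, else output empty. Inductively, assume we have found $M_0\in L$ with $M_0v_1\equiv v_2 \pmod{W_i}$, together with generators of the subgroup $L^{(i)}=\{M\in L: Mv_2\equiv v_2 \pmod{W_i}\}$. Now we need to fix the next coordinate, modulo $W_{i-1}$. The map $M\mapsto Mv_2-v_2 \bmod W_{i-1}$ restricted to $L^{(i)}$ lands in the one-dimensional space $W_i/W_{i-1}$. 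It is a group homomorphism into $\F_2$, because for $M,N\in L^{(i)}$ we have $MNv_2-v_2=M(Nv_2-v_2)+(Mv_2-v_2)$, and $M$ acts trivially on $W_i/W_{i-1}$. Its kernel is $L^{(i-1)}$, and generators for this index-$\le 2$ subgroup come from Schreier generators, with a pruning step to keep the generating set polynomial, e.g.\ via the solvable-group base techniques. If the discrepancy $M_0v_1-v_2 \bmod W_{i-1}$ is nonzero, we correct it by any $N\in L^{(i)}$ with nonzero image, if one exists. Otherwise the transporter set is empty.

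After $d$ steps we obtain a coset representative $M_0$ and generators of the stabilizer $L_{v_2}=L^{(0)}$, as required. The main obstacle is keeping generator sets polynomially bounded through the $d$ index-$2$ refinements. Naive Schreier generators double each time. I would handle this by maintaining, at each stage, a generating set adapted to the unitriangular structure: a ``polycyclic'' generating sequence of length at most $\binom{d}{2}$, obtained by sifting through the lower central layers of the unitriangular group, which are elementary abelian $\F_2$-spaces where linear algebra applies. Kernels of homomorphisms to $\F_2$ are then computed by linear algebra within each layer, in polynomial time.
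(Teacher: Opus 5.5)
Your proposal is correct and follows essentially the same route as the paper: you compute a complete invariant flag from iterated common fixed points and descend it from the top. At each level you pass to an index-$\le 2$ stabilizer cut out by a homomorphism to $\F_2$, get its generators from Schreier generators, correct the coset representative, and prune generators to keep them polynomially many. The differences are minor: the paper generalizes to affine groups and conjugates by the translation by $v_2$ to make the target zero, so its stabilizers are exactly your $L^{(i)}$. It also prunes generators with Luks's polynomial-time membership test for matrix $2$-groups, where you sift them into a table adapted to the unitriangular filtration.
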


In fact, the result is stated for solvable groups with certain
parameters that can be potentially too large. However, the method
works in polynomial time for unipotent groups. Here we present
an algorithm based on the method of Luks and Miyazaki for
the special case of unipotent matrix groups over $\F_2$.

\begin{proof}
We consider an extension of the vector transporter problem, which is more 
suitable for iteration or recursion. Recall that an affine transformation 
of a vector  space $V$ is a linear transformation followed by a translation. 
A formal definition could be the following.
The $\Phi:V\rightarrow V$ is called an affine transformation if
the map $lin_\Phi$ defined as $lin_\Phi(v)=\Phi(v)-\Phi(0)$ is a
linear transformation. We shall refer to $lin_\Phi$ as the linear part
of $\Phi$. It can be seen that taking the linear part (i.e., 
the map $\Phi\mapsto lin_\Phi$) is a homomorphism from the group of 
affine transformations to the group of linear transformations. We
call a group of affine transformations unipotent if the linear
parts are a unipotent linear group. We show that the vector 
transporter problem for a unipotent group of affine transformations
is solvable in polynomial time.

We begin by showing that the problem is reducible to the case
when the target vector $v_2$ is the zero vector. Let $G$ be a unipotent 
group of affine transformations. Given $v_1,v_2\in V$, we want to compute
$G_{v_1\rightarrow v_2}$. 
We consider the set 
\[G^{v_2}=\{\Phi^{v_2}: \Phi\in G \mbox{~and~} \Phi^{v_2}(v)=\Phi(v+v_2)-v_2\;(v\in V)\}.\]
The set $G^{v_2}$ is in fact the conjugate of $G$ by the translation by 
$v_2$ and hence it is a group of affine transformation isomorphic to $G$. 
(Furthermore, one can show that it has the same group of linear parts as
$G$.) 
The condition $\Phi(v_1)=v_2$ is equivalent to $\Phi^{v_2}(v_1-v_2)=0$. Thus
computing $G_{v_1\rightarrow v_2}$ is equivalent to
computing $G^{v_2}_{v_1-v_2\rightarrow 0}$. 

Thus it is sufficient to show an efficient algorithm for the vector
transporter problem of computing $G_{v_1\rightarrow 0}$ for
a unipotent group $G$ of affine transformation. The following
is crucial in the method of Luks and Miyazaki. Let $\Gamma$
be the group of the linear parts of the element of $G$. Assume that
$W$ is a $\Gamma$-invariant subspace of $V$. Then we decompose computing
$G_{v_1\rightarrow 0}$ into two smaller instances of the vector transporter
problem. First we compute $G_{v_1+W\rightarrow 0+W}$ 
for the induced action of $G$ on the factor $V/W$. (This action
is again affine.) If the solution set is empty then $G_{v_1\rightarrow 0}$ 
is also empty. Otherwise the solution set is a left coset of the stabilizer 
$H$ of the subspace $W$ in $G$. We can take any representative $\Phi_0$ of 
the coset, compute $v_1'=\Phi_0(v_1)\in W$
and proceed with computing $H_{v_1\rightarrow 0}$.

We triangularize the unipotent matrix group $\Gamma$: compute a basis of the subspace $W$
consisting of the common fixed points of the generators for $\Gamma$, 
and then proceed with the fixed points modulo $W$, and so on. This
way we obtain a basis in which all the elements of $\Gamma$ are
unitriangular (triangular with all 1 in the diagonal). To this basis
corresponds a flag \[V=V_1>V_2>\ldots >V_d>V_{d+1}=0\]
of $\Gamma$-invariant subspaces of $V$ such that $V_{i+1}$ has
codimension one in $V_i$. We follow the recursion described above
to ``descend'' along this flag. We show the details for the first
step (that from $V$ to $W=V_2$), all the subsequent steps are the same.

The factor space $V/W$ is one-dimensional over $\F_2$ 
and every element of $G$ acts as a translation by $0$ or $1$.
To compute the stabilizer of $W$, let $A_0$ be the subset 
of the given set of generators that act as the identity on $V/W$
and let $A_1$ consist of the rest. If $A_1$ is empty then
$A_0$ generates the stabilizer. If $A_1$ is non-empty, pick
an arbitrary element $\Phi\in A_1$. Then, by Schreier's lemma~\cite{Seress03}, the stabilizer of $W$ is generated by 
$A_0\cup \Phi\cdot A_1\cup\Phi\cdot A_0\cdot\Phi^{-1} \cup A_1\Phi^{-1}$.
If $v_1$ is in $W$ then the solution set
is the stabilizer of $W$. If $v_1\not \in W$ then the solution set is
empty if $A_1$ is empty, while it is the coset containing $\Phi$ otherwise.
As a $d$-dimensional affine matrix group has a natural $(d+1)$-dimensional
linear representation and membership in matrix groups over $\F_2$ that 
are 2-groups can be tested in polynomial time by \cite{Luks92}, 
one can reduce the set of generators to size at most $\log|G|$ to
prevent exponential blowup.
\end{proof}

We shall use Fact~\ref{fact:unip-transp} in the following context. Recall that we defined the matrix algebra $\cal A$ in relation to the quadratic form tuple ${\mathbf f_2}$ in Equation~\ref{eq:calA} and $R=I+\Rad(\cal A)$. We consider the restriction of the action 
$({\mathbf f},X)\mapsto {\mathbf f}^X$ of $\GL(n,\F_2)$ 
on $m$-tuples of quadratic forms to $R$ and we shall solve instances of
the vector transporter problem for this action with initial vector $\mathbf{f_1}^Y$
and target vector ${\mathbf f_2}$ for various $Y\in\GL(n,\F_2)$.
The action is an additive, i.e., $\F_2$-linear action of $R$ on an $m\cdot\binom{n+1}{2}$-dimensional space over $\F_2$. 
As $R$ is a 2-group, this matrix
representation of $R$ is also unipotent.

\begin{lemma}
\label{lem:semisimple-size}
Under the conditions of Problem~\ref{problem:large_block}, the dimension of ${\cal A}/\Rad({\cal A})$ is $\leq nD$.
\end{lemma}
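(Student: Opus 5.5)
The plan is to identify ${\cal A}$ with the endomorphism algebra of $\mathbf{D_{f_2}}$ viewed as a module under the left-right action. Then I would compute the semisimple quotient of that algebra via the minimum left-right block decomposition, much as in Krull--Schmidt.

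\textbf{Step 1: reduce to an endomorphism algebra.} Let ${\cal E}=\{(P,Q)\in \M(n\times n,\F_2)^2 : P\mathbf{D_{f_2}}=\mathbf{D_{f_2}}Q\}$, an algebra under componentwise multiplication. By condition 2 of Problem~\ref{problem:large_block}, together with $\mathbf{D_{f_2}}^T=\mathbf{D_{f_2}}$, there is no nonzero $u$ with $u^T\mathbf{D_{f_2}}=0$ and no nonzero $v$ with $\mathbf{D_{f_2}}v=0$. Hence $P$ determines $Q$ (if $\mathbf{D_{f_2}}(Q-Q')=0$ then $Q=Q'$), and likewise $Q$ determines $P$. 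So the projection $(P,Q)\mapsto P$ is an algebra isomorphism of ${\cal E}$ onto its image. By the definition (\ref{eq:calA}), $S_1\mapsto S_1^T$ is an anti-isomorphism from ${\cal A}$ onto this image. An anti-isomorphism maps the Jacobson radical to the Jacobson radical, so
\[\dim {\cal A}/\Rad({\cal A})=\dim {\cal E}/\Rad({\cal E}).\]

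\textbf{Step 2: use the block decomposition.} By Theorem~\ref{thm:left_right_decomposition_algorithm}, after a left-right transformation (which conjugates ${\cal E}$ and preserves all dimensions) we may assume $\mathbf{D_{f_2}}=\diag$ of minimum blocks. Group the blocks into $n_i$ copies of each of the pairwise inequivalent blocks $\mathbf{B}_i$, where $\mathbf{B}_i$ has $s_i$ rows, so that $\sum_i n_i s_i=n$. Condition 1 of Problem~\ref{problem:large_block} gives $n_i\le D$. The object is a representation of a generalized Kronecker quiver, and minimum blocks are exactly the indecomposable summands. So the standard theory for modules over a finite-dimensional algebra applies. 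Each local ring $\mathrm{End}(\mathbf{B}_i)$ has $\mathrm{End}(\mathbf{B}_i)/\Rad$ a finite division algebra, hence a field $\F_{2^{k_i}}$. Homomorphisms between non-isomorphic indecomposables lie in the radical. Therefore
\[{\cal E}/\Rad({\cal E})\cong\prod_i \M(n_i\times n_i,\F_{2^{k_i}}),\]
and the dimension over $\F_2$ is $\sum_i n_i^2k_i$.

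\textbf{Step 3: bound $k_i$ and conclude.} I must show $k_i\le s_i$. By Wedderburn--Malcev, or by a direct argument lifting a generator, the field $\F_{2^{k_i}}$ embeds in $\mathrm{End}(\mathbf{B}_i)$. Its nonzero elements are invertible pairs $(P,Q)$. The $P$-components therefore give a faithful action of $\F_{2^{k_i}}$ on $\F_2^{s_i}$, since $P=0$ forces $Q=0$ by Step 1 applied to the block. This makes $\F_2^{s_i}$ a vector space over $\F_{2^{k_i}}$, so $k_i\mid s_i$. Then
\[\sum_i n_i^2k_i\le D\sum_i n_ik_i\le D\sum_i n_is_i=nD.\]

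The main obstacle is Step 2. I need to justify carefully that the left-right block framework of \cite{qiao2024canonical} matches Krull--Schmidt for this module category. In particular, I must check that minimum blocks are indecomposable with local endomorphism rings, and that the radical of ${\cal E}$ consists exactly of the non-invertible parts in the block-matrix description. Rather than reprove all of this by hand, I would cite these standard facts from representation theory of finite-dimensional algebras.
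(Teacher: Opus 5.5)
Your proof is correct, and it reaches the bound by a somewhat different route from the paper's.

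\emph{The paper's route.} The paper stays inside $\mathcal{A}$. It uses the unique counterpart $S_1\mapsto S_1'$ to match idempotents of $\mathcal{A}$ with block decompositions of $\mathbf{D}_{\mathbf{f}_2}$. It then takes a Wedderburn complement $\mathcal{W}\cong\bigoplus_i \M(d_i\times d_i,\F_{2^{m_i}})$. Two primitive idempotents in the same simple component are conjugate by a unit $M\in\mathcal{W}$, and the pair $(M,M')$ gives a left-right equivalence between the corresponding blocks, so $d_i\le D$. Finally it gets $\sum_i d_im_i\le n$ from the faithful action of $\mathcal{W}$ on $\F_2^n$, since each simple component has a simple module of $\F_2$-dimension $d_im_i$.

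\emph{Your route.} You identify $\mathcal{A}$, up to anti-isomorphism, with the endomorphism algebra of a Kronecker representation. Krull--Schmidt then pins down the semisimple quotient exactly, with matrix sizes equal to the multiplicities $n_i$ rather than merely bounded by $D$. You replace the paper's global faithful-module count by a blockwise one, $k_i\mid s_i$. You get this by lifting the residue field of $\mathrm{End}(\mathbf{B}_i)$ into $\mathrm{End}(\mathbf{B}_i)$, whose $P$-components act faithfully on $\F_2^{s_i}$.

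\emph{What each buys.} Your version hands the idempotent--block dictionary, which the paper only sketches (``one can show''), to standard facts. These are: local endomorphism rings of indecomposables, maps between non-isomorphic indecomposables lying in the radical, and
\[
\mathrm{End}\bigl(\textstyle\bigoplus_i \mathbf{B}_i^{n_i}\bigr)/\Rad\cong\prod_i\M\bigl(n_i\times n_i,\mathrm{End}(\mathbf{B}_i)/\Rad\bigr).
\]
So it is shorter and more precise, but leans on more cited theory. The paper's argument is more hands-on, needing only Wedderburn--Malcev, conjugacy of primitive idempotents in a simple algebra, and the minimal-module dimension count.

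\emph{A cosmetic point.} (\ref{eq:calA}) as printed has no transpose. Read that way, $\mathcal{A}$ is literally the first projection of your $\mathcal{E}$, and the identification is an isomorphism rather than an anti-isomorphism. Your reading matches Lemma~\ref{lem:in-adjoint} and the paper's multiplicativity computation, and Step~1 goes through under either reading.
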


\begin{proof}
Note that if $S_1^T\mathbf{D_{f_2}}=\mathbf{D_{f_2}}S_1'
=\mathbf{D_{f_2}}S_1''$ then \[0=(S_1-S_1)^T\mathbf{D_{f_2}}=
\mathbf{D_{f_2}}(S_1'-S_2''),\] which,  by the second condition in 
Problem~\ref{problem:large_block}, implies $S_1'-S_1''=0$. 
Thus for $S_1\in{\cal A}$ there is a unique counterpart $S_1'$.
Then it can also be seen that $(S_1S_2)'=S_2'S_1'$. 
The symmetry of $\mathbf{D_{f_2}}$ implies that if $S_1\in{\cal A}$
then $S_1'\in{\cal A}$ with $S_1''=S_1$. It can also be seen that 
$(S_1S_2)'=S_2'S_1'$, thus the map $S_1\mapsto S_1'$ is an anti-isomorphism
of ${\cal A}$. 

We implicitly use the standard known correspondence between idempotents 
in the endomorphism algebra of a module \cite{DrozdKirichenko}.
Instead of showing the  details of how to put our data $\mathbf{D_{f_2}}$
and ${\cal A}$ in this context, we are going to show the necessary facts
directly. Let $E$ be an idempotent matrix in ${\cal A}$ and consider
its counterpart $E'$. (We do not consider the zero matrix idempotent.)
Then, by the anti-isomorphic property, $E'$ is
also idempotent. Two idempotents $E$ and $F$ are called orthogonal
if $EF=FE=0$. An idempotent $E\in{\cal A}$ is called primitive if
it cannot be decomposed as a sum of two orthogonal idempotents. 
Note that an idempotent matrix is just a projection of the space
$V=\F_2^n$. Its kernel is $(I-E)V$. If $E$ is an idempotent in
$\cal A$ with counterpart $E'$, then it is straightforward to see
that \[E^T\mathbf{D_{f_2}}(I-E')=(I-E)^T\mathbf{D_{f_2}}E'=0,\]
which means that if $E\neq I$ then the two decompositions of $V$
into \[V=EV\oplus (I-E)V \text{ and }
V=E'V\oplus (I-E')V\] give a block decomposition of $\mathbf{D_{f_2}}$.
Conversely, one can show that pairs of projections corresponding to the block 
decomposition of $\mathbf{D_{f_2}}$ are (pairs of) idempotents of
${\cal A}$. Thus, minimal blocks correspond to primitive idempotents
and a decomposition of the identity matrix into a sum of pairwise
orthogonal primitive idempotents. 

We consider a Wedderburn complement ${\cal W}$ in ${\cal A}$,
a subalgebra isomorphic to ${\cal A}/\Rad({\cal A})$. For the existence of Wedderburn complements,
see, e.g., Section 6.2 of \cite{DrozdKirichenko}. Then ${\cal W}$ is
a direct sum of full matrix algebras over extensions of $\F_2$.
Let $E$ and $F$ be two primitive idempotents from the same simple
component of ${\cal W}$. Then, as the primitive idempotents
of a full matrix algebra are all similar to the matrix
whose entries are all zero except that in the upper left corner
which is one,
there exists an invertible matrix $M\in {\cal W}$ such that
$F=M^{-1}EM$. Then $$F^T\mathbf{D_{f_2}}F'=
M^TE^T{M^{-1}}^T\mathbf{D_{f_2}}M'E'{M'}^{-1}=
M^TE^T\mathbf{D_{f_2}}E'{M'}^{-1}.$$
(In the second equality we used that $M^T\mathbf{D_{f_2}}=
\mathbf{D_{f_2}}M'$.)
Thus the restrictions of $M$ and $M'^{-1}$ to the appropriate subspaces
give isomorphisms of the blocks corresponding to $E$ and $F$.
It follows that each simple component of ${\cal W}$
cannot contain more than $D$ pairwise orthogonal
primitive idempotents. This translates in terms of matrix algebra
dimensions as each simple component 
of ${\cal W}$ is isomorphic to $\M(d_i\times d_i,\F_{2^{m_i}})$ for 
some $d_i\leq D$ and for some positive integer $m_i$.
But ${\cal W}$ has a faithful
$n$-dimensional matrix representation over $\F_2$, from which we infer
$\sum_i d_im_i\leq n$. (Here we used the fact that the minimal
nonzero module for $\M(d_i\times d_i,\F_{2^{m_i}})$ has dimension
$d_im_i$ over $\F_2$). The dimension of 
${\cal W}\cong{\cal A}/\Rad{\cal A}$
over $\F_2$ is $\sum_id_i^2m_i\leq \sum Dd_i m_i\leq nD$.
\end{proof}

Here is a formal description of our algorithm.

\begin{framed}\vspace{-.3cm}\begin{flushleft}

\textbf{Infrequent Block Quadratic Form Tuple Isometry Algorithm}

\textbf{Input:} Two quadratic form tuples $\mathbf{f}_1$, $\mathbf{f}_2$.
\noindent 

\textbf{Output:}
Yes if $\mathbf{f}_1$ and $\mathbf{f}_2$ are isometric. No otherwise.

\begin{enumerate}

\item Find a pair $(S_0,S_0')$ of invertible matrices such that 
$\mathbf{D_{f_2}}=S_0^T\mathbf{D_{f_1}}S_0'$.
\begin{itemize}
\item Use e.g. the method described in Proposition 3.2 of
\cite{IQ19} to do this in polynomial time. 
\item Return no if no such pair exists.
\end{itemize}
\item Compute a basis of ${\cal A}$.
\begin{itemize}
\item Compute a basis of the space of pairs of matrices \\
\mbox{~~~~}${\cal A}'=\{(S_1,S_1')\in \M(n \times n,\F_2)^2:
S_1\mathbf{D_{f_2}}=\mathbf{D_{f_2}}S_1'\}$
\item[] By linearity of the condition, ${\cal A}'$ is indeed a vector
space and a basis for it can be computed by solving a system of homogeneous
linear equations. 
\item Take the first coordinate of each element of the computed basis
of ${\cal A}'$.
\end{itemize}
\item Compute a basis of $\Rad(\cal{A})$.
\begin{itemize}
\item[]
This can be efficiently done by. e.g., the method of R\'onyai~\cite{Ronyai90}.
\end{itemize}
\item Compute a system of generators for the group 
$R=I+\Rad(\cal{A})$
\begin{itemize}
\item[] Use e.g., the method of Lemma~\ref{lem:generators-unip}.
\end{itemize}
\item Compute a subspace ${\cal C}$ of ${\cal A}$ complementary to 
$\Rad(\cal{A})$
\begin{itemize}
\item Extend the basis of $\Rad({\cal A})$ to a basis of ${\cal A}$.
\item The elements of this basis not in $B$ are a basis $C$ of 
a complementary subspace ${\cal C}$. 
\end{itemize}
\item For each invertible matrix $Z$ in ${\cal C}$ 
\begin{itemize}
\item Run the
vector transporter algorithm of Fact~\ref{fact:unip-transp} for the action
$(X,{\mathbf{f}})\rightarrow \mathbf{f}^{X}$ of the group 
$R$, with initial vector $\mathbf{f_1}^{S_0Z}$ and
target $\mathbf{f_2}$.
The result is the set 
$\{X\in R:\mathbf{f_2}=\mathbf{f_1}^{S_0ZX}\}$.
\item If the set above is non-empty, return yes.
\end{itemize}
\item Return no.
\end{enumerate}
\end{flushleft}
\vspace{-.3cm}
\end{framed}

\begin{proof}[Proof of Theorem~\ref{thm:large_block}]
The correctness of the algorithm above follows from
Lemma~\ref{lem:in-adjoint}. Its running time is 
$|{\cal A}/\Rad({\cal A})|\cdot \poly(mn)$ and 
$|{\cal A}/\Rad({\cal A})|\leq 2^{nD}$ 
by
Lemma~\ref{lem:semisimple-size}.
\end{proof}

\section{Quadratic form tuple isometry algorithm}
In this section, we first present an algorithm to solve the following quadratic form tuple isometry problem, and then show that the quadratic form tuple constructed by the Quadratic Form Tuple Construct Algorithm in Section~\ref{sec:quadratic_form_tuple_construction} can be solved by this algorithm.

\begin{problem}\label{problem:final_tuple}
Given two quadratic form tuples $\mathbf{f}_1$ and $\mathbf{f}_2$ over $n$ variables, and representative blocks $\mathbf{B}_1, \dots, \mathbf{B}_\ell$ such that both $\mathbf{f}_1$ and $\mathbf{f}_2$ satisfy the $\rho$-non-zero-square-term condition with respect to any standard congruent block diagonalization of $\mathbf{D}_{\mathbf{f}_1}$ with respect to $\mathbf{B}_1, \dots, \mathbf{B}_\ell$, 
decide whether there exists an invertible matrix $S$ such that $\mathbf{f}_1^S = \mathbf{f}_2$. 
\end{problem}

\begin{theorem}\label{thm:quadratic_form_tuple_iso_algo}
There is an algorithm to solve Problem~\ref{problem:final_tuple} in time $2^{O(n^{1.5} + n\cdot \rho)}$.  
\end{theorem}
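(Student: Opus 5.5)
We will follow the strategy outlined in Section~\ref{sec:intro:techniques}, with $\Delta = 1/2$. First compute a minimum left-right block diagonalization of $\mathbf{D}_{\mathbf{f}_1}$ (Theorem~\ref{thm:left_right_decomposition_algorithm}). Call a representative block $\mathbf{B}_i$ \emph{frequent} if its multiplicity $n_i$ exceeds $n^{1/2}$, and \emph{infrequent} otherwise. Since the $n_i$ are sorted increasingly in Definition~\ref{def:standard_diagonalization}, there is an index $i_0$ such that $\mathbf{B}_1,\dots,\mathbf{B}_{i_0}$ are infrequent and $\mathbf{B}_{i_0+1},\dots,\mathbf{B}_\ell$ are frequent. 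A frequent block has at most $n^{1/2}$ rows and columns, because $n_i$ disjoint copies fit in $n$ dimensions. By Corollary~\ref{cor:standard_cong_block_diag_algo}, put both $\mathbf{f}_1$ and $\mathbf{f}_2$ into standard congruent block diagonalized form. The variables then split as $V' \oplus V''$, where $V''$ carries the congruent blocks from infrequent representatives and $V'$ those from frequent ones. Let $\mathbf{f}_j', \mathbf{f}_j''$ denote the induced tuples on these two parts.

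The key step is to enumerate a family $\{\mathbf{h}_1,\dots,\mathbf{h}_k\}$ of standard congruent block diagonalized tuples congruent to $\mathbf{f}_1$, with $k = 2^{O(n^{1.5}+n\rho)}$. We need this family to contain some $\mathbf{h}_i$ for which any isometry $\mathbf{h}_i \to \mathbf{f}_2$ can be taken to preserve the splitting. The hope is to adapt the local-operation argument of Theorem~\ref{thm:small_block}. Consider an arbitrary isometry $U$ with $\mathbf{f}_1^U=\mathbf{f}_2$, and project the columns of $U$ spanning the infrequent part onto the frequent blocks. By Theorem~\ref{thm:left_right_decomposition} (projections stay in $V_{\mathbf{B}}$) and Theorem~\ref{thm:congruence_block_diagonalization}, the cross terms are controlled by left-right homomorphisms between blocks of different types. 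Only the frequent blocks, each of size $\le n^{1/2}$, have to be re-chosen. So the cross part can be absorbed by modifying only the frequent coordinates.

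Counting as in Lemma~\ref{lem:small_block_diag_number}, the number of such modifications is at most $2^{\sum m_i^2}\cdot 2^{n\rho}\le 2^{O(n\cdot n^{1/2}+n\rho)}$. We generate the family by closing under local operations that touch one or two blocks with at least one frequent block, exactly as in the Frequent Block algorithm. After this adjustment, an isometry that is block upper triangular can be made block diagonal with respect to $V'\oplus V''$, since the square-term contributions are fixed by the $\rho$-condition. We expect this part, showing that the two parts can be decoupled after finitely many local operations, to be the main obstacle. The non-uniqueness of congruent block diagonalization in characteristic $2$ is exactly what makes it delicate, and square terms can also mix. We would try to handle the square terms with Lemma~\ref{lem:lemma_square_term}, guessing the $\le n\rho$ bits of square-term data as part of the enumeration.

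Given the family, for each $\mathbf{h}_i$ we test two things. The first is whether $\mathbf{h}_i' \cong \mathbf{f}_2'$, using Theorem~\ref{thm:small_block} with $r=n^{1/2}$; this costs $2^{O(n(n^{1/2}+\rho))}\poly(m)$. The second is whether $\mathbf{h}_i''\cong \mathbf{f}_2''$, using Theorem~\ref{thm:large_block} with $D=n^{1/2}$; this costs $2^{n^{1.5}}\poly(m,n)$. Non-degeneracy holds there because we assume non-degenerate tuples throughout. We answer yes iff some $i$ passes both tests. Soundness is immediate because the block-diagonal combination of the two isometries is an isometry. Completeness comes from the enumeration property. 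The total time is $2^{O(n^{1.5}+n\rho)}\cdot 2^{O(n^{1.5}+n\rho)}=2^{O(n^{1.5}+n\rho)}$, as claimed in Theorem~\ref{thm:quadratic_form_tuple_iso_algo}.
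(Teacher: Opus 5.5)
Your proposal has a genuine gap, located exactly at the step you flag as the main obstacle. You never show that the enumerated family contains an $\mathbf{h}_i$ with an isometry to $\mathbf{f}_2$ that respects $V''\oplus V'$, and the mechanism you sketch is inconsistent with your count.

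The bound $2^{\sum m_i^2}\cdot 2^{n\rho}\le 2^{O(n^{1.5}+n\rho)}$ holds only if every block being re-chosen has size $O(n^{1/2})$, i.e.\ only if the infrequent part is never modified. In characteristic $2$ it must in general be modified. A congruence can fix one block while replacing another by a non-congruent one; this is the phenomenon behind the example in Appendix~\ref{sec:appendix}. So $\mathbf{f}_1''$ and $\mathbf{f}_2''$ need not be isometric even when $\mathbf{f}_1\cong\mathbf{f}_2$, and your second test then fails for every $\mathbf{h}_i$ that leaves $V''$ alone. If instead you close under local operations pairing an infrequent block with a frequent one, the infrequent block may have $\Theta(n)$ rows. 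Then both the number of local operations and the $2^{\sum m_i^2}$ count become $2^{\Theta(n^2)}$. Either way you do not get completeness and the time bound at once.

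The paper fills this gap with three ingredients.
\begin{enumerate}
\item Lemma~\ref{lem:block_diagonal_invertible}: for any isometry $Q=\begin{pmatrix}Q_{11}&Q_{12}\\Q_{21}&Q_{22}\end{pmatrix}$ between the two standard forms, split as infrequent/frequent, the blocks $Q_{11}$ and $Q_{22}$ are invertible. The reason is that the two sides share no block type, so the cross projections lie in the relevant $K_{\mathbf{B}}$.
\item Lemma~\ref{lem:congruence_canceling}: the \emph{two-sided} transformation $I+\begin{pmatrix}0&F\\F'&0\end{pmatrix}$ with $\mathbf{D}_1F=(\mathbf{D}_iF')^T$ keeps the tuple block diagonal. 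The algorithm applies such moves one frequent block $\mathbf{D}_i$ at a time, over all pairs $(F,F')$, which is at most $2^{2d_1d_i}=2^{O(n^{1.5})}$ per step. This realizes $Q_{21}Q_{11}^{-1}$, so some $\mathbf{g}\in\Gamma_k$ is carried to $\mathbf{f}_2^{S_2}$ by a block-diagonal isometry.
\item The counting, which is the idea missing from yours. These moves change the infrequent part to $\mathbf{D}_1+\sum_j F_j'^T\mathbf{D}_jF_j'$. The number of such outcomes is bounded not by block sizes but by two facts: there are at most $n^{1/2}$ frequent types, and blocks of a common type are right-equivalent (Lemma~\ref{lem:two_blocks_right_multi}). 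This gives $2^{O(n^{1.5})}$ possible infrequent parts. Multiplying by $2^{\sum_{j\ge 2}d_j^2}$ for the frequent blocks and $2^{n\rho}$ for square terms yields $|\Gamma_i|\le 2^{O(n^{1.5}+n\rho)}$.
\end{enumerate}
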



\begin{lemma}\label{lem:matrix_kernal}
Let $\mathbf{A}$ be a matrix tuple and $U = \begin{pmatrix}U_1 & \cdots  & U_k \end{pmatrix}$, $W = \begin{pmatrix}W_1  & \dots  & W_k \end{pmatrix}$ be matrices such that $U^T \mathbf{A} W = \diag(U_1^T \mathbf{A} W_1, \dots, U_k^T \mathbf{A} W_k)$. 
Let $Q$ be a matrix such that $Q - I =  \begin{pmatrix}Q_1 & \cdots & Q_k \end{pmatrix}$ satisfying the following two conditions:
\begin{enumerate}
\item $Q_i$ is of the same dimension as $U_i$ for any $i \in [k]$.
\item $U Q_i\in K_{U_i^T \mathbf{A} W_i}$ for any $i \in [k]$, where $K_{U_i^T \mathbf{A} W_i}$ is defined as Theorem~\ref{thm:left_right_decomposition}.
\end{enumerate}
Then $Q$ is an invertible matrix such that there exists another invertible matrix $P$ satisfying \[Q^T U^T \mathbf{A} W P = U^T \mathbf{A} W.\]
\end{lemma}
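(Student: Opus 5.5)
We would first reduce to the case $U = I$ and $W = I$, so that $\mathbf{A}' := U^T\mathbf{A}W = \diag(\mathbf{A}_1,\dots,\mathbf{A}_k)$ with $\mathbf{A}_i = U_i^T\mathbf{A}W_i$. The condition $UQ_i \in K_{U_i^T\mathbf{A}W_i}$ then becomes a statement about $Q_i$ relative to the block diagonal tuple $\mathbf{A}'$: the columns of $Q_i$ are expressed in the basis $U$. The relevant objects are $\Gamma$, $V$ and $K$ for the block $\mathbf{A}_i$ of $\mathbf{A}'$, and these transform in the obvious way under the change of basis by $U$, $W$. The goal is then to find $P$ with $Q^T\mathbf{A}'P = \mathbf{A}'$.

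The key fact we would extract from Definition~\ref{def:block} and Theorem~\ref{thm:left_right_decomposition} is the following. The matrix $E_i$, consisting of the standard basis columns for block $i$, lies in $\Gamma_{\mathbf{A}_i}$. Since $K_{\mathbf{A}_i}$ is a linear subspace of $V_{\mathbf{A}_i}$ and $E_i + Q_i \in V_{\mathbf{A}_i}$, we get $E_i + Q_i \in \Gamma_{\mathbf{A}_i}$ provided $E_i + Q_i \notin K_{\mathbf{A}_i}$. This holds because $E_i \notin K$ and $K$ is closed under addition.

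We would then use the interpretation of $K$: elements $Z \in K_{\mathbf{A}_i}$ satisfy that $Z^T\mathbf{A}'$ has its rows in the row space of $\mathbf{A}_i$ composed with a singular map. More concretely, we plan to show that $Z^T\mathbf{A}' = Z^T\mathbf{A}' \Pi_i$, where $\Pi_i$ is the projection onto the $i$-th column block, and that $(E_i+Z)^T\mathbf{A}'Y_i = \mathbf{A}_i$ for some $Y_i$ with $E_i + Z$ spanning a complement. The simplest route is to use that $E_i + Q_i \in \Gamma_{\mathbf{A}_i}$: there is a $Y_i'$ with $(E_i+Q_i)^T\mathbf{A}'Y_i' = \mathbf{A}_i$, and this is a left-right block.

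Invertibility of $Q$ is the main obstacle. We would argue that $Q = I + (Q_1\ \cdots\ Q_k)$ is block unitriangular with respect to a suitable order on blocks. The natural attempt is to show that the projection of $Q_i$ onto block $i$ is a non-surjective/nilpotent-type map, and that $K$-elements cannot produce a cycle among the blocks. Concretely, we would show that if $Qv = 0$, then applying $\mathbf{A}'$ on the left by $Q^T$ contradicts non-degeneracy. If $Q$ is singular, pick $u \ne 0$ with $u^TQ^T = 0$, i.e.\ $Qu = 0$. Then $u^T(E_i+Q_i)^T\mathbf{A}' = 0$ for the $i$-components, which gives $u_i^T\mathbf{A}_i = 0$ through the block structure, contradicting non-degeneracy of $\mathbf{A}_i$.

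Given invertibility, we take $P$ to be the block matrix whose $i$-th column block is $Y_i'$, arranged so that $Q^T\mathbf{A}'P$ has diagonal blocks $\mathbf{A}_i$. The off-diagonal blocks $(E_i+Q_i)^T\mathbf{A}'Y_j'$ for $i \neq j$ must vanish. We would obtain this by choosing the $Y_j'$ as the dual pieces of the simultaneous decomposition: set $P := (\mathbf{A}')^{-1}$-type solution to $Q^T\mathbf{A}'P = \mathbf{A}'$, which is solvable because the row spaces of $Q^T\mathbf{A}'$ and $\mathbf{A}'$ coincide blockwise. For this last step, we would invoke part 2 of Theorem~\ref{thm:left_right_decomposition} to show that each row block of $Q^T\mathbf{A}'$ is $\mathbf{A}_i$ times an invertible column transform, and then deduce the invertibility of $P$ from non-degeneracy.
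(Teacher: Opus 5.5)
There is a genuine gap. Your derivation of $E_i+Q_i\in\Gamma_{\mathbf{A}_i}$ from $Q_i\in K_{\mathbf{A}_i}$ is correct. After that point, however, you use only this per-block membership, and it is strictly too weak to give the lemma.

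Here is a counterexample to what you actually use. Take $m=1$ and $\mathbf{A}'=I_2=\diag((1),(1))$ over $\F_2$. Then $e_1+e_2\in\Gamma_{\mathbf{A}_1}=\Gamma_{\mathbf{A}_2}$, with witness $Y=e_1$ and complements $e_2$ on the left and $e_1+e_2$ on the right. So $Q_1=e_2$, $Q_2=e_1$ give $E_1+Q_1=E_2+Q_2=e_1+e_2\in\Gamma$. Each row block of $Q^T\mathbf{A}'$ is $\mathbf{A}_i$ times the full-rank row $(1\ 1)$. Yet $Q=\begin{pmatrix}1&1\\1&1\end{pmatrix}$ is singular and no $P$ exists. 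What excludes this is that here $K_{\mathbf{A}_i}=\{0\}$, i.e.\ the $K$-hypothesis itself, which your argument no longer sees.

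Two steps fail concretely:
\begin{itemize}
\item Invertibility: $Qu=0$ only gives the single relation $\sum_i u_i^T(E_i+Q_i)^T\mathbf{A}'=0$. Its summands overlap in column block $i$ whenever some $Q_j$ has a component in block $i$, so nothing forces $u_i^T\mathbf{A}_i=0$ (in the example, $u=(1,1)^T$).
\item Existence of $P$: the claim that ``the row spaces of $Q^T\mathbf{A}'$ and $\mathbf{A}'$ coincide blockwise'' is asserted, not proved. Any $P$ with $Q^T\mathbf{A}'P=\mathbf{A}'$ already forces $Q$ and $P$ to be invertible by non-degeneracy, so that assertion is the entire content of the lemma.
\end{itemize}
Separately, your planned identity $Z^T\mathbf{A}'=Z^T\mathbf{A}'\Pi_i$ for $Z\in K_{\mathbf{A}_i}$ is false in general, because elements of $K_{\mathbf{A}_i}$ can be supported on other blocks.

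The paper does not treat all blocks simultaneously. It first assumes only one $Q_i$ is nonzero and splits $Q_i=F+G$. Here $F$ is the component of $Q_i$ inside block $i$, $G$ is the component in the other blocks, and (via Theorem~\ref{thm:left_right_decomposition}) both pieces remain in $K_{\mathbf{A}_i}$. Placing $F$ and $G$ in column block $i$ gives $F'$ and $G'$ with $F'G'=0$, hence $Q=(I+F')(I+G')$.
\begin{itemize}
\item The factor $I+F'$ is block diagonal. Since $E_i+F\in\Gamma_{\mathbf{A}_i}$, its one nontrivial block is absorbed by an invertible right factor.
\item The factor $I+G'$ is unipotent because $G'^2=0$. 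Moreover $G^T\mathbf{A}'$ lies in the other column blocks and factors through $\mathbf{A}_i$, so a right column operation cancels it.
\end{itemize}
The general case is then built up one block at a time. This factorization into pieces that are visibly invertible, each with an explicit compensating right factor, is the idea your proposal is missing.
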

\begin{proof}
We prove the case that $Q_1$ is a non-zero matrix, and $Q_2, \dots, Q_k$ are all zero matrices. 
Then lemma holds for the case that only one $Q_i$ is non-zero among all the $Q_1, \dots, Q_k$. 
The general case can be obtained by iteratively applying the special case to each non-zero $Q_i$. 

$Q_1$ can be uniquely represented as $F + G$ satisfying \[\begin{split}Q_1^T U^T \mathbf{A} W = & (F^T + G^T) U^T \mathbf{A} W \\ = & F^T \begin{pmatrix} U_1^T
\mathbf{A}W_1 & 0 & 0 & 0 \\
0 & 0 \\
0 & & \cdots \\
0 & & & 0\end{pmatrix} + G^T \begin{pmatrix} 
0 & 0 & 0 & 0\\ 
0 & U_2^T  \mathbf{A}W_2 & & \\ 
0 & & \dots & \\
0 & & & U_k^T \mathbf{A}W_k \end{pmatrix}.\end{split}\]
By Theorem~\ref{thm:left_right_decomposition} and the second condition of the lemma, both $F^T U^T \mathbf{A} W$ and $G^T U^T \mathbf{A} W$ are in $K_{U_1^T \mathbf{A} W_1}$.
Let $Q' = I + \begin{pmatrix}F & 0 & \cdots & 0\end{pmatrix}$. 
We have 
\[Q'^T U^T \mathbf{A} W  = \diag(U_1^T \mathbf{A} W_1 + F^T U^T \mathbf{A} W, \dots, U_k^T \mathbf{A} W_k).\]
Since $U_1^T \mathbf{A} W_1$ is in $W_{U_1^T \mathbf{A} W_1}$, by Theorem~\ref{thm:left_right_decomposition}, \[U_1^T \mathbf{A} W_1 + F^T U^T \mathbf{A} W\] is also in  $W_{U_1^T \mathbf{A} W_1}$. 
Hence, there exists an invertible $P'$ such that \[Q'^T U^T \mathbf{A} W P' = U^T \mathbf{A} W.\]

Similarly, let $Q'' = I + \begin{pmatrix}G & 0 & \cdots & 0\end{pmatrix}$.
There exists another invertible matrix $P''$ such that 
\[Q''^T U^T \mathbf{A} W P'' = U^T \mathbf{A} W.\] 
Notice that 
\[Q = I +  \begin{pmatrix}Q_1 &  0 & \cdots & 0  \end{pmatrix} = \begin{pmatrix}I + \begin{pmatrix}F & 0 & \cdots & 0\end{pmatrix} \end{pmatrix} \begin{pmatrix}I + \begin{pmatrix}G & 0 & \cdots & 0\end{pmatrix}\end{pmatrix}.\] The lemma follows.
\end{proof}
\begin{lemma}\label{lem:congruence_canceling}
Let $\mathbf{A} = \diag(\mathbf{A}_1, \mathbf{A}_2)$ be an alternating matrix tuple such that every minimum left-right block of $\mathbf{A}_1$ is not a minimum left-right block of $\mathbf{A}_2$. 
Let $n_1$ and $n_2$ denote the dimensions of $\mathbf{A}_1$ and $\mathbf{A}_2$ respectively. 
Let $F$ and $F'$ be arbitrary $n_1 \times n_2$ and $n_2 \times n_1$ matrices respectively such that $\mathbf{A}_1 F =  (\mathbf{A}_2 F')^T$.
Then the matrix 
\[\begin{pmatrix}I & F \\ F' & I\end{pmatrix}\] is invertible such that
\[\begin{pmatrix}I & F \\ F' & I\end{pmatrix}^T \mathbf{A} \begin{pmatrix}I & F \\ F' & I\end{pmatrix} = \diag(\mathbf{A}_1', \mathbf{A}_2')\]
for some $n_1 \times n_1$ $\mathbf{A}_1' = \mathbf{A}_1 W_1$ and $n_2\times n_2$ $\mathbf{A}_2' = \mathbf{A}_2 W_2$ with some invertible matrices $W_1$ and $W_2$. 
\end{lemma}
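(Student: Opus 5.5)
We compute the product directly and split it into blocks. Write $T = \begin{pmatrix}I & F \\ F' & I\end{pmatrix}$. Then
\[
T^T \mathbf{A} T = \begin{pmatrix} \mathbf{A}_1 + F'^T \mathbf{A}_2 F' & \mathbf{A}_1 F + F'^T \mathbf{A}_2 \\ F^T \mathbf{A}_1 + \mathbf{A}_2 F' & F^T \mathbf{A}_1 F + \mathbf{A}_2 \end{pmatrix}.
\]
Since every matrix in $\mathbf{A}_2$ is symmetric (we are over $\F_2$, alternating), the hypothesis $\mathbf{A}_1 F = (\mathbf{A}_2 F')^T = F'^T \mathbf{A}_2$ gives that the upper-right block vanishes. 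Taking transposes, the lower-left block vanishes as well. So $T^T \mathbf{A} T = \diag(\mathbf{A}_1', \mathbf{A}_2')$ with
\[
\mathbf{A}_1' = \mathbf{A}_1 + F'^T\mathbf{A}_2F' = \mathbf{A}_1(I + FF'), \qquad \mathbf{A}_2' = \mathbf{A}_2 + F^T\mathbf{A}_1F = \mathbf{A}_2(I + F'F).
\]
Here we again used $F'^T\mathbf{A}_2 = \mathbf{A}_1F$ and $F^T\mathbf{A}_1 = (\mathbf{A}_1F)^T = \mathbf{A}_2F'$. It therefore suffices to prove that $I + FF'$ and $I + F'F$ are invertible. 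By the block determinant identity, this is equivalent to invertibility of $T$, since $\det T = \det(I - F'F) = \det(I + F'F)$ over $\F_2$. We may then set $W_1 = I + FF'$ and $W_2 = I + F'F$.

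The main step is to show that $FF'$ (equivalently $F'F$) is nilpotent, using the hypothesis that $\mathbf{A}_1$ and $\mathbf{A}_2$ share no minimum left-right block. Iterating the relation gives
\[
\mathbf{A}_1 (FF')^k = F'^T\mathbf{A}_2 F' (FF')^{k-1} = \cdots,
\]
so the pair $(F'^T, F)$ intertwines the left-right module structures: $\mathbf{A}_1 F = F'^T \mathbf{A}_2$ says $F$ and $F'$ define a homomorphism between the left-right modules associated with $\mathbf{A}_2$ and $\mathbf{A}_1$. Symmetrically, transposing gives $\mathbf{A}_2F' = F^T\mathbf{A}_1$, a homomorphism in the other direction. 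The composite then lies in the endomorphism algebra of $\mathbf{A}_1$, namely the pairs $(X, X')$ with $X^T\mathbf{A}_1 = \mathbf{A}_1X'$, as in the algebra $\mathcal{A}$ of Lemma~\ref{lem:semisimple-size}.

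In a Krull–Schmidt setting, a composite of homomorphisms $M_1 \to M_2 \to M_1$ between modules with no common indecomposable summand lies in the radical of $\mathrm{End}(M_1)$. I would prove this via Fitting's lemma. If the composite $\phi = FF'$ (paired with $F'F$ on the other side) is not nilpotent, take a large power $\phi^k$. Its image and kernel split $\mathbf{A}_1$ as a direct sum of left-right blocks, on the image of which $\phi^k$ is an automorphism. Then the image block of $\mathbf{A}_1$ is a direct summand of $\mathbf{A}_2$ via $F'$, and a retraction exists through $F$. By Krull–Schmidt uniqueness, i.e.\ uniqueness of minimum left-right decompositions as in Theorem~\ref{thm:left_right_decomposition}, $\mathbf{A}_1$ and $\mathbf{A}_2$ would share a minimum left-right block, contradicting the hypothesis. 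Hence $FF'$ is nilpotent, so $I+FF'$ is invertible. The same argument applies to $F'F$.

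The delicate point is making this Fitting-lemma argument concrete with the paper's block definitions, which are phrased in terms of pairs $(X, Y)$. I would translate the relations into statements about $\F_2[x_1,\dots,x_m]$-modules, or equivalently representations of the generalized Kronecker quiver, and then invoke standard Krull–Schmidt theory.
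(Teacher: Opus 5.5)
Your proof is correct, but it gets invertibility by a different route than the paper.

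Both arguments start from the same block computation showing that the off-diagonal blocks of $T^T\mathbf{A}T$ vanish. From there the two proofs diverge.

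\textbf{The paper's route.} It first reduces to the case where $\mathbf{A}_1,\mathbf{A}_2$ are in standard congruent block diagonal form. It then obtains invertibility of $T$ from Lemma~\ref{lem:matrix_kernal}. The off-diagonal column blocks of $T-I$ are meant to lie in the spaces $K_{\mathbf{B}}$ of Theorem~\ref{thm:left_right_decomposition}, which is where the no-common-block hypothesis is used. Hence $T$ factors into transformations preserving the left-right orbit. The shape $\mathbf{A}_i'=\mathbf{A}_iW_i$ is only implicit there; it comes from the matrix $P$ of that lemma together with non-degeneracy.

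\textbf{Your route.} You use the hypothesis once more to get the explicit factorizations $\mathbf{A}_1'=\mathbf{A}_1(I+FF')$ and $\mathbf{A}_2'=\mathbf{A}_2(I+F'F)$. The Schur complement identity $\det T=\det(I+F'F)$ then reduces the whole lemma to nilpotency of $FF'$. You obtain that from Fitting's lemma plus Krull--Schmidt, viewing $(F',F^T)$ and $(F,F'^T)$ as morphisms $\mathbf{A}_1\to\mathbf{A}_2\to\mathbf{A}_1$ of Kronecker-quiver representations.

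The Fitting step is sound. Let $\theta$ be the composite, $N=\mathrm{Im}\,\theta^k$, and $\pi_N$ the Fitting projection. Then $(\theta^k|_N)^{-1}\pi_N\theta^{k-1}\circ(F,F'^T)$ is an explicit retraction of $(F',F^T)|_N$.

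\textbf{Comparison.} Your argument is basis-free, so no reduction to standard form is needed. It produces $W_1,W_2$ explicitly and isolates exactly where the hypothesis enters. The paper's argument stays within its own block vocabulary and reuses the $V_{\mathbf{B}},K_{\mathbf{B}}$ machinery it needs elsewhere.

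\textbf{Two small points.} First, Theorem~\ref{thm:left_right_decomposition} does not state Krull--Schmidt uniqueness. Cite instead the standard Krull--Schmidt theorem for finite-dimensional representations of the $m$-arrow Kronecker quiver, which is the clean form of your translation. Second, it is the paper's standing non-degeneracy assumption that identifies indecomposable summands there with the paper's minimum left-right blocks. That identification is what guarantees the common summand you produce actually contradicts the hypothesis.
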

\begin{proof}
We first consider the case where $\mathbf{A}_1$ and $\mathbf{A_2}$ are in their standard congruent block diagonalization as defined by Definition~\ref{def:standard_diagonalization}. 
By $\mathbf{A}_1 F =  (\mathbf{A}_2 F')^T$, we have 
\[\begin{split}
& \begin{pmatrix}I & F \\ F' & I\end{pmatrix}^T \mathbf{A} \begin{pmatrix}I & F \\ F' & I\end{pmatrix} \\
= & \begin{pmatrix}I & F \\ F' & I\end{pmatrix}^T \begin{pmatrix} \mathbf{A}_1 & \\ & \mathbf{A}_2 \end{pmatrix} \begin{pmatrix}I & F \\ F' & I\end{pmatrix}\\
= & \begin{pmatrix} \mathbf{A_1} + F'^T \mathbf{A_2} F' & \mathbf{A}_1 F  + F'^T \mathbf{A_2} \\ F^T \mathbf{A_1} + \mathbf{A}_2 F'& F^T \mathbf{A_1} F + \mathbf{A_2}\end{pmatrix} \\
= & \diag( \mathbf{A_1} + F'^T \mathbf{A_2} F', F^T \mathbf{A_1} F + \mathbf{A_2}). 
\end{split}
\]
By Lemma~\ref{lem:standard_diagonalization_1}, there are invertible matrices $Z_1$ and $Z_2$ such that 
$\mathbf{A}_1 Z_1 = \diag(\mathbf{B}_1, \dots, \mathbf{B}_k)$ and $\mathbf{A}_2 Z_2 = \diag(\mathbf{C}_1, \dots, \mathbf{C}_\ell)$ such that 
$\mathbf{G}_1, \dots, \mathbf{G}_g$ and $\mathbf{H}_1, \dots, \mathbf{H}_h$ are minimum left-right blocks for $\mathbf{A}_1$ and $\mathbf{A}_2$ respectively.
By $\mathbf{A}_1 F =  (\mathbf{A}_2 F')^T$ and Lemma~\ref{lem:matrix_kernal} with $U = I$, $W = \diag(Z_1, Z_2)$, 
and \[Q = I + \begin{pmatrix}& F \\ F' & \end{pmatrix},\]  
the matrix \[\begin{pmatrix}I & F \\ F' & I\end{pmatrix}\]is invertible. 

For the general case, by Corollary~\ref{cor:standard_block_diagonalization}, there are invertible matrices $U_1$ and $U_2$ such that $U_1^T \mathbf{A}_1 U_1$ and $U_2^T \mathbf{A}_2 U_2$ are standard congruent block diagonalizations for $\mathbf{A}_1$ and $\mathbf{A}_2$ respectively.
Since $\mathbf{A}_1 F = (\mathbf{A}_2 F')^T$, 
we have 
\[U_1^T \mathbf{A}_1 U_1 U_1^{-1} F = U_1^T (\mathbf{A}_2 F')^T = U_1^T (U_2^{-T} U_2^T \mathbf{A}_2 U_2 U_2^{-1} F')^T = U_1^T F'^T U_2^{-T} (U_2^T \mathbf{A}_2 U_2
)^T U_2^{-1},\] which implies 
$(U_1^T\mathbf{A}_1 U_1) (U_1^{-1}F U_2) = ((U_2^T \mathbf{A}_2 U_2) (U_2^{-1} F' U_1))^T$. By the first case, we have
\[\begin{pmatrix}I & U_1^{-1}F U_2 \\ U_2^{-1} F' U_1 & I\end{pmatrix}^T \begin{pmatrix}U_1 & \\ & U_2 \end{pmatrix}^T\mathbf{A}  \begin{pmatrix}U_1 & \\ & U_2 \end{pmatrix}\begin{pmatrix}I & U_1^{-1}F U_2 \\ U_2^{-1} F' U_1 & I\end{pmatrix} = \diag(\mathbf{A}_1'', \mathbf{A}_2'')\]
Consequently, let 
\[S =  \begin{pmatrix}U_1 & \\ & U_2 \end{pmatrix}\begin{pmatrix}I & U_1^{-1}F U_2 \\ U_2^{-1} F' U_1 & I\end{pmatrix} \begin{pmatrix}U_1^{-1} \\ & U_2^{-1}\end{pmatrix}.\]
We have 
\[ S^T \mathbf{A}  S 
=  \begin{pmatrix} I & F \\ F' & I\end{pmatrix}^T \mathbf{A} \begin{pmatrix} I & F \\ F' & I\end{pmatrix} 
= \diag(U_1^{-T}\mathbf{A}_1''U_1^{-1}, U_2^{-T}\mathbf{A}_2''U_2^{-1}).\]
Then the lemma follows.
\end{proof}

\begin{lemma}\label{lem:block_diagonal_invertible}
Let $\mathbf{A}$ be an alternating matrix tuple, and let $\Lambda$ be a collection of minimum left-right blocks of $\mathbf{A}$ such that no two blocks in $\Lambda$ are left-right equivalent and any minimum left-right block of $\mathbf{A}$ is equivalent to a block in $\Lambda$. 
Let $\Lambda_1$ and $\Lambda_2$ be a partition of $\Lambda$ such that for any $\mathbf{B} \in \Lambda_1$, there is a block left-right equivalent to $\mathbf{B}^T$ in $\Lambda_1$. 
Let  $\mathbf{A}'= \diag(\mathbf{A}_1', \mathbf{A}_2')$ and $\mathbf{A}''= \diag(\mathbf{A}_1'', \mathbf{A}_2'')$  be two alternating matrix tuples satisfying the following conditions:
\begin{enumerate}
    \item Both $\mathbf{A}'$ and $\mathbf{A}''$ are congruent to $\mathbf{A}$. 
    \item For any $i \in \{1, 2\}$, any minimum left-right block of either $\mathbf{A}_i'$ or $\mathbf{A}_i''$ is left-right equivalent to a block in $\Lambda_i$. 
\end{enumerate}
Let $n_1$ and $n_2$ denote the number of rows/columns of $\mathbf{A}_1'$ and $\mathbf{A}_2'$ respectively. 
Then $\mathbf{A}_1''$ and $\mathbf{A}_2''$ have $n_1$ and $n_2$ rows/columns respectively. 
And for any invertible matrix \[U = \begin{pmatrix} U_{1, 1} & U_{1,2} \\ U_{2,1} & U_{2,2} \end{pmatrix}\]
with $U_{1, 1} \in \M(n_1 \times n_1, \F_2)$, $U_{1, 2} \in \M(n_1 \times n_2, \F_2)$, $U_{2, 1} \in \M(n_2 \times n_1, \F_2)$, and $U_{2, 2} \in \M(n_2 \times n_2, \F_2)$ such that $U^T \mathbf{A}' U = \mathbf{A}''$, 
$U_{1, 1}$ and $U_{2,2}$ are invertible. 
\end{lemma}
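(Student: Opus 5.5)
The plan is to derive both conclusions from how minimum left-right blocks split under the block decomposition $\diag(\mathbf{A}_i', \mathbf{A}_i'')$, using Theorem~\ref{thm:left_right_decomposition} together with Krull--Schmidt-type uniqueness of the multiset of minimum left-right blocks. I read the hypothesis on the partition as: $\Lambda_1$ and $\Lambda_2$ are each closed under taking transposes up to left-right equivalence. This is needed because $\mathbf{A}_i'$ is alternating, so its minimum left-right blocks come in transpose pairs by Theorem~\ref{thm:congruence_block_diagonalization}.

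\textbf{Dimension claim.} A minimum left-right block diagonalization of $\mathbf{A}'$ is obtained by concatenating ones for $\mathbf{A}_1'$ and $\mathbf{A}_2'$. Since $\mathbf{A}'$ and $\mathbf{A}''$ are congruent to $\mathbf{A}$, they are left-right equivalent to it, so they have the same multiset of minimum left-right blocks up to left-right equivalence. This uniqueness is the content of the canonical form in \cite{qiao2024canonical} (Theorem~\ref{thm:left_right_decomposition_algorithm}). Blocks equivalent to members of $\Lambda_1$ must sit in $\mathbf{A}_1'$, and likewise in $\mathbf{A}_1''$, because $\Lambda_1\cap\Lambda_2=\emptyset$ and no two members of $\Lambda$ are equivalent. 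Summing block sizes therefore gives that $\mathbf{A}_1''$ has $n_1$ rows and columns, and similarly for $\mathbf{A}_2''$.

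\textbf{Invertibility of $U_{1,1}$.} I will show that no nonzero $v\in\F_2^{n_1}$ has $U_{1,1}v=0$.
\begin{itemize}
\item Write $U^T\mathbf{A}'U=\mathbf{A}''$ as $\mathbf{A}'U=U^{-T}\mathbf{A}''$. Comparing the first block column gives
\[
\mathbf{A}'\begin{pmatrix}U_{1,1}\\ U_{2,1}\end{pmatrix}=U^{-T}\begin{pmatrix}\mathbf{A}_1''\\ 0\end{pmatrix}.
\]
\item Take the first $n_1$ rows (the $\mathbf{A}_1'$ part): $\mathbf{A}_1'U_{1,1}=V_{1}\mathbf{A}_1''$, where $V_1$ is the top-left block of $U^{-T}$.
\item Take the last $n_2$ rows: $\mathbf{A}_2'U_{2,1}=V_2\mathbf{A}_1''$, where $V_2$ is the bottom-left block of $U^{-T}$.
\end{itemize}
The key step is to show that $\mathbf{A}_2'U_{2,1}$ cannot carry an $\mathbf{A}_1''$-part. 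Suppose $U_{1,1}v=0$ for some $v\neq 0$. Then $U$ maps $(v,0)^T$ to $(0,U_{2,1}v)^T$, so the column $(v,0)$ of the $\mathbf{A}''$ side is sent entirely into the $\mathbf{A}_2'$ coordinates. Theorem~\ref{thm:left_right_decomposition}(2) says that projections of a minimum-block vector onto other blocks lie in $V_{\mathbf{B}}$, where $\mathbf{B}$ is the block of $\mathbf{A}_1''$ that $v$ meets nontrivially with a non-$K_{\mathbf{B}}$ component. By this, the minimum left-right block of $\mathbf{A}''$ through $v$ would have to be realized, with its nondegenerate part, inside $\mathbf{A}_2'$. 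That would make a block of type $\Lambda_1$ appear in $\mathbf{A}_2'$, contradicting $\Lambda_1\cap\Lambda_2=\emptyset$. Non-degeneracy of the tuples, the standing assumption, rules out the degenerate case where the component lies in $K_{\mathbf{B}}$ for every block.

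\textbf{Main obstacle.} The hard step is making this precise when $v$ meets several blocks and its components lie in $K_{\mathbf{B}}$. I would first try an induction on the number of blocks, using Lemma~\ref{lem:matrix_kernal} to strip off $K$-components by a transformation that preserves the block structure, until $v$ is in $V_{\mathbf{B}}\setminus K_{\mathbf{B}}$ for a single block. The argument for $U_{2,2}$ is symmetric, exchanging the roles of the indices $1$ and $2$.
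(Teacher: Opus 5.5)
Your dimension count is fine (it is Krull--Schmidt for the left-right action, which the paper's proof takes for granted). The invertibility part, however, has a genuine gap exactly at your ``main obstacle''. The sketched way around it cannot work, because it is set up at the level of a single vector.

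\textbf{Why the vector-level argument fails.} The sets $V_{\mathbf{B}}$, $\Gamma_{\mathbf{B}}$, $K_{\mathbf{B}}$ of Theorem~\ref{thm:left_right_decomposition} consist of $n\times n'$ matrices realizing an entire block. So ``the block of $\mathbf{A}_1''$ that $v$ meets with a non-$K_{\mathbf{B}}$ component'' is not defined for one vector $v\in\ker U_{1,1}$. Nor does one vector $U\binom{v}{0}$ landing in the $\mathbf{A}_2'$-coordinates force any $\Lambda_1$-type block to be realized inside $\mathbf{A}_2'$. The pair $(\ker U_{1,1},\ker V_1)$ is a priori only a sub-representation of $\mathbf{A}_1''$, not a direct summand, so there is no local contradiction with $\Lambda_1\cap\Lambda_2=\emptyset$.

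Your appeal to non-degeneracy is also backwards. The $\mathbf{A}_2'$-components of the blocks of $\mathbf{A}_1''$ are \emph{forced} into $K$, and they need not vanish: the congruences $\begin{pmatrix}I&F\\F'&I\end{pmatrix}$ of Lemma~\ref{lem:congruence_canceling} have $U_{2,1}=F'$. So $\mathbf{A}_2'U_{2,1}=V_2\mathbf{A}_1''$ is in general nonzero.

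Finally, even knowing that each block column of $U_{1,1}$ separately realizes its block would not exclude a kernel vector mixing several blocks of the same type. No block-by-block induction on $v$ can close the argument; the invertibility of $U$ must be used globally.

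\textbf{How the paper closes it.} The paper works with whole block columns. Write $\mathbf{A}_1''W=\diag(\mathbf{C}_1,\dots,\mathbf{C}_k)$ and split $U_{2,1}=(U_{2,1,1}\ \cdots\ U_{2,1,k})$ accordingly. Theorem~\ref{thm:left_right_decomposition} puts $\binom{0}{U_{2,1,i}}$ in $V_{\mathbf{C}_i}$. It must lie in $K_{\mathbf{C}_i}$, since otherwise $\mathbf{C}_i$ would be a block of $\mathbf{A}_2'$. Lemma~\ref{lem:congruence_canceling} is then used to cancel these $K$-parts. This gives invertible $Q,P$ and the block upper-triangular $\tilde U=\begin{pmatrix}U_{1,1}&U_{1,2}\\0&U_{2,2}\end{pmatrix}$ with $\tilde U^T\mathbf{A}'UP=\mathbf{A}''$. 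A nonzero $w$ with $\tilde Uw=0$ would give $w^T\mathbf{A}''=0$, so $\tilde U$, and hence $U_{1,1}$ and $U_{2,2}$, are invertible.

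\textbf{A cleaner version from your own relations.} Your equations $\mathbf{A}_1'U_{1,1}=V_1\mathbf{A}_1''$ and $\mathbf{A}_2'U_{2,1}=V_2\mathbf{A}_1''$ say that $\phi_{11}=(U_{1,1},V_1)$ and $\phi_{21}=(U_{2,1},V_2)$ are components of the Kronecker-module isomorphism $\phi=(U,U^{-T})$ from $\mathbf{A}''=\mathbf{A}_1''\oplus\mathbf{A}_2''$ to $\mathbf{A}'=\mathbf{A}_1'\oplus\mathbf{A}_2'$.
\begin{enumerate}
\item Fix isomorphisms $\psi_i$ from $\mathbf{A}_i'$ to $\mathbf{A}_i''$; they exist by your multiset argument.
\item The maps $\phi_{12},\phi_{21}$ go between modules with no common indecomposable summand, so they lie in the radical of the category.
\item Hence the unit $\diag(\psi_1,\psi_2)\phi$ of $\mathrm{End}(\mathbf{A}'')$ differs from $\diag(\psi_1\phi_{11},\psi_2\phi_{22})$ by an element of the Jacobson radical, so the latter is also a unit.
\item Therefore $\phi_{11}$ is an isomorphism, and its domain component $U_{1,1}$ is invertible. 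The same argument gives $U_{2,2}$.
\end{enumerate}
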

\begin{proof}
Without loss of generality, we can assume that there exists an invertible matrix $W$ such that $ \mathbf{A}_1'' W = \diag(\mathbf{C}_1, \dots, \mathbf{C}_k)$ is a minimum left-right block diagonalization of $\mathbf{A}_1''$. 
Denote 
\[U_{2, 1} = \begin{pmatrix} U_{2, 1, 1} & \dots & U_{2, 1, k}\end{pmatrix}\] such that $U_{2, 1, i}$ corresponds to $\mathbf{C}_i$ within $U_{2, 1}$. 
By Theorem~\ref{thm:left_right_decomposition}, \[\begin{pmatrix} 0 &  U_{2, 1, i}\end{pmatrix}\] is an element in $K_{\mathbf{C}_i}$ with respect to $\mathbf{A}'$ for any $i \in [k]$, because otherwise $\mathbf{C}_i$ is a left-right block of $\mathbf{A}_2'$, contradicting the condition that $\mathbf{A}_{2}'$ does not contain any minimum left-right block equivalent to $\Lambda_1$. 

By Theorem~\ref{thm:left_right_decomposition} and Lemma~\ref{lem:congruence_canceling}, 
there are invertible matrices $Q$ and $P$ such that 
\[Q U^T \mathbf{A}' =  \begin{pmatrix}U_{1, 1} & U_{1, 2}\\ & U_{2,2}\end{pmatrix}^T \mathbf{A}'\]
and \[Q U^T\mathbf{A}' U P = \mathbf{A}''.\] 
Therefore, $U_{1, 1}$ is an invertible matrix. 
The case for $U_{2, 2}$ is similar. 
\end{proof}

We have the following algorithm for the Problem~\ref{problem:final_tuple}:

\begin{framed}\vspace{-.3cm}\begin{flushleft}
\textbf{Non-Zero-Square-Term Condition Quadratic Form Tuple Isometry Algorithm}

\textbf{Input:} Two quadratic form tuples $\mathbf{f}_1$ and $\mathbf{f}_2$, representative blocks $\mathbf{B}_1, \dots, \mathbf{B}_\ell$ for $\mathbf{D}_{\mathbf{f}_1}$, and two parameters $0 \leq \rho \leq n$ and $0 < \Delta < 1$.

\textbf{Output:}
Yes or no. 
\begin{enumerate}
\item Compute canonical forms for $\mathbf{D}_{\mathbf{f}_1}$ and $\mathbf{D}_{\mathbf{f}_2}$ under the left-right action. Return no if the canonical forms are different. 
For each $i \in \{1, 2\}$, run the Standard Congruent Block Diagonalization Algorithm for $\mathbf{D}_{\mathbf{f}_i}$ with respect to $\mathbf{B}_1, \dots, \mathbf{B}_\ell$, and let $S_i$ be the output matrix. 
\item Let $r \in \{0, 1, \dots, \ell\}$ be the largest integer such that, in any minimum left-right block diagonalization of $\mathbf{D}_{\mathbf{f}_1}$, at most $n^{ 1- \Delta}$ minimum left-right blocks are left-right equivalent to $\mathbf{B}_1, \dots, \mathbf{B}_r$. 
If $r = 0$, then run the Frequent Block Quadratic Form Tuple Isometry Algorithm for $\mathbf{f}_1$ and $\mathbf{f}_2$, and return the output. 
If $r = \ell$, then run the Infrequent Block Quadratic Form Tuple Isometry Algorithm for $\mathbf{f}_1$ and $\mathbf{f}_2$, and return the output. 

\item Let $\mathbf{f} = \mathbf{f}_1^{S_1}$ and $\Gamma_1 = \{\mathbf{f}\}$. Denote  $\mathbf{D}_{\mathbf{f}}$ as $\diag(\mathbf{D}_{1}, \dots, \mathbf{D}_{k})$ with $\mathbf{
D}_{1}$ being the unions of standard congruence blocks obtained from minimum left-right blocks same to one of $\mathbf{B}_1, \dots, \mathbf{B}_r$, and $\mathbf{D}_{2}, \dots, \mathbf{D}_{k}$ being the rest standard congruent blocks in $\mathbf{D}_{\mathbf{f}}$.
Let $d_1, d_2, \dots d_{k}$ be the number of rows/columns of $\mathbf{D}_{1}, \dots \mathbf{D}_{k}$. 
\item For $i = 2, \dots, k$, an arbitrary $\mathbf{g} \in \Gamma_{i-1}$
with $\mathbf{D}_\mathbf{g}  = \diag(\mathbf{D}_1', \dots, \mathbf{D}_{k}')$ such that $\mathbf{
D}_{1}'$ is the unions of standard congruence blocks obtained from minimum left-right blocks same to one of $\mathbf{B}_1, \dots, \mathbf{B}_r$, and $\mathbf{D}_{2}', \dots, \mathbf{D}_{k}'$ are the remaining standard congruent blocks in $\mathbf{D}_{\mathbf{g}}$,
and arbitrary matrices $F$ and $F'$ such that $\mathbf{D}_1' F =  (\mathbf{D}_i' F')^T$, 
add $\mathbf{g}^S$ to $\Gamma_i$ with 
\[S=I+\begin{pmatrix}
\dots & F & \dots \\
\vdots &  & \vdots \\
F' \\
\vdots &  & \vdots 
\end{pmatrix},
\]
in which the submatrix induced by the first $d_1$ rows and the columns from $((\sum_{j=1}^{i-1} d_j) +1)$ to $(\sum_{j=1}^{i} d_j)$ is $F$,
the submatrix induced by the rows from  $((\sum_{j=1}^{i-1} d_j) +1)$ to $(\sum_{j=1}^{i} d_j)$ and the first $d_1$ columns is $F'$, and all the other entries are zero.

\item For every $\mathbf{g} \in \Gamma_{k}$, 
run the Infrequent Block Quadratic Form Tuple Isometry Algorithm for $\mathbf{g}$ and $\mathbf{f}_2^{S_2}$ induced on the first $d_1$ variables,
and the Frequent Block Quadratic Form Tuple Isometry Algorithm for $\mathbf{g}$ and $\mathbf{f}_2^{S_2}$ induced on the last $n - d_1$ variables. If both outputs are yes, then return yes. 
\item Return no.  
\end{enumerate}
\end{flushleft}
\vspace{-.3cm}\end{framed}

\begin{proof}[Proof of Theorem~\ref{thm:quadratic_form_tuple_iso_algo}]
We first show that the Non-Zero-Square-Term Condition Quadratic Form Tuple Isometry Algorithm solves the Problem~\ref{problem:final_tuple} correctly. 

\begin{lemma}\label{lem:quadratic_form_tuple_iso_algo_correctness}
The Non-Zero-Square-Term Condition Quadratic Form Tuple Isometry Algorithm solves the Problem~\ref{problem:final_tuple} correctly. 
\end{lemma}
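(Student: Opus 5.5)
Soundness is the easy direction. Every candidate tested in Step~5 has the form $\mathbf{g}=\mathbf{f}_1^{S_1S}$ for a product $S$ of the invertible matrices built in Step~4; these are invertible by Lemma~\ref{lem:congruence_canceling}. By that same lemma, $\mathbf{D}_\mathbf{g}$ is block diagonal, with a first block of size $d_1$ and a complementary block. So if the two sub-instances in Step~5 return isometries $T_1$ (on the first $d_1$ variables) and $T_2$ (on the last $n-d_1$ variables), then $\diag(T_1,T_2)$ maps $\mathbf{g}$ to $\mathbf{f}_2^{S_2}$. This needs one check: since $\mathbf{D}_\mathbf{g}$ and $\mathbf{D}_{\mathbf{f}_2^{S_2}}$ have no off-diagonal block cross terms, each quadratic form splits as a sum of a form in the first $d_1$ variables and a form in the last $n-d_1$ variables. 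Composing gives $\mathbf{f}_1\cong\mathbf{f}_2$. The early returns in Steps~1 and~2 are correct, by left-right invariance of the canonical form and by the correctness of the two special-case algorithms (Theorems~\ref{thm:small_block} and~\ref{thm:large_block}).

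For completeness, suppose $\mathbf{f}_1^{S}=\mathbf{f}_2$, and set $\mathbf{f}=\mathbf{f}_1^{S_1}$ and $\mathbf{h}=\mathbf{f}_2^{S_2}$, so that $\mathbf{f}^U=\mathbf{h}$ with $U=S_1^{-1}SS_2$. Write $U$ in $2\times 2$ block form according to the split into the infrequent part (blocks $\mathbf{B}_1,\dots,\mathbf{B}_r$) and the frequent part. The set $\{\mathbf{B}_1,\dots,\mathbf{B}_r\}$ is closed under transpose up to equivalence, because transposed blocks occur with equal multiplicity. Lemma~\ref{lem:block_diagonal_invertible} therefore applies with $\Lambda_1$ the infrequent classes, and gives that $U_{1,1}$ and $U_{2,2}$ are invertible. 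The goal is then to factor
\[U=\begin{pmatrix}I & F\\ F' & I\end{pmatrix}\diag(T_1,T_2),\]
where the first factor has the shape produced in Step~4, with $\mathbf{D}_1F=(\mathbf{D}_{\rm rest}F')^T$. Taking $F=U_{1,2}U_{2,2}^{-1}$ and $F'=U_{2,1}U_{1,1}^{-1}$, the block-diagonal factor makes sense because $U_{1,1}$ and $U_{2,2}$ are invertible. Block-diagonality of $\mathbf{D}_\mathbf{h}$ should then force the off-diagonal block of $\mathbf{D}_{\mathbf{f}^{(\cdot)}}$ to vanish, which is exactly the condition $\mathbf{D}_1F=(\mathbf{D}_{\rm rest}F')^T$. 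Then $\mathbf{g}:=\mathbf{f}^{\left(\begin{smallmatrix}I&F\\F'&I\end{smallmatrix}\right)}$ is congruent to $\mathbf{h}$ via $\diag(T_1,T_2)$, so both sub-instances in Step~5 answer yes. Step~4 adds $F$ and $F'$ one frequent standard block $\mathbf{D}_i$ at a time rather than all at once. I would handle this by splitting $F$ and $F'$ column- and row-wise along $\mathbf{D}_2,\dots,\mathbf{D}_k$ and applying the factors sequentially. The cross terms between different frequent blocks only alter the frequent part by an invertible matrix, which is absorbed into $T_2$; this can be checked inductively over $i$.

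It remains to verify the hypotheses of the two subroutines on the split instances. The infrequent half satisfies Problem~\ref{problem:large_block} with $D=n^{1-\Delta}$, and nondegeneracy is inherited from the block decomposition. The frequent half has every minimum block of size at most $n^{\Delta}$, since each frequent block occurs more than $n^{1-\Delta}$ times in dimension $n$. It satisfies the $\rho$-non-zero-square-term condition, inherited from the hypothesis of Problem~\ref{problem:final_tuple}, because standard congruent block diagonalizations of the pieces extend to ones of the whole.

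For the running time, $|\Gamma_k|$ is bounded by the number of valid pairs $(F,F')$. These pairs are essentially determined modulo the $K$-spaces of Lemma~\ref{lem:matrix_kernal}. Distinct choices that yield the same $\mathbf{g}$ are identified, and $\mathbf{g}$ is a standard-block-diagonalized tuple of the kind counted in Lemma~\ref{lem:small_block_diag_number}, giving $2^{O(n(n^{\Delta}+\rho))}$. Each candidate then costs $2^{n\cdot n^{1-\Delta}}$ by Theorem~\ref{thm:large_block} plus $2^{O(n(n^\Delta+\rho))}$ by Theorem~\ref{thm:small_block}, and $\Delta=1/2$ gives $2^{O(n^{1.5}+n\rho)}$. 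I expect the main obstacle to be completeness: proving that the specific off-diagonal correction of Step~4 always exists, i.e.\ that the factorization of $U$ respects the compatibility condition. This is where the non-uniqueness of characteristic-2 block decompositions bites. The most delicate point is that the square terms, not just $\mathbf{D}$, must match after the correction; I would try to handle it via Lemma~\ref{lem:lemma_square_term}.
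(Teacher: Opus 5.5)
\textbf{There is a genuine gap, in the completeness direction.} Your soundness direction is fine. So is the one-shot factorization $U=\left(\begin{smallmatrix} I&F\\ F'&I\end{smallmatrix}\right)\diag(U_{1,1},U_{2,2})$, with $\mathbf{D}_1F=(\mathbf{D}_{\mathrm{rest}}F')^T$ read off from block-diagonality of $\mathbf{D}_\mathbf{h}$.

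The gap is the step you only assert: realizing this correction through Step~4. Step~4 adds $\mathbf{g}^S$ only when $(F,F')$ is compatible with the \emph{current} blocks $\mathbf{D}_1',\mathbf{D}_i'$ of $\mathbf{g}\in\Gamma_{i-1}$. If you use the pieces $(F_i,F_i')$ of your $F,F'$ directly, this already breaks at the second factor. Let $P_i$ be $I$ with $F_i$ in block $(1,i)$ and $F_i'$ in block $(i,1)$. After applying $P_2$, the first block becomes $\mathbf{D}_1+F_2'^T\mathbf{D}_2F_2'=\mathbf{D}_1(I+F_2F_2')$, using $F_2'^T\mathbf{D}_2=\mathbf{D}_1F_2$. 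So $(F_3,F_3')$ remains compatible only if $\mathbf{D}_1F_2F_2'F_3=0$, i.e.\ (by non-degeneracy) $F_2F_2'F_3=0$. Nothing forces this, and otherwise $\mathbf{f}^{P_2P_3}$ is never put into $\Gamma_3$. Your claim that the cross terms are absorbed into $T_2$ also fails at the level of matrices. For $k=3$,
\[
P_2P_3=\begin{pmatrix} I & F_2 & F_3\\ F_2' & I & F_2'F_3\\ F_3' & 0 & I\end{pmatrix}.
\]
Matching the lower-right block forces $M=\left(\begin{smallmatrix} I & F_2'F_3\\ 0& I\end{smallmatrix}\right)$ in $\left(\begin{smallmatrix} I&F\\ F'&I\end{smallmatrix}\right)\diag(I,M)$. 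The upper-right block of that product is then $(F_2\ \ F_3+F_2F_2'F_3)$, which differs from $(F_2\ \ F_3)$ unless $F_2F_2'F_3=0$.

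\textbf{The missing idea, which is how the paper argues, is to track only the lower-left block $F'=U_{2,1}U_{1,1}^{-1}$ and let the upper pieces adjust.}
\begin{enumerate}
\item At stage $i$, keep $F_i'$ and re-solve for the upper piece. By Lemma~\ref{lem:congruence_canceling} the current first block is $\mathbf{D}_1W$ with $W$ invertible, and block $i$ is still $\mathbf{D}_i$. So $\widetilde F_i:=W^{-1}F_i$ satisfies $\mathbf{D}_1W\widetilde F_i=(\mathbf{D}_iF_i')^T$.
\item Whatever the upper pieces are, the product $Q$ of the Step-4 factors has first block column $\left(\begin{smallmatrix} I\\ F'\end{smallmatrix}\right)$. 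The earlier factors leave the $i$-th block column equal to that of $I$, so right-multiplying by the $i$-th factor adds exactly $F_i'$ in block $(i,1)$.
\item Hence $V:=Q^{-1}U$ has first block column $\left(\begin{smallmatrix} U_{1,1}\\ 0\end{smallmatrix}\right)$.
\item Both $\mathbf{g}=\mathbf{f}^Q$ and $\mathbf{h}=\mathbf{g}^V$ are block diagonal. Let $\mathbf{G}_1$ be the first diagonal block of $\mathbf{D}_\mathbf{g}$. The off-diagonal block of $V^T\mathbf{D}_\mathbf{g}V$ is $U_{1,1}^T\mathbf{G}_1V_{1,2}$, and it must vanish.
\item Since $U_{1,1}$ is invertible and $\mathbf{G}_1$ is non-degenerate, $V_{1,2}=0$. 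So $V$ is block diagonal, and both calls in Step~5 answer yes.
\end{enumerate}
This is the content of the paper's final sentence ``since $\mathbf{g}$ is also block diagonal, there is an $R'$ \ldots''.

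Two smaller points. Your worry about square terms is unfounded: $\mathbf{g}^V=\mathbf{h}$ is an identity of full quadratic forms coming from $QV=U$, so Lemma~\ref{lem:lemma_square_term} plays no role. The running-time paragraph lies outside this lemma. Moreover, Lemma~\ref{lem:small_block_diag_number} does not apply to $\mathbf{g}$ as a whole, because its infrequent blocks can have far more than $n^{\Delta}$ rows.
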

\begin{proof}
By Corollary~\ref{cor:standard_cong_block_diag_algo} and Lemma~\ref{lem:congruence_canceling}, 
for any $\mathbf{g} \in \{\mathbf{f}_1^{S_1}$, $\mathbf{f}_2^{S_2}\} \cup \Gamma_{2} \cup \dots \cup \Gamma_k$, $\mathbf{g}[i, j] = 0$ for any $1 \leq i \leq d_1$ and $d_1 + 1 \leq j \leq n$. 
Hence, if the algorithm outputs yes, then the two quadratic form tuples are isometric. 

In the rest of this proof, we show that if $\mathbf{f}_1$ and $\mathbf{f}_2$ are isometric, then there exists a quadratic form tuple $\mathbf{g} \in \Gamma_k$ such that $\mathbf{g}^S = \mathbf{f}_2^{S_2}$ for some $S = \diag(S_1', S_2')$ with invertible $S_1'$ and $S_2'$.  

Since $\mathbf{f}_1$ and $\mathbf{f}_2$ are isometric, there exists an invertible matrix 
\[Q = \begin{pmatrix}Q_{1, 1} & Q_{1,2} \\ Q_{2,1} & Q_{2,2}\end{pmatrix}\] such that $\mathbf{f}_1^{S_1 Q} = \mathbf{f}_2^{S_2}$. 
By Lemma~\ref{lem:block_diagonal_invertible}, $Q_{1,1}$ and $Q_{2, 2}$ are invertible. 
We show that there exists a quadratic form tuple $\mathbf{g} \in \Gamma_k$ such that $\mathbf{g}^{R} = \mathbf{f}_2^{S_2}$ for 
\[R = \begin{pmatrix}Q_{1, 1} & \\ & R' \end{pmatrix}\] with some invertible $R'$ by induction.

Let 
\[Q' = \begin{pmatrix}I & Q_{1, 2} \\ Q_{2, 1} Q_{1, 1}^{-1} & Q_{2,2} \end{pmatrix}.\]
We have 
\[\mathbf{f}_1^{S_1 Q' \begin{pmatrix}Q_{1, 1} & \\ & I\end{pmatrix}} = \mathbf{f}_1^{S_1 Q} = \mathbf{f}_2^{S_2}.\]
Denote $Q_{2, 1} Q_{1,1}^{-1}$ as \[\begin{pmatrix}F_2 \\ \dots \\ F_{k}\end{pmatrix}\] 
such that 
\[\diag(\mathbf{D}_{2}, \dots, \mathbf{D}_{k}) Q_{2, 1} Q_{1,1}^{-1} = \begin{pmatrix}\mathbf{D}_{2} F_2 \\ \dots \\ \mathbf{D}_{k} F_{k}\end{pmatrix}.\]
Since $\mathbf{f}_2^{S_2}$ is congruent block diagonal, 
every row of  $\diag(\mathbf{D}_{2}, \dots, \mathbf{D}_{k}) Q_{2, 1} Q_{1,1}^{-1}$ is a linear combination of the rows of $\mathbf{D}_1$. 

By Lemma~\ref{lem:congruence_canceling} and induction,  for each $1 \leq i\leq k$, there is a $\mathbf{g} \in \Gamma_i$ such that $\mathbf{g} = \mathbf{f}_1^{S_1 Q_i}$ for 
\[Q_i = \begin{pmatrix}I & Q_{i,1, 2} \\ F_i & Q_{i, 2, 2}\end{pmatrix}\] with \[F_i = \begin{pmatrix}F_2 \\ \dots \\ F_i \\ 0 \\ \dots \\ 0\end{pmatrix}\]
and some $Q_{i, 1, 2}, Q_{i, 2, 2}$. 

Therefore, there is a $\mathbf{g} \in \Gamma_k$ such that $\mathbf{g} = \mathbf{f}_1^{S_1 Q_k}$ with 
\[Q_k = \begin{pmatrix}I & Q_{k, 1, 2} \\ Q_{2, 1}Q_{1, 1}^{-1} & Q_{k, 2, 2}\end{pmatrix}. \]
Since $\mathbf{g}$ is also block diagonal, there is an $R'$ such that \[Q_k \begin{pmatrix} Q_{1, 1} \\ & R' \end{pmatrix} = Q.\]
\end{proof}

\begin{lemma}\label{lem:quadratic_form_tuple_iso_algo_time}
The Non-Zero-Square-Term Condition Quadratic Form Tuple Isometry Algorithm has a $2^{O(n (n^\Delta + \rho) + n^{2 - \Delta})} \cdot \mathrm{poly}(m)$ running time.
\end{lemma}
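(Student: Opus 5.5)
We will bound the cost of each step of the Non-Zero-Square-Term Condition Quadratic Form Tuple Isometry Algorithm separately and then multiply.

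\textbf{Polynomial steps.} Step 1 runs in $\mathrm{poly}(n,m)$ time. It computes left-right canonical forms by Theorem~\ref{thm:left_right_decomposition_algorithm} and standard congruent block diagonalizations by Corollary~\ref{cor:standard_cong_block_diag_algo}. Step 2 only compares multiplicities $n_i$, which are read off the canonical form, so it is polynomial as well. When the algorithm returns early in Step 2, Theorem~\ref{thm:small_block} or Theorem~\ref{thm:large_block} gives the bound directly.
\begin{itemize}
\item The case $r=0$ means every block is frequent. Each block type then has more than $n^{1-\Delta}$ copies. Since the copies occupy disjoint row/column sets of total size at most $n$, each minimum block has at most $n^{\Delta}$ rows and columns. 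Theorem~\ref{thm:small_block} with block size $n^\Delta$ gives time $2^{O(n(n^\Delta+\rho))}\mathrm{poly}(m)$.
\item The case $r=\ell$ means every multiplicity is at most $D=n^{1-\Delta}$. Theorem~\ref{thm:large_block} gives time $2^{n^{2-\Delta}}\mathrm{poly}(m,n)$.
\end{itemize}
Non-degeneracy, needed for Problem~\ref{problem:large_block}, holds by the paper's standing assumption.

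\textbf{Size of $\Gamma_k$ (main obstacle).} We must bound $|\Gamma_k|$, which is also the number of iterations of Step 5. The risk is that $\Gamma_i$ grows multiplicatively with $i$. First we bound the choices per round. In round $i$ the new forms are $\mathbf{g}^S$, where $S$ is determined by the pair $(F,F')$ with $\mathbf{D}_1' F=(\mathbf{D}_i' F')^T$. The matrix $F$ is $d_1\times d_i$ and $F'$ is $d_i\times d_1$, so there are at most $2^{2d_1 d_i}$ choices. The product over $i=2,\dots,k$ is at most
\[2^{2d_1\sum_{i\ge 2} d_i}\le 2^{2d_1 n}.\]
This bound alone is too weak, so the plan is to prove $d_1\le O(n^{1-\Delta}\cdot n^{\Delta})$-type control on the effective freedom. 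Concretely we use that $\mathbf{D}_1 F$ is determined by $F'$. By non-degeneracy, $F$ modulo $K$-type freedom is determined by $F'$. Moreover, $F'$ maps into rows of the infrequent part, and each frequent block $\mathbf{D}_i'$ has size at most $n^\Delta$.

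The first thing to try is counting distinct resulting tuples rather than matrices. Each $\mathbf{g}\in\Gamma_k$ is a standard congruent block diagonalized tuple congruent to $\mathbf{f}_1$. Its last $n-d_1$ coordinates form a frequent-block part. Lemma~\ref{lem:small_block_diag_number} shows that this part can take at most $2^{O(n(n^\Delta+\rho))}$ values. For the first $d_1$ coordinates, the infrequent part changes only through $F'^T\mathbf{D}_i' F'$ and the square terms. By the $\rho$-non-zero-square-term condition, the square terms contribute at most $2^{n\rho}$ possibilities. The partial derivative part ranges over the orbit of $\mathbf{D}_1$ under the right-multiplication group $\{W\}$, which by Lemma~\ref{lem:semisimple-size} and the analysis behind Theorem~\ref{thm:large_block} has at most $2^{O(n^{2-\Delta})}$ relevant classes. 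Deduplicating $\Gamma_i$ after each round keeps every $|\Gamma_i|$ within $2^{O(n(n^\Delta+\rho)+n^{2-\Delta})}$.

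\textbf{Combining the bounds.} Each round enumerates at most $2^{O(n\cdot n^{\Delta})}$ pairs $(F,F')$ per element; the $F'$ range is restricted because it hits blocks of size at most $n^\Delta$. So each round costs $|\Gamma_{i-1}|\cdot 2^{O(n^{1+\Delta})}\mathrm{poly}(m)$, and there are $k\le n$ rounds. Step 5 then multiplies $|\Gamma_k|$ by the two subroutine costs computed above. Collecting exponents yields $2^{O(n(n^\Delta+\rho)+n^{2-\Delta})}\mathrm{poly}(m)$.
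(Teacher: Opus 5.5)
Your handling of the early returns in Step~2 and of the per-round enumeration is fine. In each round there are at most $2^{2d_1d_i}\le 2^{4n^{1+\Delta}}$ pairs $(F,F')$, since $d_i\le 2n^{\Delta}$. You are also right that the quantity to bound is the number of \emph{distinct} tuples in $\Gamma_i$.

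The gap is your bound on the number of distinct infrequent parts $\mathbf{D}_1'$. Lemma~\ref{lem:semisimple-size} bounds $\dim(\mathcal{A}/\Rad(\mathcal{A}))$, i.e.\ the number of cosets of $R=I+\Rad(\mathcal{A})$. It says nothing about $|R|$. Since $\mathbf{D}_1v=0$ forces $v=0$, the map $W\mapsto\mathbf{D}_1W$ is injective, so you must count the distinct $W$ actually reached, not their classes. The $W$ produced by Step~4 are exactly the ones that lemma cannot see. From $\mathbf{D}_1'F=F'^T\mathbf{D}_i$ one gets $F'^T\mathbf{D}_iF'=\mathbf{D}_1'FF'$, so each round multiplies by $I+FF'$, where $FF'$ factors through a frequent block type absent from $\mathbf{D}_1$. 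Such endomorphisms are radical, so everything reached lies in a single coset of $R$, and $\Gamma_i$ does not deduplicate modulo such classes.

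The paper handles this step differently. It writes $\mathbf{D}_1'=\mathbf{D}_1+\sum_j F_j'^T\mathbf{D}_jF_j'$, groups the frequent $\mathbf{D}_j$ into at most $2n^{\Delta}$ right-equivalence classes, and claims $2^{O(n^{1+\Delta})}$ possibilities. That count treats a sum over many copies of $F_j'^T\mathbf{E}F_j'$ as if it were a single product. In fact, as far as I can check, no counting argument can give the bound for Step~4 as written; here is a counterexample.
\begin{itemize}
\item Take $\mathbf{D}_{\mathbf{f}_1}=\mathbf{D}_1\oplus\mathbf{E}^{\oplus p}$, with $\mathbf{D}_1=\bigoplus_{a=1}^{p}\left(\begin{smallmatrix}0&\mathbf{C}_a\\ \mathbf{C}_a^T&0\end{smallmatrix}\right)$ and $\mathbf{C}_a=(I_2,\lambda_{a,2}E_{12},\dots,\lambda_{a,m}E_{12})$ for distinct nonzero $\lambda_a\in\F_2^{m-1}$. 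Thus $m=O(\log n)$ and every infrequent type has multiplicity $2$.
\item Let $\mathbf{E}=\left(\begin{smallmatrix}0&\mathbf{b}\\ \mathbf{b}&0\end{smallmatrix}\right)$ with $\mathbf{b}=(1,0,\dots,0)$. Then $n=6p$ and $\mathbf{b}$ has multiplicity $2p>n^{1-\Delta}$.
\item Let $D_1^{(k)}$ be the $k$th slice of $\mathbf{D}_1$, and put $K=\bigcap_{k\ge 2}\ker D_1^{(k)}$ and $V=D_1^{(1)}K$. One checks $\dim V=2p$ and $V^TK=0$.
\item In each round, let $D$ be the current first slice of $\mathbf{D}_1'$. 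The admissible pairs are exactly $F=(f_1,f_2)$ with $f_1,f_2\in K$ and $F'^T=(u,w)=(Df_2,Df_1)$. Here $F'F=0$, so $S$ is invertible.
\item Such a pair adds $uw^T+wu^T$, with $u,w\in V$ arbitrary, to the first slice. Because $V^TK=0$, it leaves the other slices and $DK=V$ unchanged.
\item After the $p$ rounds, every alternating matrix with column space in $V$ has been reached. Hence $|\Gamma_k|\ge 2^{\binom{2p}{2}}=2^{\Omega(n^2)}$.
\end{itemize}
With $\rho=m$, the claimed bound is $2^{o(n^2)}$. So what is missing is not a sharper count but a change to Step~4, such as pruning $\Gamma_i$ by something coarser than equality. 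Neither your proposal nor the paper supplies this.
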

\begin{proof}
We show that $|\Gamma_i| \leq  2^{4n \cdot n^\Delta}$ for any $1 \leq i \leq k$. Then the lemma follows from the solutions to Problem~\ref{problem:large_block} and Problem~\ref{problem:small_block}.

By the definition of $r$ in the Non-Zero-Square-Term Condition Quadratic Form Tuple Isometry Algorithm, $\ell - r$ is at most $n^{\Delta}$. For each $2 \leq j \leq k$, we assign $\mathbf{D}_j$ to one of $\mathbf{B}_{r+1}, \dots, \mathbf{B}_\ell$ according to the minimum left-right block from which $\mathbf{D}_j$ is obtained. 


For each $r+1 \leq i \leq \ell$, we further divide all $\mathbf{D}_{j}$ assigned to $\mathbf{B}_i$ into two subsets: $\Phi_{\mathbf{B}_i, 1}$ contains all standard congruent blocks corresponding to a single minimum left-right block, and $\Phi_{\mathbf{B}_i, 2}$ contains all standard congruent blocks corresponding to two minimum left-right blocks. 

In this way, we divide all the $\mathbf{D}_{2}, \dots, \mathbf{D}_{k}$ into at most $2n^{\Delta}$ subsets such that for any $\mathbf{D}_{j}$ and $\mathbf{D}_{j'}$ from the same subset, there exists an invertible matrix $Q$ such that $\mathbf{D}_{j} = \mathbf{D}_{j'} Q$ by Lemma~\ref{lem:two_blocks_right_multi}. 
Hence, for each subset $\Phi_{\mathbf{B}_i, s}$ with $r+1 \leq i \leq \ell$ and $s \in \{1, 2\}$, there exists a sequence of at most $h_{i, s}^2$ matrix tuples such that every matrix tuple in $\Phi_{\mathbf{B}_i, s}$ is a linear combination of this sequence of matrix tuples, where $h_{i, s}$ is the number of rows/columns of the standard congruent blocks in $\Phi_{\mathbf{B}_i, s}$, which is at most $n^{\Delta}$. 

Now we bound the number of different $\mathbf{D}_1'$ for all the quadratic forms in $\Gamma_t$. 
Notice that $\mathbf{D}_1'$ equals $\mathbf{D}_{1} + \sum_{j = 2}^{t} F_j'^T \mathbf{D}_{j} F_j'$ for some $h_{i, s} \times d_1$ matrix $F_j'$, where $\Phi_{\mathbf{B}_i, s}$ is the subset to which $\mathbf{D}_j$ belongs.  
For any $\mathbf{B}_{i}$ and $s \in \{1, 2\}$, 
let $F_{1}, \dots, F_{h_{i, s} \cdot d_1}$ be a linear basis for the linear space of all the $h_{i, s} \times d_1$ matrices over $\F_2$, and $\mathbf{E}_1, \dots, \mathbf{E}_u$ for some $u \leq h_{i, s}^2$ be a sequence of matrix tuples such that every matrix in $\Phi_{\mathbf{B}_i, s}$ is a linear combination of  $\mathbf{E}_1, \dots, \mathbf{E}_u$. 
Hence,
\[\sum_{2 \leq j \leq k : \mathbf{D}_{j} \in \Phi_{\mathbf{B}_i, s}} F_j'^T \mathbf{D}_{j } F_j' \]
is a linear combination of $F_g \mathbf{E}_h F_{g'}$ for all $1 \leq g, g' \leq h_{i, s} \cdot d_1$ and $1 \leq h \leq u$, and has $2^{h_{i, s} \cdot d_1} \cdot 2^u \cdot 2^{h_{i, s} \cdot d_1}$ possibilities. 
Hence,  the number of possible $\mathbf{D}_1'$ is at most 
\[\prod_{r + 1 \leq i \leq \ell, 1 \leq s \leq 2} 2^{2 h_{i, s}\cdot d_1 +h_{i, s}^2 } \leq 2^{2 d_1 \cdot \sum_{r+1 \leq i \leq \ell, 1 \leq s \leq 2} h_{i, s}} \cdot 2^{ \sum_{r + 1 \leq i \leq \ell, 1 \leq s \leq 2}h_{i, s}^2} \leq 2^{2n \cdot 2n^\Delta} \cdot 2^{2n^{2\Delta}} = 2^{O(n^{1 + \Delta})}.\]

For each $2 \leq j \leq k$, by Lemma~\ref{lem:congruence_canceling}, $\mathbf{D}_j'$ has at most $2^{d_j^2}$ possibilities. 
By the $\rho$-non-zero-square-term condition, each of $\Gamma_i$ contains at most 
\[2^{O(n^{1+\Delta})}\cdot 2^{\sum_{2 \leq j \leq k} d_j^2} \cdot 2^{n\rho} = 2^{O(n\cdot (n^\Delta + \rho))}\] different quadratic form tuples. 

The lemma is then obtained by Theorem~\ref{thm:left_right_decomposition_algorithm}, Theorem~\ref{thm:small_block}, and Theorem~\ref{thm:large_block}. 
\end{proof}
The theorem follows by Lemma~\ref{lem:quadratic_form_tuple_iso_algo_correctness} and Lemma~\ref{lem:quadratic_form_tuple_iso_algo_time} with $\Delta = 1/2$. 
\end{proof}

Now we prove Theorem~\ref{thm:matrix_tuple_constructed_main}.
Our algorithm first constructs representative blocks for the partial derivative matrix tuple for the quadratic form tuples constructed by the Quadratic Form Tuple Construction Algorithm, and then runs the Quadratic Form Tuple Isometry Algorithm.

\begin{framed}
\noindent \textbf{Representative Blocks Algorithm}

\noindent \textbf{Input:} Quadratic form $\mathbf{f}$ constructed by the Quadratic Form Tuple Construction for $f_1, \dots, f_m$ over $n$ variables with respect to parameters $\alpha_0, \alpha_1, \beta_0, \beta_1$. 

\noindent \textbf{Output:} Representative Blocks $\mathbf{B}_1, \dots, \mathbf{B}_\ell$ for $\mathbf{D}_\mathbf{f}$.
\begin{enumerate}

\item Run the Left-Right Canonical Form Algorithm for $\mathbf{D}_\mathbf{f}$, and denote the output as $\diag(\mathbf{C}_1, \dots, \mathbf{C}_k)$ with invertible matrices $U = \begin{pmatrix}U_1 & \cdots & U_k\end{pmatrix}$ and $W= \begin{pmatrix}W_1 & \cdots & W_k\end{pmatrix}$ such that $U^T \mathbf{D}_\mathbf{f} W = \diag(\mathbf{C}_1, \dots, \mathbf{C}_k)$ with $\mathbf{C}_i = U_i^T \mathbf{D}_\mathbf{f} W_i$ for any $i \in [k]$. 

\item Compute arbitrary invertible $U' = \diag(U_1', \dots, U_k')$ and $W
' = \diag(W_1', \dots, W_k')$ such that every column of $UU'$ and $W W'$ is in one of $E_1, \dots, E_7$ (defined in Section~\ref{sec:quadratic_form_tuple_construction}), and 
$U'^T U^T \mathbf{D}_\mathbf{f} W W' = \diag(U_1'^T \mathbf{C}_1 W_1', \dots, U_k'^T \mathbf{C}_k W_k')$ satisfying if $(U_i'^T \mathbf{C}_i W_i')^T$ is left-right equivalent to $U_i'^T \mathbf{C}_i W_i'$ for some $1 \leq i\leq k$, then $(U_i'^T \mathbf{C}_i W_i')^T= U_i'^T \mathbf{C}_i W_i' Z_i$ for some invertible $Z_i$. 
\item Output arbitrary representative blocks $\mathbf{B}_1, \dots, \mathbf{B}_\ell$ for $\mathbf{D}_\mathbf{f}$ such that every block is equal to $U_i'^T \mathbf{C}_i W_i'$ for some $1 \leq i \leq k$. 
\end{enumerate}
\end{framed}

\begin{lemma}\label{lem:representative_block_algo}
Given quadratic form tuple $\mathbf{f}$ constructed by the Quadratic Form Tuple Construction process for $f_1, \dots, f_m$ over $n$ variables with respect to parameters $\alpha_0, \alpha_1, \beta_0, \beta_1$,  
    the Representative Blocks Algorithm outputs a sequence of representative blocks $\mathbf{B}_1, \dots, \mathbf{B}_\ell$ for $\mathbf{D}_\mathbf{f}$ in time polynomial in $n$ and $m$ such that $\mathbf{f}$ satisfies the $(m - \alpha_0 - \alpha_1)$-non-zero-square-term condition with respect to any standard congruent block diagonalization of $\mathbf{D}_\mathbf{f}$. 
\end{lemma}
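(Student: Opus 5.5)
The plan is to check the three claims separately: that each step of the Representative Blocks Algorithm runs in polynomial time, that its output satisfies the four conditions of Definition~\ref{def:standard_diagonalization}, and that the non-zero-square-term condition holds.

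\textbf{Polynomial time.}
Step~1 runs in polynomial time by Theorem~\ref{thm:left_right_decomposition_algorithm}.

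For Step~2 I would use Lemma~\ref{lem:reduction_space_to_tuple_decomposition}(1). The projections of each $E_i$ onto the blocks $U_j$ and $W_j$ stay inside $E_i$. Hence $\mathrm{columnspan}(U_j)$ and $\mathrm{columnspan}(W_j)$ each decompose as a direct sum of their intersections with $E_1,\dots,E_7$. Choosing a basis of each intersection, by linear algebra, gives block-diagonal changes of basis $U'_j$ and $W'_j$ whose columns lie in the $E_i$.

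The extra requirement in Step~2 concerns blocks $\mathbf{C}$ with $\mathbf{C}^T$ left-right equivalent to $\mathbf{C}$: we need $\mathbf{C}^T = \mathbf{C} Z$ for some invertible $Z$. I would try to meet it as follows.
\begin{itemize}
\item Solve the linear system $P^T \mathbf{C} = \mathbf{C}^T Q$ via \cite{IQ19}, Proposition~3.2, and take $U'_j$ adjusted by $P$.
\item By transposition symmetry of $\mathbf{D}_\mathbf{f}$ (Lemma~\ref{lem:two_blocks_right_multi}), this adjustment can be chosen compatibly with the $E_i$-graded structure.
\item The reason is that the $E_i$-grading is preserved under transposition: each type of quadratic form in Figure~\ref{fig:constrution} pairs $E_a$ with $E_b$ symmetrically.
\end{itemize}

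Step~3 only groups the blocks into left-right classes, using equivalence tests, and sorts the classes by multiplicity. That produces conditions~1, 2 and~4 of Definition~\ref{def:standard_diagonalization}, and condition~3 is exactly what Step~2 arranged.

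\textbf{Non-zero-square-term condition.}
Let $S^T\mathbf{D}_\mathbf{f} S$ be an arbitrary standard congruent block diagonalization with respect to these $\mathbf{B}_i$. By Lemma~\ref{lem:standard_diagonalization_1}, $S^T\mathbf{D}_\mathbf{f} S = \mathbf{A} W$ for some invertible $W$, where $\mathbf{A} = \diag(\mathbf{F}_1,\dots,\mathbf{F}_\ell)$. Each copy of $\mathbf{B}_i$ in $\mathbf{A}$ equals $U_i'^T\mathbf{C}_iW_i'$ with $E$-graded columns. Using the transpose-paired structure, $\mathbf{A}$ can be written as $U^{*T}\mathbf{D}_\mathbf{f}U^{*}W^{*}$ for some invertible $U^*$ whose columns each lie in one $E_i$, and some invertible $W^*$. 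Lemma~\ref{lem:nonzero_suare_term_condition} is stated exactly for this setting: applied to $\mathbf{A}' = U^{*T}\mathbf{D}_\mathbf{f}U^*$, whose right-multiples include $\mathbf{A}$ and hence $S^T\mathbf{D}_\mathbf{f}S$, it gives the $(m-\alpha_0-\alpha_1)$-non-zero-square-term condition.

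\textbf{Main obstacle.}
The hardest step is producing $U^*$: a congruence (not merely a left-right transformation) with $E$-graded columns that realizes $\mathbf{A}$ up to a right multiplication. My first attempt would be to build $U^*$ block by block, using the $X$ or $(X\ Y)$ columns taken from the graded $U_iU'_i$ and $W_iW'_i$. I would then verify invertibility via the argument in the proof of Theorem~\ref{thm:congruence_block_diagonalization}, and check that this argument respects the grading because of Lemma~\ref{lem:reduction_space_to_tuple_decomposition}(1).
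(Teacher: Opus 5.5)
Your plan is essentially the paper's proof. The paper cites Theorem~\ref{thm:left_right_decomposition_algorithm} for the running time, and Lemma~\ref{lem:nonzero_suare_term_condition}, Corollary~\ref{cor:standard_block_diagonalization} and Definition~\ref{def:standard_diagonalization} for correctness; the graded congruence $U^*$ you flag as the main obstacle is what the paper silently takes from Corollary~\ref{cor:standard_block_diagonalization}.

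One remark. The grading-compatibility you need, both for the Step~2 transpose adjustment and for $U^*$, comes from Lemma~\ref{lem:space_equivalence}, not from Lemma~\ref{lem:two_blocks_right_multi}. Since $E_a=I_{\mathbf{f},a}$ is cut out by per-type kernels, any left-right equivalence between graded blocks must carry grades to grades. Moreover, the proof of Lemma~\ref{lem:nonzero_suare_term_condition} only uses that each row of $\diag(\mathbf{F}_1,\dots,\mathbf{F}_\ell)$ comes from a graded left factor: $X$ for $\mathbf{B}_i=X^T\mathbf{D}_\mathbf{f}Y$, and $Y$ for $\mathbf{B}_i^T$. So you can bypass $U^*$ entirely.
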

\begin{proof}
    The correctness of the algorithm is obtained by Lemma~\ref{lem:nonzero_suare_term_condition}, Corollary~\ref{cor:standard_block_diagonalization}, and Definition~\ref{def:standard_diagonalization}. The running time of the algorithm is obtained by Theorem~\ref{thm:left_right_decomposition_algorithm}. 
\end{proof}

\begin{proof}[Proof of Theorem~\ref{thm:matrix_tuple_constructed_main}]
The algorithm first runs the Representative Blocks Algorithm for $\mathbf{f}$ to obtain representative blocks $\mathbf{B}_1, \dots, \mathbf{B}_\ell$ for $\mathbf{D}_\mathbf{f}$. Then the algorithm runs the Quadratic Form Tuple Isometry Algorithm for $\mathbf{f}, \mathbf{g}$ and  $\mathbf{B}_1, \dots, \mathbf{B}_\ell$. 

Lemma~\ref{lem:nonzero_suare_term_condition}, Corollary~\ref{cor:standard_block_diagonalization}, Definition~\ref{def:standard_diagonalization}, and Lemma~\ref{lem:representative_block_algo}, both $\mathbf{f}$ and $\mathbf{g}$ satisfy the $(m - \alpha_0 - \alpha_1)$-non-zero-square-term condition for any matrix tuple that is a standard congruent block diagonalization of $\mathbf{D}_\mathbf{f}$ with respect to $\mathbf{B}_1, \dots, \mathbf{B}_\ell$. 
Since $\mathbf{f}$ and $\mathbf{g}$ are quadratic form tuples of length $\mathrm{poly}(n, m)$ over $O(n+m)$ variables by the Quadratic Form Tuple Construction process, 
    the theorem is then obtained by Theorem~\ref{thm:quadratic_form_tuple_iso_algo}. 
\end{proof}

\bibliographystyle{alphaurl}
\bibliography{references}

\appendix
\section{Minimum congruent block diagonalization of alternating matrix tuples}\label{sec:appendix}

In this section, we give an alternating matrix tuple over $\F_2$ whose minimum congruent block diagonalization is not unique. 

We construct a $9 \times 9$ alternating matrix tuple $\mathbf{A}$ of length $6$ over $\F_2$ as follows.  Let $\mathbf{A}=\diag(\mathbf{A}_1, \mathbf{A}_2)$ with 
\[\mathbf{A}_1 = \begin{pmatrix}   & 100000 & 0100000    \\  100000 &  & 001000  \\  010000 & 001000
\end{pmatrix}\]
and 
\[\mathbf{A}_2 = \begin{pmatrix}   & 000100 & 000010 & & 100000 & 010000   \\  000100 &  & 000001 & 100000 &  & 001000\\  000010 & 000001 & & 010000 & 001000\\ 
 & 100000  & 010000  &   \\  100000 &  & 001000  \\  010000 & 001000 
\end{pmatrix},\]
where $a_i$ in the binary string $a_1a_2a_3a_4a_5a_6$ at position $(j, k)$ denotes the $(j, k)$th entry of the $i$-th matrix.
\ynote{Was: where the binary string $a_1a_2a_3a_4a_5a_6$ is the corresponding entry of the matrix tuple such that the $i$-th matrix being $a_i$ for any $1 \leq i \leq 6$.}
\gnote{Seems OK now, but please have a look.}

One can show that both $\mathbf{A}_1$ and $\mathbf{A}_2$ are minimum left-right blocks of $\mathbf{A}$, and thus also   minimum congruent blocks. 

On the other hand, 
with \[U = \begin{pmatrix} 1 & & & &  & & 1 \\ & 1 & & & & & & 1 \\ &  &  1  & & &  & & & 1\\ 1 & & & 1 \\ & 1 & & & 1 \\ & & 1 & & & 1 \\ & & & & & & 1\\
& & & & & & & 1 \\
& & & & & & & & 1\end{pmatrix},\]
$U^T \mathbf{A} U = \diag(\mathbf{
A}_1', \mathbf{A}_2')$ with 
\[\mathbf{A}_1' = \begin{pmatrix}   & 100000 & 0100000    \\  100000 &  & 001000  \\  010000 & 001000
\end{pmatrix}\]
and 
\[\mathbf{A}_2' = \begin{pmatrix}   & 100100 & 010010 & & 100000 & 010000   \\  100100 &  & 001001 & 100000 &  & 001000\\  010010 & 001001 & & 010000 & 001000\\ 
 & 100000  & 010000  &   \\  100000 &  & 001000  \\  010000 & 001000 
\end{pmatrix}.\]
Both $\mathbf{A}_1'$ and $\mathbf{A}_2'$ are minimum left-right  blocks of $\mathbf{A}$, and also minimum congruent blocks. 

Hence, $\mathbf{A}$ and $U^T \mathbf{A} U$ are congruent, and $\mathbf{A}_1$ and $\mathbf{A}_1'$ are the same. But one can verify that $\mathbf{A}_2$ and $\mathbf{A}_2'$ are not congruent. 
This suggests that the alternating matrix tuple $\mathbf{A}$ has multiple minimum congruent block diagonalizations. 
\end{document}